\tikzset{dotmark/.style={circle,fill,inner sep=1.5pt}}
\tikzset{emptymark/.style={circle,draw,fill=white,inner sep=1.5pt}}
\tikzset{crossmark/.style={thick,inner sep=1.5pt}}
\newcommand{\altanonymous}[2]{\ifx\authoranonymous\relax#1\else#2\fi}
\def\ShowAuthNotes{1}
\newcommand{\authnote}[3]{\textcolor{#3}{[{\bf #1:} { {#2}}]}}
\newcommand{\authnote}[3]{}
\def\twoheadleadsto{\tikz[baseline=(a.base)]{\draw[%
    decorate,decoration={zigzag,segment length=4, amplitude=.9},%
    ] (0,0) -- (.25, 0);%
    \draw[%
    -{Classical TikZ Rightarrow}.{Classical TikZ Rightarrow},%
    ] (.25, 0) -- (.4, 0);%
    \node (a) at (.4/2,-.03) {\phantom{\(\leadsto\)}};%
}}
\newcommand{\onto}{\twoheadleadsto}% Perhaps leadsto with two heads would be better.
\newcommand{\uwidehat}[1]{%
  \mathpalette\douwidehat{#1}%
}
\newcommand{\douwidehat}[2]{%
  \sbox0{$\m@th#1\widehat{\hphantom{#2}}\vphantom{t}$}%
  \sbox2{$t$}%
  \dimen2=\ht0
  \advance\dimen2 -\ht2
  \sbox2{$#2$}%
  \dimen0=\ht0
  \rlap{%
    \raisebox{\dimexpr-\dimen0-\dp2-1pt}[0pt][\dimexpr\dimen2+\dp2]{\box0}%
  }
  {#2}%
}
\newcommand{\Oh}{\mathcal{O}}
\newcommand{\Ohtilde}{\tilde{\Oh}}
\newcommand{\Ohhat}{\hat{\Oh}}
\DeclareMathOperator*{\poly}{poly}
\def\substr{\ensuremath \preccurlyeq}
\def\intvl#1#2{\bm{[}\,#1\,\bm{,}\,#2\,\bm{]}}
\def\fragmentco#1#2{\bm{[}\,#1\,\bm{.\,.}\,#2\,\bm{)}}
\def\fragmentoc#1#2{\bm{(}\,#1\,\bm{.\,.}\,#2\,\bm{]}}
\def\fragmentoo#1#2{\bm{(}\,#1\,\bm{.\,.}\,#2\,\bm{)}}
\def\fragment#1#2{\bm{[}\,#1\,\bm{.\,.}\,#2\,\bm{]}}
\def\position#1{\bm{[}\,#1\,\bm{]}}
\newcommand{\ceil}[1]{\lceil #1 \rceil}
\newcommand{\floor}[1]{\lfloor #1 \rfloor}
\newcommand{\rev}[1]{\overline{#1}}
\newcommand{\per}{\operatorname{per}}
\newcommand{\rot}{\operatorname{rot}}
\newcommand{\LZ}{\mathsf{LZ}}
\newcommand{\ed}{\delta_E}
\newcommand{\hd}{\delta_H}
\newcommand{\edl}[2]{{\delta_E}(#1,{}^*\!#2^*)}
\newcommand{\edal}[1]{\delta_E^{#1}}
\newcommand{\edshifted}[1]{\delta_E^{#1}} % same def as edal?
\newcommand{\OccE}{\mathrm{Occ}^E}
\newcommand{\OccH}{\mathrm{Occ}^H}
\newcommand{\Occ}{\mathrm{Occ}}
\newcommand{\Mis}{\mathrm{Mis}}
\newcommand{\MI}{\mathrm{MI}}
\newcommand{\gaped}{{\sc Gap Edit Distance}\xspace}
\newcommand{\shifted}{{\sc Shifted}\xspace}
\newcommand{\yes}{{\sc yes}\xspace}
\newcommand{\no}{{\sc no}\xspace}
\newcommand{\mA}{\mathcal{A}}
\newcommand{\mX}{\mathcal{X}}
\newcommand{\mY}{\mathcal{Y}}
\newcommand{\mP}{\mathcal{P}}
\newcommand{\mS}{\mathcal{S}}
\newcommand{\w}{\operatorname{w}}
\newcommand{\pref}{\mathsf{pref}}
\newcommand{\suf}{\mathsf{suf}}
\newcommand{\mXpref}{\mX_{\mathsf{pref}}}
\newcommand{\mXsuf}{\mX_{\mathsf{suf}}}
\newcommand{\Z}{\mathbb{Z}}
\newcommand{\Zz}{\mathbb{Z}_{\ge 0}}
\newcommand{\hi}{\hat{\imath}}
\newcommand{\hj}{\hat{\jmath}}
\newcommand{\hx}{\hat{x}}
\newcommand{\hy}{\hat{y}}
\newcommand{\sE}{\mathsf{E}}
\newcommand{\bG}{\mathbf{G}}
\newcommand{\bc}{\operatorname{bc}}
\def\modelname{{\tt PILLAR}\xspace}
\def\lceOp#1#2{{\tt LCP}(#1, #2)}
\def\lcbOp#1#2{{\tt LCP}^R(#1, #2)}
\def\ipmOp#1#2{{\tt IPM}(#1, #2)}
\def\accOp#1#2{#1\position{#2}}
\def\accOpName{{\tt Access}\xspace}
\def\lenOpName{{\tt Length}\xspace}
\DeclarePairedDelimiter{\ket}{\lvert}{\rangle}
\def\problembox#1{%
    \vspace{2mm}%
    \noindent\fbox{%
    \begin{minipage}{.985\linewidth}%
        #1
    \end{minipage}%
    }%
    \vspace{2mm}%
}
\newcommand{\defproblemc}[4]{%
    \problembox{%
        \textsc{#1}\\
        {\bf{Input:}} #2  \\
        %{\bf{Alice's Message to Bob:}} #3  \\
        {\bf{Output:}} #3
    }%
}
\renewenvironment{cases}{%
  \matrix@check\cases\env@cases
}{%
  \endarray\right.%
}
\def\env@cases{%
  \let\@ifnextchar\new@ifnextchar
  \left\lbrace
  \def\arraystretch{1.1}%
  \array{@{\;}c@{\quad}l@{}}%
}
\def\mid{\ensuremath :}
\def\emptyset{\varnothing}
\newcommand{\G}{\mathcal{G}}
\newcommand{\N}{\mathcal{N}}
\newcommand{\rhs}{\mathsf{rhs}}
\newcommand{\val}{\mathsf{exp}}
\newcommand{\one}{\mathtt{1}}
\newcommand{\zero}{\mathtt{0}}
\def\eso#1{\ensuremath{\raisebox{.25ex}{\ensuremath{\uwidehat{#1}}}}}
\def\exw#1#2#3{\ensuremath E(#1, #2, #3)}
\def\exwf#1#2#3#4{\ensuremath E(#1, #2, #3, #4)}
\newcommand{\hG}{\hat{\mathcal{G}}}
\newcommand{\D}{\mathcal{D}}
\newcommand{\LCE}{\mathsf{LCE}}
\newcommand{\IPM}{\mathsf{IPM}}
\newcommand{\hS}{\hat{S}}
\newcommand{\hO}{\hat{O}}
\newcommand{\hp}{\hat{p}}
\newcommand{\hq}{\hat{q}}
\def\pn#1{\textsc{#1}}
\def\PMwE{\pn{Pattern Matching with Edits}\xspace}
\def\PMwM{\pn{Pattern Matching with Mismatches}\xspace}
\def\pr#1{\ensuremath{\mathsf{Pr}\!\bm{\left[}\,#1\,\bm{\right]}}}
\def\ex#1{\ensuremath{\mathbb{E}\!\bm{\left[}\,#1\,\bm{\right]}}}
\def\GS{Grover's Search\xspace}
\title{Near-Optimal-Time Quantum Algorithms for Approximate~Pattern~Matching}
\author{Tomasz Kociumaka}{Max Planck Institute for Informatics\\Saarland Informatics
Campus\\Saarbrücken, Germany}{tomasz.kociumaka@mpi-inf.mpg.de}{https://orcid.org/0000-0002-2477-1702}{}
\author{Jakob Nogler}{ETH Zürich\\Zurich, Switzerland}{jnogler@ethz.ch}{https://orcid.org/0009-0002-7028-2595}{}
\author{Philip Wellnitz}{National Institute of Informatics\\The Graduate University for Advanced Studies, SOKENDAI\\Tokyo, Japan}{wellnitz@nii.ac.jp}{https://orcid.org/0000-0002-6482-8478}{}
\authorrunning{T. Kociumaka, J. Nogler, and P. Wellnitz}
\begin{document}
\pagenumbering{roman}
\maketitle
\begin{abstract}
Approximate Pattern Matching is among the most fundamental string-processing tasks.
Given a text \(T\) of length \(n\), a pattern \(P\) of length \(m\), and a threshold \(k\), the task is to identify the fragments of \(T\) that are at \emph{distance} at most \(k\) to \(P\).
We consider the two most common distances: Hamming distance (the number of mismatches or character substitutions) in \emph{Pattern Matching with Mismatches} and edit distance (the minimum number of character insertions, deletions, and substitutions) in \emph{Pattern Matching with Edits}.
We revisit the complexity of these two \mbox{problems in the quantum setting}.
% Both versions of Approximate Pattern Matching have been subject to extensive research
% since the 1980s, resulting in dozens of papers covering not only the standard model of
% computation but also, among others, the compressed, dynamic, quantum, and streaming
% settings.

% Here, every input string $S$ is accessible through an oracle that, when supplied with a
% quantum register $\ket{i}$ representing a position $i$, provides the $i$-th character
% $S\position{i}$.
% Unlike in the classical random access model, however, one can also provide a superposition
% of states $\ket{i}$ and, in return, obtain a superposition of characters.
%As a result, many problems that classically require linear time can be solved much faster
%in the quantum setting.
%For example, in $\Oh(\sqrt{nk})$ quantum time, one test (correctly with a large constant probability) if two length-$n$ strings have Hamming distance at most $k$; there is also a matching unconditional lower bound for the number of queries, i.e., applications of the access oracle.
%Surprisingly, using $\Ohtilde(\sqrt{nk})$ queries,
% one can also test if the edit distance is at most $k$ [Gibney, Jin, Kociumaka, Thankachan; SODA'24]; the underlying quantum time complexity is $\Ohtilde(\sqrt{nk}+k^2)$.
\altanonymous{Very recently, Kociumaka, Nogler, and Wellnitz [STOC'24] showed that}{Our recent work [STOC'24] shows that} $\Ohhat(\sqrt{n/m} \cdot \sqrt{mk}) = \Ohhat(\sqrt{nk})$%
\footnote{Throughout this paper, the $\Ohtilde(\cdot)$ and $\Ohhat(\cdot)$ notations hide poly-logarithmic factors $(\log N)^{\Oh(1)}$ and sub-polynomial factors $N^{o(1)}$, respectively, with respect to the total input size $N$ of the considered problems.}
quantum queries are sufficient to solve (the decision version of) the Pattern Matching with Edits problem.
However, the quantum time complexity of the underlying solution (that is, the computational overhead to recover the solution from the quantum queries) does not provide any improvement over classical computation.
On the other hand, the state-of-the-art quantum algorithm for Pattern Matching with Mismatches [Jin and Nogler; SODA'23] achieves query complexity $\Ohhat(\sqrt{nk^{3/2}})$ and time complexity $\Ohtilde(\sqrt{nk^2})$, falling short of a known unconditional lower bound of $\Omega(\sqrt{nk})$ quantum queries.
%A natural question stemming from this state of affairs is whether one can solve the two Approximate Pattern Matching problems in the near-optimal quantum time  $\Ohhat(\sqrt{nk})$, at least for small values of $k$.

In this work, we present quantum algorithms with a time complexity of
$\Ohtilde(\sqrt{nk}+\sqrt{n/m}\cdot k^2)$ for Pattern Matching with Mismatches and
$\Ohhat(\sqrt{nk}+\sqrt{n/m}\cdot k^{3.5})$ for Pattern Matching with Edits; both
algorithms use $\Ohhat(\sqrt{nk})$ quantum queries.
These running times are near-optimal for $k \ll m^{1/3}$ and $k\ll m^{1/6}$, respectively,
and they offer advantage over classical algorithms for $k \ll (mn)^{1/4}$ and $k\ll
(mn)^{1/7}$, respectively.
Our solutions can also report the starting positions of all approximate occurrences of $P$
in $T$ (represented as collections of arithmetic progressions); in this case, both the
unconditional lower bound and the complexities of our algorithms increase by a $\Theta(\sqrt{n/m})$ factor.

%Our results rely on the structural insight developed to achieve near-optimal quantum query complexity for Pattern Matching with Edits [Kociumaka, Nogler, and Wellnitz; STOC'24] and streaming algorithms for Pattern Matching with Mismatches [Clifford, Kociumaka, and Porat; SODA'19].
%These two papers show that, without loss of generality, one can replace the original strings $P$ and $T$ with compressible variants $P^\#$ and $T^\#$, where some characters are masked out.
%Although compressibility has been exploited to optimize quantum query complexity and classical space complexity, existing approaches construct the strings $P^\#$ and $T^\#$ in $\Oh(n)$ time;
%achieving sub-linear quantum time is the main technical challenge addressed in this work.
%On the way, we encounter a problem of independent interest:
As a major technical contribution, we give a faster algorithm to solve a system of $b$ substring equations of the form $S\fragmentco{x}{x'}=S\fragmentco{y}{y'}$. 
The goal is to construct a generic solution string whose characters are equal only when necessary.
While this is known to be possible in $\Oh(n+b)$ time [Gawrychowski, Kociumaka, Radoszewski, Rytter, Waleń; TCS'16], we show that $\Ohtilde(b^2)$ classical time is sufficient to obtain an $\Ohtilde(b)$-size representation of $S$ that supports random access and other standard queries (of the so-called \modelname model) in $\Oh(\log n)$ time.
We apply this tool to efficiently construct an $\Ohtilde(k)$-size representation of strings $P^\#$ and $T^\#$ obtained from $P$ and $T$ by carefully masking out some characters so that the output to the studied approximate pattern matching problems does not change.  
\end{abstract}
\clearpage
\thispagestyle{plain}
\tableofcontents
\clearpage
\pagenumbering{arabic}

\section{Introduction}
Pattern matching is one of the foundational string-processing tasks.
Given two strings, a pattern $P$ of length $m$ and a text $T$ of length $n$, the classical
Exact Pattern Matching problem asks to identify the fragments of $T$ that match $P$
exactly.
Equivalently, the goal is to construct the set
\[\Occ(P,T) \coloneqq \{i\in \fragment{0}{n-m} : P = T\fragmentco{i}{i+m}\}.\]
Exact pattern matching admits efficient solutions in numerous models of computation,
including textbook linear-time algorithms for the standard setting~\cite{KMP77,KR87} and
more recent near-optimal algorithms for the dynamic~\cite{ABR00}, quantum~\cite{HV03},
streaming~\cite{PP09}, and compressed~\cite{Jez15} settings.

Unfortunately, imperfections in real-world data significantly limit the applicability of
Exact Pattern Matching: instead of looking for the exact occurrence of $P$ in $T$, it is
often necessary to also identify approximate occurrences, that is, fragments of $T$ that
are similar to $P$ but do not necessarily match $P$ exactly.
Given the flexibility in deciding when two strings are deemed similar, Approximate Pattern
Matching serves as an umbrella for many problems; see~\cite{HD80,Nav01} for surveys.

A natural choice for quantifying similarity is to impose a limit $k$ on the number of
character substitutions (that is, the Hamming distance $\hd$) or the (minimum) number of character
insertions, deletions, and substitutions (that is, the edit distance $\ed$) between the
pattern and its approximate occurrences.

\defproblemc{Pattern Matching with Mismatches}
{a pattern $P$ of length $m$, a text $T$ of length $n$, and an integer threshold $k>0$.}
{the set \[
    \OccH_{k}(P,T) = \{i \in \fragment{0}{n - m} \mid \hd(P, T\fragmentco{i}{i+m}) \leq k\}.
\]}
\par

\defproblemc{Pattern Matching with Edits}
{a pattern $P$ of length $m$, a text $T$ of length $n$, and an integer threshold $k>0$.}
{the set \[
    \OccE_k (P,T)\coloneqq \{i \in \fragment{0}{n} \mid \exists_{j \in \fragment{i}{n}}\; \ed(P, T\fragmentco{i}{j}) \leq k\}.
\]}
\par

Pattern Matching with
Mismatches~\cite{GG86,Abr87,Kos87,LV88,AmirLP04,CFPSS16,GU18,CGKKP20} and
Edits~\cite{S80,LandauV89,SV96,ColeH98,CKW22} are the two most studied Approximate Pattern
Matching variants in the theoretical literature.
Nevertheless, we still lack a complete understanding of the complexity of these two
problems.
For \PMwM, Gawrychowski and Uznański~\cite{GU18} developed an
$\Ohtilde(n+n/m\cdot k\sqrt{m})$-time algorithm and showed that no significantly faster
\emph{combinatorial} algorithm exists.
Nevertheless, it remains completely unclear how to improve the state-of-the-art running
time using fast matrix multiplication.
For two decades, the fastest two algorithms for \PMwE ran in
$\Oh(nk)$~\cite{LandauV89} and $\Oh(n+n/m\cdot k^{4})$~\cite{ColeH98} time.
Only two years ago, Charalampopoulos, Kociumaka, and Wellnitz~\cite{CKW22} managed to
circumvent the latter two-decades-old barrier and achieve an \(\Ohtilde(n + n/m\cdot
k^{3.5})\)-time algorithm.
Nevertheless, this is still very far from the best (conditional) lower bound~\cite{CKW22},
which only rules out algorithms faster than \(\Oh(n + k^{2 - \Omega(1)}n/m)\).

In parallel to improvements for the standard model of computation, Pattern Matching with
Mismatches and Edits were studied in other settings, including
streaming~\cite{PP09,CFPSS16,CKP19,KPS21,BK23}, where the text can only be read once from
left to right, fully compressed~\cite{bkw19,CKW20,CKW22}, where the strings are given in a
compressed form, dynamic~\cite{CKW20,CGKMU22,CKW22}, where the strings change over time,
and, last but not least, quantum~\cite{JN23,KNW24}.
On at least two occasions~\cite{CFPSS16,CKW22}, the insights learned from these models
were crucial to progress for the standard setting.

\paragraph*{Algorithmic Contributions in the Quantum Setting.}

In this paper, we study pattern matching under Hamming and edit distance within the
quantum computing framework.
While string algorithms have been studied in several works before such as~\cite{HV03,AGS19,ABIKKPSS20,BEGHS21,GS22,AJ22,GJKT24,WY20},
the quantum advantage for approximate pattern matching has only been demonstrated recently
Jin and Nogler~\cite{JN23}
and Kociumaka, Nogler, and Wellnitz~\cite{KNW24}.

The first of these works devises an algorithm for verifying whether \(\OccH_{k}(P,T) \neq
\emptyset\)
(and, if so, reporting a position), which requires \(\Ohhat(\sqrt{n/m} \cdot
k^{3/4}\sqrt{m})\) queries and \(\Ohhat(\sqrt{n/m} \cdot k\sqrt{m})\) time,
leaving open the question of whether the query lower bound \(\Omega(\sqrt{n/m} \cdot
\sqrt{km})\) can be achieved.

The second work shows that it is indeed possible to achieve this bound, but, quite surprisingly,
for the more challenging problem involving edits, without addressing the problem for mismatches.
They provide an algorithm for verifying whether \(\OccE_{k}(P,T) \neq \emptyset\),
requiring \(\Ohhat(\sqrt{n/m} \cdot \sqrt{km})\) queries and \(\Ohhat(\sqrt{n/m} \cdot (\sqrt{k}m + k^{3.5}))\) time.
Additionally, they give an algorithm capable of computing \(\OccE_{k}(P,T)\) using \(\Ohhat(n/m \cdot \sqrt{km})\) queries and \(\Ohhat(n/m \cdot (\sqrt{k}m + k^{3.5}))\) time.
While they achieve near-optimal query complexity for reporting and computing,
they fall short of obtaining sublinear time, which is one of the main attractions of quantum algorithms.
This shortfall arises because reducing the query complexity requires adaptations of communication complexity results from their work,
which do not lend themselves to straightforward efficient implementation.

In this work, we show that for both mismatches and edits, and for both computing and reporting,
it is indeed possible to achieve the two desired goals simultaneously,
without requiring any tradeoff: optimal query complexity (up to subpolynomial factors) and sublinear quantum time.

\begin{restatable*}{mtheorem}{qpmwm}\label{thm:qpmwm}
    Let $P$ denote a pattern of length $m$, let $T$ denote a text of length $n$, and let $k > 0$ denote an integer threshold.
    \begin{enumerate}
        \item There is a quantum algorithm that solves the \PMwM
        problem using $\Ohtilde(n/m \cdot \sqrt{km})$ queries and
        $\Ohtilde(n/m \cdot (\sqrt{km} + k^{2}))$ time.
        \item There is a quantum algorithm deciding whether $\OccH_k(P,T)\ne \emptyset$
        using $\Ohtilde(\sqrt{n/m} \cdot \sqrt{km})$ queries and
        $\Ohtilde(\sqrt{n/m} \cdot (\sqrt{km} + k^{2}))$ time.\qedhere
    \end{enumerate}
\end{restatable*}
\begin{restatable*}{mtheorem}{qpmwe}\label{thm:qpmwe}
    Let $P$ denote a pattern of length $m$, let $T$ denote a text of length $n$, and let $k > 0$ denote an integer threshold.
    \begin{enumerate}
        \item There is a quantum algorithm that solves the \PMwE
        problem using $\Ohhat(n/m \cdot \sqrt{km})$ queries and
        $\Ohhat(n/m \cdot (\sqrt{km} + k^{3.5}))$ time.
        \item There is a quantum algorithm deciding whether $\OccE_k(P,T)\ne \emptyset$
        using $\Ohhat(\sqrt{n/m} \cdot \sqrt{km})$ queries and
        $\Ohhat(\sqrt{n/m} \cdot (\sqrt{km} + k^{3.5}))$ time.\qedhere
    \end{enumerate}
\end{restatable*}
\medskip

The query complexity for both problems has a remarkable characteristic:
For the case \(n = \Oh(m)\), justified by the so-called Standard Trick%
\footnote{The Standard Trick divides  \(T\) into blocks of length \(\Theta(m)\), with the
    last block potentially being shorter.
This allows us to solve the pattern matching problem on each block independently.
This approach is optimal in the quantum setting.},
verifying whether the Hamming distance or edit distance between two strings is bounded by
\(k\) has the same query complexity (up to polylogarithmic and subpolynomial factors)
as verifying whether \(\OccH_k(P,T)\ne \emptyset\) and \(\OccE_k(P,T)\ne \emptyset\).
A naive implementation might incur an additional \(\Oh(\sqrt{n})\) factor for the latter
verification on top of the former.
However, our results, together with those from \cite{KNW24}, demonstrate that no overhead
is necessary.

\paragraph*{Technical Contributions in the Quantum Setting and Beyond.}

Our algorithm brings two technical contributions.

\begin{enumerate}[(1)]
    \item Our first contribution is limited to the quantum setting,
        yet extends beyond string problems:
        we present a generalization of the Høyer, Mosca, and de Wolf~\cite{groverext} search algorithm.
        Despite being a relatively straightforward extension,
        we believe this search algorithm could prove highly useful in various contexts.
        For the sake of conciseness, we postpone the detailed discussion of this contribution to
        the Technical Overview.

    \item On the contrary, our second contribution is limited to string problems,
        yet extends beyond the quantum setting:
        we give an efficient implementation of communication complexity results
        for approximate pattern matching.

        Here, \emph{communication complexity} refers to the minimal space required to encode the
        sets \(\OccH_k(P,T)\) and \(\OccE_k(P,T)\), respectively.
        Applying the Standard Trick, we restrict our analysis to the scenario where \(n = \Oh(m)\).
        Clifford, Kociumaka, and Porat~\cite{CKP19}   demonstrated that \(\Oh(k)\) space suffices
        for the mismatches case
        and Kociumaka, Nogler, and Wellnitz~\cite{KNW24} demonstrated that \(\Oh(k \log n)\) space
        suffices for the edits case.
        Both results are surprising, as \(\Oh(k)\) represents the space required to
        encode the mismatches or edits that occur in the pattern for a single occurrence.

        The authors of \cite{KNW24} argue that in both cases,
        the communication complexity can be reframed as follows:
        given \(P, T\) and \(k\), we identify a subset of positions in \(P\) and \(T\) and replace the characters within this subset with sentinel characters \(\#^c\),
        where \(c\) belongs to a set of identifiers.
        This transformation yields \(P^\#, T^\#\), each with distinct substitution rules for mismatches and edits.

        The strings \(P^\#, T^\#\) possess two notable properties.

        First, they preserve occurrences,
        that is, \(\OccH_k(P^\#,T^\#) = \OccH_k(P,T)\) and \(\OccE_k(P^\#,T^\#) = \OccE_k(P,T)\) in their respective scenarios.

        Second, the encoding of \(P^\#\) and \(T^\#\) requires only \(\Oh(k \log n)\) space.
        Despite this concise representation, the challenge lies in efficiently constructing \(P^\#\) and \(T^\#\) from this information,
        breaking the linear barrier.
        This constraint is a primary reason why the quantum algorithm of \cite{KNW24} does not achieve sublinear quantum time.

        In this paper, we demonstrate the \(\Ohtilde(k^2)\)-time construction of grammar-like representations for \(P^\#\) and \(T^\#\),
        each of size \(\Ohtilde(k)\), indeed breaking the linear barrier.
        Our new representations support \modelname operations, a set of basic operations supported
        needed to run the recent algorithms from~\cite{CKW20} and~\cite{CKW22}.
        This allows us to formulate our algorithm for edits such that any improvement in the \(\Oh(k^{3.5})\) term
        for the \PMwE Problem in the \modelname model is reflected in \cref{thm:qpmwe}.

        We believe that the construction of such data structures could play a crucial role in
        future developments aimed at improving the time complexity for \PMwE
        across multiple computational settings.
        Among these settings is the classical one,
        where closing the gap between the current \(\Ohtilde(n + k^{3.5} n/m)\) and the lower
        bound \(\Omega(n + k^{2} n/m)\) remains highly relevant.

        The technical contribution of the grammar-like representation of \(P^\#\) and
        \(T^\#\) is
        achieved through another result of this paper: we show that a generic length-$n$
        string $S$ satisfying a system of $b$ substring equations of the form
        $S\fragmentco{x}{x'}=S\fragmentco{y}{y'}$ admits a compressed representation of
        size $\Ohtilde(b)$ that can be constructed in $\Ohtilde(b^2)$ time.
        Previous work~\cite{GKRR20} provided an $\Oh(n+b)$-time algorithm that constructs
        a generic solution explicitly.
        Once again, we postpone the detailed discussion of this contribution to the
        Technical
        Overview and \cref{sec:encoding}.
\end{enumerate}

\subsection{Open Questions}

We conclude the introduction with open questions and potential future research directions.
\begin{itemize}
    \item Can the \(k^2\) and \(k^{3.5}\) terms be improved in \cref{thm:qpmwm} and
        \cref{thm:qpmwe}?
        These terms are the only obstacles preventing our algorithm from achieving
        near-optimal quantum time for all regimes of~$k$.

        For mismatches, nothing precludes the possibility of replacing the \(k^2\) term with a
        \(k\) term, thereby aligning the time complexity with the \(\Ohtilde(\sqrt{km})\)
        query complexity.
        However, this improvement seems realistic only for verifying whether \(\OccH_k(P,T) \neq \emptyset\).
        For computing \(\OccH_k(P,T)\), already \(\Oh(\min(n,k^2))\) space appears
        necessary for its representation when $n = \Oh(m)$.

        For edits, the best improvement we can realistically hope for is to reduce the
        \(k^{3.5}\) term to a value between \(k^{1.5}\) and \(k^2\).
        This constraint arises from the conditional lower bound for computing the edit distance.
        Indeed, a quantum counterpart of the Strong Exponential Time Hypothesis (SETH)
        rules out an \(\Oh(n^{1.5-\epsilon})\) time algorithm \cite{BPS19},
        leaving a major open problem for the exact optimal quantum time complexity
        \cite{Rubinstein19}, which lies between \(\Omega(n^{1.5})\) and \(\Oh(n^2)\).

        We observe that a \(\Ohtilde(k^2)\) \modelname algorithm for the Pattern Matching with
        Edits Problem would directly yield an \(\Ohhat(\sqrt{n/m} \cdot (\sqrt{km} +
        k^{2}))\)-time quantum algorithm.
        Hence, if it is indeed possible to close the gap mentioned in the introduction and
        to express it in the \modelname model,
        this would provide such an algorithm with no additional effort.

    \item Our classical subroutine solving a system of $b$ substring equations takes
        $\Ohtilde(b^2)$ time to construct an $\Oh(b)$-size compressed representation of a
        universal solution of the system.
        An intriguing open question asks whether the same task can be achieved in
        $\Ohtilde(b)$ time.
        We find it already challenging to decide in $b^{2-\Omega(1)}+n^{1-\Omega(1)}$ time
        whether a given system of $b$ substring equations admits any non-trivial solution
        (other than $S=a^n$ for $a\in \Sigma$; strings of this form satisfy every system
        of substring equations).

    \item We leave open the discovery of new applications for our technical tools.
        Our results, which allows for an efficient algorithmic implementation of communication
        complexity findings,
        further raise the question of whether the running time for both pattern matching
        problems can be improved across various settings.
\end{itemize}

\section{Technical Overview}

Throughout this overview, write \(P\) for a pattern of length  \(m\) and \(T\) for a text
of length \(n\).
Via the \emph{Standard Trick}, we can focus on the case when \(n \le 3/2\cdot m\): otherwise, we
consider \(\Oh(n/m)\) fragments of \(T\) that are each of length at most \(3/2\cdot m\) and that
start \(m/2\) positions apart from each other.
While in the standard setting (or when listing all occurrences), the Standard Trick incurs a
running time factor of \(\Oh(n/m)\);
in the quantum setting (and when we wish to only check for the existence of an occurrence),
we can in fact save a factor of \(\Oh(\sqrt{n/m})\) over the standard setting by using
\GS (which we also use in several other places of our algorithms).

\begin{restatable*}[\GS (Amplitude amplification)~\cite{DBLP:conf/stoc/Grover96,brassard2002quantum}]{theoremq}{grover}\label{prp:grover}
    Let $f\colon \fragmentco{0}{n} \to \{0,1\}$ denote a function implemented as a
    black-box unitary operator (oracle) $U_f$ such that
    $U_f\ket{i}\ket{b} = \ket{i}\ket{b\oplus f(i)}$.
    There is a quantum algorithm that, in $\Ohtilde(\sqrt{n})$ time, including $\Oh(\sqrt{n})$
    applications of $U_f$, correctly with probability at least $2/3$, finds an element $x\in f^{-1}(1)$ or reports that $f^{-1}(1)$ is empty.
\end{restatable*}

In the remainder of this Technical Overview, we thus focus on the case when \(n \le 3/2
\cdot m\); in particular, our main objective is to obtain quantum algorithms in this case
that given an integer threshold \(k\) and (quantum) oracle access to \(P\) and \(T\) (refer to
\cref{7.6-2} for a precise definition of our model of computation) compute
\begin{itemize}
    \item (a representation of) \(\OccH_k(P, T)\) using \(\Ohtilde(\sqrt{km})\) quantum queries and
        \(\Ohtilde(\sqrt{km} + k^2)\) extra time; and
    \item (a representation of) \(\OccE_k(P, T)\) using \(\Ohhat(\sqrt{km})\) quantum queries and
        \(\Ohhat(\sqrt{km} + k^{3.5})\) extra time.
\end{itemize}
Combined with our previous discussion, this then yields \cref{thm:qpmwm,thm:qpmwe}.

\subsection{The \modelname Model and How (Not) to Use the Algorithms for the \modelname
Model}

In order to unify algorithms for different settings (standard, fully-compressed, and more),
in~\cite{CKW20}, the authors introduced the \modelname model: instead of working with the
input of a given setting directly, the \modelname model introduces an abstraction in the form
of a small set of basic operations%
\footnote{such as access at a given position, compute the length,
compute longest common prefixes and compute exact internal occurrences; refer to
\cref{sec:pillar} for a complete list and~\cite{CKW20} for a more detailed discussion)}
to work with the input instead.
The complexity of a \modelname algorithm is then measured in the number of basic
\modelname operations it uses.

The authors of~\cite{CKW20} also give \modelname algorithms for approximate pattern matching
Problems; for \PMwE,~\cite{CKW22} yields an improved \modelname
algorithm over~\cite{CKW20}.

\begin{restatable*}[{\cite[Main Theorem 8]{CKW20}}]{theoremq}{hdalg}\label{thm:hdalg}
    Given a pattern $P$ of~length $m$, a text $T$ of~length $n$, and a positive threshold
    $k\le m$, we can compute (a representation of) the set $\OccH_k(P,T)$
    using $\Oh(n/m \cdot k^2)$ \modelname operations and \(\Oh(n/m \cdot k^2 \log\log k)\)
    extra time.
\end{restatable*}

\begin{restatable*}[{\cite[Main Theorem 2]{CKW22}}]{theoremq}{edalg}\label{thm:edalg}
    Given a pattern $P$ of~length $m$, a text $T$ of~length $n$, and a positive threshold
    threshold $k \leq m$, we can compute (a representation of) the set $\OccE_k(P,T)$
    in $\Ohtilde(n/m \cdot k^{3.5} )$ \modelname operations.
\end{restatable*}

As was observed in \cite[Appendix~C]{CPR24}, existing results for the quantum setting
allow implementing all basic \modelname operations using \(\Ohhat(\sqrt{m})\) quantum
queries to the input.
Combined with the \modelname algorithms, we thus directly obtain the following quantum
algorithms for approximate pattern matching problems in our setting.

\begin{corollaryq}\label{7.6-3}
    Given a pattern $P$ of~length $m$, a text $T$ of~length $n \le 3/2\cdot m$, and a positive threshold
    $k\le m$, we can compute
    \begin{itemize}
        \item (a representation of) $\OccH_k(P,T)$ using $\Ohhat(\sqrt{m} \cdot k^2)$ quantum queries
            and time; and
        \item (a representation of) $\OccE_k(P,T)$ using $\Ohhat(\sqrt{m} \cdot k^{3.5})$ quantum queries
            and time.
        \qedhere
    \end{itemize}
\end{corollaryq}

Clearly, the algorithms of \cref{7.6-3} are too slow to prove \cref{thm:qpmwm,thm:qpmwe}.
Hence, we proceed as follows.

\begin{enumerate}
    \item
        Using \(\Ohtilde(\sqrt{km})\) quantum queries (Hamming distance) or
        \(\Ohhat(\sqrt{km})\) quantum queries (edit distance), we
        learn a set \(C\) of \emph{candidates} for starting positions of occurrences,
        where \(C\) is either a single arithmetic progression or a set of \(\Ohtilde(k)\)
        positions.
        While we can ensure (w.h.p.) that \(C\) contains all occurrences with at most
        \(k\) mismatches or edits, \(C\) may also contain some ``occurrences'' with \(\Oh(k)\)
        mismatches or edits.
        This relaxation allows us to compute the set \(C\) efficiently, but necessitates
        that we filter out some of the positions in \(C\).
    \item
        To be useful to us, we need to enhance \(C\) with more information about each
        single occurrence. In particular, for mismatches, we need the \emph{mismatch
        information} for each \(c \in C\), that is, the set of positions \(i\) where
        \(T\position{c + i}\) and \(P\position{i}\) differ (as well as the corresponding
        characters); for edits, we correspondingly need a suitable \emph{alignment}.

        However, computing said extra information for a single position \(c\in C\)
        directly takes \( \Ohtilde(\sqrt{km} + k^2)\)
        time and \(\Oh(k)\) space.%
        \footnote{Observe that if we could directly compute the mismatch/edit information for each
        \(c \in C\), then this would allow us to filter the positions in \(C\) with more
        than \(k\) mismatches/edits and thus solve the overall problem.}
        Hence, we proceed as outlined by the structural results of \cite{CKP19} for
        \PMwM in the streaming setting and the generalization
        of said structural result by \cite{KNW24} to \PMwE.
        As shown in \cite{CKP19,KNW24}, by computing the mismatch information for a set
        \(C' \subseteq C\) of \(\Oh(\log n)\) \emph{important} occurrences,
        we can infer the mismatch information also for the remaining positions in \(C\).

        Using said mismatch/edit information, we write down \(C'\) (and thus implicitly
        also \(C\)) as a set \(E\) of
        \(\Ohtilde(k)\) \emph{substring equations};
        that is, constraints of the form \(S\fragmentco{i}{j} = S\fragmentco{i'}{j'}\).
        We write \(P^{\#}\) and \(T^{\#}\) for the \emph{most general} solution to \(E\).

        Our construction ensures (by \cite{CKP19,KNW24}) that we can correctly compute
        \(k\)-bounded Hamming and edit distances of
        \(P\) and (fragments of) \(T\) based on computing Hamming and edit distances of \(P^{\#}\) and
        \(T^{\#}\); that is, we show that \(\OccH_k(P^{\#}, T^{\#}) = \OccH_k(P, T)\) (in
        the Hamming case) and that \(\OccE_k(P^{\#}, T^{\#}) = \OccE_k(P, T)\) (in the
        edit case).
    \item
        We then develop and apply a classical \(\Ohtilde(k^2)\) algorithm to transform the
        substring equations into an (eXtended) straight-line program (xSLP) of size
        \(\Ohtilde(k)\) (which represents \(P^\#\) and \(T^\#\)).
    \item
        Finally, we run the existing approximate pattern matching algorithms for (x)SLPs.
        While this recomputes all \(k\)-mismatch (or \(k\)-edit) occurrences from scratch
        and thus may feel somewhat wasteful, this allows us to reuse the existing
        algorithms in an opaque manner. In particular, any future improvements to said
        existing algorithms transfer directly to the quantum setting.
\end{enumerate}

We start with a detailed exposition of our algorithm for \PMwM
and then briefly describe the challenges that need to be overcome when generalizing said
algorithms to \PMwE.

\subsection{Step 1: Computing a Set of Good Candidates for Approximate Occurrences}

As our first main step, we aim to efficiently compute a good overestimation \(C\) for the set of
occurrences. To this end, our general approach is very similar to recent approximate
pattern matching algorithms for \modelname model from \cite{CKW20,CKW22}. Crucially, we
have to exploit the slack in \(C\) compared to the actual solution here, so that we can
obtain (quantum) speed-ups. We detail our exploits next; in particular, we discuss
how we obtain the following result.

\begin{restatable*}{lemma}{qhdanalyze}\label{lem:qhd_analyze}
    Let $P$ denote a pattern of length $m$, let $T$ denote a text of length $n \le
    3/2\cdot m$, and let $k > 0$ denote an integer threshold.

    Then, there is a quantum algorithm that using $\Ohtilde(\sqrt{km})$ queries and
    quantum time,
    computes a set $C$ such that $\OccH_k(P,T) \subseteq C \subseteq \fragment{0}{n-m}$
    and
    \begin{itemize}
        \item $|C|=\Ohtilde(k)$, \textbf{or}
        \item $C$ forms an arithmetic progression and $\OccH_k(P,T) \subseteq C \subseteq \OccH_{10k}(P,T)$.
            \qedhere
    \end{itemize}
\end{restatable*}

Toward a proof of \cref{lem:qhd_analyze}, we proceed similarly to many of the recent
state-of-the-art algorithms for approximate pattern matching: we first analyze the
pattern (using \cite[Lemma~3.6]{CKW20}), which gives rise to three possible cases; for
each case we then present separate algorithms.

\begin{restatable}[{\tt Analyze}, {\cite[Lemma~3.6]{CKW20}}]{lemmaq}{analyzeold}\label{prp:I}
    Let $P$ denote a string of~length $m$ and let $k \le m$ denote a positive integer.
    Then, at least one of~the following holds.
    \begin{enumerate}[(a)]
        \item The string $P$ contains $2k$ disjoint \emph{breaks} $B_1,\ldots, B_{2k}$
            each having period $\per(B_i)> m/128 k$ and length $|B_i| = \lfloor
            m/8 k\rfloor$.
            \label{item:prp:I:a}
        \item The string $P$ contains disjoint \emph{repetitive regions} $R_1,\ldots, R_{r}$
            of~total length $\sum_{i=1}^r |R_i| \ge 3/8 \cdot m$ such
            that each region $R_i$ satisfies
            $|R_i| \ge m/8 k$ and has a primitive \emph{approximate period} $Q_i$
            with $|Q_i| \le m/128 k$ and $\hd(R_i,Q_i^*) = \ceil{8 k/m\cdot |R_i|}$.
            \label{item:prp:I:b}
        \item The string $P$ has a primitive \emph{approximate period} $Q$ with $|Q|\le
            m/128 k$ and $\hd(P,Q^*) < 8 k$.
            \label{item:prp:I:c} \qedhere
    \end{enumerate}
\end{restatable}

\medskip

While the proof of \cref{prp:I} is already algorithmic in  \cite{CKW20}, the running time
of said algorithm is too slow for the quantum setting.
Fortunately, we can use the quantum implementation from \cite{JN23} to analyze the
pattern; compared to the statement in \cite{JN23}, we make explicit how the output is
represented.%
\footnote{The algorithm of \cite{JN23} ensures that \(Q_i =
    R_i\fragmentco{0}{|Q_i|}\). Similarly, it is guaranteed that we return at most
\(\Oh(k)\) repetitive regions.}

\begin{restatable*}[{\cite[Lemma 6.4]{JN23}}]{lemmaq}{quantumanalyze} \label{lem:hd_analyzeP}
    Given a pattern $P$ of length $m$,
    in $\Ohtilde(\sqrt{km})$ query and quantum time, we can compute at least one of the
    following.
    \begin{enumerate}[(a)]
        \item A collection of \(2k\) disjoint breaks
            $B_1,\ldots, B_{2k}$,
            each having period $\per(B_i)> m/128 k$ and length $|B_i| = \lfloor
            m/8 k\rfloor$.

            For each break, we return its starting position in \(P\).
        \item A collection of disjoint repetitive regions $R_1,\ldots, R_{r}$
            of~total length $\sum_{i=1}^r |R_i| \ge 3/8 \cdot m$ such
            that each region $R_i$ satisfies
            $|R_i| \ge m/8 k$ and has a primitive \emph{approximate period} $Q_i$
            with $|Q_i| \le m/128 k$ and $\hd(R_i,Q_i^*) = \ceil{8 k/m\cdot |R_i|}$.

            For each repetitive region \(R_i\), we return its starting position in \(P\), its
            length \(|R_i|\), as well as the length \(|Q_i|\).
        \item A primitive \emph{approximate period} $Q$ of \(P\), with $|Q|\le
            m/128 k$ and $\hd(P,Q^*) < 8 k$.

            We return \(Q\) as a position \(i\) in \(P\) together with the length \(|Q|\)
            and an integer \(r\) such that \(Q = \rot^r(P\fragmentco{i}{i + |Q|})\).
            \qedhere
    \end{enumerate}
\end{restatable*}
\medskip

For each of the three different cases, we give a different fast quantum algorithm to
achieve the results claimed in \cref{lem:qhd_analyze}; consult \cref{sec:qhd_analyze} for
details.

\begin{description}
    \item[Breaks.] We sample u.a.r.\ a break \(B^*\) from the set of \(2k\) breaks and
        find the exact occurrences of \(B^*\) in \(T\). We use an adaption of \cite{HV03}
        to compute the exact occurrences in $\Ohtilde(\sqrt{\mathsf{occ} \cdot n})$
        quantum time and queries, where \(\mathsf{occ} = \Oh(k)\) is the solution size
        (which is bounded in size by \cite[Main~Theorem~5]{CKW20}).
        For each exact occurrence \(B^* = P\fragmentco{p}{p + |B^^*|} = T\fragmentco{t}{t + |B^*|}\), we include the
        position \(t - p\) into the result set \(C\).

        First, the set \(C\) that we obtain in this way is of size \(\Oh(k)\).
        Next, as every \(k\)-mismatch occurrence of \(P\) in \(T\) has to match exactly at
        least \(k\) of the breaks, selecting \(B^*\) u.a.r.\ means that each single
        position is included in \(C\) with probability at least a half.
        Hence, \(\Oh(\log n)\) repetitions and taking the union of all sets \(C\) obtained
        boosts the success probability to \(1 - 1/\poly(n)\) while keeping the size of \(C\) at
        \(\Ohtilde(k)\).

        Classical algorithms would now proceed to \emph{verify} each identify position in
        \(C\) to filter out the false positive. This is where we gain: we postpone the
        verification to a later step and just return \(C\) for now.
    \item[Almost Periodicity.]
        It is instructive to discuss the almost periodic case next.
        To that end, observe that \(P\) is at Hamming distance at most \(8k\) to a
        (prefix of the) string \(Q^\infty\), where \(Q\) is a primitive string of length
        at most \(|P|/128k\). In particular, this means that \(P\) contains at least
        \(100k\) exact occurrences of \(Q\); thus also \(T\fragmentco{m - 50k}{m}\) has to
        contain at least \(40k\) exact copies of \(Q\) (or there are no \(k\)-mismatch
        occurrences of \(P\) in \(T\)).

        Hence, we first try to align \(Q^\infty\) with (the center part of) \(T\) such that the number of
        mismatches is minimized.
        In practice, this means that we partition the center
        part of \(T\) into blocks of length \(Q\) and compute (using \cite{HV03}) the unique rotation of \(Q\) that is
        most common among said blocks (if such a unique rotation does not exist, we easily
        obtain that there are no \(k\)-mismatch occurrences).

        Once we have identified a suitable alignment of \(Q^\infty\) to the center part of
        \(T\), we extend said alignment to the left and to the right as much as possible
        while keeping the number mismatches bounded by \(\Oh(k)\).
        In a fashion similar to the algorithm of \cite{bkw19}, this then allows reporting
        good candidates
        \begin{itemize}
            \item  either as an arithmetic progression (take the first position of
                each repetition of an appropriate rotation of \(Q\) in \(T\))
            \item or---if there are at least \(k+1\) mismatches between
                \(P\) and \(Q^\infty\)---as a small set: similar to the breaks case, we
                have to align a mismatch of \(P\) to \(Q^\infty\) to a mismatch \(T\) to
                \(Q^\infty\).
        \end{itemize}

        This is the only case where we might return \(C\) as an arithmetic progression.
    \item[Repetitive Regions.]
        Consider a fixed repetitive region \(R\). Roughly speaking, we can view \(R\) as a
        ``miniature'' (almost) periodic case: as in \cite{CKW20,CKW22}, we run the
        algorithm for the periodic case for \(R\) and an suitably adapted threshold
        \(k'\). As we have a lower bound on the number of mismatches between \(R\) and the
        corresponding period \(Q\), the periodic case returns a small set of occurrences.

        Now, as with the breaks case, we sample (multiple times) a repetitive region;
        however, this time proportionally to the number of its mismatches.
        Finally we return the union of all candidate sets the we obtained in intermediate
        steps.
\end{description}

\subsection{Step 2: From Good Candidates to Substring Equations}

As the next major step, we wish to transform the candidate positions \(C\) into a set of
substring equations.

\begin{restatable*}[Substring Equation~\cite{GK17,GKRR20}]{definition}{nineeighteenone}\label{def:substring_equation}\label{def:universal_solution}
    For integers $0\le x \le x'\le n$ and $0\le y \le y' \le n$ satisfying $x'-x=y'-y$, we
    define a \emph{substring equation} on length-$n$ strings to be a constraint of the form $e
    : T\fragmentco{x}{x'} = T\fragmentco{y}{y'}$, where $T$ is a formal variable representing
    a length-$n$ string.

    A \emph{system} of substring equations is a set of substring equations $E$ on length-$n$
    strings. A string $S\in \Sigma^n$ \emph{satisfies} the system $E$ if and only if, for
    every equation $E\ni e:T\fragmentco{x}{x'} = T\fragmentco{y}{y'}$, the fragments
    $S\fragmentco{x}{x'}$ and $S\fragmentco{y}{y'}$ match, that is, they are occurrences of the
    same substring.

    A string $S\in \Sigma^n$ is a \emph{universal solution} of $E$ if it satisfies $E$ and
    every string $\hat{S}\in \hat{\Sigma}^n$ that satisfies $E$ is an image of $S$ under a
    letter-to-letter morphism (there is a function $\phi : \Sigma \to \hat{\Sigma}$ such
    that $\hat{S}\position{x}=\phi(S\position{x})$ for $x\in \fragmentco{0}{n}$).
\end{restatable*}

In particular, we are interested in a system of substring equations that has a universal
solution from which we can infer Hamming distances between our original strings \(P\) and
\(T\).
To that end, the starting of the candidates \(C\) alone is unfortunately not sufficient;
additionally, we need the \emph{mismatch information} for each \(c \in C\), that is, the
set
\[\MI(P,T\fragmentco{c}{c + m}) \coloneqq \{(i,P\position{i},T\position{c + i})  \mid
P\position{i} \neq T\position{c + i} \}.\]
We call the pair \((c, \MI(P,T\fragmentco{c}{c + m}))\) an \emph{enhanced position} and
say a set of (candidates for) occurrences is enhanced if each of its contained positions
is enhanced.

Using \GS, we are able to compute the mismatch information for a single
position \(c\) in time \(\Oh(\sqrt{km})\) (see \cref{lem:find_mismatches}).
Naturally, this running time is too slow to enhance all of the \(\Oh(k)\) positions in \(C\).
Hence, we first have to take a detour to identify a small set \(C' \subseteq C\) of  \emph{improtant}
positions in \(C\)---for which we then indeed compute the mismatch information.
Selecting \(C'\) cleverly allows us to recover the mismatch information also for the
positions \(c  \in C \setminus C'\).

As before, consider a string \(P\) of length \(m\) and a string \(T\) of length \(n \le
3/2 \cdot m\).
For a simpler exposition, assume that \(C\) contains both \(0\) and \(n-m\),
that is, there is an occurrence of \(P\) both at the start and at the end of \(T\).
(Otherwise, we have to trim \(T\) accordingly.)

Recall our goal to compute (and store) the \emph{mismatch information} for each occurrence in
\(C\); that is, the positions where \(P\) and \(T\) differ as well as the corresponding
characters.
First, observe that due to \(\left\{ 0, n-m \right\} \subseteq C\) and \(n \le 3/2 \cdot m\), this is easy if \(C\)
is an arithmetic progression---the large overlap in the occurrences directly implies that
we can compute and store all mismatches.

For the more general case that \(|C| = \Ohtilde(k)\),
let us first have a quick look at
exact pattern matching (that is, suppose that all occurrences in \(C\) are in fact exact
occurrences).

From the large overlap of the two occurrences, we can infer that both \(P\) and
\(T\) are \emph{periodic} with a common period \(q\) (which is not necessarily the smallest period of
\(P\) and \(T\)); in particular, we learn that there are also exact occurrences of \(P\)
in \(T\) at every position \(0 \le iq \le n- m\).

Now, iterate over the other occurrences in \(C\).
For a fixed \(c \in C\), there are two cases.
\begin{itemize}
    \item If \(q\) divides \(c\), then we do not learn anything; we already concluded that
        there must be an occurrence at position \(c\).
    \item Otherwise, the famous Periodicity Lemma~\cite{FW65} yields that \(P\) and \(T\)
        must in fact be periodic with period \(q' \coloneqq \gcd(q, c) \le q / 2\).
\end{itemize}
Now, call an occurrence in \(C\) \emph{important} if it falls into the second case.
Clearly, which occurrence we label as important depends on the order in which we
process occurrences. However, it is easy to see that it does not matter in which order we process
occurrences,
\begin{itemize}
    \item at most \(\Oh(\log m)\) occurrences are important (for each important
        occurrence, the period \(q\) of \(P\) and \(T\) decreases by a factor of \(2\));
        and
    \item in the end, we have always learned the same period \(q^*\) of \(P\) and \(T\),
        namely \emph{the} period of \(P\) and \(T\).
\end{itemize}
To summarize, \(\Oh(\log m)\) occurrences suffice to learn enough structure about \(P\)
and \(T\) to infer all of the remaining occurrences.

While for exact pattern matching, the above process is somewhat unnecessarily complicated
for computing the period of \(P\) and \(T\) (we could just compute the period of \(P\) and
\(T\) directly); it can be generalized to
both \PMwM~\cite{CKP19} and Pattern Matching with
Edits~\cite{KNW24} to obtain that in said settings as well at most \(\Oh(\log n)\)
(approximate) occurrences suffice to infer the starting positions of all of the remaining
occurrences---and in particular the corresponding mismatch/edit information.

For a somewhat more formal discussion as to why and how the above results are useful to
us, let us first formalize ``what we learn'' from a set of (approximate) occurrences.%
\footnote{As mentioned earlier, we focus on \PMwM for now.}%
\footnote{It is instructive to use a different symbol for the set in \cref{def:bg_h}, as
we intend to iteratively construct inference graphs.}

\begin{restatable*}{definition}{mismgs}\label{def:bg_h}
    For two positive integers \(m \le n\) and a set of positive integers \(S \subseteq
    \fragment{0}{n-m}\), the \emph{inference graph \(G_S\) (of $m$, $n$, and \(S\))} is an undirected graph with
    \begin{itemize}
        \item \(m+n\) vertices \(V(G_S) \coloneqq \{ p_0,\dots,p_{m-1},
            t_0,\dots,t_{n-1}\}\) and
        \item \(p\cdot|S|\) undirected edges \(E(G_S) \coloneqq \{ \{ p_i, t_{i+s}\}  \mid
                i\in\fragmentco{0}{m}, s \in S\}\).
    \end{itemize}

    For strings \(P\) and \(T\) and a set $S \subseteq \OccH_k(P, T)$, the \emph{(mismatch) inference graph}
    $\bG_{S}$ is the inference graph of \(|P|, |T|,\) and \(S\), where we additionally
    color every edge \(\{p_i, t_j\} \in E(\bG_{S})\)
    \begin{itemize}
        \item \emph{black} if \(P\position{i} = T\position{j}\), and
        \item \emph{red} if \(P\position{i} \neq T\position{j}\).
    \end{itemize}
    We say that a connected component of $\bG_{S}$ is \emph{red} if it contains at least
    one red edge; otherwise, we say that the connected component is \emph{black}.
\end{restatable*}

Clearly, in the inference graph for strings, all characters of a black component are
equal; further one can show that \(\bG_{S}\) has exactly \(\gcd(S)\) connected components
(black and red) in total.

Our first key insight is that retaining the information of just the red components of
\(\bG_{S}\) suffices to correctly compute Hamming distances.
Formally, based on \(\bG_{S}\), we first single out the crucial parts of \(P\) and \(T\)
(which we dub the \emph{\(S\)-core} of \(P\) and \(T\), denoted by $P^\#_S$ and $T^\#_S$).

\begin{restatable*}[Construction of $P^\#$ and $T^\#$]{definition}{defpsh}\label{def:psh}
    For a string \(P\) of length $m$, a string \(T\) of length \(n \le 2m\), and a fixed
    set $S\subseteq \fragment{0}{n-m}$,
    we transform \(P\) and \(T\) into
    we denote by $P^\#_S$ and $T^\#_S$ the \emph{\(S\)-cores} of \(P\) and \(T\); which
    are obtained from \(P\) and \(T\) by
    iterating over black connected components of $\bG_S$ and placing a sentinel
    character (unique to the component) at every position that belongs to the component.

    If \(S\) is understood from the context, we also write
    \(P^\# \coloneqq P_S^\#\) and \(T^\# \coloneqq T_S^\#\).
\end{restatable*}

\begin{restatable*}[adapted from~\cite{CKP19}]{theorem}{thmhdsubhash} \label{thm:hd_subhash}
    For a string \(P\) of length $m$, a string \(T\) of length \(n \le 2 m\), and a set
    \(\{0,n-m\} \subseteq S \subseteq \fragment{0}{n-m}\), write
    \(P^\# \coloneqq P_S^\#\) and \(T^\# \coloneqq T_S^\#\) for the \(S\)-cores of \(P\)
    and \(T\).
    \begin{enumerate}
        \item For every $a\in \fragmentco{0}{m}$ and $b \in \fragmentco{0}{n}$, we have
            that $P^\#_S\position{a} = T^\#_S\position{b}$ implies  $P\position{a} = T\position{b}$.\\
            (No new equalities between characters are created.)
            \label{it:hd_subhash:ii}
        \item If $g \coloneqq \gcd(S)$ is a divisor of $x\in \fragment{0}{n-m}$, then $\MI(P, T\fragmentco{x}{x +
            m}) = \MI(P^\#_S, T^\#_S\fragmentco{x}{x + m})$. \\
            (For integer multiples of \(g\), the mismatch information and the Hamming distance are preserved.)
            \label{it:hd_subhash:i}
            \qedhere
    \end{enumerate}
\end{restatable*}
\medskip

As an immediate consequence of \cref{thm:hd_subhash}, we obtain that
if $\{0,|T|-|P|\}\subseteq S$ and \(\gcd(S) = \gcd(\OccH_k(P,
T))\), then \(\OccH_k(P, T) = \OccH_k(P^\#_S, T^\#_S)\).

Recalling the set \(C\) of candidates for occurrences (that is, \(\OccH_k(P, T) \subseteq
C \subseteq \OccH_{10k}(P, T)\)), we observe that \(\gcd(C)\) is a divisor of
\(\gcd(\OccH_{k}(P, T))\) and thus applying  \cref{thm:hd_subhash} to any set \(C'\) with
\(\gcd(C') = \gcd(C)\) yields strings \(P^\#_{C'}\) and \(T^\#_{C'}\)
that correctly capture the set \(\OccH_{k}(P, T)\)
(and even something more).

Further, we obtain a method to compute such a set \(C'\).
\begin{enumerate}
    \item We start with \(C' \coloneqq \left\{ 0, n-m \right\}\) and compute \(g \coloneqq \gcd(C')\).
    \item For each \(c \in C\), we check if \(g\) divides \(c\) and skip \(c\) if that is
        the case.
        Otherwise, we add \(c\) to \(C'\) and update \(g\).
\end{enumerate}
As before, whenever we add a position \(c\) to \(C'\), the greatest common divisor \(g\) decreases by a factor
of at least 2. Further, as we try all \(c \in C\), in the end we obtain a set \(C'\) of
size \(\Oh(\log m)\) with \(\gcd(C) = \gcd(C')\).

Now for each \(c \in C'\), it is efficient enough to compute the mismatch information
using \(\Ohtilde(\sqrt{km})\) quantum time. This in turn would allow us to compute the
strings  \(P^\#_{C'} = P^\#_{C}\) and \(T^\#_{C'} = T^\#_C\).
However, naively representing the strings \(P^\#_C\) and
\(T^\#_C\)
still yields strings of length \(\Oh(m)\), destroying any efforts to obtain a sublinear
running time.

Hence, we instead represent \(P^\#_C\) and \(T^\#_C\) as a set of \emph{substring equations}
(whose most general solution is \(P^\#_C\) and \(T^\#_C\)).

\begin{restatable*}{definition}{nineeighteenthree}\label{def:eq_system_hd}
    Consider a string \(P\) of length \(m\), a string \(T\) of length \(n \le 3/2 \cdot m\),
    and an enhanced set of occurrences \(\eso{C'}\).
    Further, write \(D \coloneqq \sigma_1 \cdots \sigma_c\) for a string of all of the \(c\) different
    characters appearing in \(\eso{C'}\).

    We write \(\exw{P}{T}{C'}\) for the set
    of substring equalities constructed in the following manner.
    For each (enhanced) position $(x, \MI(P, T\fragmentco{x}{x + m})) \in \eso{C'}$, we
    add the following substring equations to $\exw{P}{T}{C'}$:
    \begin{enumerate}[(a)]
        \item For each \((j, \sigma_a, \sigma_b) \in \MI(P, T\fragmentco{x}{x + m})\), we add $e' :
            {P}\position{j} = {D}\position{a}$ and $e :
            {T}\position{x+j} = {D}\position{b}$.
        \label{lem:qhd_proxy:rule:a}
        \item For each maximal interval \(\fragmentco{y}{y'}\subseteq \fragmentco{0}{m}
            \setminus \Mis(P,
            T\fragmentco{x}{x + m})\), we add $e : {P}\fragmentco{y}{y'} = {T}\fragmentco{x+y}{x+y'}$.
            \qedhere
        \label{lem:qhd_proxy:rule:b}
    \end{enumerate}
\end{restatable*}
\medskip

We justify our definition by showing that its most general solution behaves as intended.

\begin{restatable*}{lemma}{nineeighteenfour} \label{lem:str_eq_to_xslp_hd}
    Consider a string \(P\) of length \(m\), a string \(T\) of length \(n \le 3/2 \cdot m\),
    and an enhanced set of occurrences \(\eso{C'}\).
    Further, write
    \(D \coloneqq \sigma_1 \cdots \sigma_c\) for a string of all of the \(c\) different
    characters appearing in \(\eso{C'}\).

    The strings \( (T^\#_{C'},P^\#_{C'},D) \) are the unique solution of \( \exw{P}{T}{C'} \) (up to
    renaming of placeholder characters).
\end{restatable*}

Summarizing the step, we obtain the following result.

\begin{lemmaq}\label{lem:twothree}
    Consider a string \(P\) of length \(m\), a string \(T\) of length \(n \le 3/2 \cdot m\),
    a threshold \(k\), and a set
    $C$ such that $\OccH_k(P,T) \subseteq C \subseteq \fragment{0}{n-m}$
    and
    \begin{itemize}
        \item $|C|=\Ohtilde(k)$, \textbf{or}
        \item $C$ forms an arithmetic progression and $\OccH_k(P,T) \subseteq C \subseteq \OccH_{10k}(P,T)$.
    \end{itemize}
    Then, using \(\Ohtilde(\sqrt{km})\) quantum time, we can compute a
    subset \(C' \subseteq C\) and a
    system of substring
    equalities \(\exw{P}{T}{C'}\) of size \(\Ohtilde(k)\) such that
    \begin{itemize}
        \item the strings \( (T^\#_{C},P^\#_{C},D) \) are the unique solution of
            \(\exw{P}{T}{C'}\) and
        \item we have \(\OccH_k(P, T) = \OccH_k(P^\#_C, T^\#_C)\);
    \end{itemize}
    where
    \(D \coloneqq \sigma_1 \cdots \sigma_c\) is a string of all of the \(c\) different
    characters that appear in \(\eso{C'}\).
\end{lemmaq}

\begin{remark}
    Observe that while the preconditions of \cref{lem:twothree} seem complicated, they are
    exactly what we received as the result of Step 1.
\end{remark}

\subsection{Step 3: From Substring Equations to (eXtended) Straight-Line Programs}

Toward our goal of computing \(k\)-mismatch occurrences of \(P\) in \(T\), we still need
to filter out positions from \(C\) that result in more than \(k\) mismatches.
So far, we learned a compressed representation of the input \(P\) and \(T\) in the form of
a small set of substring equations.

Recalling that the algorithms from \cite{CKW20,CKW22} work in a different but similar
fully-compressed setting, we hope to convert our compressed input to a form that can then
be input to \cite{CKW20,CKW22}. In that way, we are able to elegantly employ their
algorithms, without the need to devise a specific algorithm to filter \(C\).%
\footnote{Observe that this is somewhat wasteful: the algorithms from \cite{CKW20,CKW22}
start by re-doing a lot of Step~1, only to then run deterministic exact algorithms---which
are now fast enough as the input is compressed and compressed inputs allow for efficient
processing. We leave as an open question if a more direct approach to filtering \(C\)
yields more efficient and/or simpler algorithms.}

Hence, the main goal of our next step is to transform systems of substring equations into
compressed formats that can be input to the algorithms of \cite{CKW20,CKW22}; in
particular into (eXtended) Straight-Line Programs.

\begin{restatable*}[Extended SLP]{definition}{xSLP}
    We say that a straight-line grammar $\G$ is an \emph{extended straight-line program}
    (xSLP) if one can distinguish a set $\mP_\G\subseteq \N_\G$ of \emph{pseudo-terminals}
    such that:
    \begin{enumerate}
        \item For every pseudo-terminal $A\in \mP_\G$, we have
            $\rhs_\G(A)=\#^A_0\#^A_1\cdots \#^A_{a-1}$ for distinct terminals
            $\#^A_0,\#^A_1,\ldots,\#^A_{a-1}$ \emph{associated with} $A$ that do not occur
            in any other production.
            We define the \emph{length} of a pseudo-terminal $A\in \mP_\G$ as the length
            of its right-hand side (which equals the length of its expansion), that is,
            $|A|=|\rhs_\G(A)|=|\val_\G(A)|$.
        \item For every other non-terminal $A\in \N_\G\setminus \mP_\G$, we have
            $\rhs(A)=BC$ for symbols $B,C\in \mS_\G$.
    \end{enumerate}
    We call an xSLP $\G$ \emph{pure} if every terminal is associated with a
    pseudo-terminal.
    The \emph{size} of an xSLP $\G$ is defined as the number of symbols excluding
    terminals associated with a pseudo-terminal, that is,  $|\G|=|\mS_\G|-\sum_{A\in
    \mP_G}|A|$.
\end{restatable*}

With the aim of broader applicability, we in fact give a classical and deterministic
algorithm to covert general sets of substring equations into (eXtended) Straight-Line
Programs.

\begin{restatable*}{mtheorem}{solvesubstringequations}\label{prp:solve_substring_equations}
    Given a system $E$ of substring equations on length-$n$ strings, in
    $\Oh(|E|^2\log^7 n)$ time, we can construct an xSLP of size $\Oh(|E|\log^4 n)$ that represents a
    universal solution of $E$.
\end{restatable*}

\begin{remark}
    Upon closer inspection one may notice that our xSLPs differ from the (standard) SLPs
    used in \cite{CKW20,CKW22}. Thus not obvious that xSLPs support \modelname operations,
    as required by the algorithms of \cite{CKW20,CKW22}. Hence, we prove that xSLP do
    indeed support \modelname operations. For details, consider \cref{lem:pillar_on_xslp}
    and its proof.
\end{remark}

The high-level idea for proving \cref{prp:solve_substring_equations} is to process $E$
equation by equation.
We begin with an empty system $E=\emptyset$ and an xSLP $\G$ that consists of a single
pseudo-terminal of length $n$ (observe that a string with $n$ distinct characters
constitutes a universal solution of $E=\emptyset$).
We then incrementally add substring equations to system $E$ and update $\G$ so that it still represents a universal solution of $E$.
To that end, it is instructive to introduce the following data structure problem.

\problembox{%
    \textsc{Dynamic xSLP Substring Equation}\\
    {\bf{Maintained Object:}}
    a set \(E\) of substring equations on length-\(n\) strings
    and a pure xSLP $\G$ that generates a string $S$ of fixed length $n$ that is a
    universal solution to \(E\).\\
    {{\bf{Updates}} that modify \(\G\)}:
    \begin{itemize}
        \item \textsc{Init}$(n)$: Given \( n \in \mathbb{Z}_+ \), set \(E \coloneqq
            \emptyset\) and initialize the xSLP \(\G\) such
            that it contains a single pseudo-terminal of length $n$ (thereby being a
            universal solution to \(E = \emptyset\)).\\
            We assume that this operation is always called exactly once before any other
            operations are called.
        \item \textsc{SetSubstringsEqual}$(x, x', y, y')$:
            Given \(x, x', y, y' \in \fragment{0}{n}\) such that \(x < x'\), \(y < y'\) and \(x' - x = y' - y\),
            add to \(E\) the substring equation \(e: T\fragmentco{x}{x'} =
            T\fragmentco{y}{y'}\), where \(T\) is a formal variable representing a
            length-\(n\) string.
            Further, update \(\G\) such that the represented string $S$ is a
            universal solution to \(E \cup \{e\}\).
    \end{itemize}
    {{\bf{Queries}} that do not modify \(\G\):}
    \begin{itemize}
        \item \textsc{Export}: Return \(\G\).
    \end{itemize}
}%

To obtain \cref{prp:solve_substring_equations}, we give an implementation for
\textsc{Dynamic xSLP Substring Equation} that satisfies the following bounds.

\begin{restatable*}{lemma}{dynslpmain}\label{lem:dynslpmain}
    There is a data structure for \textsc{Dynamic xSLP Substring Equation} such that
    after initialization with $\textsc{Init}(n)$ followed by a sequence of $m$ updates
    ($\textsc{SetStringEqual}$)
    \begin{itemize}
        \item the xSLP $\G$ is of size $\Oh(m \log^4 n)$;
        \item the whole sequence of operations takes time \(\Oh(m^2 \log^7 n)\).
            \qedhere
    \end{itemize}
\end{restatable*}

On a high level, we prove \cref{lem:dynslpmain} as follows.
Given a substring equality, we first extract the corresponding fragments from \(\G\); in
particular, we are interested in pointers to the beginnings and ends of said fragments.
Observe that this extraction operation might require us to \emph{split} some of the pseudo-terminals of
\(\G\). Whenever we do indeed split a pseudo-terminal, we also have to ensure that all of
its other occurrences in \(\G\) are transformed accordingly.

Once we have extracted (pointers to) the substrings to be set equal, we then proceed to
actually replace one such substring with the other.
To that end, we repeatedly compute longest common extensions (to skip over parts that are
already equal), and then set to be equal the next symbol in the grammar (this involves
first splitting off the equal parts).

Using a potential function argument, we show that amortized, in each call to
\textsc{SetSubstringEqual} we ``split'' or ``set equal symbols'' at most \(\Oh(\log n)\) times.

To support the required queries on \(\G\), we employ the known literature on AVL
grammars~\cite{Ryt03,KK20,KL21}. In particular, we show that after in total \(m\) ``set equal'' and
``split'' operations on \(\G\), the string represented by \(\G\) has a compressed size of
\(\Oh(m \log n)\).
This in turn allows us to efficiently recompute a new AVL grammar for \(\G\) after each
such operation, which in turn results in the \(i\)-th ``set equal'' or ``split'' operation
taking time \(\Oh(i \log^5 n)\) and the resulting xSLP \(\G\) having size \(\Oh(i \log^3 n)\).

In total, after \(|E|\) calls to \textsc{SetSubstringEqual}, we thus incur a running time of
\[
    \sum_{i = 1}^{|E| \log n} \Oh(i \log^5 n) = \Oh(|E|^2 \log^7 n);
\]
and obtain an xSLP of size \(\Oh(|E| \log^4 n)\).

Summarizing the step, we obtain the following result.

\begin{lemmaq}\label{lem:twosix}
    Consider a string \(P\) of length \(m\), a string \(T\) of length \(n \le 3/2 \cdot m\),
    a threshold \(k\), and a set
    $C$ such that $\OccH_k(P,T) \subseteq C \subseteq \fragment{0}{n-m}$
    and
    \begin{itemize}
        \item $|C|=\Ohtilde(k)$, \textbf{or}
        \item $C$ forms an arithmetic progression and $\OccH_k(P,T) \subseteq C \subseteq \OccH_{10k}(P,T)$.
    \end{itemize}
    Further, consider a subset \(C' \subseteq C\) and a
    system of substring
    equalities \(\exw{P}{T}{C'}\) of size \(\Ohtilde(k)\) such that
    \begin{itemize}
        \item the strings \( (T^\#_{C},P^\#_{C},D) \) are the unique solution of
            \(\exw{P}{T}{C'}\) and
        \item we have \(\OccH_k(P, T) = \OccH_k(P^\#_C, T^\#_C)\);
    \end{itemize}
    where
    \(D \coloneqq \sigma_1 \cdots \sigma_c\) is a string of all of the \(c\) different
    characters that appear in \(\eso{C'}\).

    There is a classical algorithm that given \(\exw{P}{T}{C'}\),
    in time \(\Ohtilde(k^2)\) computes
    an xSLP  \(\G\) of size \(\Ohtilde(k)\) that represents a universal solution to
    \(\exw{P}{T}{C'}\).
\end{lemmaq}

\begin{remark}
    This preconditions of \cref{lem:twosix} are now even more complicated compared to
    \cref{lem:twothree}; however, as before, they closely mirror the result of
    Step~2.
\end{remark}

\subsection{Step 4: Running the \modelname Algorithms for the Fully-Compressed Setting}

In our final step, we now run the existing \modelname algorithm (\cref{thm:hdalg}) on the xSLP from the
previous step. As mentioned before, for technical reasons, we have to do some lightweight
post-processing on  \(\G\) to allow for \modelname operations.

\begin{restatable*}{lemma}{tentwo} \label{lem:pillar_on_xslp}
    An xSLP $\G$ representing a string $S$ of length $n$ can be preprocessed in
    $\Oh(|\G|\log^2 n)$ time so that the \modelname operations on $S$ can be supported in
    $\Oh(\log n)$ time.
\end{restatable*}
\begin{remark}
    Recall that \(\G\) from the previous step has a size of \(\Ohtilde(k)\). Hence,
    \cref{lem:pillar_on_xslp} runs in time \(\Ohtilde(k)\).
\end{remark}

Summarizing and combining the previous steps, we obtain that
\begin{itemize}
    \item (optionally) using the Standard Trick incurs a multiplicative factor of \(\Oh(n/m)\) in the
        running time;
    \item Step~1~and~2 run in quantum time \(\Ohtilde(\sqrt{km})\) each;
    \item Step~3 runs in time \(\Ohtilde(k^2)\); and
    \item finally running the \modelname algorithm takes again time \(\Ohtilde(k^2)\) (as
        every \modelname operation costs \(\Ohtilde(1)\) time and we have \(n = \Oh(m)\)
        due to using the Standard Trick).
\end{itemize}

All together, we then obtain our claimed main result.

\qpmwm*

\subsection{Additional Challenges for Pattern Matching with Edits}

Our quantum algorithm for \PMwE follows the same high-level approach
as its counterpart for Mismatches, except that combinatorial insight from~\cite{KNW24} is
used instead of the one originating from~\cite{CKP19}.
This makes the resulting procedures much more complicated, at the same time,
since~\cite{KNW24} already provides a quantum algorithm (unlike~\cite{CKP19}) many of the
original lemmas can be re-used in a black-box fashion.
The biggest innovation on top of~\cite{KNW24}, which is shared with Pattern Matching with
Mismatches, is that we represent $P^\#$ and $T^\#$ in compressed space $\Ohtilde(k)$; we
use our novel tool (solving system of substring equations or, equivalently, macro schemes)
to build such a compressed representation in $\Ohtilde(k^2)$ time whenever a new alignment
is added to the set $S$ (based on which the strings $P^\#$ and $T^\#$) are defined.

One step that requires significantly more attention, however, is finding new (approximate)
alignments that could be used to extend $S$.
Recall that, in that setting, we have a subset of positions in $T$ (encoded with a very
efficient oracle deciding if a given position belongs to the set), and our task is to
either find an $\Ohhat(k)$-edit occurrence of $P$ at one of the selected position or
report that no $k$-edit occurrence is present at any of these positions.

A natural way of addressing this task, already present in~\cite{KNW24}, is to use \GS
(over the set of positions) with an oracle that can efficiently tell whether a
given position is (up to some approximation factor) a $k$-edit occurrence of $P$ in $T$.
Unfortunately, we can only use a probabilistic oracle for that task. In particular,
independently of the thresholds we choose, there are positions for which the oracle
returns \yes and \no with essentially equal probability (up to $\pm 1/k$); this is because
we have no guarantee the oracle's behavior when the edit distance at the considered
position is somewhere between $k$ and $\Ohhat(k)$.
The workaround of~\cite{KNW24} was to fix the random bits and pretend that the oracle is
deterministic (up to inverse polynomial errors originating from the fact that it is
implemented as a quantum algorithm simulating a deterministic classical procedure).
Unfortunately, the number of random bits is $\Theta(m)$, which resulted in the algorithm
of~\cite{KNW24} being slower than its classical counterparts (despite achieving
near-optimal quantum query complexity).

A trick that we employ here instead is to generalize \GS so that the
underlying function, for every argument $i$, is allowed to return \yes and \no with an
arbitrary (fixed) probability between $0$ and $1$.
To the best of our knowledge, this setting has not been considered before in the quantum
algorithms literature.
Formally, our setting is as follows.

\problembox{
    \textsc{Search on Bounded-Error and Neutral Inputs}\\
    {\bf{Setting:}} suppose $I \coloneqq \fragmentco{0}{n}$ is partitioned into $I = I^{\bm{+}} \uplus I^{\bm{\sim}} \uplus I^{\bm{-}}$.\\
    {\bf{Input:}} a randomized function $f : I\to \{0,1\}$ such that $p_i \coloneqq
    \pr{f(i)=1}$ satisfies the following:
    \begin{itemize}
        \item if $i \in  I^{\bm{+}}$, then $p_i \geq 9/10$;
        \item if $i \in I^{\bm{-}}$, then $p_i < 1/10$; and
        \item if $i \in I^{\bm{\sim}}$, then $p_i$ can be arbitrary.
    \end{itemize}
    The function $f$ is accessible through a black-box unitary operator (oracle) $U_f$ such that \[U_f\ket{i}\ket{b}=\sqrt{p_i}\ket{i}\ket{b\oplus 1} + \sqrt{1-p_i}\ket{i}\ket{b}.\]
    {\bf{Output:}} an arbitrary $i \in I^{\bm{+}} \cup I^{\bm{\sim}}$ or a value $\bot$ indicating  $I^{\bm{+}} = \emptyset$.
}

In other words, except for the usual ``good'' entries (for which the oracle predominantly
returns \yes) and ``bad'' entries (for which the oracle predominantly returns \no), we
also have ``neutral'' entries, for which we have no guarantee about the probability of
\yes/\no answers.
In principle, the algorithm is allowed to report a neutral element (if it encounters one),
ignore all neutral elements, or do anything in between.
Classically, dealing with such elements is easy, but in the quantum model their presence
significantly affects the analysis of \GS.
Nevertheless, we were able to adapt the analysis of the Høyer, Mosca, and de Wolf
\cite{groverext} search algorithm, separately keeping track of the contribution of all
three kinds of elements to the maintained quantum state.
Formally, we achieve the following result.%
\footnote{The dependency on the error probability $\delta$ has been optimized.}
\begin{restatable*}{lemma}{groverext}\label{lem:groverext}
    There exists a quantum algorithm that, in $\Ohtilde(\sqrt{n})$ time using $\Oh(\sqrt{n})$
    applications of $U_f$, solves \pn{Search on Bounded-Error and Neutral Inputs}
    correctly with probability at least $2/3$.
    For any $\delta \in (0,1/3)$, the error probability can be reduced to at most
    $\delta$ at the cost of increasing the complexity to $\Oh(\sqrt{n}\log
    \delta^{-1}+\log^3\delta^{-1})$.
\end{restatable*}

We use \cref{lem:groverext} not only to find positions where $\Ohhat(k)$-edit occurrences
of $P$ are located but also within the implementation of our approximation algorithm that
tells apart positions with edit distance less than from positions with edit distance more
than $\Ohhat(k)$.
This is possible because the underlying classical procedure~\cite{gapED} makes a few
recursive calls and decides on the answer depending on how many of these calls return
\yes.
In the original version~\cite{gapED}, the threshold is set to a small constant ($5$).
Here, we instead set the threshold to $1$ and repeat the procedure multiple times, which
is more compatible with~\cref{lem:groverext} yet requires redoing parts of the analysis
(note that the standard approach of listing \yes answers from left to right, akin to the
computation of Hamming distance, does not work because the same oracle call may return
different answers when applied multiple times).

\section{Preliminaries}
\label{sec:prelims}

For integers $i,j \in \mathbb{Z}$, we use the notation $\fragment{i}{j}$ to denote the set $\{i, \dots, j\}$,
while $\fragmentco{i}{j}$ represents the set $\{i ,\dots, j - 1\}$.
Similarly, we define $\fragmentoc{i}{j}$ and $\fragmentoo{i}{j}$.
When considering a set $S$, we denote by $kS$ the set obtained by multiplying every element in $S$ by $k$,
that is, $kS \coloneqq \{ k\cdot s \mid s \in S\}$.
Furthermore, we define $\floor{S/k}$ as $\{ \floor{s/k} \mid s \in S \}$
and $(S \bmod p)$ as $\{s \bmod p \mid s \in S\}$.

\subsection{Strings, Compressibility, and Approximate Pattern Matching}

\paragraph*{Strings}

An \emph{alphabet} \(\Sigma\) is a set of characters.
We write \(X=X\position{0}\, X\position{1}\cdots X\position{n-1} \in \Sigma^{n}\)
to denote a \textit{string} of length \(|X|=n\) over \(\Sigma\).
For a \emph{position} \(i \in \fragmentco{0}{n}\) we use \(X\position{i}\)
to denote the \(i\)-th character of \(X\).
For indices \(0 \leq i < j \leq |X|\), we write  \(X\fragmentco{i}{j}\)
for the (contiguous) subsequence of characters \(X\position{i} \cdots X\position{j-1}\);
we say that \(X\fragmentco{i}{j}\) is the \emph{fragment} of \(X\) that starts at  \(i\)
and ends at \((j-1)\).
We may also write
\(X\fragment{i}{j-1}\), \(X\fragmentoc{i-1}{j-1}\), or \(X\fragmentoo{i-1}{j}\)
for the fragment \(X\fragmentco{i}{j}\).

A string \(Y\) of length \(m\) with \(0<m\leq n\)
is a \emph{substring} of another string \(X\) (of length \(n\)),
denoted by \(Y \substr X\),
if for (at least) one \(i \in\fragment{0}{n-m}\), the fragment \(X\fragmentco{i}{i + m}\)
is identical to \(Y\); that is, there is an \emph{exact occurrence}
of \(Y\) at position \(i\) in \(X\).
Further,  \(\Occ(Y,X) \coloneqq \{i\in \fragment{0}{n-m} \mid Y = X\fragmentco{i}{i+m}\}\)
represents the set of starting positions of all (exact) occurrences of \(Y\) in \(X\).
Combined, we have \( Y \substr X \Leftrightarrow  \Occ(Y,X) \neq \emptyset\).

A \emph{prefix} of a string \(X\) is a fragment that starts at position \(0\),
while a \emph{suffix} of a string \(T\) is a fragment that ends at position \(|T|-1\).

We denote by \(AB\) the concatenation of strings \(A\) and \(B\).
Further, we denote by \(A^k\) the concatenation of \(k\) copies of the string \(A\),
while we denote by \(A^\infty\) the concatenation of an infinite number of copies of \(A\).
A \textit{primitive} string is a string that cannot be expressed as \(X^k\) for any string \(X\)
and any integer \(k > 1\).

A positive integer \(p\) is \emph{a period} of a string \(X\) if \(X\position{i} = X\position{i + p}\)
for all \(i \in \fragmentco{0}{n-p}\).
\emph{The period} of a string \(X\), denoted by \(\per(X)\), is the smallest period \(p\) of \(X\).
A string \(X\) is \textit{periodic} if \(\per(X) \le |X| / 2\), and \emph{aperiodic} if it is not periodic.

For a string \(X\), we define the following \emph{rotation} operations.
The operation \(\rot(\cdot)\) takes as input a string and moves its last character to the
front; that is, $\rot(X) \coloneqq X\position{n-1}X\fragment{0}{n-2}$. The inverse
operation \(\rot^{-1}(\cdot)\) takes as input a string and moves its initial character to
the end; that is,
$\rot^{-1}(X) \coloneqq X\fragment{1}{n-1}X\position{0}$.

A primitive string \(X\) does not match any of its non-trivial rotations.
Specifically, we have \(X = \rot^j(X)\) if and only if \(j \equiv 0 \pmod{|X|}\).

In this paper we utilize the following lemma, which is a consequence of the Periodicity Lemma~\cite{FW65}.

\begin{lemmaq}[{\cite[Lemma~3.2]{KNW24}}]\label{fct:periodicity}
    Consider strings $P$ and $T$ with $|T| \le 2|P| + 1$.
    If $\{0,|T|-|P|\}\subseteq \Occ(P,T)$, that is, $P$ occurs both as a prefix and as a
    suffix of $T$, then $\gcd(\Occ(P,T))$ is a period of $T$.%
    \footnote{Observe that this lemma makes a statement only about the periodicity of
    \(T\) (and not about \(P\)). In particular, the statement is trivial whenever
    $\gcd(\Occ(P,T)) \ge |P|$.}
\end{lemmaq}

\paragraph*{String Compression}
Throughout this work, we utilize several compression methods briefly introduced next.
For a comprehensive overview, we refer to the survey by Navarro \cite{N21}.

\begin{description}
    \item[Lempel--Ziv Factorizations.]
        A fragment $X\fragmentco{i}{i+\ell}$ is a \emph{previous factor} in \(X\) if
        $X\fragmentco{i}{i+\ell}$ has an earlier occurrence in $X$, that is,
        $X\fragmentco{i}{i+\ell}=X\fragmentco{i'}{i'+\ell}$ holds for some $i'\in
        \fragmentco{0}{i}$.
        An \emph{LZ77-like factorization} of $X$ is a factorization $X = F_1\cdots F_f$
        into non-empty \emph{phrases} such that each phrase $F_j$ with $|F_j|>1$ is a
        previous factor.
        In the underlying \emph{LZ77-like representation}, phrases are encoded as follows.
        \begin{itemize}
            \item A previous factor phrase $F_j=X\fragmentco{i}{i+\ell}$ is encoded as
                $(i',\ell)$, where $i'\in \fragmentco{0}{i}$ satisfies
                $X\fragmentco{i}{i+\ell}=X\fragmentco{i'}{i'+\ell}$.
                The position \(i'\) is chosen arbitrarily in case of ambiguities.
            \item Any other phrase $F_j=X\position{i}$ is encoded as $(X\position{i},0)$.
        \end{itemize}

        The LZ77 factorization~\cite{DBLP:journals/tit/ZivL77} (or the LZ77 parsing) of a string
        $X$, denoted by $\LZ(X)$ is an LZ77-like factorization that is constructed by greedily
        parsing $X$ from left to right into the longest possible phrases.
        More precisely, the $j$-th phrase $F_j=X\fragmentco{i}{i+\ell}$ is the longest previous
        factor starting at position $i$; if no previous factor starts at position \(i\), then
        $F_j$ is the single character \(X\position{i}\).
        It is known that the aforementioned greedy approach produces the shortest possible
        LZ77-like factorization~\cite{DBLP:journals/tit/ZivL77}.

    \item[Straight-Line Grammars and Straight-Line Programs.]
        For a context-free grammar $\G$, we denote by $\Sigma_\G$ and $\N_\G$ the sets of
        terminals and non-terminals of $\G$, respectively.
        Moreover, $\S_\G=\Sigma_\G\cup\N_\G$ is the set of \emph{symbols} of $\G$.
        We say that $\G$ is a~\emph{straight-line grammar} (SLG) if:
        \begin{itemize}
            \item each non-terminal $A\in \N_\G$ has a unique production $A\to
                \rhs_\G(A)$, whose right-hand side is a non-empty sequence of symbols,
                that is, $\rhs_\G(A)\in \S_\G^+$, and
            \item there is a partial order $\prec$ on $\S_\G$ such that $B \prec A$ if $B$
                occurs in $\rhs_\G(A)$.
        \end{itemize}
        A straight-line grammar is a \emph{straight-line program} (SLP) if
        $|\rhs_\G(A)|=2$ holds for each $A\in \N_\G$.

        Every straight-line grammar $\G$ yields an \emph{expansion} function $\val_\G:
        \S_\G \to \Sigma_\G^*$:
        \[\val_\G(A) = \begin{cases}
            A & \text{if $A\in \Sigma_\G$},\\
            \val_\G(B_0)\val_\G(B_1)\cdots \val_\G(B_{a-1}) & \text{if $A\in \N_\G$ and $\rhs_\G(A)=B_0B_1\cdots B_{a-1}$.}
        \end{cases}\]
        The string represented by $\G$, denoted $\val(\G)$, is the expansion $\val_\G(A)$
        of the starting symbol $A$ of~$\G$.
        We sometimes extend the notion of a straight-line grammar $\G$ so that there are
        multiple starting symbols $A_0,\ldots,A_{s-1}$
        each representing a separate string $\val_\G(A_i)$.
\end{description}

\paragraph*{Hamming Distance and Pattern Matching with Mismatches}

For two strings $S$ and $T$ of the same length $n$, the set of \emph{mismatches} between
$S$ and $T$ is defined as
\[\Mis(S,T) \coloneqq\{i\in \fragmentco{0}{n} \mid S\position{i}\ne T\position{i}\}.\]
The \emph{Hamming distance} of $S$ and $T$ is then the number of mismatches between them,
denoted as $\hd(S,
T) \coloneqq |\Mis(S,T)|$.

Following~\cite{CKP19}, we also define \emph{mismatch information}
\[\MI(S,T) \coloneqq \{(i,S\position{i},T\position{i}) : i\in \Mis(S,T)\}.\]
It is easy to verify that the Hamming distance satisfies the triangle inequality. That is,
for any strings $A$, $B$, and $C$ of~the same length satisfy
\[\hd(A, C) + \hd(C, B) \ge \hd(A, B) \ge |\hd(A, C) - \hd(C, B)|.\]

Since we are often interested in the Hamming distance between a string \( S \) and a
prefix of \( T^{\infty} \) for a given string \( T \), we write
\[\Mis(S, T^*) \coloneqq \Mis(S, T^{\infty}\fragmentco{0}{|S|}) \quad\text{and}\quad
\hd(S, T^*) \coloneqq |\Mis(S, T^*)|.\]
Here, \(\Mis(S, T^*)\) denotes the set of positions where \( S \) and the prefix of \(
T^{\infty} \) of the same length as \( S \) differ, and \(\hd(S, T^*)\) represents the
Hamming distance, which is the cardinality of this set.

For a string $P$ (also referred to as a \emph{pattern}), a string $T$ (also referred to as
a \emph{text}),
and an integer $k\ge 0$ (also referred to as a \emph{threshold}), a \emph{$k$-mismatch
occurrence} of $P$ in $T$
is a position $i \in \fragment{0}{|T|-|P|}$ such that $\hd(P, T\fragmentco{i}{i+|P|})\leq
k$.
We denote by $\OccH_k (P,T)$ the set of all positions of $k$-mismatch occurrences of $P$
in $T$ as $\OccH_k(P,T)$, that is,
\[\OccH_k (P,T)\coloneqq \{i \mid \hd(P,T\fragmentco{i}{i+|P|}\leq k)\}.\]

\paragraph*{Edit Distance and Pattern Matching with Edits}
The \emph{edit distance}, also known as the \emph{Levenshtein
distance}~\cite{Levenshtein66}, between two strings \(X\) and \(Y\), denoted by
\(\ed(X,Y)\), is the minimum number of character insertions, deletions, and substitutions
required to transform \(X\) into \(Y\). To formalize this, we define the notion of an
\emph{alignment}.

\begin{definition}[{\cite[Definition 2.1]{CKW22}}]\label{def:alignment}
    A sequence $\mA=(x_i,y_i)_{i=0}^{m}$ is an \emph{alignment} of $X\fragmentco{x}{x'}$
    onto $Y\fragmentco{y}{y'}$, denoted by \(\mA: X\fragmentco{x}{x'} \onto
    Y\fragmentco{y}{y'}\), if it satisfies $(x_0,y_0)=(x,y)$, $(x_{i+1},y_{i+1})\in
    \{(x_{i}+1,\allowbreak y_{i}+1),(x_{i}+1,\allowbreak y_{i}),(x_{i},y_{i}+1)\}$ for
    $i\in \fragmentco{0}{m}$, and $(x_m,y_m) =(x',y')$. Moreover, for $i\in
    \fragmentco{0}{m}$:
    \begin{itemize}
        \item If $(x_{i+1},y_{i+1})=(x_{i}+1,y_{i})$, we say that $\mA$ \emph{deletes}
            $X\position{x_i}$.
        \item If $(x_{i+1},y_{i+1})=(x_{i},y_{i}+1)$, we say that $\mA$ \emph{inserts}
            $Y\position{y_i}$.
        \item If $(x_{i+1},y_{i+1})=(x_{i}+1,y_{i}+1)$, we say that $\mA$ \emph{aligns}
            $X\position{x_i}$ to $Y\position{y_i}$.
        If~additionally $X\position{x_i}=Y\position{y_i}$, we say that $\mA$
        \emph{matches} $X\position{x_i}$ and $Y\position{y_i}$; otherwise, $\mA$
        \emph{substitutes} $X\position{x_i}$ with $Y\position{y_i}$. \qedhere
    \end{itemize}
\end{definition}

The \emph{cost} of an alignment $\mA$ of $X\fragmentco{x}{x'}$ onto $Y\fragmentco{y}{y'}$,
denoted by $\edal{\mA}(X\fragmentco{x}{x'},Y\fragmentco{y}{y'})$, is the total number of
characters that $\mA$ inserts, deletes, or substitutes.
The edit distance $\ed(X,Y)$ is the minimum cost of an alignment of $X\fragmentco{0}{|X|}$
onto~$Y\fragmentco{0}{|Y|}$.
An alignment of $X$ onto $Y$ is \emph{optimal} if its cost is equal to $\ed(X, Y)$.

The following notion of \emph{edit information} can be used to encode alignments in space
proportional to their costs.

\begin{definition}[{Edit information}, {\cite[Definition~3.5]{KNW24}}]\label{def:edinfo}
    For an alignment $\mA=(x_i,y_i)_{i=0}^m$ of $X\fragmentco{x}{x'}$ onto $Y\fragmentco{y}{y'}$,
    the \emph{edit information} is defined as the set of 4-tuples $\sE_{X,
        Y}(\mA)=\{(x_i,\mathsf{cx}_i \mid y_i,\mathsf{cy}_i) : i\in
    \fragmentco{0}{m}\text{ and }\mathsf{cx}_i\ne \mathsf{cy}_i\}$, where
    \[\mathsf{cx}_i = \begin{cases}
        X\position{x_i} & \text{if }x_{i+1}=x_i+1,\\
        \varepsilon & \text{otherwise};
    \end{cases}\qquad\text{and}\qquad
    \mathsf{cy}_i = \begin{cases}
        Y\position{y_i} & \text{if }y_{i+1}=y_i+1,\\
        \varepsilon & \text{otherwise}.
    \end{cases}
    \]
\end{definition}

Further, we denote the minimum edit distance between a string $S$ and any prefix of a
string $T^\infty$ by
\[\ed(S, T^\infty) \coloneqq \min\{\ed(S,T^\infty\fragmentco{0}{j}) \mid j \in \Zz\}.\]
Similarly, we denote the minimum edit distance between a string $S$ and any substring of
$T^\infty$ by
\[\edl{S}{T} \coloneqq \min\{\ed(S,T^\infty\fragmentco{i}{j}) \mid i, j \in \Zz\text{ and
}i \le j\}.\]

For a pattern $P$, a text $T$,
and a threshold $k\ge 0$, we say that there is a $k$-error or $k$-edits occurrence of $P$
in $T$ at position $i\in \fragment{0}{|T|}$ if $\ed(P, T\fragmentco{i}{j})\leq k$ holds
for some position $j\in \fragment{i}{|T|}$.
The set of all starting positions of $k$-error occurrences of $P$ in $T$ is denoted by
$\OccE_k(P,T)$; formally, we set
\[
    \OccE_k (P,T)\coloneqq \{i\in \fragment{0}{|T|} \mid \exists_{j\in \fragment{i}{|T|}}, \ed(P,T\fragmentco{i}{j}\leq k)\}.
\]

Lastly, we write $\mA_{P,T}$ for the set of all (not necessarily optimal) alignments of $P$
onto fragments of $T$ of cost at most $k$.
The subset $\mA_{P,T}^{\leq k} \subseteq \mA_{P,T}$ contains all alignments of cost at
most $k$.

\subsection{Quantum Algorithms}
\label{7.6-2}

\paragraph*{The Quantum Model}

We think of the input string $S \in \Sigma^n$ being accessible within a quantum query
model \cite{ambainis2004quantum,DBLP:journals/tcs/BuhrmanW02}: an input oracle $O_S$
performs the unitary mapping $O_S$ such that $O_S \ket{i}\ket{b} = \ket{i}\ket{b \oplus
S\position{i}}$ for any index $i \in \fragmentco{0}{n}$ and any character $b \in \Sigma$.

The \emph{query complexity} of a quantum algorithm is the number of queries it makes to
the input oracles.
Additionally, the \emph{time complexity} of the quantum algorithm accounts for the
elementary gates \cite{PhysRevA.52.3457} needed to implement the unitary operators
independent of the input.
We assume quantum random access quantum memory (the so-called QRAG model), as in prior works \cite{GS22,AJ22,ambainis2004quantum}.

An algorithm succeeds \emph{with high probability (w.h.p.)} if the success probability can
be made at least $1 - 1/n^c$ for any desired constant $c > 1$.
A bounded-error algorithm (with success probability $2/3$) can be boosted to succeed
w.h.p.\ by $O(\log n)$ repetitions.
In this paper, we do not optimize poly-logarithmic factors of the quantum query complexity
(and time complexity) of our algorithms.

For our quantum algorithms, we rely on the following primitive quantum operation.

\grover

\paragraph*{Quantum Search on Bounded-Error Inputs and Neutral Inputs}

In the existing literature, there are quantum search algorithms that handle
bounded-error inputs, such as the search algorithm by Høyer, Mosca, and
de~Wolf~\cite{groverext}. Their algorithm considers cases where evaluations of \( f(i) \)
have two-sided errors; that is, if \( f(i) = 1 \), the algorithm outputs \( 1 \) with
probability at least \( 9/10 \), and if \( f(i) = 0 \), it outputs \( 0 \) with
probability at least \( 9/10 \).

However, to the best of our knowledge, there is no prior
work addressing the addition of ``neutral'' elements for which we lack any guarantee
(which we need in our setting).
Consequently, we modify the search algorithm from \cite{groverext} to accommodate these
neutral elements as well.
To that end, first recall the \pn{Search on Bounded-Error and Neutral Inputs} problem from the
Technical Overview.

\problembox{
    \textsc{Search on Bounded-Error and Neutral Inputs}\\
    {\bf{Setting:}} suppose $I \coloneqq \fragmentco{0}{n}$ is partitioned into $I =
    I^{\bm{+}} \uplus I^{\bm{\sim}} \uplus I^{\bm{-}}$.\\
    {\bf{Input:}} a randomized function $f : I\to \{0,1\}$ such that $p_i \coloneqq
    \pr{f(i)=1}$ satisfies the following:
    \begin{itemize}
        \item if $i \in  I^{\bm{+}}$, then $p_i \geq 9/10$;
        \item if $i \in I^{\bm{-}}$, then $p_i < 1/10$; and
        \item if $i \in I^{\bm{\sim}}$, then $p_i$ can be arbitrary.
    \end{itemize}
    The function $f$ is accessible through a black-box unitary operator (oracle) $U_f$
    such that \[U_f\ket{i}\ket{b}=\sqrt{p_i}\ket{i}\ket{b\oplus 1} +
    \sqrt{1-p_i}\ket{i}\ket{b}.\]
    {\bf{Output:}} an arbitrary $i\in I^{\bm{+}} \cup I^{\bm{\sim}}$ or a sentinel value
    $\bot$, indicating that $I^{\bm{+}}=\emptyset$.
}

For sake of convenience, we say $i\in I$ is \emph{good} if $i \in I^{\bm{+}}$, \emph{bad}
if $i \in I^{\bm{-}}$, and \emph{neutral} otherwise.

\groverext
\begin{proof}
    To create the necessary quantum subroutine, we can use a nearly identical algorithm as
    presented in~\cite{groverext}.
    The key differences lie in the analysis, while the notation and most of the algorithm
    remain the same.
    For a comprehensive understanding and intuition of the algorithm,
    we refer directly to~\cite{groverext}.
    Here, we maintain the same notation and highlight only the sections where the analysis
    and algorithm differ.

    First, set
    \begin{align*}
        \Gamma^{\bm{+}} &\coloneqq \{i\in I : p_i \geq 9/10\}, \qquad
        \Gamma^{\bm{-}} \coloneqq \{i\in I : p_i < 1/5\},\qquad\text{and}\qquad
        \Gamma^{\bm{\sim}}\coloneqq \{i\in I : 1/5 \le p_i < 9/10\}.
    \end{align*}
    Now, for $k \in \mathbb{Z}_+$ and as outlined in~\cite{groverext}, a unitary operator
    $A_k$ is constructed so that
    \[
        A_k \ket{\mathbf{0}} =
        \alpha_k^{\bm{+}} \ket{\Gamma_k^{\bm{+}}} \ket{1} +
        \alpha_k^{\bm{\sim}} \ket{\Gamma_k^{\bm{\sim}}} \ket{1} +
        \alpha_k^{\bm{-}} \ket{\Gamma_k^{\bm{-}}} \ket{1} +
        \sqrt{1 - \left({\alpha_k^{\bm{+}}}\right)^2 -\left({\alpha_k^{\bm{\sim}}}\right)^2 - \left({\alpha_k^{\bm{-}}}\right)^2} \cdot \ket{H_k} \ket{0},
    \]
    where $\alpha_k^{\bm{+}},  \alpha_k^{\bm{\sim}},  \alpha_k^{\bm{-}}$ are non-negative
    real numbers,
    $\ket{\Gamma_k^{\bm{+}}}$ (respectively $\ket{\Gamma_k^{\bm{\sim}}}$ and
    $\ket{\Gamma_k^{\bm{-}}}$) is a unit vector whose first register is spanned by
    $\ket{i}$
    for $i \in \Gamma^{\bm{+}}$  (respectively $\Gamma^{\bm{\sim}}$, $\Gamma^{\bm{-}}$),
    and
    $\ket{H_k}$ is another unit vector.

    Initially, $A_1$ is constructed by running $U_f$ on the uniform superposition of all
    inputs:
    \[A_1 \ket{\mathbf{0}} = \frac{1}{\sqrt{n}} \sum_{i\in I}
        \ket{i}\left(\sqrt{p_i}\ket{\psi_{i,1}}\ket{1} +
    \sqrt{1-p_i}\ket{\psi_{i,0}}\ket{0}\right),\]
    where the states $\ket{\psi_{i,1}}$ and $\ket{\psi_{i,0}}$ describe the workspace of
    $U_f$.
    As a result, we have
    \[(\alpha_1^{\bm{+}})^2 = \frac1n\sum_{i\in \Gamma^{\bm{+}}} p_i,\qquad
        (\alpha_1^{\bm{\sim}})^2 = \frac1n\sum_{i\in \Gamma^{\bm{\sim}}} p_i,\qquad
        \text{and}\qquad (\alpha_1^{\bm{-}})^2 = \frac1n\sum_{i\in \Gamma^{\bm{-}}} p_i.
    \]

    For $k \in \mathbb{Z}_+$, the unitary operator $A_{k+1}$ is constructed
    in~\cite{groverext} from $A_{k}$ using an amplitude amplification operator $G_k$ and
    error-reduction operator $E_k$.
    Applying the amplification operator yields

    \begin{align*}
        G_kA_k \ket{\mathbf{0}} &=
        \frac{\sin(3\theta_k)}{\sin(\theta_k)} \Big(\alpha_k^{\bm{+}} \ket{\Gamma_k^{\bm{+}}} \ket{1} +
            \alpha_k^{\bm{\sim}} \ket{\Gamma_k^{\bm{\sim}}} \ket{1} +
        \alpha_k^{\bm{-}} \ket{\Gamma_k^{\bm{-}}} \ket{1}\Big)
                              \\&\qquad+ \sqrt{1 -  \left(\frac{\sin(3\theta_k)}{\sin(\theta_k)}\right)^2
                                  \left(\left({\alpha_k^{\bm{+}}}\right)^2 +\left({\alpha_k^{\bm{\sim}}}\right)^2 +
                              \left({\alpha_k^{\bm{-}}}\right)^2\right)} \cdot \ket{H_k} \ket{0},
    \end{align*}
    where $\theta_k \in \intvl{0}{\pi/2}$ is chosen such that $\sin^2 (\theta_k) = ({\alpha_k^{\bm{+}}})^2+ ({\alpha_k^{\bm{\sim}}})^2 + ({\alpha_k^{\bm{-}}})^2$,
    and we assume $\sin(3\theta_k) = \sin(\theta_k)$ for $\theta_k=0$.

    Next, in the error reduction step, we apply a majority voting on $r=\Theta(k)$ runs of
    the $U_f$.
    Technically, the error reduction operator $E_k$ is defined as follows (ignoring the
    workspace, which is added to the second register):
    \begin{align*}
        E_k\ket{i}\ket{0}\ket{0}&=\ket{i}\ket{0}\ket{0},\\
        E_k\ket{i}\ket{1}\ket{0}&=\sqrt{\pr{B(r,p_i)\ge r/2}}\cdot
        \ket{i}\ket{1}\ket{1}+\sqrt{\pr{B(r,p_i)< r/2}}\cdot \ket{i}\ket{1}\ket{0},
    \end{align*}
    where $B(r,p_i)$ denotes the number of obtained 1s on $r$ samples of $f_i$.
    If $i\in \Gamma^{\bm{+}}$, then $\ex{B(r,p_i)}=p_i\cdot r \ge 0.9r$.
    Hence, by a suitable Chernoff bound,
    \[\pr{B(r,p_i) < 0.5r} \le \exp(-\Omega(r))\] does not
    exceed
    ${1}/{2^{k+5}}$ for sufficiently large $r=\Oh(k)$.

    Similarly, if $i\in \Gamma^{\bm{-}}$, then $\ex{B(r,p_i)}=p_i\cdot r < 0.2r$.
    Hence, by a suitable Chernoff bound,
    \[\pr{B(r,p_i) \ge 0.5r} \le \exp(-\Omega(r))\] does not
    exceed ${1}/{2^{k+5}}$ for sufficiently large $r=\Oh(k)$.

    This gives us the following relations (in which we assume $\theta_k \in
    \intvl{0}{{\pi}/{3}}$ so that $\sin(3\theta_k)\ge 0$):
    \begin{equation}\label{eq:groverext:1}
        \alpha_{k+1}^{\bm{+}} \geq  \alpha_{k}^{\bm{+}} \cdot
        \frac{\sin(3\theta_k)}{\sin(\theta_k)} \sqrt{1 - \frac{1}{2^{k+5}}}
        = \alpha_{k}^{\bm{+}} \cdot 3\cdot \left(1-\frac{4}{3}\sin^2(\theta_k)\right)\sqrt{1 - \frac{1}{2^{k+5}}}
    \end{equation} and
    \begin{equation}\label{eq:groverext:2}
        \alpha_{k+1}^{\bm{-}} \leq \alpha_{k}^{\bm{-}} \cdot
        \frac{\sin(3\theta_k)}{\sin(\theta_k)} \cdot \frac{1}{\sqrt{2^{k+5}}} \le
        \alpha_{k}^{\bm{-}}\cdot 3 \cdot\frac{1}{\sqrt{2^{k+5}}}.
    \end{equation}

    Next, we proceed with a useful intermediate claim.

    \begin{claim}\label{claim:groverext}
        If $\Gamma_k^{\bm{+}} \neq \emptyset$ and $m= \lceil\log_9 n\rceil$, then
        \[ \sum_{k=1}^m \left({\left(\alpha_k^{\bm{+}}\right)}^2 +
        {\left(\alpha_k^{\bm{\sim}}\right)}^2\right) > {1}/{40}.\]
    \end{claim}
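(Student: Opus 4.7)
The plan is to argue by contradiction: suppose $\sum_{k=1}^{m} \bigl((\alpha_k^{\bm{+}})^2 + (\alpha_k^{\bm{\sim}})^2\bigr) \le 1/40$ and derive a contradiction using the initial bound $(\alpha_1^{\bm{+}})^2 \ge \tfrac{9}{10n}$, which follows from $\Gamma^{\bm{+}} \ne \emptyset$. The first step is to control the bad amplitudes. From the initial state formula we get $(\alpha_1^{\bm{-}})^2 = \tfrac{1}{n}\sum_{i\in\Gamma^{\bm{-}}} p_i < \tfrac{1}{5}$, and iterating the recurrence \eqref{eq:groverext:2} gives $(\alpha_{k+1}^{\bm{-}})^2 \le \tfrac{9}{2^{k+5}}(\alpha_k^{\bm{-}})^2$. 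Hence $(\alpha_k^{\bm{-}})^2$ decays super-geometrically, both $\max_k(\alpha_k^{\bm{-}})^2 \le \tfrac{1}{5}$ and $\sum_k (\alpha_k^{\bm{-}})^2$ is bounded by an explicit constant $C_{\bm{-}} < \tfrac{1}{3}$.

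Under the contradiction assumption, each individual $(\alpha_k^{\bm{+}})^2+(\alpha_k^{\bm{\sim}})^2 \le 1/40$, and combined with the previous paragraph this yields $\sin^2\theta_k = (\alpha_k^{\bm{+}})^2+(\alpha_k^{\bm{\sim}})^2+(\alpha_k^{\bm{-}})^2 \le \tfrac{9}{40} < \tfrac{3}{4}$ for every $k$ (which validates $\theta_k\in\intvl{0}{\pi/3}$, so that \eqref{eq:groverext:1} is applicable). Furthermore, $\sum_{k=1}^{m-1} \sin^2\theta_k \le \tfrac{1}{40} + C_{\bm{-}}$ is bounded by an absolute constant. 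Telescoping \eqref{eq:groverext:1} then gives
\[
(\alpha_m^{\bm{+}})^2 \;\ge\; 9^{m-1}\,(\alpha_1^{\bm{+}})^2 \cdot \prod_{k=1}^{m-1}\bigl(1-\tfrac{4}{3}\sin^2\theta_k\bigr)^{\!2} \cdot \prod_{k=1}^{m-1}\bigl(1-2^{-(k+5)}\bigr).
\]
The last product is at least $\tfrac{31}{32}$ (sub-geometric tail); the middle product is controlled by $\ln(1-x) \ge -x - x^2$ applied to $x = \tfrac{4}{3}\sin^2\theta_k \le \tfrac{3}{10}$, so it is bounded below by $\exp\bigl(-\tfrac{8}{3}\sum\sin^2\theta_k - O(\sum\sin^4\theta_k)\bigr)$, which is an absolute positive constant.

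Since $m = \lceil \log_9 n\rceil$ gives $9^{m-1} \ge n/9$ and $(\alpha_1^{\bm{+}})^2 \ge \tfrac{9}{10n}$, the two cancel out and the inequality above reduces to $(\alpha_m^{\bm{+}})^2 \ge c$ for an explicit absolute constant $c$. Verifying (by tracking the numerical constants carefully) that this $c$ exceeds $1/40$ produces the contradiction, since by assumption $(\alpha_m^{\bm{+}})^2 \le (\alpha_m^{\bm{+}})^2 + (\alpha_m^{\bm{\sim}})^2 \le 1/40$. The main obstacle is purely numerical: one must choose the intermediate constants tightly enough (in particular in bounding $\sum (\alpha_k^{\bm{-}})^2$ and the exponent in the product estimate) so that the final lower bound on $(\alpha_m^{\bm{+}})^2$ genuinely beats the $1/40$ threshold after all compounded losses. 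Should the margin prove insufficient, one can sharpen the estimate on the first iteration (where $(\alpha_k^{\bm{-}})^2$ is largest) by treating $k=1$ separately and using the full recurrence \eqref{eq:groverext:2} for $k\ge 2$, where the decay is already very fast.
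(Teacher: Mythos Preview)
Your approach is correct and essentially the same as the paper's: contradiction, geometric decay of $(\alpha_k^{\bm{-}})^2$ from $(\alpha_1^{\bm{-}})^2 < 1/5$ via \eqref{eq:groverext:2}, the resulting bound $\sin^2\theta_k \le 9/40$ to validate $\theta_k<\pi/3$, and telescoping \eqref{eq:groverext:1} against $9^{m-1}(\alpha_1^{\bm{+}})^2 \ge 1/10$. The paper works with the unsquared $\alpha_m^{\bm{+}}$ and uses the cruder but cleaner Weierstrass-type bound $\prod_k\bigl(1-\tfrac{4}{3}\sin^2\theta_k\bigr)\sqrt{1-2^{-(k+5)}} \ge 1-\tfrac{4}{3}\sum_k\sin^2\theta_k-\sum_k 2^{-(k+5)} > 1/2$ (with the explicit geometric sum $\sum_k\sin^2\theta_k \le \tfrac{1}{40}+\tfrac{64}{275}=\tfrac{567}{2200}$), which makes your numerical worry disappear without the logarithmic estimate.
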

    \begin{claimproof}
        For a proof by contradiction, suppose that $\sum_{k=1}^{m} ((\alpha_k^{\bm{+}})^2
        + (\alpha_k^{\bm{\sim}})^2) \le {1}/{40}$.
        We start by inductively proving that $(\alpha_{k}^{\bm{-}})^2\le {1}/{5}\cdot
        ({3}/{8})^{k-1}$ and $\theta_k < {\pi}/{3}$ hold for $k\in \fragment{1}{m}$.

        In the base case of $k=1$, we have
        \[
            \left(\alpha_{1}^{\bm{-}}\right)^2
            =\frac{1}{n}\sum_{i\in \Gamma^{\bm{-}}}p_i
            \le \frac{1}{n}\sum_{i\in \Gamma^{\bm{-}}}\frac{1}{5}
            \le \frac{1}{n}\sum_{i\in
            I}\frac{1}{5}=\frac{1}{5}.
        \]
        For each $k\in \fragment{1}{m}$, we further have
        \[
            \sin^2(\theta_k)
            = ({\alpha_k^{\bm{+}}})^2+ ({\alpha_k^{\bm{\sim}}})^2 + ({\alpha_k^{\bm{-}}})^2
            \le \frac{1}{40} + \frac{1}{5}\cdot \left(\frac{3}{8}\right)^{k-1}
            \le \frac{1}{40}+\frac{1}{5}
            =\frac{9}{40} < \frac{3}{4},
        \]
        and thus $\theta_k < {\pi}/{3}$.

        Consequently, we have
        \[
            \alpha_{k+1}^{\bm{-}}
            \leq \alpha_{k}^{\bm{-}} \cdot 3\cdot \frac{1}{\sqrt{2^{k+5}}}
            \le \frac{1}{5}\cdot \left(\frac{3}{8}\right)^{k-1}\cdot 3 \cdot \frac{1}{\sqrt{2^6}}
            = \frac{1}{5}\cdot \left(\frac{3}{8}\right)^{k},
        \]
        as claimed for $k\in \fragmentco{1}{m}$.

        Therefore,
        \[
            \sum_{k=1}^{m} \sin^2(\theta_k)
            \leq \sum_{k=1}^{m}  \left(({\alpha_k^{\bm{+}}})^2+ ({\alpha_k^{\bm{\sim}}})^2 + ({\alpha_k^{\bm{-}}})^2\right)
            \leq \frac{1}{40} + \sum_{k=1}^{m} \frac{1}{5}\cdot \left(\frac{9}{64}\right)^{k-1}
            \leq \frac{1}{40}+\frac{64}{275}=\frac{567}{2200}.
        \]
        By repeatedly applying \eqref{eq:groverext:1}, we obtain
        \begin{align*}
            \alpha_m^{\bm{+}}
            &\geq \alpha_1^{\bm{+}} 3^{m-1}\prod_{k=1}^{m-1}
            \left(1-\frac{4}{3}\sin^2(\theta_k)\right) \sqrt {1 - \frac{1}{2^{k+5}}}\\
            &\geq \alpha_1^{\bm{+}} 3^{m-1} \left(1 - \frac43\sum_{k=1}^m\sin^2(\theta_k)-\frac{1}{2^{k+5}}\right)
            \ge \alpha_1^{\bm{+}} 3^{m-1}\left(1-\frac{189}{550}-\frac{1}{32}\right)\\
            &> \alpha_1^{\bm{+}} \frac{3^{m-1}}{2}.
        \end{align*}
        In particular, if $\Gamma_k^{\bm{+}} \neq \emptyset$,
        then $(\alpha_1^{\bm{+}})^2 \ge {9}/{10n}$.
        Hence,
        \[  \left(\alpha_m^{\bm{+}}\right)^2
            > (\alpha_1^{\bm{+}})^2 \cdot \frac{9^{m-1}}{4}
            \ge \frac{9^m}{40n}
            \ge \frac{1}{40}.
        \]
        This contradicts the assumption that $\sum_{k=1}^{m} ((\alpha_k^{\bm{+}})^2 +
        (\alpha_k^{\bm{\sim}})^2) \le {1}/{40}$.
    \end{claimproof}
    We now proceed to construct the final algorithm using the $A_k$ as follows
    (this is the only part where the actual algorithm differs from that of~\cite{groverext}).

    \begin{enumerate}
        \item For \(k = 1\) to \(\ceil{\log_9(n)}\) the algorithm does the following.
            \begin{itemize}
                \item Run the operator \(A_k\) independently \(\lceil80\ln \delta^{-1}\rceil\) times and measure each time the first register.
                \item For each measured \(\ket{i}\):
                    \begin{itemize}
                        \item Run \(U_f\) independently \(t_2 = \Theta(\log (n/\delta))\) times.
                        \item If we obtain $f(i)=1$ at least \(0.15 \cdot t_2\) times out of the \(t_2\) runs of \(U_f\), then report \(i\).
                    \end{itemize}
            \end{itemize}
        \item Return $\bot$.
    \end{enumerate}

    Regarding the correctness proof, suppose the hidden constant in \(t_2\) is large enough.

    First, assume \(I^{\bm{+}} \neq \emptyset\).
    A single measurement at step $k$ results in $\ket{i}$ for $i\in \Gamma^{\bm{-}}$ with
    probability at most
    \[1-(\alpha_{k}^{\bm{+}})^2 - (\alpha_{k}^{\bm{\sim}})^2 \le
    \exp(-(\alpha_{k}^{\bm{+}})^2-(\alpha_{k}^{\bm{\sim}})^2).\]

    Since all measurements are independent and repeated $\lceil80\ln \delta^{-1}\rceil$
    times, the probability that we
    only measure $\ket{i}$ for $i\in \Gamma^{\bm{-}}$ does not exceed
    \[\exp\left(-80\ln \delta^{-1} \sum_{k=1}^{m}((\alpha_{k}^{\bm{+}})^2+(\alpha_{k}^{\bm{\sim}})^2) \right)\le
    \exp(-2\ln\delta^{-1})=o(\delta),\]
    where the inequality follows from \cref{claim:groverext}.

    By standard concentration bounds, whenever $\ket{i}$ for $i\in \Gamma^{\bm{+}} \cup
    \Gamma^{\bm{\sim}}$ is measured,
    we output $i$ with probability at least $1-(\delta/n)^2=1-o(\delta)$ because $p_i \ge
    0.2 > 0.15$.
    Thus, if $I^{\bm{+}}\ne \emptyset$, the algorithm outputs $\bot$ with probability
    $o(\delta)$.

    As far as the correctness is concerned, it remains to prove that the algorithm is unlikely
    to report an element of $I^{\bm{-}}$.
    If $\ket{i}$ is measured for some $i\in I^{\bm{-}}$, then $p_i \leq 0.1 < 0.15$, so $i$ is
    rejected during the verification with probability at least $1-(\delta/n)^2$.

    Overall, the algorithm verifies the outcomes of $\Oh(\log n \cdot \log \delta^{-1})$
    measurements.

    The probability that any element of $I^{\bm{-}}$ passes this verification is
    $\Oh((\delta/n)^2\cdot \log n \cdot \log \delta^{-1}) = o(\delta)$.
    Thus, the algorithm outputs an element of $I^{\bm{-}}$ with probability $o(\delta)$.

    Lastly, observe that, similar to~\cite{groverext}, the time and query complexity of \(A_k\) is \(\Oh(3^k)\).
    Therefore, the overall complexity becomes
    \begin{align*}
        \sum_{k=1}^{\ceil{\log_9(n)}} \Oh\left(\log\delta^{-1}\cdot
        \left(3^k+\log{\delta^{-1}n}\right)\right)
        &=
        \Oh\left(3^{\log_9(n)}\log{\delta^{-1}}+\log n\cdot \log{\delta^{-1}n}\cdot
        \log{\delta^{-1}}\right)\\
        &=\Oh(\sqrt{n}\log{\delta^{-1}}+\log^3
        {\delta^{-1}n}).
    \end{align*}
    Since $\log^3 ({n}/{\delta})\le 8\cdot (\log^3 n + \log^3 \delta^{-1})$, the
    complexity simplifies to $\Oh(\sqrt{n}\log\delta^{-1}+\log^3 \delta^{-1})$.
\end{proof}

\begin{corollary}\label{cor:groverext}
    At the cost of increasing the query complexity to
    $\Oh(\sqrt{n}\log\delta^{-1}+\log^4\delta^{-1})$ (and keeping the time complexity at
    \(\Ohtilde(\sqrt{n})\)), the algorithm of
    \cref{lem:groverext} can be extended so that the reported position $i\in
    I^{\bm{+}}\cup I^{\bm{\sim}}$ satisfies $i \le \min I^{\bm{+}}$ or, alternatively, $i
    \ge \max I^{\bm{+}}$.
\end{corollary}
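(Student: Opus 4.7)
The plan is to wrap the base algorithm of \cref{lem:groverext} in a binary search that localizes the leftmost qualifying position; the case $i \ge \max I^{\bm{+}}$ follows by symmetry. The guiding observation is that, when restricted to any subinterval $\fragmentco{\ell}{r} \subseteq \fragmentco{0}{n}$, the base algorithm either certifies with high probability that $I^{\bm{+}} \cap \fragmentco{\ell}{r} = \emptyset$ (by returning $\bot$) or returns some $c \in (I^{\bm{+}} \cup I^{\bm{\sim}}) \cap \fragmentco{\ell}{r}$. Iterating this in a halving fashion drives us toward $\min I^{\bm{+}}$ while never needing to distinguish good from neutral elements mid-run.

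First, I invoke \cref{lem:groverext} on $\fragmentco{0}{n}$; a $\bot$ answer is propagated to the output, and otherwise the returned element is recorded as the initial candidate $c$, with pointers initialized to $L \leftarrow 0$ and $R \leftarrow n$. I maintain throughout the invariant that $I^{\bm{+}} \cap \fragmentco{0}{L} = \emptyset$ together with $c \in I^{\bm{+}} \cup I^{\bm{\sim}}$ and $c < R$. While $R - L > 1$, set $M \leftarrow \floor{(L+R)/2}$ and invoke the base algorithm on $\fragmentco{L}{M}$ (which, by the first part of the invariant, has exactly the same good elements as $\fragmentco{0}{M}$). On a $\bot$ response I update $L \leftarrow M$; on any other response $c'$ I update $c \leftarrow c'$ and $R \leftarrow M$. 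Both invariant conditions are preserved in either branch. When $R = L + 1$, the invariant yields $c < R = L + 1$ and $\min I^{\bm{+}} \ge L$, so $c \le L \le \min I^{\bm{+}}$, as required.

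For the complexity, the $i$-th binary search call addresses a subinterval of size $n/2^i$, so invoking \cref{lem:groverext} with per-call error $\delta' = \Theta(\delta/\log n)$ and summing geometrically gives $\sum_i \Oh(\sqrt{n/2^i}\log\delta'^{-1}) = \Oh(\sqrt{n}\log\delta'^{-1})$ for the $\sqrt{n}$ term and $\Oh(\log n \cdot \log^3\delta'^{-1})$ for the polylog term. In the relevant regime $\log n = \Oh(\log \delta^{-1})$ (and otherwise absorbing $\log n$ into the $\Ohtilde$ for the time bound), this matches the advertised query complexity $\Oh(\sqrt{n}\log\delta^{-1}+\log^4\delta^{-1})$ and time $\Ohtilde(\sqrt{n})$. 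A union bound over the $\Oh(\log n)$ calls keeps the total error within $\delta$.

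The main conceptual obstacle is that a neutral element returned by \cref{lem:groverext} carries no intrinsic order guarantee relative to $\min I^{\bm{+}}$, so an intermediate candidate cannot be trusted on its own merits. The resolution is to let the invariant $I^{\bm{+}} \cap \fragmentco{0}{L} = \emptyset$ (built entirely from the trustworthy $\bot$ responses) enforce correctness only at termination: the combination $c < R \le L + 1$ and $\min I^{\bm{+}} \ge L$ then forces $c \le \min I^{\bm{+}}$, regardless of whether $c$ was good or neutral. The remaining technical point is choosing the per-call error $\delta'$ finely enough to survive the $\Oh(\log n)$ union bound while still fitting into the stated polylog term $\log^4\delta^{-1}$.
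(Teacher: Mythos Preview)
Your approach is essentially the paper's: wrap the base search of \cref{lem:groverext} in a binary search, maintaining that no good element lies to the left of the current window while the candidate stays legitimate. The paper keeps the candidate inside the active interval $J$ and skips the recursive call when it already lies in the left half; your weaker invariant $c<R$ (with $I^{\bm{+}}\cap\fragmentco{0}{L}=\emptyset$) works just as well and the termination argument $c\le L\le \min I^{\bm{+}}$ is correct.

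The one technical gap is the error allocation. With the uniform choice $\delta'=\Theta(\delta/\log n)$, the leading term becomes
\[
\Oh\!\Big(\sqrt{n}\,\log\delta'^{-1}\Big)=\Oh\!\Big(\sqrt{n}\,(\log\delta^{-1}+\log\log n)\Big),
\]
which for constant $\delta$ is $\Theta(\sqrt{n}\log\log n)$, not $\Oh(\sqrt{n}\log\delta^{-1})$. Your ``relevant regime $\log n=\Oh(\log\delta^{-1})$'' hand-wave does not cover this case, and the query bound in the corollary has no $\Ohtilde$ to hide it. The paper instead allots error $\delta/2^{k+1}$ to the $k$-th call: then the extra factor $\log(2^k/\delta)=k+\log\delta^{-1}$ pairs with the geometrically decaying $\sqrt{n/2^k}$, so $\sum_k \sqrt{n/2^k}\cdot k=\Oh(\sqrt{n})$, while $\sum_k \log^3(2^k/\delta)=\Oh(\log^4 n+\log^4\delta^{-1})$ and $\log^4 n$ is absorbed by $\sqrt{n}$. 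Switching your $\delta'$ to this geometric schedule fixes the bound with no change to the rest of your argument.
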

\begin{proof}
    We focus on the case when $i \le \min I^{\bm{+}}$ is desired; the alternative version
    is symmetric.

    For an interval $J\subseteq I$, we denote by $\mathsf{ALG}(J,\delta)$ an invocation of
    the algorithm of \cref{lem:groverext}  with the error probability at most $\delta$ and
    the domain restricted to $J$.

    Our algorithm proceeds as follows:
    \begin{enumerate}
        \item $i\coloneqq \mathsf{ALG}(I,\delta/2)$.
        \item If $i= \bot$, return $\bot$.
        \item Otherwise, set $J\coloneqq I$ and perform the following steps for
            $k\coloneqq 1$ to $\ceil{\log n}$:
            \begin{enumerate}
                \item Partition $J=J_L\cup J_R$ into two halves of the same size (up to
                    $\pm 1$) with $\max J_L < \min J_R$.
                \item If $i\in J_L$, set $J\coloneqq J_L$.
                \item Otherwise:
                    \begin{itemize}
                        \item $i'\coloneqq \mathsf{ALG}(J_L,\delta/2^{k+1})$.
                        \item If $i'=\bot$, set $J\coloneqq J_R$.
                        \item Otherwise, set $i\coloneqq i'$ and $J\coloneqq J_L$.
                    \end{itemize}
            \end{enumerate}
        \item Return $i$.
    \end{enumerate}

    To analyze the algorithm, let us first observe that error bounds in all applications
    of $\mathsf{ALG}$ sum up to at most $\delta$, so we may assume without loss of
    generality that these applications all return correct answers.

    If the initial application returns $\bot$, then $I^{\bm{+}}=\emptyset$ and the
    algorithm correctly returns $\bot$.

    Otherwise, we maintain an interval $J\subseteq I$ so that $|J|\le \ceil{n/{2^{k}}}$
    holds by the end the $k$-th iteration and $J$ satisfies the following two invariants:
    \[i\in J\cap (I^{\bm^{+}}\cup I^{\bm{\sim}})\qquad\text{and}\qquad \min J \le \min I^{\bm^{+}}.\]

    Initially, the invariant is trivially satisfied for $J\coloneqq I$ and $k\coloneqq 0$.
    In the $k$-th iteration, we partition $J=J_L\cup J_R$ into two halves, both of size at
    most $\ceil{n/{2^{k}}}$.

    If $i\in J_L$, then invariant remains satisfied for $J\coloneqq J_L$.
    Otherwise, we use $\mathsf{ALG}$ for $J_L$.

    If the algorithm reports $\bot$, then $J_L\cap I^{\bm^{+}}=\emptyset$, and thus $\min
    J_R \le \min I^{\bm^{+}}$, which means that the invariant remains satisfied for
    $J\coloneqq J_R$.

    Otherwise, the algorithm reports $i'\in J_L\cap (I^{\bm^{+}}\cup I^{\bm{\sim}})$.
    We then set $i\coloneqq i'$ so that the invariant remains satisfied for $J\coloneqq
    J_L$.
    After $\ceil{\log n}$ iterations, we arrive at a singleton interval $J=\{i\}$,
    concluding that $i\in I^{\bm^{+}}\cup I^{\bm{\sim}}$ and $i \le \min I^{\bm^{+}}$
    satisfies the requirements.

    As we have only up to $\Oh(\log n)$ control instructions, the complexity of the algorithm is dominated
    by
    the calls to $\mathsf{ALG}$.
    The $k$-th of these calls has complexity
    \[\Oh(\sqrt{n/2^k}\log(2^k/\delta)+\log^3(2^k/\delta)).\]
    In total, we thus have a complexity of $\Oh(\sqrt{n}\log\delta^{-1}+\log^4(n/\delta))$. Since $\log^4
    (n/\delta) \le 16\cdot (\log^4 n + \log^4\delta^{-1})$, this bound reduces to
    $\Oh(\sqrt{n}\log\delta^{-1}+\log^4\delta^{-1})$.
\end{proof}

\paragraph*{Quantum Subroutines on Strings}

In this paper, we utilize Hariharan and Vinay's quantum algorithm for finding exact
occurrences of a pattern in a text as a subroutine.

\begin{theoremq}[Quantum Exact Pattern Matching \cite{HV03}] \label{thm:quantum_matching}
    There is an $\Ohtilde(\sqrt{n})$-time quantum algorithm that, given a pattern $P$ of
    length $m$ and a text $T$ of length~$n$ with $n\ge m$, finds an occurrence of $P$ in
    $T$ (or reports that \(P\) does not occur in \(T\)).
\end{theoremq}

One first application of the algorithm of \cite{HV03} is to find the cyclic rotation of a
primitive string.

\begin{lemma}\label{claim:hd_compstructure_rot}
    Given a string \( X \) and a primitive string \( Q \) of the same length $n$, we can
    determine in \(\Ohtilde(\sqrt{n})\) quantum time whether there exists an index \( a
    \in
    \fragmentco{0}{n} \) such that \(\rot^a(Q) = X\). If such an index exists, we
    can find the unique \( a \) that satisfies this condition.
\end{lemma}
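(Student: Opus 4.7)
The plan is to reduce the problem directly to quantum exact pattern matching via the classical "rotation-as-substring" trick. The key observation is that a length-$n$ string $X$ satisfies $X = \rot^a(Q)$ for some $a \in \fragmentco{0}{n}$ if and only if $X$ occurs as a (contiguous) length-$n$ substring of the doubled string $T' \coloneqq QQ$. Concretely, one checks from the definition of $\rot$ that $\rot^a(Q) = T'\fragmentco{n-a}{2n-a}$ for every $a \in \fragmentco{1}{n}$, while $\rot^0(Q) = Q = T'\fragmentco{0}{n} = T'\fragmentco{n}{2n}$. Hence the set of positions in $\fragment{0}{n}$ at which $X$ occurs in $T'$ corresponds bijectively (via $a = (n-p) \bmod n$) to the set of rotation indices $a$ with $\rot^a(Q) = X$.

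With this in hand, I would first simulate oracle access to $T'$ using a single query to $Q$ via $T'\position{i} = Q\position{i \bmod n}$, making $T'$ a length-$2n$ text available at the same per-query cost as $Q$. Then I would invoke \cref{thm:quantum_matching} with pattern $X$ (of length $n$) and text $T'$ (of length $2n$), which either reports an occurrence $p \in \fragment{0}{n}$ or certifies that no occurrence exists, in $\Ohtilde(\sqrt{n})$ quantum time. If no occurrence is found, I report that no such $a$ exists; otherwise, I output $a \coloneqq (n - p) \bmod n$.

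For uniqueness, I would appeal to the primitivity of $Q$ together with the fact (stated in the preliminaries) that a primitive string equals its own rotation only for trivial shifts: $\rot^a(Q) = \rot^{a'}(Q)$ forces $a \equiv a' \pmod n$. In particular, if $X = Q$ then $X$ occurs in $T'$ exactly at the two positions $p = 0$ and $p = n$, both mapping to $a = 0$; and if $X = \rot^a(Q)$ for some $a \in \fragmentoo{0}{n}$, then $X$ occurs at the single position $p = n - a$ in $T'$. Thus the value of $a$ computed from whichever occurrence \cref{thm:quantum_matching} returns is unambiguous.

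There is no substantive obstacle here: the reduction is syntactic, the correctness rests entirely on primitivity of $Q$, and the complexity is inherited verbatim from \cref{thm:quantum_matching}. The only minor point to spell out cleanly is the case analysis above showing that the returned position $p$ deterministically reconstructs $a$ regardless of which of the (at most two) occurrences of $X$ in $T'$ is returned.
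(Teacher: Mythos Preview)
Your proposal is correct and uses essentially the same approach as the paper: reduce rotation detection to a single call to quantum exact pattern matching via the doubling trick. The only cosmetic difference is that the paper searches for $Q$ in $XX\fragmentco{0}{n-1}$ (so the returned position is directly $a$, and the length-$(2n-1)$ text avoids the duplicate occurrence altogether), whereas you search for $X$ in $QQ$ and handle the $p\in\{0,n\}$ case explicitly; both are equally valid.
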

\begin{proof}
    It suffices to use \cref{thm:quantum_matching} to check whether \( Q \) occurs in \(
    XX\fragmentco{0}{n-1} \).
    If an occurrence is found, return its starting position.
    Since \( Q \) is primitive, the occurrence must be unique.
\end{proof}

Next, we modify the algorithm of \cite{HV03} such that it outputs the whole set $\Occ(P,
T)$.

\begin{lemma}\label{thm:quantum_matching_all}
    There is an $\Ohtilde(\sqrt{\mathsf{occ} \cdot n})$-time quantum algorithm that, given
    a pattern $P$ of length $m$ and a text $T$ of length $n$ with $n\ge m$, outputs
    $\Occ(P, T)$, where $\mathsf{occ} = |\Occ(P, T)|$.
\end{lemma}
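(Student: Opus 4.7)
My plan is to combine \cref{thm:quantum_matching} with a balanced divide-and-conquer and a base case that exploits the arithmetic-progression structure of occurrences inside short windows.

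\textbf{Algorithm.} Define $\textsc{FindAll}(a, b)$, which outputs $\Occ(P,T) \cap \fragmentco{a}{b}$, and launch it with $(a, b) = (0, n - m + 1)$. If $b - a > 2m$, invoke \cref{thm:quantum_matching} on $P$ and $T\fragmentco{a}{b + m - 1}$ in $\Ohtilde(\sqrt{b - a + m})$ time; if it reports no occurrence then halt this branch, and otherwise recurse on $\fragmentco{a}{c}$ and $\fragmentco{c}{b}$ with $c = \lfloor (a + b)/2 \rfloor$ (crucially, the split is at the midpoint rather than at the discovered occurrence, which keeps the recursion tree balanced). If $b - a \le 2m$ (base case), a classical Fine--Wilf-type structural fact (closely related to \cref{fct:periodicity}) guarantees that $\Occ(P, T) \cap \fragmentco{a}{b}$ is either of size at most two or an arithmetic progression; it therefore suffices to identify the first, second (if any), and last occurrences in the window by wrapping $\Oh(\log m)$ invocations of \cref{thm:quantum_matching} on restricted text sub-fragments inside a binary search, spending $\Ohtilde(\sqrt{m})$ time per witness, and then to emit the induced arithmetic progression.

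\textbf{Analysis.} Because each occurrence belongs to exactly one range at every recursion depth, the number of ``active'' ranges (those containing at least one occurrence) at depth $d$ is at most $\min(2^d, \mathsf{occ})$; including their empty siblings, the total number of calls at depth $d$ is $\Oh(\min(2^d, \mathsf{occ}))$. Non-base-case nodes live at depths $d \le \log(n/m)$ with range size $\Theta(n/2^d)$ and cost $\Ohtilde(\sqrt{n/2^d})$, so their total contribution is
\[
\sum_{d = 0}^{\log(n/m)} \Oh(\min(2^d, \mathsf{occ})) \cdot \Ohtilde\!\left(\sqrt{n/2^d}\right) = \Ohtilde\!\left(\sqrt{\mathsf{occ} \cdot n}\right),
\]
where the right-hand side is obtained by splitting the sum at $d = \log \mathsf{occ}$ and noting that both resulting geometric tails are dominated by the term at that breakpoint. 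The base cases live at depth $\log(n/m)$, number $\Oh(\min(n/m, \mathsf{occ}))$, and each costs $\Ohtilde(\sqrt{m})$; a short case analysis on whether $\mathsf{occ} \le n/m$ or $\mathsf{occ} > n/m$ shows this total is also $\Ohtilde(\sqrt{\mathsf{occ} \cdot n})$. Finally, emitting the occurrences (compactly as APs) adds $\Oh(\mathsf{occ}) \le \Oh(\sqrt{\mathsf{occ} \cdot n})$.

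\textbf{Main obstacle.} The delicate point is the base case: one must confirm that emitting just the arithmetic progression through the first, second, and last occurrences is both correct (no extra and no missing positions) and implementable in $\Ohtilde(\sqrt{m})$ quantum time, and one must argue that the balanced split analysis really does dominate the unstructured $k \sqrt{n}$ bound of a naive recursion. The AP claim reduces to a standard Fine--Wilf argument: three occurrences inside a $2m$-window force $P$ to have a common short period, which in turn pins down all occurrences to a single arithmetic progression bracketed by the first and last witnesses. The ``first'', ``second'', and ``last'' subroutines are each $\Oh(\log m)$-deep binary searches built around \cref{thm:quantum_matching} applied to appropriate sub-fragments, so the logarithmic overhead is absorbed by $\Ohtilde$; after that, the Cauchy--Schwarz-style level-by-level sum above yields the claimed $\Ohtilde(\sqrt{\mathsf{occ} \cdot n})$ bound.
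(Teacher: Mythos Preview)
Your divide-and-conquer is a valid alternative route, but the base-case threshold $b-a\le 2m$ is too loose for the arithmetic-progression structure you rely on. The Fine--Wilf fact (and the paper's \cref{fct:periodicity}) only forces $\Occ(P,T)\cap\fragmentco{a}{b}$ to be an arithmetic progression when the text window $T\fragmentco{a}{b+m-1}$ has length at most about $2m$, i.e., when $b-a\le m+1$; with $b-a$ up to $2m$ the window is nearly $3m$ long and three occurrences need not be in progression. Concretely, $P=\mathtt{abaaba}$ (length $6$, with periods $3$ and $5$ but not $1$) occurs in $T=\mathtt{abaabaababaaba}$ exactly at positions $0,3,8$---not an AP---yet their span $8<2m=12$ triggers your base case, so ``first, second, last $\to$ AP'' would emit an incorrect set. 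The fix is immediate: tighten the base case to $b-a\le m$; the recursion deepens by one level and the rest of your analysis goes through unchanged.

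With that repair, your argument is correct but different from the paper's. The paper scans left to right: having found $t_i$, it wraps \cref{thm:quantum_matching} in an exponential search to locate $t_{i+1}$ in $\Ohtilde(\sqrt{t_{i+1}-t_i})$ time, and then bounds $\sum_i\sqrt{t_{i+1}-t_i}\le\sqrt{\mathsf{occ}\cdot n}$ by concavity. That proof is shorter---no periodicity lemma, no recursion-tree accounting---and outputs the occurrences one by one. Your balanced recursion is slightly heavier machinery but buys a compressed output: the leaves emit whole APs rather than individual positions, so $\Occ(P,T)$ can be represented in $\Oh(\min(\mathsf{occ},n/m))$ words rather than $\Theta(\mathsf{occ})$.
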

\begin{proof}
    Let $t_1 < t_2 < \cdots < t_{\mathsf{occ}}$ denote the positions that are contained in
    $\Occ(P, T)$.

    Suppose that we have just determined $t_i$  for some $0 \leq i < \mathsf{occ}$ (if $i
    = 0$, then $t_0$ is the starting position of the string).
    To find $t_{i+1}$ we use \cref{thm:quantum_matching} combined with an exponential
    search.
    More specifically, at the $j$-th jump of the first stage of the exponential search, we
    use \cref{thm:quantum_matching} to check whether
    \[\Occ(P, T\fragmentco{t_{i}+1}{\min(\sigma_i + 2^{j - 1},|T|)}) \neq \emptyset.\]
    Once we have found the first $j$, this set is non-empty, we use a binary search
    combined with  \cref{thm:quantum_matching} to find the exact position of $t_{i+1}$.

    By doing so, we use at most $\Oh(\log (t_{i+1} - t_{i})) \leq \Oh(\log n)$ times the
    algorithm from \cref{thm:quantum_matching} as subroutine, each of them requiring at
    most $\Ohtilde(\sqrt{t_{i+1} - t_i})$ time.
    Consequently, finding all $t_1, \ldots, t_{\mathsf{occ}}$ means spending
    $\Ohtilde(\sum_{i=0}^{\mathsf{occ}-1} \sqrt{t_{i+1} - t_i})$ time.

    This expression is maximized if the positions in $\Occ(P, T)$ are equally spaced from
    each other, meaning that $t_{i+1} - t_{i} = \Oh(n/\mathsf{occ})$ for all $0 \leq i <
    \mathsf{occ}$.
    We conclude that $\Ohtilde(\sum_{i=0}^{\mathsf{occ}-1} \sqrt{t_{i+1} - t_i}) \leq
    \Ohtilde(\mathsf{occ} \cdot \sqrt{n/\mathsf{occ}}) = \Ohtilde(\sqrt{\mathsf{occ} \cdot
    n})$.
\end{proof}

Using a similar proof, it is possible to devise the following quantum subroutine for
verifying whether the Hamming distance between two strings is less than or equal to a
threshold $k$ and, if this condition is met, to report the positions where the mismatches
occur.

\begin{lemma}\label{lem:find_mismatches}
    There is an $\Ohtilde(\sqrt{kn})$-time quantum algorithm that, given strings $X$ and
    $Y$ of length $n$, and a threshold $k$, verifies whether $\hd(X, Y) \leq k$.
    If this condition is met, the algorithm returns the set $\Mis(X, Y) \coloneqq \{i \in
    \fragmentco{0}{n} : X\position{i} \neq Y\position{i}\}$.
\end{lemma}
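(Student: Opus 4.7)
\medskip\noindent\textbf{Proof proposal.}
The plan is to mirror the strategy of \cref{thm:quantum_matching_all}, but with the indicator function $f\colon\fragmentco{0}{n}\to\{0,1\}$ defined by $f(i)=1$ iff $X\position{i}\ne Y\position{i}$. Each evaluation of $f(i)$ can be implemented using $\Oh(1)$ queries to the oracles for $X$ and $Y$, so \cref{prp:grover} (\GS) applied to $f$ restricted to any sub-interval $\fragmentco{\ell}{r}\subseteq\fragmentco{0}{n}$ either reports a mismatch in that interval or certifies that none exists in time $\Ohtilde(\sqrt{r-\ell})$.

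We enumerate the mismatches left to right. Write $m_1<m_2<\cdots<m_s$ for the elements of $\Mis(X,Y)$, and set $m_0\coloneqq -1$ and $m_{s+1}\coloneqq n$ for convenience. Suppose we have just located $m_i$ and let $\ell\coloneqq m_i+1$. To find $m_{i+1}$, first perform an exponential search: for $j=0,1,2,\ldots$, invoke \GS on $f$ restricted to $\fragmentco{\ell}{\min(\ell+2^j,n)}$ until the first time it reports a mismatch (or until the interval covers all of $\fragmentco{\ell}{n}$ without a mismatch, in which case we stop). Once such a $j$ is found, run a standard binary search on $\fragmentco{\ell}{\ell+2^j}$, again invoking \GS in each halving step, to pin down the smallest position containing a mismatch. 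This reports $m_{i+1}$ in time $\Ohtilde(\sqrt{m_{i+1}-m_i})$ because the dominant cost occurs at the final stage of the exponential search where the interval has length $\Theta(m_{i+1}-m_i)$, and the $\Oh(\log n)$ binary-search calls each run on intervals of that size or smaller.

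We stop as soon as we have found $k+1$ mismatches or exhausted the string. In the first case, we conclude $\hd(X,Y)>k$ and return the corresponding verdict; in the second case, we return the complete set $\Mis(X,Y)$ of size at most $k$. The total running time is
\[
\Ohtilde\!\left(\sum_{i=0}^{\min(s,k)} \sqrt{m_{i+1}-m_i}\right),
\]
where the sum ranges over at most $k+1$ non-negative gaps whose total is at most $n$. By the Cauchy--Schwarz inequality this quantity is maximized when the gaps are equal, giving the bound $\Ohtilde((k+1)\sqrt{n/(k+1)})=\Ohtilde(\sqrt{kn})$, as required. Correctness (with high probability) follows by taking a union bound over the $\Oh(k\log n)$ invocations of \GS, each of which is boosted to succeed with probability $1-1/\poly(n)$ at only polylogarithmic overhead.

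The only mildly delicate point is bookkeeping the failure probability across the $\Oh(k\log n)$ \GS calls and verifying the amortized time bound; both are routine, and there is no combinatorial obstacle beyond the one already handled in the proof of \cref{thm:quantum_matching_all}.
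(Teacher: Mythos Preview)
Your proposal is correct and follows essentially the same approach as the paper: enumerate mismatches left to right via exponential search, using \GS on the indicator function $f(i)=[X\position{i}\ne Y\position{i}]$ in place of the exact-pattern-matching subroutine used in \cref{thm:quantum_matching_all}, and bound the total cost by the concavity/Cauchy--Schwarz argument on the gap lengths. The paper's own proof is in fact terser than yours, simply pointing back to the analysis of \cref{thm:quantum_matching_all}.
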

\begin{proof}
    Let $j_1 < j_2 < \cdots < j_{\ell}$ represent the first $\ell = \min(k+1, \hd(X, Y))$
    positions contained in $\Mis(X, Y)$.

    Similar to the approach in \cref{thm:quantum_matching_all}, we employ an exponential
    search technique to find $j_{i+1}$ given $j_{i}$.
    However, in this case, the search algorithm, instead of being intertwined with
    \cref{thm:quantum_matching}, utilizes a straightforward \GS to check
    whether there is a mismatch between two substrings of $X$ and $Y$.
    If $\ell = k + 1$, then the algorithm reports that $\hd(X, Y) > k$; otherwise, it
    returns the set $\{j_{i} : i \in \fragment{1}{\ell}\}$.

    The time analysis for this algorithm closely resembles the one presented in
    \cref{thm:quantum_matching_all}.
\end{proof}

If, instead of verifying whether $\hd(X, Y) \leq k$, it is sufficient to distinguish
between the two cases, $\hd(X, Y) \leq k$ and $\hd(X, Y) > 2k$, a speedup by a factor of
$k$ can be achieved.

\begin{lemma}\label{lem:quantum_gap_hd}
    There is a quantum algorithm $\mathcal{A}(X,Y,k)$ that, given two strings $X$ and $Y$
    of equal length $|X| = |Y| = m$ and a threshold $0 < k \leq m$,
    outputs \yes or \no so that:
    \begin{itemize}
        \item If $\hd(X,Y)\le k$, then $\pr{\mathcal{A}(X,Y,k)=\text{\yes}} \ge 9/10$.
        \item If $\hd(X,Y) > 2k$, then $\pr{\mathcal{A}(X,Y,k)=\text{\yes}} \le 1/10$.
    \end{itemize}
   The subroutine achieves this within $\Ohtilde(\sqrt{m/k})$ quantum time and queries.
\end{lemma}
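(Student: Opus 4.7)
The plan is to reduce gap Hamming distance to an instance of \cref{lem:find_mismatches} on a short random subsample. Concretely, I will pick a constant $C$ (say $C=200$) and independently include each index $i\in\fragmentco{0}{m}$ in a sample set $S$ with probability $p\coloneqq \min(1, C/k)$. Let $X_S$ and $Y_S$ denote the strings of length $|S|$ obtained by restricting $X$ and $Y$ to the positions in $S$ (in the natural order). The algorithm then invokes \cref{lem:find_mismatches} on $(X_S,Y_S,\tau)$ with threshold $\tau\coloneqq \lfloor 3C/2\rfloor$ and returns \yes iff it reports $\hd(X_S,Y_S)\le\tau$.

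The correctness analysis is a routine Chernoff calculation. The number of sampled mismatches $|\Mis(X_S,Y_S)|$ is a sum of independent Bernoulli variables with mean $\mu = p\cdot \hd(X,Y)$, so $\mu\le C$ when $\hd(X,Y)\le k$ and $\mu>2C$ when $\hd(X,Y)>2k$. For a sufficiently large choice of $C$, standard Chernoff bounds give that $|\Mis(X_S,Y_S)|\le \tau$ with probability $\ge 9/10$ in the first case and $|\Mis(X_S,Y_S)|>\tau$ with probability $\ge 9/10$ in the second; if needed, the constant can be boosted further by taking the majority vote of a constant number of independent repetitions. Similarly, a Chernoff bound shows that $|S|\le 2Cm/k$ holds with overwhelming probability, so we may condition on this event and abort (declaring \no, say) otherwise without affecting the stated error guarantees.

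For the complexity, observe that given $i$, producing $X_S\position{i}$ and $Y_S\position{i}$ requires only one oracle query each, together with a lookup in the sample set $S$; using the QRAG model we can store $S$ and perform these lookups in $\Oh(\log m)$ time. Thus the call to \cref{lem:find_mismatches} on inputs of length $|S|=\Oh(m/k)$ and threshold $\tau=\Oh(1)$ runs in $\Ohtilde(\sqrt{\tau\cdot|S|}) = \Ohtilde(\sqrt{m/k})$ quantum time and queries, matching the target bound.

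The only subtle point, which I do not expect to be a real obstacle, is implementing the random sample $S$ coherently: we need an oracle that, given $i\in\fragmentco{0}{|S|}$, returns the $i$-th sampled position in $\fragmentco{0}{m}$. In the QRAG model this is immediate after a classical $\Oh(m/k)$-size table of $S$ is prepared; alternatively one may replace true random sampling by a pseudorandom construction (e.g.\ a $k$-wise independent hash family) for which the necessary Chernoff/tail bounds still hold, keeping the query and time complexity $\Ohtilde(\sqrt{m/k})$.
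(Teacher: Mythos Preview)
Your approach is genuinely different from the paper's: the paper applies quantum approximate counting \cite[Theorem~13]{brassard2002quantum} directly to the indicator $f(i)=[X\position{i}\ne Y\position{i}]$, obtaining an estimate $h'$ of $\hd(X,Y)$ with additive error $O(\sqrt{hk}+k)$ and thresholding at $3k/2$; no sampling is involved.

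Your subsampling-plus-\cref{lem:find_mismatches} idea is correct for the \emph{query} bound, and the Chernoff analysis is fine, but the stated \emph{time} bound is not established. Writing the table $S$ into QRAG memory already costs $\Theta(|S|)=\Theta(m/k)$ time, which dominates $\sqrt{m/k}$ whenever $k=o(m)$. You call this ``not a real obstacle'' and point to a $k$-wise independent hash, but a membership hash $j\mapsto[j\in S]$ does not give you the rank-to-position map $i\mapsto s_i$ that the inner search inside \cref{lem:find_mismatches} needs in $\Ohtilde(1)$ time; you would still have to enumerate $S$ to build that map. A clean repair is to sample \emph{with replacement}: take a pairwise-independent sequence $j_i=(ai+b)\bmod p$ for random $a,b$ and a fixed prime $p\ge m$, set $X_S\position{i}=X\position{j_i \bmod m}$ and $Y_S\position{i}=Y\position{j_i \bmod m}$ for $i\in\fragmentco{0}{\lceil Cm/k\rceil}$, and replace Chernoff by Chebyshev---pairwise independence already gives error $O(1/C)<1/10$ in both cases for a sufficiently large constant~$C$. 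This yields $\Ohtilde(1)$ per-access time from an $O(1)$-size seed. As written, though, your proof does not meet the claimed time complexity; the paper's direct-counting approach sidesteps the whole issue.
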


\begin{proof}
    Consider a function $f : \fragmentco{0}{m}\to \{0,1\}$ such that $f(i)=1$ if and only
    if $X\position{i} \ne Y\position{i}$.
    Observe that $h\coloneqq\hd(X,Y)=|f^{-1}(1)|$.

    We apply the (approximate) counting algorithm of \cite[Theorem
    13]{brassard2002quantum} with parameters $(M,k)=(\big\lceil 48 \pi
    \sqrt{m/k}\big\rceil,6)$.
    Said algorithm uses $\Ohtilde(M)=\Ohtilde(\sqrt{m/k})$ quantum time and queries
    (evaluations of $f$) and, with probability at least $1- {1}/{(2(k-1))}= {9}/{10}$
    outputs a value $h'$ such that
    \[|h'-h|\le 12\pi ({\sqrt{h(m-h)}}) / {M}+36\pi^2 {m}/{M^2} \le 1/4\cdot\sqrt{hk}+
    1/{64} \cdot k.\]
    If $h \le k$, then
    \[h' \le h + 1/4\cdot \sqrt{hk} + {1}/{64} \cdot k \le \left(1+1/4+1/{64}\right)k \le
    3/2 \cdot k.\]
    If $h > 2k$, then
    \[h' \ge h - 1/4\cdot \sqrt{hk} - {1}/{64} \cdot k > \left(\sqrt{2}-
    1/4\right)\sqrt{hk}-{1}/{64}\cdot k > \left(2-{\sqrt{2}}/{4}-1/{64}\right)k > 3/2\cdot k.\]
    Thus, it suffices to return \yes if the computed estimate $h'$ is at most $3/2 \cdot k$ and \no otherwise.
\end{proof}

We conclude this (sub)section by presenting an edit counterpart to
\cref{lem:find_mismatches}.

\begin{lemma}[{\cite[Corollary 5.8]{KNW24}} adapted from {\cite[Theorem 1.1]{GJKT24}}]\label{prp:quantumed_w_info}
    There is a quantum algorithm that, given quantum oracle access to strings $X,Y$ of
    length at most $n$,
    computes their edit distance $k \coloneqq \ed(X,Y)$ and the edit information $\sE_{X,
    Y}(\mA)$ of some optimal alignment $\mA : X \onto Y$.
    The algorithm requires $\Ohtilde(\sqrt{n + kn})$ queries and $\Ohtilde(\sqrt{n + kn} +
    k^2)$ time. \lipicsEnd
\end{lemma}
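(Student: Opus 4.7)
The plan is to augment the quantum edit distance algorithm of Goldenberg--Jin--Kociumaka--Tay~\cite{GJKT24} (Theorem 1.1) so that, in addition to the numerical value $k = \ed(X,Y)$, it also outputs a witness alignment in the form of the edit information $\sE_{X,Y}(\mA)$. The starting point is to invoke that algorithm to obtain $k$ in $\Ohtilde(\sqrt{n+kn})$ queries and time; the remaining task is then to extract an optimal alignment within the same quantum budget (plus an $\Oh(k^2)$ classical overhead).

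My approach is to implement a quantum-accelerated Landau--Vishkin~\cite{LandauV89} dynamic program. Recall that Landau--Vishkin computes $\ed(X,Y)$ in $\Oh(k^2)$ time by maintaining, for each diagonal $d \in \fragment{-k}{k}$ and each edit budget $e \le k$, the furthest-reaching point $L_{d,e}$ along diagonal $d$ reachable using $e$ edits. The transition from $L_{d,e-1}$ to $L_{d,e}$ requires one longest-common-extension (LCE) query between suffixes of $X$ and $Y$ starting at the current frontier. The full DP makes $\Oh(k^2)$ constant-time arithmetic operations, plus a total of $\Oh(k^2)$ LCE queries whose \emph{combined} advance is bounded by $\Oh(kn)$ (each diagonal is advanced at most $n$ positions over the course of the whole computation).

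The heart of the adaptation is implementing these LCE queries quantumly in an amortized fashion. A single LCE query of length $\ell$ can be answered in $\Ohtilde(\sqrt{\ell})$ quantum time via exponential search combined with Grover-based equality testing on prefixes. By the amortization argument above, the total query complexity summed over all LCE calls is $\sum \Ohtilde(\sqrt{\ell_i}) = \Ohtilde(\sqrt{k^2 \cdot \text{avg}\, \ell_i}) = \Ohtilde(\sqrt{k \cdot kn}) = \Ohtilde(\sqrt{kn})$, absorbing into the already-paid $\Ohtilde(\sqrt{n+kn})$ bound. Once the DP table is filled, standard back-tracing along furthest-reaching points reconstructs the alignment, and hence $\sE_{X,Y}(\mA)$, in the additional $\Oh(k^2)$ classical time accounted for by the time complexity.

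The main obstacle I expect is the tight control of the amortized query complexity: naively, $\Oh(k^2)$ independent LCE calls of worst-case length $\Oh(n)$ would cost $\Ohtilde(k^2\sqrt{n})$, which is too expensive. The Cauchy--Schwarz-style argument exploiting that the total advancement across all diagonals is $\Oh(kn)$ is the crucial ingredient, and essentially the one already used in~\cite{GJKT24} to get $\Ohtilde(\sqrt{kn})$ for computing $k$ itself; my proof reuses this bound for the witness-recovery phase. Since this lemma is stated verbatim as \cite[Corollary~5.8]{KNW24}, the cleanest write-up simply cites that corollary and notes that it is obtained from \cite[Theorem~1.1]{GJKT24} by the described Landau--Vishkin augmentation.
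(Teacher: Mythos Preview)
The paper does not supply its own proof for this lemma; it is imported verbatim as \cite[Corollary~5.8]{KNW24} (built on \cite[Theorem~1.1]{GJKT24}) and stated without proof. Your closing sentence, simply citing those two sources, is in fact exactly what the paper does.

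Your sketched argument, however, has a genuine quantitative gap in the amortization step. With $N = \Theta(k^2)$ LCE calls whose lengths sum to $L = \Oh(kn)$, the Cauchy--Schwarz/Jensen bound is
\[
\sum_{i=1}^{N} \sqrt{\ell_i} \;\le\; \sqrt{N \cdot L} \;=\; \sqrt{k^2 \cdot kn} \;=\; k^{3/2}\sqrt{n},
\]
not $\sqrt{kn}$. You effectively replaced $\sqrt{N\cdot L}$ by $\sqrt{N\cdot(L/N)}=\sqrt{L}$, losing a factor of $\sqrt{N}=k$. The resulting $\Ohtilde(k^{3/2}\sqrt{n})$ exceeds both the query budget $\Ohtilde(\sqrt{kn})$ and the time budget $\Ohtilde(\sqrt{kn}+k^2)$ in many regimes (e.g., at $k=n^{1/3}$ it gives $n$ versus $n^{2/3}$). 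Even more simply, each of the $\Theta(k^2)$ LCE calls costs at least one quantum query, so you already spend $\Omega(k^2)$ queries, violating the $\Ohtilde(\sqrt{kn})$ query bound whenever $k^3 \gg n$. Thus a bare quantum Landau--Vishkin with per-call LCE cannot meet the stated bounds; \cite{GJKT24} uses a substantially different algorithm, and the witness extraction in \cite{KNW24} relies on that machinery rather than on the DP-plus-quantum-LCE idea you outline.
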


\subsection{The \modelname Model}\label{sec:pillar}

Introduced in \cite{CKW20}, the \modelname model provides a useful abstraction of common
basic building blocks of string matching algorithms and makes them independent of the
concrete
representation of the input (strings).
In particular, the \modelname comprises the following basic operations.

\begin{itemize}
    \item ${\tt Extract}(S, \ell, r)$: Compute and return $S\fragment{\ell}{r}$.
    \item $\lceOp{S}{T}$: Compute the length of~the longest common prefix of~$S$ and $T$.
    \item $\lcbOp{S}{T}$: Compute the length of~the longest common suffix of~$S$ and $T$.
    \item $\ipmOp{S}{T}$: Assuming that $|T|\le 2|S|$, compute the starting positions of
        all exact occurrences of~$S$ in~$T$.
    \item $\accOpName(S,i)$: Retrieve the character $\accOp{S}{i}$.
    \item $\lenOpName(S)$: Compute the length $|S|$ of~the string $S$.
\end{itemize}

Now, a \modelname algorithm is an algorithm that interacts with the input only via the
above operations; the complexity of a \modelname algorithm is measured in the number of
calls to the basic operations it makes.

A \modelname algorithm works on any input
representation that \emph{implements} the basic operations.
Classically, this includes the ``standard setting'', where the input is given as ordinary
strings, as well as the fully-compressed setting, where the input is given as a
(run-length) straight-line program~\cite{CKW20}.

We use the known \modelname algorithms for Approximate Pattern Matching from
\cite{CKW20,CKW22}; which we list here for convenience.

\hdalg
\edalg

\section{Step 3: Solving Systems of Substring Equations Classically and Efficiently}
\label{sec:encoding}

We discuss a classical algorithm to compute a universal solution to a
system of substring equations.

\nineeighteenone

As shown in~\cite{GKRR20}, a universal solution of a system of $b$ substring equations on
length-$n$ strings can be constructed in $\Oh(n+b)$ time.
Our goal is to achieve sublinear dependency on $n$, which requires a compressed
representation of the constructed universal solution.
To this end, we introduce the following variant of straight-line programs.

\xSLP

The main goal of this section is to prove \cref{prp:solve_substring_equations}.

\solvesubstringequations*

As outlined in the Technical Overview, we prove \cref{prp:solve_substring_equations} by
giving an efficient solution to \pn{Dynamic xSLP Substring Equation}.

\problembox{%
    \pn{Dynamic xSLP Substring Equation}\\
    {\bf{Maintained Object:}}
    a set \(E\) of substring equations on length-\(n\) strings
    and a pure xSLP $\G$ that generates a string $S$ of fixed length $n$ that is a
    universal solution to \(E\).\\
    {{\bf{Updates}} that modify \(\G\)}:
    \begin{itemize}
        \item \textsc{Init}$(n)$: Given \( n \in \mathbb{Z}_+ \), set \(E \coloneqq
            \emptyset\) and initialize the xSLP \(\G\) such
            that it contains a single pseudo-terminal of length $n$ (thereby being a
            universal solution to \(E = \emptyset\)).\\
            We assume that this operation is always called exactly once before any other
            operations are called.
        \item \textsc{SetSubstringsEqual}$(x, x', y, y')$:
            Given \(x, x', y, y' \in \fragment{0}{n}\) such that \(x < x'\), \(y < y'\) and \(x' - x = y' - y\),
            add to \(E\) the substring equation \(e: T\fragmentco{x}{x'} =
            T\fragmentco{y}{y'}\), where \(T\) is a formal variable representing a
            length-\(n\) string.
            Further, update \(\G\) such that the represented string $S$ is a
            universal solution to \(E \cup \{e\}\).
    \end{itemize}
    {{\bf{Queries}} that do not modify \(\G\):}
    \begin{itemize}
        \item \textsc{Export}: Return \(\G\).
    \end{itemize}
}%

\dynslpmain*

We proceed in two steps. First, we implement \textsc{SetSubstringsEqual} using a set of
few, conceptually simpler subroutines; second, we then show how to efficiently implement
said subroutines.

\subsection{Dynamic xSLP Substring Equation via Dynamic xSLP Split Substitute}

For the rest of this section, it is useful to define some extra notation.
For a pure xSLP \(\G\) we write
$\G_\mP$ for the SLP that is obtained from $\G$ by interpreting each
pseudo-terminal as a terminal.
Further, we often associate the string $S=\val(\G)$
with the underlying string of pseudo-terminals $S_\mP=\val(\G_\mP)$.
Observe that $S$ is obtained by expanding $S_\mP$, that is, we have
\[S=\bigodot_{i=0}^{|S_\mP|-1} \val_\G(S_\mP\position{i}).\]
Finally, we call $x\in \fragment{0}{|S|}$ a \emph{delimiter} if $S\fragmentco{0}{x}$ is obtained by
expanding a prefix of $S_{\mP}$.
In particular, $0$ and $|S|$ are always delimiters.
Consult \cref{fig:length_enhanced_string} for a visualization of an example.

Next, we introduce an intermediate data structure (problem) that allows us to implement
\pn{Dynamic xSLP Substring Equation}.

\problembox{%
    \pn{Dynamic xSLP Split Substitute}\\
    {\bf{Maintained Object:}}
    a pure xSLP $\G$ that generates a string $S$ of fixed length $n$ \\
    {{\bf{Updates}} that modify \(\G\)}:
    \begin{itemize}
        \item \textsc{Init}$(n)$: Given \( n \in \mathbb{Z}_+ \), initialize the xSLP so
            that it contains a single pseudo-terminal of length $n$.\\
            We assume that this operation is always called exactly once before any other
            operations are called.
        \item \textsc{Split}$(X, y)$: Given \( X\in \mP_\G\) and \( y \in
            \fragmentoo{0}{|X|} \), create two new pseudo-terminals \(Y\) of length
            $|Y|=y$ and \(Z\) of length $|Z|=|X|-y$, remove $X$ from $\mP_\G$, and update
            $\G$ so that every occurrence of $\val_\G(X)$ in $S$ is replaced with
            $\val_\G(Y)\cdot \val_\G(Z)$.\\
            We also allow \textsc{Split}$(X, y)$ for $y\in \{0,|X|\}$ with the convention
            that it does not do anything.
        \item \textsc{Split}$(x)$:
            Given \( x \in \fragment{0}{n} \),
            a shorthand for\\
            {$X \leftarrow $ \textsc{FindPseudoTerminal}$(x)$; $x' \leftarrow
            \text{\textsc{FindDelimiter}}(x)$;}
            \textsc{Split}$(X, x - x')$.%
            \footnote{
                From the parameters it is always clear which variant of \textsc{Split} we
                call.
            }
        \item \textsc{Substitute}$(X, x, y)$: Given \( X \in \mP_\G\) and two delimiters
            \( x, y \in \fragment{0}{n} \) such that \( e \coloneqq |X|/(y-x)\in
            \mathbb{Z}_+ \), remove $X$ from $\mP_\G$ and update $\G$ so that every
            occurrence of $\val_\G(X)$ in $S$ is replaced with \( S\fragmentco{x}{y}^e \).
    \end{itemize}
    {{\bf{Queries}} that do not modify \(\G\):}
    \begin{itemize}
        \item \textsc{Export}: Return \(\G\).
        \item \textsc{FindDelimiter}$(x)$: Given $x\in \fragment{0}{n}$, return the
            largest delimiter in $\fragment{0}{x}$.
        \item \textsc{FindPseudoTerminal}$(x)$: Given $x\in \fragmentco{0}{n}$, return the
            pseudo-terminal associated with  $X\position{x}$.
        \item \textsc{LCE}$(x, y)$: Given $x,y\in \fragment{0}{n}$, return the length of
            the longest common prefix of $S\fragmentco{x}{n}$ and $S\fragmentco{y}{n}$.
    \end{itemize}
}%

\begin{figure}[t]
    \centering
    \begin{tikzpicture}[yscale=0.5,xscale=.65]
    \draw[very thin, black!70] (0,1.75) -- (0,-1.5);
    \draw[very thin, black!70] (4,1.75) -- (4,-1.5);
    \draw[very thin, black!70] (6,1.75) -- (6,-1.5);
    \draw[very thin, black!70] (7,1.75) -- (7,-1.5);
    \draw[very thin, black!70] (11,1.75) -- (11,-1.5);
    \draw[very thin, black!70] (13,1.75) -- (13,-1.5);

    \draw[thick, rounded corners=2.5pt, fill=black!5] (0,0) rectangle node
    {$X$} (4,1);
    \draw[thick, rounded corners=2.5pt] (4,0) rectangle node {$Y$} (6,1);
    \draw[thick, rounded corners=2.5pt, fill=black!15] (6,0) rectangle node
    {$Z$} (7,1);
    \draw[thick, rounded corners=2.5pt, fill=black!5] (7,0) rectangle node
    {$X$} (11,1);
    \draw[thick, rounded corners=2.5pt] (11,0) rectangle node {$Y$} (13,1);

    \foreach \x in {0,...,12} {
        \node[black!80] (n\x) at (\x+0.5, 1.5) {\tiny ${\x}$};
    }

    \newcommand{\sdollar}{\#_0^X, \#_1^X, \#_2^X, \#_3^X, \#_0^Y, \#_1^Y, \#_0^Z, \#_0^X, \#_1^X, \#_2^X, \#_3^X, \#_0^Y, \#_1^Y}

    \foreach \x [count=\i] in \sdollar {
        \node (s\i) at (\i-1+0.5, -1) {$\x$};
    }

    \node at (-1.4cm, 0.5cm) {\large ${P_S}$};
    \node at (-1.4cm, -0.85cm) {\large ${S}$};

\end{tikzpicture}
    \caption{
        A string \( S \) of length \(n = 13\) obtained from a by expanding a string $S_\mP=XYZXY$ of pseudo-terminals
        with $|X|=4$, $|Y|=2$, and $|Z|=1$.
        The delimiters of \( S \) are $0, 4, 6, 7, 11, 13$.}
    \label{fig:length_enhanced_string}
\end{figure}

\pagebreak

\begin{lemma}\label{tenseven}
    Given a data structure for \pn{Dynamic xSLP Split Substitute}, there is a data
    structure for \pn{Dynamic xSLP Substring Equation} where
    \begin{itemize}
        \item the call to \textsc{Init}\((n)\) of \pn{Dynamic xSLP Substring Equation} makes
            exactly one call to \textsc{Init} of \pn{Dynamic xSLP Substring Equation};
        \item after the \(b\)-th call to \textsc{SetSubstringEqual}, the data structure
            made in total \(\Oh(b \log n)\) calls to (either variant of) \textsc{Split},
            \textsc{Substitute} (and any of the query operations); and
        \item any call to \textsc{Extract} of \pn{Dynamic xSLP Substring Equation} makes
            exactly one call to \textsc{Extract} of \pn{Dynamic xSLP Substring Equation}.
    \end{itemize}
\end{lemma}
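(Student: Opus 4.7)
The plan is to realize $\textsc{SetSubstringsEqual}(x, x', y, y')$ as a short sequence of \textsc{Split}, \textsc{Substitute}, and query operations on the underlying \pn{Dynamic xSLP Split Substitute} data structure, while passing \textsc{Init} and \textsc{Export} through to their namesakes. The idea is standard: first cut out the two equal-length fragments $S\fragmentco{x}{x'}$ and $S\fragmentco{y}{y'}$ at the pseudo-terminal level, and then glue them together piece by piece, using \textsc{LCE} queries to skip over material that is already identified.

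Concretely, I begin by calling $\textsc{Split}(x)$, $\textsc{Split}(x')$, $\textsc{Split}(y)$, $\textsc{Split}(y')$, so that all four endpoints become delimiters. Afterwards I run a two-pointer walk. Starting from $a=x$ and $b=y$, I repeatedly compute $\ell \leftarrow \textsc{LCE}(a,b)$, advance both pointers by $\ell$, and halt if $a=x'$. Otherwise the two substrings disagree at the current position, so using \textsc{FindPseudoTerminal} and \textsc{FindDelimiter} I identify the pseudo-terminals $X$ and $Y$ covering positions $a$ and $b$ together with the distances to their right delimiters. If the remaining suffixes of $X$ and $Y$ have different lengths, I \textsc{Split} the longer pseudo-terminal at the matching offset; then I call \textsc{Substitute} on one of the now-aligned pseudo-terminals with the range of the other, so that both are henceforth expanded identically. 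I repeat until $a=x'$.

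Correctness relies on two observations. First, because all endpoints are delimiters and every subsequent \textsc{Split}/\textsc{Substitute} pair is applied to aligned equal-length pseudo-terminals, every position of $S\fragmentco{x}{x'}$ ends up expanding identically to the corresponding position of $S\fragmentco{y}{y'}$, so the new equation is satisfied. Second, the only identifications made are those forced by this equation (each \textsc{LCE}-skipped region was already identified, and each \textsc{Substitute} is applied to a position pair required to agree by $e$), so universality of the solution is preserved by induction on the number of equations already imposed.

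The main obstacle is the amortized bound of $\Oh(b\log n)$ primitive operations. The counting part is cheap once one observes that each inner iteration is paired with exactly one \textsc{Substitute} and at most one interior \textsc{Split}, so the running count is dominated by the total number of \textsc{Substitute}s, which in turn is bounded by the number of pseudo-terminals ever created. The structural part is harder: I need to argue that a single substring equation can trigger only $\Oh(\log n)$ interior iterations. The plan is to adapt the potential-based analysis implicit in~\cite{GKRR20}, using the fact that in each iteration either \textsc{LCE} advances past a maximal already-equivalent stretch, or the subsequent \textsc{Substitute} merges two pseudo-terminals, one of which must be a "fresh" cut created either by the four boundary splits of the present call or by an earlier equation; a doubling-style argument on the positions of these cuts then yields the $\Oh(\log n)$ bound per equation. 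This telescoping between the boundary splits, the interior splits, and the \textsc{Substitute} merges is where I expect to spend the most care.
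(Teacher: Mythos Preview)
Your high-level scaffold (four boundary splits, then an \textsc{LCE}-driven walk that repeatedly \textsc{Split}s and \textsc{Substitute}s) matches the paper, and your correctness sketch is fine: after the boundary splits, any mismatch position found by \textsc{LCE} is automatically a delimiter on both sides, so the pseudo-terminals line up cleanly.

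The gap is in the amortized bound. Your inner step handles only the ``generic'' case (cut the longer pseudo-terminal down to the length of the shorter one and \textsc{Substitute}), and your doubling argument does not go through for it. Consider a single call $\textsc{SetSubstringsEqual}(0,n{-}1,1,n)$ on a fresh xSLP. After the four boundary splits, the pseudo-terminal string is $A\,B\,C$ with $|A|=|C|=1$ and $|B|=n-2$. Your first iteration sees $A$ (length~$1$) against $B$ (length~$n-2$), splits $B$ into $B_1\,B_2$ with $|B_1|=1$, and substitutes $A\leftarrow B_1$; the next iteration sees $B_1$ against $B_2$ and does the same; and so on. Each iteration advances the \textsc{LCE} by exactly~$1$, so you perform $\Theta(n)$ \textsc{Split}/\textsc{Substitute} pairs for this one equation. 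No potential based on ``fresh cuts'' or on $\sum \log|X|$ rescues this: the longer block shrinks by~$1$ per step.

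The paper avoids this by detecting a \emph{periodic} configuration: when the longer pseudo-terminal $X$ (of length~$\ell$) reappears on the other side (the pseudo-terminal covering position $\hat y+\ell-1$ is $X$ itself), it sets $e=\lfloor \ell/r\rfloor$ for $r$ the distance to the next delimiter on the $Y$-side, \textsc{Split}s $X$ at $er$, and issues a single power \textsc{Substitute} that replaces $X$ by $S\fragmentco{\hat y}{\hat y+r}^{\,e}$. That one step does the work of $e$ of your iterations. With this batching in place, the paper proves the $\Oh(\log n)$ bound via the potential $\Phi(\G)=\sum_{X\in\mP_\G}(1+\log|X|)$, but even then the argument is more delicate than a straight telescoping: each iteration is charged against a shifted potential $\Phi'_i=\Phi(\G_i)-\tfrac12\log|Y_{i+1}|$ so that the AM--GM inequality $\log(|Y|+|X'|)-\log\sqrt{|Y|\cdot|X'|}\ge 1$ yields a unit drop. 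You will need both the periodic branch and this refined potential to get the claimed bound.
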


\begin{algorithm}[t]
    \Fn{\textup{\textsc{SetSubstringsEqual}$(x, x', y, y')$}}{
        \textsc{Split}$(x)$, \textsc{Split}$(x')$\;
        \textsc{Split}$(y)$, \textsc{Split}$(y')$\;
        \While{$\textsc{LCE}(x,y)<x'-x$}{ \label{ln:copy:while}
            $\hx \leftarrow x + \textsc{LCE}(x, y) $, $\hy \leftarrow y + \textsc{LCE}(x, y) $\;
            $X \leftarrow \text{\textsc{FindPseudoTerminal}}(\hx)$, $Y \leftarrow \text{\textsc{FindPseudoTerminal}}(\hy)$\;
            \If{$|Y| > |X|$} {
                Swap $(X,x,\hx,x')$ with $(Y,y,\hy,y')$\;

            }
            $\ell \leftarrow |X|$\;
            $r \leftarrow \text{\textsc{FindDelimiter}}(\hy + \ell)-\hy$\;
            $Y' \leftarrow \text{\textsc{FindPseudoTerminal}}(\hy + \ell-1)$\; \label{ln:copy:name_x}
            \If{$\ell>r$ \KwSty{and} $X = Y'$} {  \label{ln:copy:per1}
                $e \leftarrow \floor{\ell/r}$\;
                \textsc{Split}$(\hx + er)$\; \label{ln:copy:make_delim_per}
                \textsc{Substitute}$(\text{\textsc{FindPseudoTerminal}}(\hx), \hy, \hy+r)$\; \label{ln:copy:sub_promote_per1}
            } \Else{  \label{ln:copy:notper}
                \textsc{Split}$(\hx + r)$\; \label{ln:copy:make_delim}
                \textsc{Substitute}$(\text{\textsc{FindPseudoTerminal}}(\hx), \hy,
                \hy+r)$\; \label{ln:copy:sub_promote}
            }
        }
    }
    \caption{Implementing \pn{Dynamic xSLP Substring Equation} using \pn{Dynamic xSLP
    Split Substitute}; in particular the \textsc{SetSubstringEqual} procedure.
    The subroutine splits and substitutes characters of \( S \) to ensure that  \(
    S\fragmentco{x}{x'} = S\fragmentco{y}{y'} \); which in turn guarantees that \(S\) is a
    universal solution to the system of substring equations maintained.}
    \label{alg:copysubstring}
\end{algorithm}

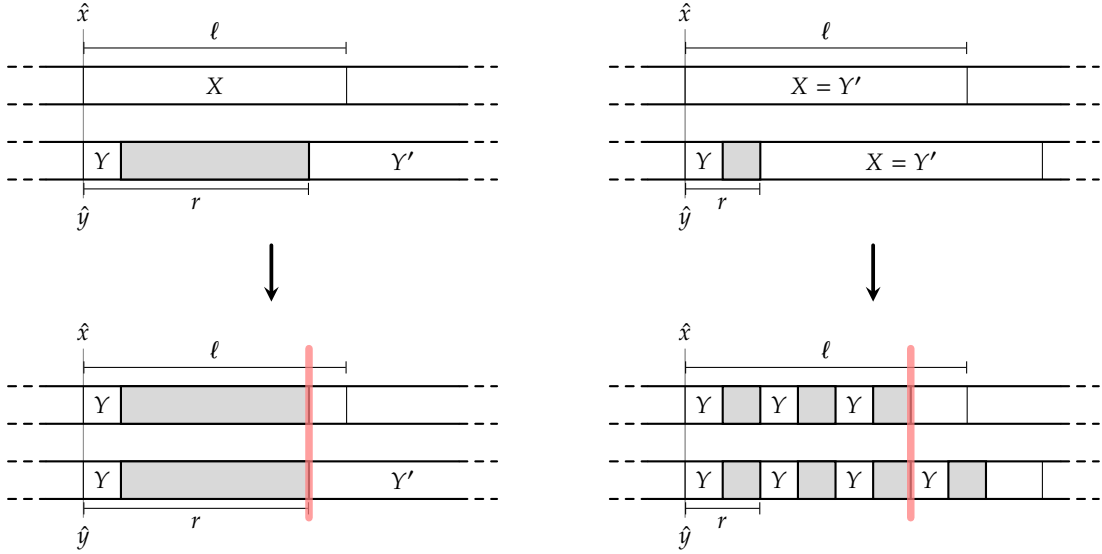
\begin{figure*}[t!]
    \centering
    \begin{tabularx}{\linewidth}{*{2}{>{\centering\arraybackslash}X}}
        \begin{subfigure}[t]{\linewidth}
            \centering
            \begin{tikzpicture}[scale=0.5]
    \draw[very thin,black!50] (0, 2) node[black,above] {$\hx$} -- (0, -2.5) node[black,below] {$\hy$};

    \draw[thick,cap=round,dashed] (-2, 0) -- (-1, 0);
    \draw[thick] (-1, 0) -- (10, 0);
    \draw[thick,cap=round,dashed] (10, 0) -- (11, 0);

    \draw[thick,cap=round,dashed] (-2, 1) -- (-1, 1);
    \draw[thick] (-1, 1) -- (10, 1);
    \draw[thick,cap=round,dashed] (10, 1) -- (11, 1);

    \draw (0, 1) -- (0, 0);
    \draw (7, 1) -- (7, 0);

    \draw[|-|] (0,-2.25) --node[below] {$r$} (6, -2.25);
    \draw[|-|] (0, 1.5) --node[above] {$\ell$} (7, 1.5);

    \node at (3.5,0.5) {$X$};

    \draw[thick,cap=round,dashed] (-2, -1) -- (-1, -1);
    \draw[thick] (-1, -1) -- (10, -1);
    \draw[thick,cap=round,dashed] (10, -1) -- (11, -1);

    \draw[thick,cap=round,dashed] (-2, -2) -- (-1, -2);
    \draw[thick] (-1, -2) -- (10, -2);
    \draw[thick,cap=round,dashed] (10, -2) -- (11, -2);

    \draw (0, -1) -- (0, -2);
    \draw (1, -1) -- (1, -2);
    \draw (6, -1) -- (6, -2);

    \node at (0.5,-1.5) {$Y$};
    \node at (8.5,-1.5) {$Y'$};

    \draw[thick,fill=black!15] (1,-2) rectangle (6,-1);

    \draw[->, cap=round,line width=.5mm, >=stealth] (5,-3.75) -- (5,-5.25);

    \begin{scope}[yshift=-8.5cm]
        \draw[very thin,black!50] (0, 2) node[black,above] {$\hx$} -- (0, -2.5) node[black,below] {$\hy$};
        \draw[thick,cap=round,dashed] (-2, 0) -- (-1, 0);
        \draw[thick] (-1, 0) -- (10, 0);
        \draw[thick,cap=round,dashed] (10, 0) -- (11, 0);

        \draw[thick,cap=round,dashed] (-2, 1) -- (-1, 1);
        \draw[thick] (-1, 1) -- (10, 1);
        \draw[thick,cap=round,dashed] (10, 1) -- (11, 1);

        \draw (0, 1) -- (0, 0);
        \draw (1, 1) -- (1, 0);
        \draw (7, 1) -- (7, 0);
        \draw (6, 1) -- (6, 0);

        \draw[|-|] (0,-2.25) --node[below] {$r$} (6, -2.25);
        \draw[|-|] (0, 1.5) --node[above] {$\ell$} (7, 1.5);

        % \node at (6.5,0.5) {$X'$};
        \node at (0.5,0.5) {$Y$};

        \draw[thick,fill=black!15] (1,-0) rectangle (6,1);

        \draw[thick,cap=round,dashed] (-2, -1) -- (-1, -1);
        \draw[thick] (-1, -1) -- (10, -1);
        \draw[thick,cap=round,dashed] (10, -1) -- (11, -1);

        \draw[thick,cap=round,dashed] (-2, -2) -- (-1, -2);
        \draw[thick] (-1, -2) -- (10, -2);
        \draw[thick,cap=round,dashed] (10, -2) -- (11, -2);

        \draw (0, -1) -- (0, -2);
        \draw (1, -1) -- (1, -2);
        \draw (6, -1) -- (6, -2);

        \node at (0.5,-1.5) {$Y$};
        \node at (8.5,-1.5) {$Y'$};

        \draw[thick,fill=black!15] (1,-2) rectangle (6,-1);

        \draw[cap=round,color=red!50, line width=1mm, opacity=0.75] (6, 2) -- (6, -2.5);
    \end{scope}

\end{tikzpicture}
            \caption{Pseudo-terminals before and after executing Line~\ref{ln:copy:make_delim}--\ref{ln:copy:sub_promote}.}\label{fig1a}
        \end{subfigure}
                &
        \begin{subfigure}[t]{\linewidth}
            \centering
            \begin{tikzpicture}[scale=0.5]
    \draw[very thin,black!60] (0, 2) node[black,above] {$\hx$} -- (0, -2.5) node[black,below] {$\hy$};

    \draw[thick,cap=round,dashed] (-2, 0) -- (-1, 0);
%    \node at (-2,.45) {\(\cdots\)};
    \draw[thick] (-1, 0) -- (10, 0);
    \draw[thick,cap=round,dashed] (10, 0) -- (11, 0);
%    \node at (11,.45) {\(\cdots\)};

    \draw[thick,cap=round,dashed] (-2, 1) -- (-1, 1);
    \draw[thick] (-1, 1) -- (10, 1);
    \draw[thick,cap=round,dashed] (10, 1) -- (11, 1);

    \draw (0, 1) -- (0, 0);
    \draw (7.5, 1) -- (7.5, 0);

    \draw[|-|] (0,-2.25) --node[below] {$r$} (2, -2.25);
    \draw[|-|] (0, 1.5) --node[above] {$\ell$} (7.5, 1.5);

    \node at (3.75,0.5) {$X=Y'$};

    \draw[thick,cap=round,dashed] (-2, -1) -- (-1, -1);
%    \node at (-2,-1.55) {\(\cdots\)};
    \draw[thick] (-1, -1) -- (10, -1);
    \draw[thick,cap=round,dashed] (10, -1) -- (11, -1);
%    \node at (11,-1.55) {\(\cdots\)};

    \draw[thick,cap=round,dashed] (-2, -2) -- (-1, -2);
    \draw[thick] (-1, -2) -- (10, -2);
    \draw[thick,cap=round,dashed] (10, -2) -- (11, -2);

    \draw (0, -1) -- (0, -2);
    \draw (1, -1) -- (1, -2);
    \draw (2, -1) -- (2, -2);
    \draw (9.5, -1) -- (9.5, -2);

    \node at (0.5,-1.5) {$Y$};
    \node at (5.75,-1.5) {$X=Y'$};

    \draw[thick,fill=black!15] (1,-2) rectangle (2,-1);

    %\node[scale=3.5,transform shape,rotate=-90] at (5,-4.75) {\leadsto};
    \draw[->, cap=round,line width=.5mm, >=stealth] (5,-3.75) -- (5,-5.25);

    \begin{scope}[yshift=-8.5cm]
        \draw[very thin,black!60] (0, 2) node[above,black] {$\hx$} -- (0, -2.5)
            node[below,black] {$\hy$};

        \draw[thick,cap=round,dashed] (-2, 0) -- (-1, 0);
        \draw[thick] (-1, 0) -- (10, 0);
        \draw[thick,cap=round,dashed] (10, 0) -- (11, 0);

        \draw[thick,cap=round,dashed] (-2, 1) -- (-1, 1);
        \draw[thick] (-1, 1) -- (10, 1);
        \draw[thick,cap=round,dashed] (10, 1) -- (11, 1);

        \draw (0, 1) -- (0, 0);
        \draw (1, 1) -- (1, 0);
        \draw (2, 1) -- (2, 0);
        \draw (3, 1) -- (3, 0);
        \draw (4, 1) -- (4, 0);
        \draw (5, 1) -- (5, 0);
        \draw (7.5, 1) -- (7.5, 0);
        \draw (6, 1) -- (6, 0);

        \draw[|-|] (0,-2.25) --node[below] {$r$} (2, -2.25);
        \draw[|-|] (0, 1.5) --node[above] {$\ell$} (7.5, 1.5);

        % \node at (6.5,0.5) {$X'$};
        \node at (0.5,0.5) {$Y$};
        \node at (2.5,0.5) {$Y$};
        \node at (4.5,0.5) {$Y$};

        \draw[thick,fill=black!15] (1,-0) rectangle (2,1);
        \draw[thick,fill=black!15] (3,-0) rectangle (4,1);
        \draw[thick,fill=black!15] (5,-0) rectangle (6,1);

        \draw[thick,cap=round,dashed] (-2, -1) -- (-1, -1);
        \draw[thick] (-1, -1) -- (10, -1);
        \draw[thick,cap=round,dashed] (10, -1) -- (11, -1);

        \draw[thick,cap=round,dashed] (-2, -2) -- (-1, -2);
        \draw[thick] (-1, -2) -- (10, -2);
        \draw[thick,cap=round,dashed] (10, -2) -- (11, -2);

        \draw (0, -1) -- (0, -2);
        \draw (1, -1) -- (1, -2);
        \draw (2, -1) -- (2, -2);
        \draw (9.5, -1) -- (9.5, -2);

        \node at (0.5,-1.5) {$Y$};
        \node at (2.5,-1.5) {$Y$};
        \node at (4.5,-1.5) {$Y$};
        \node at (6.5,-1.5) {$Y$};

        \draw[thick,fill=black!15] (1,-2) rectangle (2,-1);
        \draw[thick,fill=black!15] (3,-2) rectangle (4,-1);
        \draw[thick,fill=black!15] (5,-2) rectangle (6,-1);
        \draw[thick,fill=black!15] (7,-2) rectangle (8,-1);

        \draw[color=red!50, line width=1mm, opacity=0.75,cap=round] (6, 2) -- (6, -2.5);
    \end{scope}

\end{tikzpicture}
            \caption{Pseudo-terminals before and after executing Line~\ref{ln:copy:make_delim_per}--\ref{ln:copy:sub_promote_per1}.}\label{fig1b}
        \end{subfigure}
    \end{tabularx}
    \caption{The two potential cases in an iteration  of \cref{alg:copysubstring} when
        pseudo-terminals of $S$ are split and substituted.
        The new starting points \(\hx\) and \(\hy\) (for the next iteration of the \KwSty{while} loop) are highlighted by a red line.
    }\label{fig:delim_and_promote}
\end{figure*}

\begin{proof}
    For \textsc{Init} and \textsc{Extract}, we directly forward the calls to the
    underlying \pn{Dynamic xSLP Split Substitute} data structure.

    For \textsc{SetSubstringsEqual}$(x, x', y, y')$, we proceed as follows.
    Given \(x, x', y, y' \in \fragment{0}{n}\) such that \(x < x'\), \(y < y'\) and
    \(x' - x = y' - y\),
    we first ensure that \(x, x', y, y'\) are delimiters.
    Then, as long as $S\fragmentco{x}{x'}\ne S\fragmentco{y}{y'}$, we identify indices
    \(\hx\) and \(\hy\) such that \(S\fragmentco{x}{\hx}=S\fragmentco{y}{\hy}\) and
    \(S\position{\hx}\ne S\position{\hy}\).
    In particular, we have $S\position{\hx}=\#_{0}^X\ne \#_0^{Y}=S\position{\hy}$ for some
    pseudo-terminals $X\ne Y$.
    We select the longer of \(X\) and \(Y\), without loss of generality \(X\), identify
    the maximum $r\in \fragment{0}{|X|}$ such that $\hy+r$ is a delimiter, and replace
    every occurrence of $\val_\G(X)$ with \(S\fragmentco{\hy}{\hy+r}\cdot \val_\G(X')\),
    where $X'$ is a new pseudo-terminal $X'$ of length $|X'|=|X|-r$ (omitted if $r=|X|$).

    If $|X|>r$ and $S\fragmentco{\hy+r}{\hy+r+|X|}=\val_\G(X)$, then, for efficiency
    reasons, we simulate $e\coloneqq\floor{|X|/r}$ aforementioned iterations with a single
    batched iteration that replaces every occurrence of $\val_\G(X)$ with
    \((S\fragmentco{\hy}{\hy+r})^e\cdot \val_\G(X')\), where $X'$ is a new pseudo-terminal
    $X'$ of length $|X'|=|X|-er$ (omitted if $er=|X|$).

    Consult \cref{alg:copysubstring} for detailed pseudo code and
    \cref{fig:delim_and_promote} for an illustration of the key part of the algorithm.

    We proceed to argue about the correctness of our algorithm.
    To that end,
    for a string $S$ of length $n$, let $P(S)$ denote the partition of $\fragmentco{0}{n}$
    according to where the same characters appear (so that $x,y\in \fragmentco{0}{n}$
    belong to the same class if and only if $S\position{x}=S\position{y}$).

    Observe that after  \textsc{Init}, we have
    $P(\val(\G)) = \left\{ \left\{ i \right\} \mid i \in \fragmentco{0}{n} \right\}$.
    In the next claim, we argue how $P(\val(\G))$ behaves after calls to
    \textsc{SetSubstringsEqual}.

    \begin{claim}\label{lem:sse}
        After a call of $\textsc{SetSubstringsEqual}(x,x',y,y')$ of
        \cref{alg:copysubstring},
        the classes of $x+i$ and $y+i$ are merged for $i\in \fragmentco{0}{x'-x}$ in
        $P(\val(\G))$.
    \end{claim}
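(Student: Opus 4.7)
The plan is to prove \cref{lem:sse} by analyzing the \KwSty{while} loop of \cref{alg:copysubstring} via a loop invariant together with a monovariant argument on $\textsc{LCE}(x,y)$. After the four initial $\textsc{Split}$ calls on lines~2--3, the positions $x,x',y,y'$ are delimiters of $S$, and $S\fragmentco{x}{x+\textsc{LCE}(x,y)}=S\fragmentco{y}{y+\textsc{LCE}(x,y)}$ holds by the very definition of $\textsc{LCE}$.

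The central invariant I would maintain is that $\hx \coloneqq x+\textsc{LCE}(x,y)$ and $\hy \coloneqq y+\textsc{LCE}(x,y)$ are delimiters of $S$ whenever the loop body is entered with $\textsc{LCE}(x,y)>0$. If $\hx$ were not a delimiter, it would lie strictly inside some pseudo-terminal $X$ at offset $j>0$, and the matched equality $S\position{\hx-1}=S\position{\hy-1}=\#_{j-1}^X$ would place $\hy-1$ inside an occurrence of $X$, forcing $S\position{\hy}=\#_j^X=S\position{\hx}$ and contradicting the fact that $\hx$ is the first mismatch; the case $\textsc{LCE}(x,y)=0$ is trivial because $x,y$ are delimiters. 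Given this invariant, $X \coloneqq \textsc{FindPseudoTerminal}(\hx)$ starts exactly at $\hx$ with length $\ell=|X|$. The value $r=\textsc{FindDelimiter}(\hy+\ell)-\hy$ satisfies $1\le r\le \ell$: the potential swap ensures $|Y|\le\ell$, so the delimiter $\hy+|Y|$ lies in $\intvloc{\hy}{\hy+\ell}$, giving $r\ge|Y|\ge 1$. Consequently, $\textsc{Split}(\hx+r)$ (or $\textsc{Split}(\hx+er)$ in the periodic branch) creates a pseudo-terminal $X'$ of length $r$ (respectively $er$) starting exactly at $\hx$, so the divisibility condition of $\textsc{Substitute}(X',\hy,\hy+r)$ is satisfied, and after the call, $S\fragmentco{\hx}{\hx+r}=S\fragmentco{\hy}{\hy+r}$. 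Hence $\textsc{LCE}(x,y)$ grows by at least $r\ge 1$ per iteration.

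It remains to verify that $\textsc{Substitute}$ does not break the previously established equality $S\fragmentco{x}{\hx}=S\fragmentco{y}{\hy}$. This is immediate from its semantics: $\textsc{Substitute}$ replaces every occurrence of $\val_\G(X')$ with one and the same string of matching length, so it can only merge classes of $P(\val(\G))$ and never separate them. Combining the two facts, the loop terminates after at most $x'-x$ iterations in a state with $\textsc{LCE}(x,y)\ge x'-x$, which yields $S\position{x+i}=S\position{y+i}$ for every $i\in\fragmentco{0}{x'-x}$ and establishes the claim.

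The main obstacle I anticipate is the rigorous justification of the periodic branch, namely that when $\ell>r$ and $X=Y'$, the batched substitution $S\fragmentco{\hx}{\hx+er}\leftarrow S\fragmentco{\hy}{\hy+r}^e$ correctly simulates $e$ iterations of the non-periodic branch while keeping all delimiter and length invariants intact. The underlying reason is a short periodicity argument: the equality $X=Y'$ forces the pseudo-terminal $X$ to recur inside $S\fragmentco{\hy}{\hy+\ell}$, so once the substitution propagates, $S\fragmentco{\hy}{\hy+er}$ is a prefix of $S\fragmentco{\hy}{\hy+r}^\infty$ and the batched replacement is consistent. Verifying $er\le\ell$ via the choice $e=\floor{\ell/r}$ and checking that no further delimiters are required is routine bookkeeping.
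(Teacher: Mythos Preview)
Your proof establishes termination and one direction of the claim, but misses the harder direction that constitutes the main content here. The claim, as the paper uses it in the subsequent \cref{lem:macro_scheme_to_set_equal}, asserts that the effect of $\textsc{SetSubstringsEqual}(x,x',y,y')$ on $P(\val(\G))$ is \emph{exactly} to merge the class of $x+i$ with that of $y+i$ for each $i\in\fragmentco{0}{x'-x}$, and nothing more. Your argument shows that after termination $S\position{x+i}=S\position{y+i}$ holds, which gives the ``at least these mergers'' direction; but you never verify that no \emph{other} classes get merged. Without that, the resulting $S$ is a solution of the equation system but need not be a \emph{universal} solution, and the whole point of the construction collapses.

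Concretely, each $\textsc{Substitute}(X',\hy,\hy+r)$ call replaces \emph{every} occurrence of $\val_\G(X')$ in $S$, not just the one at $\hx$; so a priori it could merge the class of some position outside $\fragmentco{x}{x'}\cup\fragmentco{y}{y'}$ with something unrelated. The paper's proof handles this by observing that after the preceding $\textsc{Split}$, $X'$ is a fresh pseudo-terminal whose associated characters $\#^{X'}_0,\ldots$ each form a singleton class, so the only mergers induced are those of $\hx+j$ with $\hy+(j\bmod r)$. It then argues (this is where the periodic branch requires care) that each such merger is equivalent, via already-established equalities, to merging $\hx+j$ with $\hy+j$, which is of the required form. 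Your remark that ``$\textsc{Substitute}$ can only merge classes and never separate them'' is correct but beside the point: the question is whether it merges \emph{too many}. Your loop-invariant and LCE-monovariant analysis is fine as a termination argument (and is in fact cleaner than the paper's, which defers termination to a separate potential-function lemma), but it does not substitute for the missing direction.
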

    \begin{claimproof}
        Write \(S \coloneqq \val(\G)\).
        First, observe that a $\textsc{Split}$ operation does not affect the partition
        $P(S)$.
        Moreover, if $S\fragmentco{p}{q}=\val_\G(A)$, then the only effect of
        $\textsc{Substitute}(A,x,y)$ is that, for each $i\in \fragmentco{0}{q-p}$, the
        class of $p+i$ is merged with the class of $x+(i \bmod (y-x))$.
        Since $\textsc{SetSubstringsEqual}$ alters the string $S$ through these two
        operations only, the only effect on $P(S)$ is that some equivalence classes are
        merged.

        The \KwSty{while} loop of \cref{ln:copy:while} terminates only when
        $S\fragmentco{x}{x'}=S\fragmentco{y}{y'}$; at that time, for each $i\in
        \fragmentco{0}{x'-x}$, the positions $x+i$ and $y+i$ must be in the same class of
        $P(S)$.
        Thus, it suffices to prove that no other classes are merged throughout the
        execution of $\textsc{SetSubstringsEqual}(x,x',y,y')$.
        To show this, we analyze the applications of $\textsc{Substitute}$ in
        \cref{ln:copy:sub_promote,ln:copy:sub_promote_per1}.

        The call in \cref{ln:copy:sub_promote} is given a pseudo-terminal $A$ with
        $\val_\G(A)=S\fragmentco{\hx}{\hx+r}$ and delimiters $\hy,\hy+r$ such that $x \le
        \hx < \hx+r \le x'$ and $\hx-x=\hy-y$, and it replaces every occurrence of
        $\val_\G(A)$ with $S\fragmentco{\hy}{\hy+r}$.
        As a consequence, for each $i\in \fragmentco{0}{r}$, the class of $\hx+i$
        (representing the occurrences of $\#_i^A$) is merged with the class of $\hy+i$.
        This merger satisfies the requirements.

        The call in \cref{ln:copy:sub_promote_per1} is given a pseudo-terminal $A$ with
        $\val_\G(A)=S\fragmentco{\hx}{\hx+er}=S\fragmentco{\hy+r}{\hy+(e+1)r}$ and
        delimiters $\hy,\hy+r$ such that $x \le \hx < \hx+er \le x'$ and $\hx-x=\hy-y$.
        It replaces every occurrence of $\val_\G(A)$ with $(S\fragmentco{\hx}{\hx+r})^e$.
        As a consequence, for each $i\in \fragmentco{0}{er}$, the class of $\hx+i$
        (representing the occurrences of $\#_i^A$) is merged with the class of $\hy + (i
        \bmod r)$.

        While these mergers are not exactly of the required form, we claim that if we
        execute them from left to right, then the merging the classes of $\hx+i$ and
        $\hy+(i\bmod r)$ is equivalent to merging the classes of $\hx+i$ and $\hy+i$
        because $\hy+(i\bmod r)$ and $\hy+i$ are already in the same class.
        This statement is trivial in the base case of $i\in \fragmentco{0}{r}$.
        Otherwise, $\hy+i$ and $\hx+i-r$ are already in the same class since initially
        $S\position{\hy+i}=S\position{\hx+i-r}=\#^A_{i-r}$.

        Moreover, the inductive assumption guarantees that $\hy+i-r$ and $\hy+(i\bmod r)$
        are already in the same class and that we have already merged the classes of
        $\hx+i-r$ and $\hy+i-r$.

        Hence, all the positions $\hy+i$, $\hx+i-r$, $\hy+i-r$, and $\hy+(i\bmod r)$ are
        all already in the same class, and thus merging the classes of $\hx+i$ and
        $\hy+(i\bmod r)$ is indeed equivalent to merging the classes of $\hx+i$ and
        $\hy+i$.
    \end{claimproof}

    We conclude that calls to \textsc{SetSubstringEqual} correctly maintain a universal
    solution to the corresponding system of substring equations.

    \begin{claim} \label{lem:macro_scheme_to_set_equal}
        Consider a system $E$ of substring equations on length-$n$ strings.
        Construct a dynamic xSLP by calling \textsc{Init}$(n)$ and
        $\textsc{SetSubstringsEqual}(x,x',y,y')$ for all equations $E\ni e :
        T\fragmentco{x}{x'}=T\fragmentco{y}{y'}$ in arbitrary order.
        Then, the string $S=\val(\G)$ is a universal solution of $E$.
    \end{claim}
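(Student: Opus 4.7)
The plan is to combine \cref{lem:sse} with the observation that both $\textsc{Split}$ and $\textsc{Substitute}$ only \emph{merge} classes of $P(\val(\G))$ (never split any), in order to conclude that, upon termination, $P(S)$ equals the partition $\hat{P}$ of $\fragmentco{0}{n}$ defined as the transitive closure of the pairs $(x+i,y+i)$ ranging over all equations $e : T\fragmentco{x}{x'}=T\fragmentco{y}{y'}\in E$ and all $i\in\fragmentco{0}{x'-x}$. Write $S\coloneqq\val(\G)$.

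First I would verify that $S$ satisfies $E$. By \cref{lem:sse}, immediately after handling an equation $e$, the pairs associated with $e$ lie in a common class of $P(\val(\G))$; since later $\textsc{SetSubstringsEqual}$ calls can only further merge classes, these equalities persist to the end. Hence, after all equations have been processed, $S\position{x+i}=S\position{y+i}$ holds for every $e\in E$ and every valid offset $i$, so $S$ satisfies $E$.

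Next I would establish universality: every $\hat{S}\in\hat{\Sigma}^n$ satisfying $E$ is the image of $S$ under a letter-to-letter morphism. Any such $\hat{S}$ must be constant on each class of $\hat{P}$, so it suffices to prove $P(S)=\hat{P}$: then one may define $\phi\colon\Sigma\to\hat{\Sigma}$ by picking, for each class $C$ of $P(S)$, an arbitrary representative $i\in C$ and setting $\phi(S\position{i})\coloneqq\hat{S}\position{i}$ (well-defined because $\hat{S}$ is constant on the corresponding class of $\hat{P}=P(S)$, and extended arbitrarily on unused characters). The inclusion that $\hat{P}$ refines $P(S)$ is immediate from the previous paragraph, since $P(S)$ identifies at least the positions forced equal by $E$. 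For the reverse, observe that after $\textsc{Init}(n)$ the partition $P(\val(\G))$ is the singleton partition, and from the proof of \cref{lem:sse}---specifically the second half, where every $\textsc{Substitute}$ in \cref{alg:copysubstring} is shown to trigger \emph{only} the required mergers (directly, or via previously established equivalences)---one concludes that $P(S)$ is exactly the transitive closure of the mandated mergers, which is $\hat{P}$.

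The main subtlety I anticipate is that \cref{lem:sse}, as stated, only records one direction of the equivalence (required mergers happen), whereas the argument needs both directions. Fortunately, the converse (no extra mergers occur) is proved in the body of \cref{lem:sse}; I would explicitly invoke that portion of the proof rather than re-derive it, keeping the present proof essentially a one-paragraph corollary of \cref{lem:sse} together with the merge-monotonicity of $\textsc{Split}$ and $\textsc{Substitute}$.
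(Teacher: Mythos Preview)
Your proposal is correct and takes essentially the same approach as the paper: both arguments use \cref{lem:sse} (including the ``no extra mergers'' half established in its proof) together with the fact that $\textsc{Init}(n)$ starts from the singleton partition to conclude that $P(S)$ is exactly the transitive closure of the mandated pairs, from which both solutionhood and universality follow. The only stylistic difference is that the paper compares $P(S)$ directly against $P(S')$ for an arbitrary solution $S'$ rather than introducing the intermediate partition $\hat{P}$, but this is the same argument packaged slightly differently.
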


    \begin{claimproof}
        By \cref{lem:sse}, the partition $P(S)$ is obtained from a partition of
        $\fragmentco{0}{n}$ into singletons (obtained via the call $\textsc{Init}(n)$) by
        merging the classes of $x+i$ and $y+i$ for every $E\ni e :
        T\fragmentco{x}{x'}=T\fragmentco{y}{y'}$ and $i\in \fragmentco{0}{x'-x}$.
        In particular, these mergers imply that $S\position{x+i}=S\position{y+i}$, and thus $S$ is a solution of $E$.
        Moreover, for every other solution $S'$, we already have
        $S'\position{x+i}=S'\position{y+i}$, so we never merge positions in different
        classes of $P(S')$.
        Thus, $P(S)$ is coarser than $P(S')$, and hence $S'$ can indeed be obtained from
        $S$ using a letter-to-letter morphism.
    \end{claimproof}

    Next, we turn to the running time of \textsc{SetSubstringsEqual} (in terms of calls to
    the operations of the \pn{Dynamic xSLP Split Substitute} data structure.
    To this end, we argue that each iteration of the \KwSty{while} loop decreases the
    following potential by at least one unit on average (recall that \( \mP_\G\) is the
    set of pseudo-terminals of \(\G\)):
    \[
        \Phi(\G) = \sum_{X\in \mP_\G} \Big( \ 1 + \log |X| \ \Big).
    \]
    In particular, suppose that the \KwSty{while} loop at \cref{ln:copy:while} makes $d$
    iterations in an execution of \textsc{SetSubstringsEqual} and define xSLPs $\G_0,
    \G_1, \ldots, \G_d$ so that $\G_{i-1}$ and $\G_{i}$ are the xSLPs before and after the
    $i$-th iteration, respectively.
    We intend to show that then, we have $\Phi(\G_{0}) - \Phi(\G_{d}) \ge d$.
    To that end, we rely on the following intermediate claim.

    \begin{claim}\label{claim:step}
        Let \( X_i \), \( Y_i \),
        and \( Y_i' \) denote the pseudo-terminals of \( \G_{i-1} \) indicated at
        \cref{ln:copy:name_x} in iteration \( i \in \fragment{1}{d} \) by the variables \(
        X \), \( Y \), and \( Y' \), respectively.

        For every iteration $i \in \fragment{1}{d}$, we have $\Phi'_{i-1} - \Phi'_i \ge 1$,
        where we define \[\Phi'_i = \begin{cases}
        \Phi(\G_i)-1/2\cdot\log|Y_{i+1}| & \text{if $i < d$,}\\
        \Phi(\G_i) & \text{if $i=d$.}
        \end{cases}\]
    \end{claim}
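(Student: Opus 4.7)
My plan is to do a case analysis on the branch of the algorithm taken in iteration $i$ of the \KwSty{while} loop. The proof proceeds in three parts: compute $\Delta_i \coloneqq \Phi(\G_{i-1}) - \Phi(\G_i)$ in each case, identify (or bound) $|Y_{i+1}|$, and then verify the target inequality.

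For the first part, I would track how many pseudo-terminals are created and destroyed. A \textsc{Split} that partitions a pseudo-terminal of length $\ell$ at offset $z \in \fragmentoo{0}{\ell}$ increases $\Phi$ by $1 + \log z + \log(\ell-z) - \log\ell$, while a \textsc{Substitute} of a pseudo-terminal $A$ decreases $\Phi$ by $1 + \log|A|$. Working through the four sub-cases: the non-periodic branch with $r = |X_i|$ (no split) gives $\Delta_i = 1 + \log|X_i|$; the non-periodic branch with $r < |X_i|$ gives $\Delta_i = \log|X_i| - \log(|X_i|-r)$; the periodic branch with $er = |X_i|$ gives $\Delta_i = 1 + \log|X_i|$; and the periodic branch with $er < |X_i|$ gives $\Delta_i = \log|X_i| - \log(|X_i|-er)$, where the identity $er > |X_i|-r$ (from $e = \lfloor|X_i|/r\rfloor$), combined with a short case split on whether $r \le |X_i|/2$, yields $\Delta_i \ge 1$.

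For the second part, I would show that whenever $i < d$ and the non-periodic branch with $r < |X_i|$ is taken, we have $|Y_{i+1}| = |X_i| - r$. The key structural observation is that the pseudo-terminal $W$ at position $\hy + r$ satisfies $|W| > |X_i| - r$, because $\hy + r$ was chosen as the largest delimiter $\le \hy + \ell$; meanwhile the newly created $X_2$ at $\hx + r$ has length exactly $|X_i| - r$, and the fresh sentinels guarantee $S\position{\hx+r} \ne S\position{\hy+r}$, so the next mismatch lies at $(\hx+r, \hy+r)$. A similar analysis handles the periodic branch. For the last iteration: since $x'$ remains a delimiter throughout, the pseudo-terminal $X_d$ must satisfy $|X_d| \le x' - \hx_d$; then the termination condition $\hx_d + r \ge x'$ (which applies in any sub-case where the longest common extension does not advance past $r$) forces $r \ge |X_d|$, ruling out both the sub-case $r < |X_d|$ and the periodic sub-case with $er < |X_d|$. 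Thus the last iteration lies in one of the two sub-cases with $\Delta_d = 1 + \log|X_d|$, which trivially exceeds $1 + \tfrac12 \log|Y_d|$ since $|X_d| \ge |Y_d|$.

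For the third part, I verify the inequality $\Delta_i \ge 1 + \tfrac12 \log|Y_i| - \tfrac12 \log|Y_{i+1}|$ for non-last iterations. The sub-case $r = |X_i|$ is immediate from $|X_i| \ge |Y_i| \ge 1$. In the non-periodic sub-case with $r < |X_i|$, substituting $|Y_{i+1}| = |X_i| - r$ and using $r \ge |Y_i|$ reduces the bound to $|X_i|^2 \ge 4 |Y_i|(|X_i|-|Y_i|)$, which follows from AM-GM applied to $|Y_i|$ and $|X_i|-|Y_i|$. The periodic sub-case with $er < |X_i|$ uses the same idea: with $|Y_{i+1}| \le \min(|Y_i|, |X_i|-er)$, the bound reduces to $|X_i|^2 \ge 4|Y_i|(|X_i|-er)$, which follows from AM-GM together with $|Y_i| \le r$ and $|X_i|-er < r$ (treating the sub-case $e = 1$ via the identity $(|X_i|-2r)^2 \ge 0$). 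The hardest part of the proof will be ensuring that the structural argument relating $|Y_{i+1}|$ to iteration $i$'s parameters is correct across all branches, particularly in the periodic case where one must carefully track how the newly induced periodic region in $S$ interacts with the old pseudo-terminals at the right boundary of the affected range.
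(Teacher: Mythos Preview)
Your approach is essentially the paper's: case-split on the branch, compute $\Delta_i = \Phi(\G_{i-1}) - \Phi(\G_i)$, identify $Y_{i+1}$, and verify the telescoping inequality via AM--GM. The non-periodic cases and the argument that $i=d$ forces a trivial split are fine (though in the periodic branch the LCE advances to $\hx_d+er$, not $\hx_d+r$, so your parenthetical ``does not advance past $r$'' should read ``past $er$'' there).

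There is one genuine gap in the periodic sub-case with $er<|X_i|$ and $i<d$. You write ``with $|Y_{i+1}| \le \min(|Y_i|, |X_i|-er)$, the bound reduces to $|X_i|^2 \ge 4|Y_i|(|X_i|-er)$,'' but the inequality $\Phi'_{i-1}-\Phi'_i = \Delta_i + \tfrac12\log|Y_{i+1}| - \tfrac12\log|Y_i| \ge 1$ is \emph{decreasing} in $|Y_{i+1}|$; an upper bound on $|Y_{i+1}|$ makes the target harder, not easier, so no reduction follows from $\le$. What actually holds (and what the paper uses) is the exact identification $Y_{i+1}\in\{X'_i,Y_i\}$, i.e.\ $|Y_{i+1}| = \min(|Y_i|,|X_i|-er)$, together with a two-way split:
\begin{itemize}
\item if $|Y_{i+1}|=|Y_i|$, the $\log|Y_i|$ terms cancel and one needs only $\Delta_i\ge 1$, which you already proved in Part~1;
\item if $|Y_{i+1}|=|X_i|-er$, one needs $|X_i|^2 \ge 4|Y_i|(|X_i|-er)$, which is your Part~3 computation.
\end{itemize}
You have both ingredients; you just glued them with the wrong connective. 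Replace the upper bound by the equality plus an explicit case split on which argument of $\min$ wins, and the proof goes through. (The paper's AM--GM in the second sub-case is slightly cleaner: from $|Y_i|\le r$ it gets $|Y_i|+|X'_i|\le r+|X'_i|=|X_i|$ and then $2\sqrt{|Y_i||X'_i|}\le |Y_i|+|X'_i|\le |X_i|$ directly, avoiding your separate treatment of $e=1$ versus $e\ge 2$.)
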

    \begin{claimproof}
        Let \( X_i' \) denote the pseudo-terminal corresponding to the right
        pseudo-terminal that $X_i$ is split into in the call to \textsc{Split} at
        \cref{ln:copy:make_delim_per}/\cref{ln:copy:make_delim}
        (the former call if the \KwSty{if} branch is taken at \cref{ln:copy:per1},
        and the latter if the \KwSty{else} branch is taken at \cref{ln:copy:notper}).
        If $X_i$ is not split in these calls to \textsc{Split}, we say \( X_i' \) is
        undefined.
        In particular, this must happen for $i=d$.

        First, suppose that \( X_i' \) is undefined.
        In that case, $\Phi(\G_{i-1})-\Phi(\G_i)=1+\log|X_i|$.
        Moreover, $\Phi'_i \le \Phi(\G_i)$ and
        $\Phi'_{i-1}=\Phi(\G_{i-1})-1/2\cdot\log|Y_i|$.
        Since $|X_i|\ge |Y_i|$, we conclude that
        \[\Phi'_{i-1} - \Phi'_i\ge \Phi(\G_{i-1})-1/2\cdot\log|Y_i|-\Phi(\G_i)=
        1+\log|X_i|-1/2 \; \log|Y_i|\ge 1+1/2\cdot\log|X_i|\ge 1.\]

        Next, suppose that $X'_i$ exists and $X_i$ is split using the \textsc{Split} call
        at \cref{ln:copy:make_delim}.
        In that case, $Y_{i+1} =X'_i$; see \cref{fig1a}.
        Consequently, $\Phi'_i = \Phi(\G_i)-1/2\cdot\log|X'_i|$ and
        $\Phi'_{i-1}=\Phi(\G_{i-1})-1/2\cdot\log|Y_i|$.
        Moreover, $\Phi(\G_{i-1})-\Phi(\G_i)=\log|X_i|-\log|X'_i|$, and thus,
        \begin{align*}
            \Phi'_{i-1}-\Phi'_{i}
            &=\Phi(\G_{i-1})-1/2\cdot\log|Y_i|-\Phi(\G_i)+1/2\cdot\log|X'_i|\\
            &=\log|X_i|-1/2\cdot\log|Y_i| - 1/2\cdot\log|X'_i| \\
            &=\log(|Y_i|+|X'_i|)-\log\sqrt{|Y_i||X'_i|} \\
            &\ge 1,
        \end{align*}
        where the last step is the inequality between the arithmetic and the geometric means.

        Finally, suppose that $X'_i$ exists and $X_i$ is split using the \textsc{Split} call at \cref{ln:copy:make_delim_per}.
        In that case, $Y_{i+1}\in \{X'_i,Y_i\}$; see \cref{fig1b}.

        If $Y_{i+1}=X'_i$, then $\Phi'_{i-1}-\Phi'_{i}\ge 1$ is proved exactly as in
        the previous case, except that we have $|X_i|\ge |Y_i|+|X'_i|$ rather than
        $|X_i|=|Y_i|+|X'_i|$.

        If $Y_{i+1}=Y_i$, on the other hand, we have
        $\Phi'_{i-1}=\Phi(\G_{i-1})-1/2\cdot\log|Y_i|$ and
        $\Phi'_{i}=\Phi(\G_{i})-1/2\cdot\log|Y_i|$.

        Consequently, we have
        \[\Phi'_{i-1} - \Phi'_i=\Phi(\G_{i-1})-\Phi(\G_i)=\log|X_i|-\log|X'_i|.\]
        As we still have $|X'_i| = |X_i|\bmod r < 1/2\cdot |X_i|$ (since $r\le |X_i|$),
        we obtain $\Phi'_{i-1} - \Phi'_i=\log|X_i|-\log|X'_i| > 1$; which completes
        the proof of the claim.
    \end{claimproof}
    Now, from \cref{claim:step}, we readily obtain
    \begin{equation}
        \Phi(\G_0)-\Phi(\G_d)=\Phi'_0+1/2\cdot\log|Y_1|-\Phi'_d \ge \Phi'_0-\Phi'_d \ge \sum_{i=1}^d ( \Phi'_{i-1}-\Phi'_i)\ge d.
        \label{lem:copy_iterations}
    \end{equation}
    Next, we provide bounds (in terms of \(b\)) on the total number of iterations of the
    \KwSty{while} loop at \cref{ln:copy:while} in \textsc{SetSubstringsEqual}.
    This in turn implies the claimed bounds on the number of calls to the \textsc{Dynamic
    xSLP Split Substitute} data structure.

    \begin{claim}\label{lem:iter_num}
        Suppose we call \textsc{Init}$(n)$ and then call \textsc{SetSubstringsEqual} $b$
        times.
        Then, the \KwSty{while} loop at \cref{ln:copy:while} in the implementation of
        \textsc{SetSubstringsEqual} is executed at most \(\Oh(b \log n)\) times in total.
    \end{claim}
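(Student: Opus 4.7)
The plan is to bound the total number of \KwSty{while}-loop iterations via a potential argument using the same potential function $\Phi(\G) = \sum_{X \in \mP_\G}(1 + \log|X|)$ already analyzed in \cref{claim:step}. The inequality \eqref{lem:copy_iterations} guarantees that, within a single execution of \textsc{SetSubstringsEqual}, the \KwSty{while}-loop iterations together decrease $\Phi$ by at least as much as their number: if $\G_0$ and $\G_d$ denote the xSLPs just before the first iteration and just after the last one, then $\Phi(\G_0) - \Phi(\G_d) \ge d$. Since every operation inside the loop (a \textsc{Split} at \cref{ln:copy:make_delim_per} or \cref{ln:copy:make_delim} followed by a \textsc{Substitute}) is already absorbed into this inequality, the only operations still to be accounted for are the four initial \textsc{Split} calls performed before entering the loop.

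Next, I would bound the increase of $\Phi$ caused by a single \textsc{Split}. Splitting a pseudo-terminal of length $\ell$ into two of lengths $y$ and $\ell-y$ (with $1 \le y \le \ell-1$) changes $\Phi$ by
\[(1+\log y) + (1+\log(\ell-y)) - (1+\log\ell) = 1 + \log\frac{y(\ell-y)}{\ell} \le 1 + \log\frac{\ell}{4} \le \log n,\]
by the AM-GM inequality, while the trivial cases $y \in \{0,|X|\}$ leave $\Phi$ unchanged. Hence the four initial splits at the start of each \textsc{SetSubstringsEqual} together increase $\Phi$ by at most $4\log n$.

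Finally, I would combine the two bounds telescopically. After \textsc{Init}$(n)$ the potential equals $1 + \log n$; for the $j$-th call to \textsc{SetSubstringsEqual} (performing $d_j$ \KwSty{while}-loop iterations) the net change in $\Phi$ is at most $4\log n - d_j$. Because $\Phi \ge 0$ always, summing over $j \in \fragment{1}{b}$ yields
\[0 \le (1+\log n) + \sum_{j=1}^{b}(4\log n - d_j),\]
and hence $\sum_{j=1}^{b} d_j \le 1 + (4b+1)\log n = \Oh(b\log n)$, as required. The only potential subtlety is to confirm that nothing inside the loop body silently invokes additional \textsc{Split}s or \textsc{Substitute}s outside the accounting of \cref{claim:step}; this is immediate from inspecting \cref{alg:copysubstring}, since every iteration performs exactly one \textsc{Split} and one \textsc{Substitute}, both of which are tracked by the potential difference $\Phi(\G_{i-1}) - \Phi(\G_i)$.
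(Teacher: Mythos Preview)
Your proposal is correct and follows essentially the same potential argument as the paper: bound the increase of $\Phi$ due to the four initial \textsc{Split}s by $\Oh(\log n)$ each, invoke \eqref{lem:copy_iterations} for the per-call decrease of at least $d_j$, and telescope. The only cosmetic difference is that you use AM--GM to get the tighter bound $\log n$ per \textsc{Split}, whereas the paper uses the coarser bound $1+\log n$; both suffice.
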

    \begin{claimproof}
        Observe that the potential $\Phi$ increases by at most \( 4 + 4 \log n \) in the
        first two lines of \textsc{SetSubstringsEqual}, as each call to \textsc{Split}
        increases the potential by at most \( 1 + \log n \).
        Overall, across all $b$ calls, the potential increases by \(\Oh(b \log n)\) at
        these two lines.
        On the other hand, by \cref{lem:copy_iterations}, after \( d \) iterations of the
        \KwSty{while} loop, the potential decreases by at least \(d\).
        Since we start with potential \(1 + \log n\), the total number of iterations of
        the \KwSty{while} loop is at most \(\Oh(b \log n)\).
    \end{claimproof}
    In total, this completes the proof.
\end{proof}

\subsection{An Efficient Classical Data Structure for Dynamic xSLP Split Substitute}
\label{sec:encoding_impl}

Before we explain how to implement the \pn{Dynamic xSLP Split Substitute}, we show that the
updates have limited effect on the compressibility of the underlying string
$\val(\G_{\mP})$.
(Recall that $\G_{\mP}$ is obtained from $\G$ by interpreting each pseudo-terminal as a terminal.)

\begin{lemma} \label{lem:c_bound}
    If we construct an xSLP $\G$ using a call \textsc{Init}$(n)$ followed by a sequence of
    $m$ calls to $\textsc{Split}$ or $\textsc{Substitute}$, then $|\LZ(\val(\G_{\mP}))| =
    \Oh(m \log n)$.
\end{lemma}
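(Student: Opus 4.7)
The plan is to construct, inductively on the number of operations $i$, a straight-line grammar $\G^*_i$ that generates $T_i := \val(\G_\mP^{(i)})$ and has size $|\G^*_i| = \Oh(i \log n)$. Combining such a grammar at $i = m$ with the classical inequality $|\LZ(T)| \le |\G|$, valid for any SLG $\G$ generating $T$, immediately yields $|\LZ(\val(\G_\mP))| \le |\G^*_m| = \Oh(m \log n)$, establishing the claim.

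The base case $\G^*_0$ consists of a single terminal corresponding to the initial pseudo-terminal. For a $\textsc{Split}(X, y)$ step, I would simply add the production $X \to YZ$ with $Y$ and $Z$ fresh terminals; the symbol $X$, previously a terminal, is reinterpreted as a non-terminal, so every earlier right-hand side containing $X$ automatically expands to $YZ$, exactly matching the effect of the operation. For a $\textsc{Substitute}(X, x, y)$ step, the target is to turn $X$ into a non-terminal whose expansion is $W^e$, where $W$ is the block of $T_i$ between the delimiters $x$ and $y$ (and $X \notin W$, as otherwise the operation would be ill-defined). I would do this in three stages: first extract a non-terminal $W^*$ for $W$ from $\G^*_i$ via substring extraction on the AVL grammar, spending $\Oh(\log n)$ new non-terminals; then build a non-terminal $(W^e)^*$ by iterated doubling, paying $\Oh(\log e) \le \Oh(\log n)$ more; and finally install the single production $X \to (W^e)^*$. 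Another $\Oh(\log n)$ non-terminals suffice to restore the AVL invariant, so that every operation inflates the grammar by only $\Oh(\log n)$ symbols.

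The main obstacle will be the substring-extraction stage in the substitute case: in an arbitrary SLG, extracting a substring can cost $\Omega(|\G^*_i|)$ new non-terminals, which would blow the budget, and so $\G^*_i$ must be kept in an AVL-balanced form throughout. The delicate case turns out to be the split itself: turning the terminal $X$ into a non-terminal of length two alters the expansion length of every ancestor of $X$ in the derivation DAG and threatens to propagate imbalance globally. The AVL-grammar rebalancing machinery of Rytter~\cite{Ryt03}, Kempa--Kociumaka~\cite{KK20}, and Kociumaka--Lohrey~\cite{KL21} absorbs all these updates within the claimed $\Oh(\log n)$ per-operation budget, which completes the argument.
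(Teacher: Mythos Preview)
Your approach is genuinely different from the paper's, and it is appealing: rather than the attractor-based argument, you try to exhibit an explicit grammar of size $\Oh(i\log n)$ for $T_i=\val(\G_\mP^{(i)})$ directly. Unfortunately, there is a real gap.

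The claim that ``another $\Oh(\log n)$ non-terminals suffice to restore the AVL invariant'' after you install $X\to YZ$ or $X\to (W^e)^*$ is not supported by the references you cite. The AVL-grammar machinery of Rytter, Kempa--Kociumaka, and Kociumaka--Lohrey provides \emph{concatenate} and \emph{extract}, both of which create \emph{new} balanced symbols; none of them rebalances the grammar after the height of an \emph{existing} symbol is altered in place. When $X$ jumps from height $0$ to height $\Theta(\log n)$ (as it does in a substitute), every ancestor of $X$ in the derivation DAG may lose balance. Crucially, after the very first substitute the grammar ceases to be a tree: the extracted $W^*$ shares subtrees with the existing grammar, so subsequent pseudo-terminals can have $\omega(\log n)$ distinct ancestors, and there is no $\Oh(\log n)$ cap on the number of symbols that must be rebalanced (or freshly created to replace them). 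Without the rebalancing guarantee, the height of $\G^*_i$ is not controlled, and then the $\Oh(\log n)$ bound on the extraction cost in the next substitute collapses as well.

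This is exactly why the paper's implementation (\cref{lem:split_substitute_impl}) does \emph{not} maintain balance incrementally but instead rebuilds the AVL grammar from scratch after each update, using the present lemma to bound the size of the rebuilt grammar; your incremental scheme would make that argument circular. The paper breaks the circularity with a purely string-level argument: it maintains a set $D$ of $m+1$ delimiters, proves that $D$ induces an attractor of $\val(\G_\mP)$ of size $m+1$, and then invokes $|\LZ(X)|=\Oh(|\Gamma|\log|X|)$ from Kempa--Prezza. No grammar maintenance is needed for the bound itself.
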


\begin{proof}
    Kempa and Prezza \cite{KP18} defined an \emph{attractor} of a string \( X \) as a set
    \( \Gamma \subseteq \fragmentco{0}{|X|} \) of positions in \( X \) such that every
    substring \( X\fragmentco{i}{j} \) has an occurrence $X\fragmentco{\hi}{\hj}=X\fragmentco{i}{j}$
    satisfying \( \fragmentco{\hi}{\hj} \cap \Gamma \neq \emptyset \).
    They also showed that if $\Gamma$ is an attractor of $X$, then $|\LZ(X)|=\Oh(|\Gamma|\log |X|)$.

    Our goal is to prove that $\val(\G_{\mP})$ has an attractor of size $m+1$.
    For this, we maintain a set $D\subseteq \fragmentco{0}{n}$ of delimiters in
    $S=\val(\G)$ satisfying the following property:
    for each interval $\fragmentco{p}{q}\subseteq \fragmentco{0}{n}$ containing at least
    one delimiter,
    the substring $S\fragmentco{p}{q}$ has an occurrence $S\fragmentco{\hp}{\hq}$ such
    that $D\cap \fragmentco{\hp}{\hq}\ne \emptyset$.

    Initially, $D=\{0\}$ satisfies the invariant because $0$ is the only delimiter in
    $\fragmentco{0}{n}$.
    If $S'$ is obtained from $S$ using $\textsc{Split}(X, x)$ or $\textsc{Substitute}(X,
    x, y)$, we pick an arbitrary occurrence $S\fragmentco{r}{r+|X|}$ of $\val_\G(X)$ and
    set $D'=D\cup\{r+x\}$ (for \textsc{Split}) or $D'=D\cup\{r+y-x\}$ (for
    \textsc{Substitute}).

    To see that the invariant is satisfied, consider an arbitrary interval
    $\fragmentco{p}{q}\subseteq \fragmentco{0}{n}$ containing at least one delimiter of
    $S'$.

    If $\fragmentco{p}{q}$ contains at least one delimiter of $S$, then
    $S\fragmentco{p}{q}$ has an occurrence $S\fragmentco{\hp}{\hq}$ such that $D\cap
    \fragmentco{\hp}{\hq}\ne \emptyset$.
    The characterization of \cref{lem:sse} implies
    $S'\fragmentco{p}{q}=S'\fragmentco{\hp}{\hq}$, so $S'\fragmentco{\hp}{\hq}$ is an
    occurrence of $S'\fragmentco{p}{q}$ such that $D'\cap \fragmentco{\hp}{\hq}\ne
    \emptyset$.
    Thus, it remains to consider the case when $\fragmentco{p}{q}$ does not contain any delimiters of $S$.
    In particular, $S\fragmentco{p}{q}$ must be a substring of $\val_\G(X)$.

    In the case of $\textsc{Split}(X,x)$, we observe that $S'\fragmentco{p}{q}$ is a
    substring of $\val_\G(Y)\cdot \val_\G(Z)$, where $Y$ and $Z$ are the two new
    pseudo-terminals created, but not a substring of $\val_\G(Y)$ or $\val_\G(Z)$.
    Thus, $S'\fragmentco{p}{q}$ has an occurrence $S'\fragmentco{\hp}{\hq}$ such that
    $D'\cap \fragmentco{\hp}{\hq}=\{r+x\} \ne \emptyset$.

    In the case of $\textsc{Substitute}(X, x,y)$, we conclude that
    $S'\fragmentco{p}{q}$ is a substring of $S\fragmentco{x}{y}^e$.
    If $S'\fragmentco{p}{q}$ is a substring of $S\fragmentco{x}{y}$, then, by the
    inductive assumption,
    it has an occurrence $S\fragmentco{\hp}{\hq}=S'\fragmentco{\hp}{\hq}$ such that $D\cap
    \fragmentoo{\hp}{\hq}\ne \emptyset$.
    Otherwise, the leftmost occurrence $S'\fragmentco{\hp}{\hq}$ of $S'\fragmentco{p}{q}$
    within $S'\fragmentco{r}{r+|X|}$ satisfies $D'\cap \fragmentco{\hp}{\hq}=\{r+(y-x)\}
    \ne \emptyset$.

    Our attractor $\Gamma$ of $S_\mP=\val(\G_\mP)$ of size $m+1$ consists of the
    ($0$-based) ranks of the delimiters in $D$ among all the delimiters of $S$.
    Every fragment $S_\mP\fragmentco{i}{j}$ expands to some fragment $S\fragmentco{p}{q}$
    such that $p$ is a delimiter.
    By the invariant on $D$, we have a matching fragment $S\fragmentco{\hp}{\hq}$ such
    that $\fragmentco{\hp}{\hq}\cap D \ne \emptyset$.
    This yields a fragment $S_\mP\fragmentco{\hi}{\hj}$ matching $S_\mP\fragmentco{i}{j}$
    such that $\fragmentco{\hi}{\hj}\cap \Gamma \ne \emptyset$.
\end{proof}

\Cref{lem:c_bound} allows us to describe an efficient implementation of
\pn{Dynamic xSLP Split Substitute}.

\begin{lemma} \label{lem:split_substitute_impl}
    There is a data structure for \pn{Dynamic xSLP Split Substitute} so that, after initialization with
    $\textsc{Init}(n)$ followed by a sequence of $m$ updates ($\textsc{Split}$ or
    $\textsc{Substitute}$):
    \begin{itemize}
        \item the xSLP $\G$ is of size $\Oh(m \log^3 n)$;
        \item the queries \textsc{FindDelimiter}, \textsc{FindPseudoTerminal}, and
            \textsc{LCE} take time \(\Oh(\log n)\),
        \item the next update (\textsc{Split} or \textsc{Substitute}) takes time $\Oh(m\log^5 n)$.
    \end{itemize}
\end{lemma}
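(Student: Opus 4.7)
The plan is to represent the xSLP $\G$ using AVL-grammar machinery from~\cite{Ryt03,KK20,KL21}. Concretely, alongside $\G$ itself I maintain an AVL grammar $\hG$ representing the string $\val(\G_\mP)$ over the alphabet whose symbols are the current pseudo-terminals. Standard results on AVL grammars guarantee $\Oh(\log n)$-time random access, substring extraction, and longest-common-extension queries, provided the AVL grammar has size proportional (up to polylogarithmic factors) to the LZ77 factorization of the string it represents. By \cref{lem:c_bound}, after $m$ updates $|\LZ(\val(\G_\mP))| = \Oh(m\log n)$, so $\hG$ can be kept of size $\Oh(m\log^2 n)$, and $\G$ itself (which additionally stores the constant-size productions introduced per update plus pseudo-terminal length tables) of size $\Oh(m\log^3 n)$.

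For the three queries, I would proceed as follows. \textsc{FindDelimiter}$(x)$ and \textsc{FindPseudoTerminal}$(x)$ are top-down traversals of $\hG$ guided by the stored lengths of pseudo-terminals: descending along the unique path whose length ranges bracket $x$ yields both the pseudo-terminal containing $x$ and the delimiter immediately preceding it in $\Oh(\log n)$ time. \textsc{LCE}$(x,y)$ is handled in two stages: first invoke the AVL grammar's built-in LCE on $\hG$ (which treats distinct pseudo-terminals as distinct symbols — exactly the correct semantics for a pure xSLP, since no two pseudo-terminals share terminals); then, at the first mismatching pair of pseudo-terminals, extend by at most $\Oh(1)$ character comparisons since any pair of pseudo-terminals at that boundary agree on at most a handful of positions in a pure xSLP.

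For updates, my plan is to perform the $\Oh(1)$ structural modifications on $\G$ (introducing new pseudo-terminals and rewriting the $\Oh(1)$ productions directly affected) and then rebuild $\hG$ from scratch by (a) extracting an LZ77 factorization of $\val(\G_\mP)$ from the current $\G_\mP$ and (b) running the LZ77-to-AVL-grammar construction of~\cite{KK20}. Step (a) yields $\Oh(m\log n)$ phrases by \cref{lem:c_bound}, and step (b) costs $\Oh(|\LZ|\cdot\log^{4}n)=\Oh(m\log^{5}n)$, matching the claimed update bound.

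The main obstacle I expect is correctly handling a \textsc{Substitute}$(X,x,y)$ that introduces a new periodic factor $S\fragmentco{x}{y}^e$, because $e$ can be as large as $\Theta(n)$: writing $e$ copies naively would destroy the size bound. I plan to resolve this by representing the substitution implicitly — the replacement is inserted as a single reference whose AVL-grammar expansion is built by the standard $\Oh(\log e)$ doubling trick, so the contribution to $|\hG|$ is only $\Oh(\log n)$. A subtler issue is that the LZ77 factorization of $\val(\G_\mP)$ must be extractable from $\G_\mP$ in time $\Oh(|\G|\cdot\log^{O(1)}n)$; I would do so by running a standard SLP LZ77 computation on $\G_\mP$, whose grammar size is already bounded by $\Oh(m\log^3 n)$, giving the claimed overall update time.
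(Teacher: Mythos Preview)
Your proposal is essentially the paper's approach: maintain $\G_\mP$ as an AVL grammar, rebuild it from scratch after each update (using \cref{lem:c_bound} to bound $|\LZ(\val(\G_\mP))|$ and hence the AVL-grammar size), handle the $e$-fold repetition in \textsc{Substitute} via $\Oh(\log e)$ doubled concatenations, and answer queries through the AVL grammar. The only minor differences are in the rebuild pipeline (the paper goes SLP $\to$ run-length SLP $\to$ AVL grammar rather than through an explicit LZ77 factorization) and in query handling (the paper delegates to \cref{lem:pillar_on_xslp}, where a single \textsc{Access} on $S$ already returns $\#^A_i$, from which both the pseudo-terminal $A$ and the preceding delimiter $x-i$ are read off).
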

\begin{proof}
    Recall that AVL grammars~\cite{Ryt03,KK20,KL21} are dynamic SLPs that can be
    updated using the following operations:
    \begin{description}
        \item[Insert] Inserts a given character to the alphabet of non-terminals $\Sigma$.
            This operation takes $\Oh(1)$ time.
        \item[Concatenate] Given symbols $A,B\in \mS$, insert a symbol $C$ such that
            $\val(C)=\val(A)\cdot \val(B)$. This operation takes
            $\Oh(1+|\log(|\val(A)|/|\val(B)|)|)$ time and may insert some auxiliary
            symbols.
        \item[Extract] Given a symbol $A\in \mS$ and integers $0\le i < j \le |A|$, insert
            a symbol $B$ such that $\val(B)=\val(A)\fragmentco{i}{j}$. This operation
            takes $\Oh(1+\log |\val(A)|)$ and may insert some auxiliary symbols.
    \end{description}

    We maintain $\G_\mP$ as an AVL grammar of size $\Oh(m\log^3 n)$.
    The xSLP $\G$ is obtained from $\G_\mP$ by re-interpreting the non-terminals as
    pseudo-terminals.

    We implement queries based on \cref{lem:pillar_on_xslp}; note that accessing
    $S\position{x}=\#^A_{i}$ lets us recover both the associated pseudo-terminal $A$ and
    the maximum delimiter $x-i$ in $\fragment{0}{x}$.
    Updates the handled as follows.

    \begin{itemize}
        \item For \textsc{Init}, we initialize $\G_\mP$ with a single terminal interpreted
            in $\G$ as a pseudo-terminal of length $n$.
        \item For \textsc{Split}, we insert two terminals $Y,Z$ to $\G_\mP$, interpreted
            in $\G$ as pseudo-terminals of length $y$ and $|X|-y$, respectively, and
            replace the terminal $X$ with a non-terminal representing $YZ$.
            Note that $\G_\mP$ is no longer an AVL grammar after such a replacement.
        \item For \textsc{Substitute}, we identify a substring $S_{\mP}\fragmentco{i}{j}$
            that expands to $S\fragmentco{x}{y}$, use the \emph{extract} operation to
            extend $\G_\mP$ with a non-terminal representing $S_{\mP}\fragmentco{i}{j}$,
            and then use $\Oh(\log e)$ constant-time applications of the
            \emph{concatenate} operation to extend $\G_\mP$ with a non-terminal
            representing $(S_{\mP}\fragmentco{i}{j})^e$; for the latter, we create symbols
            representing powers of $S_{\mP}\fragmentco{i}{j}$ with exponents
            $\floor{e/2^k}$ and $\ceil{e/2^k}$ for $k\in \fragmentco{0}{\floor{\log e}}$.
            Finally, we replace $X$ with the symbol representing $(S\fragmentco{x}{y})^e$.
            Note that $\G_\mP$ is no longer an AVL grammar after such a replacement.
    \end{itemize}

    After each update, we recompute $\G_{\mP}$ to make sure that it is an AVL grammar of
    size $\Oh(z\log^2 n)$, where $z=|\LZ(\val(\G_\mP))|$; by \cref{lem:c_bound}, this is
    at most $\Oh(m\log^3 n)$.

    For this, we construct a run-length SLP of size $\Oh(z\log n)$ generating the same
    string $\hS$~\cite{tomohiro} and convert the run-length SLP back to an AVL grammar of
    size $\Oh(z\log^2 n)=\Oh(m\log^3 n)$; see~\cite{KK20}. The whole process takes
    $\Oh(m\log^3 n)$ time.

    Finally, we obtain $\G$ by re-interpreting each non-terminal of $\G_\mP$ as a
    pseudo-terminal and preprocess $\G$ using the construction algorithm of
    \cref{lem:pillar_on_xslp}, which takes $\Oh(m\log^5 n)$ time.
\end{proof}

\subsection{An Efficient Algorithm for Solving Systems of Substring Equations}

Finally, we combine the previous steps to obtain our claimed main results.

\dynslpmain
\begin{proof}
    Follows from \cref{tenseven,lem:split_substitute_impl}.
\end{proof}

\solvesubstringequations
\begin{proof}
    We initialize a Dynamic xSLP Representation with $\textsc{Init}(n)$ and call
    $\textsc{SetSubstringsEqual}$ for each equation $e\in E$.
    The claimed bounds then follow from \cref{prp:solve_substring_equations}.
\end{proof}

\subsection{Extension to Equations between Substrings of Multiple Strings}
While \cref{def:substring_equation,def:universal_solution,prp:solve_substring_equations}
apply to equations between substrings of a single string, it is convenient to generalize
these notions to allow equations between substrings of many strings.

\begin{definition}[Substring Equation; see \cref{def:substring_equation}]\label{def:substring_many}\label{def:universal_many}
    Consider formal variables $T_0,\ldots,T_{t-1}$ representing strings of lengths
    $n_0,\ldots,n_{t-1}$.

    A \emph{substring equation} on length-$(n_0,\ldots,n_{t-1})$ strings is a constraint
    of the form $e: T_i\fragmentco{x}{x'}=T_j\fragmentco{y}{y'}$, where $i,j\in
    \fragmentco{0}{t}$ and $0\le x \le x' \le n_i$ as well as $0\le y \le y' \le n_j$ are
    all integers.

    A \emph{system} of substring equations is a set of substring equations $E$ on
    length-$(n_0,\ldots,n_{t-1})$ strings.

    Strings $(S_0,\ldots,S_{t-1})\in \Sigma^{n_0}\times \cdots \times \Sigma^{n_{t-1}}$
    \emph{satisfy} the system $E$ if and only if, for every equation $E\ni
    e:T_i\fragmentco{x}{x'} = T_j\fragmentco{y}{y'}$, the fragments
    $S_i\fragmentco{x}{x'}$ and $S_j\fragmentco{x'}{y'}$ match, that is, they are occurrences
    of the same substring.

    Strings $(S_0,\ldots,S_{t-1})\in \Sigma^{n_0}\times \cdots \times \Sigma^{n_{t-1}}$
    form a \emph{universal solution} of a system $E$ of substring equations on
    length-$(n_0,\ldots,n_{t-1})$ strings if they satisfy $E$ and every other tuple
    $(\hS_0,\ldots,\hS_{t-1})\in \hat{\Sigma}^{n_0}\times \cdots \times
    \hat{\Sigma}^{n_{t-1}}$ that satisfies $E$ is an image of $(S_0,\ldots,S_{t-1})$ under
    a letter-to-letter morphism (there is a function $\phi : \Sigma \to \hat{\Sigma}$ such
    that $\hat{S}_i\position{x}=\phi(S_i\position{x})$ for $i\in \fragmentco{0}{t}$ and
    $x\in \fragmentco{0}{n_i}$).
\end{definition}

\begin{corollary}\label{cor:solve_substring_equations}
    Given a tuple $(n_0,\ldots,n_{t-1})$ and a system $E$ of substring equations on
    length-$(n_0,\ldots,n_{t-1})$ strings, one can in $\Oh((|E|\log n + t)^2\log^5 n)$
    time construct an xSLP of size $\Oh((|E|\log n + t)\log^3 n)$ representing a universal
    solution of $E$, where $n=\sum_{i=0}^{t-1} n_i$.
\end{corollary}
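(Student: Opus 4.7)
The plan is to reduce the multi-string case to the single-string case by concatenation. Set $n\coloneqq\sum_{i=0}^{t-1}n_i$ and offsets $o_i\coloneqq\sum_{j<i}n_j$ for $i\in\fragment{0}{t}$, so that $o_0=0$ and $o_t=n$. Translate each equation $e:T_i\fragmentco{x}{x'}=T_j\fragmentco{y}{y'}$ of $E$ into $e':T\fragmentco{x+o_i}{x'+o_i}=T\fragmentco{y+o_j}{y'+o_j}$ on a single length-$n$ string $T$, obtaining a system $E'$ with $|E'|=|E|$. First I would invoke the Dynamic xSLP Substring Equation data structure (\cref{lem:dynslpmain}) on $E'$, then issue $t-1$ boundary calls $\textsc{Split}(o_i)$ directly on the underlying Dynamic xSLP Split Substitute data structure of \cref{lem:split_substitute_impl}; this ensures that every $o_i$ is a delimiter of the maintained string $S=\val(\G)$. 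The decomposition $S_{\mP}=U_0U_1\cdots U_{t-1}$ induced by the delimiters $o_0<o_1<\cdots<o_t$ then lets me introduce $t$ fresh non-terminals $A_0,\ldots,A_{t-1}$ with $\val_\G(A_i)=S\fragmentco{o_i}{o_{i+1}}$ via $\Oh(\log n)$ AVL-grammar concatenations each; I would treat these as the starting symbols of the returned xSLP.

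Correctness splits into two parts. Setting $S_i\coloneqq\val_\G(A_i)$, the tuple $(S_0,\ldots,S_{t-1})$ satisfies $E$ because $S=S_0\cdots S_{t-1}$ satisfies $E'$ by \cref{lem:macro_scheme_to_set_equal}. For universality, take any $(\hS_0,\ldots,\hS_{t-1})$ satisfying $E$: its concatenation $\hS\coloneqq\hS_0\cdots\hS_{t-1}$ satisfies $E'$, so by the universality guarantee of \cref{prp:solve_substring_equations} applied to $E'$ there is a letter-to-letter morphism $\phi\colon\Sigma\to\hat\Sigma$ with $\hS\position{x}=\phi(S\position{x})$ for every $x\in\fragmentco{0}{n}$; restricting to each interval $\fragmentco{o_i}{o_{i+1}}$ gives the required morphism witnessing that each $\hS_i$ is the image of $S_i$.

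For the complexity, the refined accounting of \cref{tenseven,lem:split_substitute_impl} is essential. The $|E|$ calls to $\textsc{SetSubstringsEqual}$ generate $\Oh(|E|\log n)$ low-level $\textsc{Split}$/$\textsc{Substitute}$ operations by \cref{lem:iter_num}; to these I add $t-1$ boundary $\textsc{Split}$ operations and $\Oh(t\log n)$ further AVL-level concatenations for the extraction of the $A_i$ (the latter do not force additional Split/Substitute work on the xSLP side and their contribution is dominated). Writing $M\coloneqq\Oh(|E|\log n+t)$ for the total number of $\textsc{Split}$/$\textsc{Substitute}$ updates, \cref{lem:split_substitute_impl} yields final xSLP size $\Oh(M\log^3 n)=\Oh((|E|\log n+t)\log^3 n)$, while each update costs $\Oh(M\log^5 n)$, summing to total time $\Oh(M^2\log^5 n)=\Oh((|E|\log n+t)^2\log^5 n)$, matching the claimed bounds.

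The main subtlety, and the one that would require the most care, is to handle the $t-1$ string boundaries through the cheap low-level $\textsc{Split}$ interface rather than encoding each boundary as a full $\textsc{SetSubstringsEqual}$ call: doing the latter would inflate the $t$ term by a $\Theta(\log n)$ factor via \cref{lem:iter_num} and break the target bounds. A secondary point that needs careful checking is that, after all boundary splits, the xSLP structure (with pseudo-terminals expanding into distinct sentinel characters disjoint from everything else) is preserved when we add the $t$ starting symbols on top, so that the output is a well-formed xSLP in the sense of the paper's extended definition with multiple starting symbols.
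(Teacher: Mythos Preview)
Your proposal is correct and follows essentially the same approach as the paper: concatenate the $t$ strings, translate $E$ to a single-string system $E'$, run the \textsc{Dynamic xSLP Substring Equation} machinery, then issue the $t-1$ boundary $\textsc{Split}(o_i)$ calls directly on the underlying \textsc{Dynamic xSLP Split Substitute} data structure, and finally carve out the $t$ starting symbols from the resulting AVL grammar. The paper uses the AVL \emph{extract} primitive (one call per $A_i$, costing $\Oh(\log n)$ time and symbols) rather than your ``$\Oh(\log n)$ AVL-grammar concatenations each''; this is the cleaner way to phrase that step, but the effect and cost are the same.
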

\begin{proof}
    We create a system $E'$ of equations on length-$n$ strings by introducing a formal
    variable $T$ representing a concatenation of strings $T_0\cdots T_{t-1}$.
    For this, every equation $T_i\fragmentco{x}{x'}=T_j\fragmentco{y}{y'}$ is interpreted
    as an equation $T\fragmentco{s_i+x}{s_i+x'}=T\fragmentco{s_j+y}{s_j+y'}$, where
    $0=s_0,\ldots,s_{t}=n$ are the prefix sums of the sequence $n_0,\ldots,n_{t-1}$.
    It is easy to see that $(S_0,\ldots,S_{t-1})$ is a universal solution to $E$ if and
    only if $S\coloneqq S_0\cdots S_{t-1}$ is a universal solution to $E'$.

    Consequently, we proceed as in the proof of \cref{prp:solve_substring_equations} to
    construct an xSLP $\G$ that generates a universal solution $S$ of $E'$.
    Then, we apply $\textsc{Split}(s_i)$ for $i \in \fragmentoo{0}{t}$; this brings the number of updates of the underlying
    \pn{Dynamic xSLP Split Substitute} data structure from $\Oh(|E|\log n)$ to
    $\Oh(|E|\log n + t)$, so the xSLP size and the total running time increase to
    $\Oh((|E|\log n + t)\log^3 n)$ and $\Oh((|E|\log n + t)^2\log^5 n)$, respectively.

    Finally, we exploit the fact the underlying SLP $\G_{\mP}$ is an AVL grammar (see the
    proof of \cref{lem:split_substitute_impl}), so we can use the \emph{extract} operation
    to add symbols $A_i$ with $\val_{\G}(A_i)=S_i=S\fragmentco{s_i}{s_{i+1}}$.

    Each of these extractions takes $\Oh(\log n)$ time and adds $\Oh(\log n)$ symbols to
    $\G_{\mP}$ and $\G$; the total cost of $\Oh(t \log n)$ is dominated by the earlier
    steps of the algorithm.
    We return $\G$ with $A_0,\ldots,A_{t-1}$ interpreted as starting symbols.
\end{proof}

\subsection{A \modelname Implementation for xSLP-Compressed Strings}

In this (sub)section, we discuss how to implement \modelname operations on an xSLP. To
achieve this, we define a method for extracting an SLP from an xSLP.

\begin{definition}\label{lem:sub_xSLP}
    Let $\G$ be an xSLP representing a string $S$.
    We define the SLP $\hG$ representing the string $\hS$ as follows.
    We construct $\hS$ from $S$ by replacing $\#_i^A$ with a new terminal $\$^A$ for every
    pseudo-terminal $A\in \mP_G$ and every $i\in\fragmentoo{0}{|A|}$ (note that $\#_0^A$
    is preserved).
    Formally, we generate $\hS$ using an SLP $\hG$ of size $\Oh(|\G|\log n)$ obtained from
    $\G$ with the following changes:
    \begin{itemize}
        \item For every pseudo-terminal $A\in \mP_\G$ of length $|A|>2$, create $\Oh(\log
            |A|)$ auxiliary non-terminals generating powers of $\$^A$, including a
            non-terminal $A'$ generating $(\$^A)^{|A|-1}$, and set $\rhs_{\hG}(A)=\#_0^A
            A'$.
        \item For every pseudo-terminal $A\in \mP_\G$ of length $|A|=2$, set
            $\rhs_{\hG}(A)=\#_0^A \$^A$.
        \item For every pseudo-terminal $A\in \mP_\G$ of length $|A|=1$, replace $A$ with
            $\#_0^A$ in every production. \qedhere
    \end{itemize}
\end{definition}

It is not difficult to see that $\hG$ can be constructed out of $\G$ in $\Oh(g\log n)$
time.
The next lemma captures a crucial property of $\hS$.

\begin{lemma}\label{clm:hs}
    Let $\G$ be an xSLP representing a string $S$,
    and consider $\hG$ and $\hS$ from \cref{lem:sub_xSLP}.
    Further, consider matching fragments
    $\hS\fragmentco{x}{x+\ell}=\hS\fragmentco{y}{y+\ell}$ such that
    $S\fragmentco{x}{x+\ell}\ne S\fragmentco{y}{y+\ell}$.

    Then, $S\fragmentco{x}{x+\ell}=\#^A_{i}\cdots \#^A_{i+\ell-1}$ and
    $S\fragmentco{y}{y+\ell}=\#^A_{j}\cdots \#^A_{j+\ell-1}$
    holds for some $A\in \mP_\G$ and $i,j\in \fragmentoc{0}{|A|-\ell}$.
\end{lemma}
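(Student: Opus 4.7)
The plan is to exploit the structure of the \(S\mapsto\hS\) transformation, which collapses only the interior terminals \(\#^A_1,\ldots,\#^A_{|A|-1}\) of each pseudo-terminal \(A\in\mP_\G\) (into the fresh terminal \(\$^A\)) while preserving \(\#^A_0\) and every other character of \(S\). Consequently, for any two positions with matching \(\hS\)-characters but differing \(S\)-characters, both \(S\)-characters must be distinct interior terminals \(\#^A_{i'}\) of the \emph{same} pseudo-terminal \(A\).

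Taking the smallest \(k\in\fragmentco{0}{\ell}\) with \(S\position{x+k}\ne S\position{y+k}\), the observation above gives \(S\position{x+k}=\#^A_i\) and \(S\position{y+k}=\#^A_j\) for a common \(A\in\mP_\G\) and distinct \(i,j\in\fragmentoo{0}{|A|}\). I would then argue \(k=0\) by contradiction: if \(k\ge 1\), minimality forces \(S\position{x+k-1}=S\position{y+k-1}\); but the \(\#^A\)-terminals appear in \(S\) only in consecutive runs \(\#^A_0\#^A_1\cdots\#^A_{|A|-1}\) (each \(\#^A_{i'}\) occurring only on the right-hand side of \(A\)'s production), so the predecessor characters are \(\#^A_{i-1}\) and \(\#^A_{j-1}\); their equality forces \(i=j\), a contradiction. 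Hence \(k=0\), and both fragments begin at interior offsets of some \(A\)-run on the \(x\)- and \(y\)-side, respectively.

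The main step is to show that both fragments remain inside their respective \(A\)-runs throughout. The runs force \(S\position{x+k'}=\#^A_{i+k'}\) for \(k'\le|A|-1-i\) and \(S\position{y+k'}=\#^A_{j+k'}\) for \(k'\le|A|-1-j\). Assuming \(\ell>|A|-\min(i,j)\) toward contradiction and (by symmetry between the two sides) \(j>i\), set \(k'\coloneqq|A|-j<\ell\). On the \(x\)-side we have \(k'\le|A|-1-i\), so \(S\position{x+k'}=\#^A_{|A|-(j-i)}\) has index in \(\fragment{1}{|A|-1}\), giving \(\hS\position{x+k'}=\$^A\). On the \(y\)-side, \(y+k'\) is the first position past the \(y\)-side \(A\)-run, so \(S\position{y+k'}\) is either \(\#^B_0\) for some pseudo-terminal \(B\) (the next pseudo-terminal reference in the expansion) or a free-standing non-\(\#\)-terminal of \(\G\); in neither case does it lie in \(\{\#^A_1,\ldots,\#^A_{|A|-1}\}\), so \(\hS\position{y+k'}\ne\$^A\), contradicting \(\hS\position{x+k'}=\hS\position{y+k'}\). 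Therefore \(\ell\le|A|-i\) and symmetrically \(\ell\le|A|-j\), so \(i,j\in\fragmentoc{0}{|A|-\ell}\) and the fragments read \(\#^A_i\cdots\#^A_{i+\ell-1}\) and \(\#^A_j\cdots\#^A_{j+\ell-1}\) as claimed. The main technical obstacle is this final boundary analysis, which crucially exploits the xSLP property that each interior terminal \(\#^A_{i'}\) (with \(i'\ge 1\)) occurs in \(S\) only in the interior of an \(A\)-run and nowhere else.
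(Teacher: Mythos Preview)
Your approach is correct and essentially the same as the paper's: both start from a position where \(S\) mismatches, identify the common pseudo-terminal \(A\) and interior indices \(i,j\), and then propagate along the \(A\)-runs. The paper picks an arbitrary mismatch position \(r\) and inducts bidirectionally on \(|d-r|\), whereas you pick the leftmost mismatch and first argue it must be at position \(0\) (a clean shortcut that replaces the leftward induction) before extending forward.

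There is, however, a small slip in your forward step. You assume \(\ell>|A|-\min(i,j)\) toward contradiction, but the negation of the target \(i,j\le|A|-\ell\) is \(\ell>|A|-\max(i,j)\). With \(j>i\) WLOG, the correct hypothesis is \(\ell>|A|-j\); your choice \(k'=|A|-j\) then still satisfies \(k'<\ell\) (and \(k'\le|A|-1-i\) since \(i<j\)), so the remainder of the derivation is unchanged and yields the desired \(\ell\le|A|-j\), whence also \(\ell\le|A|-i\). Under your stated (stronger) hypothesis you only obtain \(\ell\le|A|-i\), and the claimed ``symmetrically \(\ell\le|A|-j\)'' does not follow: swapping the roles of \((x,i)\) and \((y,j)\) merely re-derives the same weaker bound. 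With the \(\min\to\max\) fix, your argument is complete.
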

\begin{proof}
    Consider $r\in \fragmentco{0}{\ell}$ such that $S\position{x+r}\ne S\position{y+r}$.
    By construction of $\hS$, we must have $\hS\position{x+r}=\$^A=\hS\position{y+r}$ and
    $S\position{x+r}=\#^A_i\ne \#^A_j=S\position{y+r}$ for some $A\in \mP_\G$ and distinct
    $i,j\in \fragmentoo{0}{|A|}$; by symmetry, we may assume $i<j$ without loss of
    generality.

    It remains to prove, by induction on $|d-r|$, that  $i+d-r,j+d-r\in
    \fragmentoo{0}{|A|}$ and $S\position{x+d}=\#^A_{i+d-r}\ne
    \#^A_{j+d-r}=S\position{y+d}$ holds for every $d\in \fragmentco{0}{\ell}$.
    The claim holds trivially if $d=r$.

    Next, suppose that $d<r$.
    The inductive assumption shows that $S\position{x+d+1}=\#^A_{i+d+1-r}\ne
    \#^A_{j+d+1-r}=S\position{y+d+1}$ and $j+d+1-r>i+d+1-r>0$.
    Consequently, $j+d-r >0$, $S\position{y+d}=\#^A_{j+d-r}$, and
    $\hS\position{x+d}=\hS\position{y+d}=\$^A$.

    The only character of the form $\#^A_{i'}$ with $i'\in \fragmentoo{0}{|A|}$ that may
    precede $\#^A_{i+d+1-r}$ is $\#^A_{i+d-r}$, which completes the proof of the inductive
    step.

    The case of $d>r$ is symmetric.
    The inductive assumption shows that $S\position{x+d-1}=\#^A_{i+d-1-r}\ne
    \#^A_{j+d-1-r}=S\position{y+d-1}$ and $i+d-1-r<j+d-1-r<|A|$.
    Consequently, $i+d-r <|A|$, $S\position{x+d}=\#^A_{i+d-r}$, and
    $\hS\position{x+d}=\hS\position{y+d}=\$^A$.

    The only character of the form $\#^A_{j'}$ with $j'\in \fragmentoo{0}{|A|}$ that may
    follow $\#^A_{j+d-1-r}$ is $\#^A_{j+d-r}$, which completes the proof of the inductive
    step.
 \end{proof}

Next, we introduce the results from~\cite{BCPT15,P19} to support rank and select
operations on an SLP.

 \begin{lemmaq}{\cite{BCPT15,P19}}\label{lem:rank_queries}
    Given an SLP $\D$ of size $g$ that represents a binary string $D$,
    we preprocess it in $\Oh(|\D|\log n)$ time to answer \emph{rank} and \emph{select}
    queries in $\Oh(\log n)$ time.

    Here, \emph{rank} queries, given a position $i\in \fragment{0}{n}$, determine the
    number of $\one$s in $D\fragmentco{0}{i}$, whereas \emph{select} queries, given a rank
    $r\in \fragmentco{0}{\mathsf{rank}(n)}$ identify a position $i$ such that
    $D\position{i}=\one$ and $\mathsf{rank}(i)=r$.
\end{lemmaq}

With all components in place, we are now ready to give the proof of
\cref{lem:pillar_on_xslp}.

\tentwo
\begin{proof}
    Set $g \coloneqq |\G|$.
    We first construct $\hG$ out of $\G$
    as defined in \cref{lem:sub_xSLP} out of $\G$ in $\Oh(g\log n)$ time.
    Then, we apply \cite[Theorem 7.13]{CKW20} with recent improvements for IPM
    queries~\cite{DK24} so that, after pre-processing $\hG$ in $\Oh(g\log^2 n)$ time,
    \modelname operations on $\hS$ can be implemented in $\Oh(\log n)$ time.

    In order to implement \modelname operations on $S$, we also use rank and select
    queries on an auxiliary binary string $D$ obtained from $\hS$ by replacing every
    terminal of the form $\#_0^A$ (for $A\in \mP_\G$) with $\one$ and every other terminal
    with $\zero$.
    Formally, we generate $D$ with an SLP $\D$ of size $\Oh(g\log n)$ obtained from $\hG$
    by performing the aformentioned replacements in every production.
    We construct $\D$ in $\Oh(g\log n)$ time and use \cref{lem:rank_queries}
    to support rank and select queries.

    As discussed in \cite[Section 7]{CKW20}, efficient implementation of the \modelname
    operations on $S$ reduces to supporting $\textsc{Access}$, $\LCE$, $\LCE^R$, and
    $\LCE$ queries on $S$.
    We cover these queries one by one.
    \begin{description}
    \item[$\textsc{Access}$] Given $i\in \fragmentco{0}{n}$, the task to retrieve
        $S\position{i}$.
    To implement this operation, we first retrieve $\hS\position{i}$.
    If $\hS\position{i}$ is not of the form $\$^A$ for $A\in \mP_\G$, then $S\position{i}=\hS\position{i}$.
    Otherwise, we know that $S\position{i}=\#_j^A$ for some $j\in \fragmentoo{0}{|A|}$,
    but we still need to retrieve $j$.
    For this, we query $D$ to determine $i'\coloneqq \mathsf{select}(\mathsf{rank}(i)-1)$,
    which is the largest position in $\fragmentco{0}{i}$ such that $D\position{i'}=\one$.
    We observe that $S\fragmentco{i'}{i'+|A|}=\val_\G(A)$, and thus
    $S\position{i}=\#_{i-i'}^A$.
    The query complexity is $\Oh(\log n)$.
    \item[$\LCE$] Given two positions $x,y\in \fragment{0}{n}$, the task is to report the
        length of the longest common prefix of $S\fragmentco{x}{n}$ and
        $S\fragmentco{y}{n}$.
    To answer this query, we compute $\ell=\LCE_{\hS}(x,y)$.
    If $\ell=0$, we return $0$.
    Otherwise, we compare $S\position{x}$ with $S\position{y}$ and return $\ell$ or $0$
    depending on whether these two characters are equal or not.
    Since $\hS$ is obtained from $S$ by a letter-to-letter morphism, we have
    $\LCE_S(x,y)\le \LCE_{\hS}(x,y)$ and our algorithm never underestimates the result.
    Moreover, if $\hS\fragmentco{x}{x+\ell}=\hS\fragmentco{y}{y+\ell}$, then \cref{clm:hs}
    shows that $S\fragmentco{x}{x+\ell}=S\fragmentco{y}{y+\ell}$ or $S\position{x}\ne
    S\position{y}$, so the answer is indeed $\ell$ or $0$ depending on whether
    $S\position{x}= S\position{y}$.
    The query complexity is $\Oh(\log n)$.
    \item[$\LCE^R$] Given two positions $x,y\in \fragment{0}{n}$, the task is to report
        the length of the longest common suffix of $S\fragmentco{0}{x}$ and
        $S\fragmentco{0}{y}$.
    These queries are symmetric to the $\LCE$ queries, and our algorithm is analogous: we
    compute $\ell=\LCE^R(x,y)$ and return $0$ if $\ell=0$.
    Otherwise, we compare $S\position{x-1}$ with $S\position{y-1}$ and return $\ell$ or
    $0$ depending on whether the characters are equal or not.
    The correctness proof is symmetric to that for $\LCE$ queries, and the query complexity is
    $\Oh(\log n)$.
    \item[$\IPM$] Given two fragments $P=S\fragmentco{x_p}{x_p+\ell_p}$ and
        $T=S\fragmentco{x_t}{x_t+\ell_t}$ with $\ell_t < 2\ell_p$, the task is to report
        an arithmetic progression representing $\Occ(P,T)$.
    In this case, we first compute $\hO=\Occ(\hat{P},\hat{T})$, where
    $\hat{P}=\hS\fragmentco{x_p}{x_p+\ell_p}$ and
    $\hat{T}=\hS\fragmentco{x_t}{x_t+\ell_t}$.
    If $\hO=\emptyset$, we report $\emptyset$.
    Otherwise, we retrieve $S\position{x_p}$ and report $\hO$ unless
    $S\position{x_p}=\#^A_{i_p}$ for some $A\in \mP_\G$ and $i_p\in
    \fragmentoc{0}{|A|-\ell_p}$.
    In that exceptional case, we also retrieve $S\position{x_t+\min\hO}$, which must be of
    the form $\#^A_{i_t}$ for some $i_t\in \fragmentoc{0}{|A|-\ell_p}$, and return
    $\{\min\hO+i_p-i_t\}\cap \hO$.
    Since $\hS$ is obtained from $S$ by a letter-to-letter morphism, we have
    $\Occ(P,T)\subseteq \hO=\Occ(\hat{P},\hat{T})$.
    In particular, we correctly return $\Occ(P,T)=\emptyset$ if $\hO=\emptyset$.
    By \cref{clm:hs}, we have $\Occ(P,T)=\hO$ unless $P=\#^A_{i_P}\cdots
    \#^A_{i_P+\ell_P-1}$ holds for some $A\in \mP_\G$ and $i_P\in
    \fragmentoc{0}{|A|-\ell_P}$; we check this condition based on
    $P\position{0}=S\position{x_p}$ and correctly return $\hO$ if the condition is not
    satisfied.
    If the condition is satisfied, we have
    $T\fragmentco{\min\hO}{\ell_P+\max\hO}=\#^A_{i_t}\cdots
    \#^A_{i_t+\max\hO-\min\hO+\ell_P-1}$.
    In this case, there can be only one occurrence at position $\min\hO+i_p-i_t$, provided
    that this position is within $\hO=\fragment{\min\hO}{\max\hO}$.
    This completes the correctness proof. The query complexity is $\Oh(\log n)$.
    \qedhere
    \end{description}
\end{proof}

\pagebreak
\section{Quantum Algorithm for Pattern Matching with Mismatches}
\label{sec:pmwm}

In this section, we give our quantum algorithm for \PMwM, that
is, we prove \cref{thm:qpmwm}.

\qpmwm*
\medskip

As the main technical step, we prove \cref{thm:qhdfindproxystrings}.
Specifically, given a pattern \( P \) of length \( m \), a text \( T \) of length \( n \le
3/2 \cdot m \), and an integer threshold \( k > 0 \), we show how to construct an xSLP for
a proxy pattern $P^\#$ and a proxy text $T^\#$ that are equivalent to the original text and
pattern for the purpose of computing \(\OccH_k(P, T)\).

\begin{restatable*}{theorem}{qhdfindproxystrings}\label{thm:qhdfindproxystrings}
    There exists a quantum algorithm that, given a pattern $P$ of length $m$, a text $T$
    of length $n \le 3/2 \cdot m$, and an integer threshold $k > 0$, outputs an xSLP of
    size $\Ohtilde(k)$ representing strings $P^\#$ and $T^\#$ such that \(\OccH_k(P, T) =
    \OccH_k(P^{\#}, T^{\#})\) and $\MI(P, T\fragmentco{x}{x+m})=\MI(P^\#,
    T^\#\fragmentco{x}{x+m})$ holds for every $x\in \OccH_k(P,T)$.
    The algorithm uses $\Ohtilde(\sqrt{km})$ quantum queries and $\Ohtilde(\sqrt{km}+k^2)$ time.
\end{restatable*}

\subsection{Step 1: Computing a Candidate Set for Occurrences}
\label{sec:qhd_analyze}
\label{sec:qhd_postanalyze}

In this (sub)section we prove \cref{lem:qhd_analyze}.

\qhdanalyze*
\medskip

Recall from the Technical Overview that we start by analyzing the pattern using the
quantum algorithm for {\tt Analyze} from \cite{JN23}.

\quantumanalyze
\medskip

In the following, we tackle each of the three cases. We start with useful support routines
for the periodic case (which in turn also helps with repetitive regions).

\subsubsection{Support Routines for Periodic Patterns}
\label{sec:periodic_case}

We begin by showing that if both $P$ and $T$ have a low distance to $Q^*$ for the same
approximate period $Q^*$, all $k$-mismatch occurrences must align with the period, leading
to a candidate set that forms an arithmetic progression.

\begin{lemma}
    \label{lem:periodiccase_hd_progression}
    Let $P$ denote a pattern of length $m$ and $T$ denote a text of length $n$.
    Suppose there are a positive integer $d$ and a primitive string $Q$\footnote{Access to the string \( Q \) is provided through another string \( Q' \) and a rotation \( r \), such that \( Q = \text{rot}^{r}(Q') \). For simplicity, throughout this paper, we omit the maintained indices when accessing strings derived from rotated versions.}
    with $|Q| \leq m/8d$, $\hd(P,Q^*) \leq d$,
    and $\hd(T,\rot^{a}(Q)^*) \leq 4d$ for some $a \in \fragmentco{0}{|Q|}$.
    Then,  $\OccH_{5d}(P,T) = \{a + b \cdot |Q| \mid b \in \mathbb{Z}\} \cap \fragment{0}{n-m}$.
\end{lemma}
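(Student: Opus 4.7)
The plan is to prove the two inclusions of the claimed equality separately via iterated triangle inequalities for Hamming distance, together with the periodicity identity $\rot^a(Q)^\infty\fragmentco{i}{i+m} = Q^\infty\fragmentco{c}{c+m}$ for $c \coloneqq (i-a) \bmod |Q|$, which converts every length-$m$ window of $\rot^a(Q)^\infty$ into a length-$m$ window of $Q^\infty$ by exploiting the period $|Q|$.

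For the easy inclusion $\supseteq$, I would fix $i = a + b|Q| \in \fragment{0}{n-m}$; since $i - a$ is a multiple of $|Q|$, the identity gives $\rot^a(Q)^\infty\fragmentco{i}{i+m} = Q^\infty\fragmentco{0}{m}$, and two applications of the triangle inequality yield
\[\hd(P, T\fragmentco{i}{i+m}) \le \hd(P, Q^\infty\fragmentco{0}{m}) + \hd(Q^\infty\fragmentco{0}{m}, T\fragmentco{i}{i+m}) \le d + 4d = 5d,\]
using the hypotheses $\hd(P, Q^*) \le d$ and $\hd(T, \rot^a(Q)^*) \le 4d$ restricted to the window at position $i$.

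For the harder inclusion $\subseteq$, I would fix $i \in \OccH_{5d}(P,T)$, let $c \coloneqq (i-a) \bmod |Q|$, and aim to show $c=0$. Chaining the triangle inequality through $P$ and $T\fragmentco{i}{i+m}$ gives
\[\hd(Q^\infty\fragmentco{0}{m}, Q^\infty\fragmentco{c}{c+m}) \le d + 5d + 4d = 10d.\]
Partitioning the $m$ compared positions by residue modulo $|Q|$, each residue class $r$ contributes either $0$ or its full size to this mismatch count, according to whether $Q\position{r} = Q\position{(r+c) \bmod |Q|}$ or not. Since $|Q| \le m/(8d)$, every class has at least $\lfloor m/|Q| \rfloor \ge 8d$ positions, so a contradiction $16d \le 10d$ will follow once I show that the cyclic Hamming distance $h(c) \coloneqq |\{r \in \fragmentco{0}{|Q|} : Q\position{r} \ne Q\position{(r+c) \bmod |Q|}\}|$ satisfies $h(c) \ge 2$.

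The main obstacle is precisely this lower bound $h(c) \ge 2$: primitivity of $Q$ by itself only yields $h(c) \ge 1$ (since $h(c)=0$ would mean $\rot^c(Q) = Q$). I plan to upgrade the bound with a cycle-parity argument. The shift-by-$c$ map $r \mapsto (r+c) \bmod |Q|$ partitions $\fragmentco{0}{|Q|}$ into orbits, each of size $|Q|/\gcd(c,|Q|)$. Within any single orbit, viewed as a closed cycle, the count of transitions (indices $r$ in the orbit with $Q\position{r} \ne Q\position{(r+c) \bmod |Q|}$) is either $0$ or at least $2$, because a single transition could not let the walk return to its starting symbol after one full loop around the orbit. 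Since primitivity rules out the all-transition-free case, some orbit contributes at least $2$ to $h(c)$, giving $h(c) \ge 2$ and hence the desired contradiction that forces $c = 0$ and completes the proof.
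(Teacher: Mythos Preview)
Your proof is correct. Both directions go through as you describe, and the orbit-parity argument for $h(c)\ge 2$ is a clean way to establish the key fact that a primitive string differs from every non-trivial cyclic rotation in at least two positions.

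Your route differs somewhat from the paper's. For the inclusion $\subseteq$, the paper argues combinatorially: it counts at least $7d$ exact copies of $Q$ inside $P$ (at positions divisible by $|Q|$), observes that when $P$ is placed at an off-phase position $x$ in $T$, at most $4d$ of the corresponding length-$|Q|$ windows of $T$ can fail to equal the expected rotation $\rot^{a-x}(Q)$, and hence at least $3d$ copies of $Q$ are aligned against a genuine non-trivial rotation of $Q$; each such pair contributes $\ge 2$ mismatches, giving $\ge 6d>5d$. You instead pass globally to $Q^\infty$ via three triangle inequalities, obtaining $\hd(Q^\infty\fragmentco{0}{m},Q^\infty\fragmentco{c}{c+m})\le 10d$, and then lower-bound this directly by $h(c)\cdot\lfloor m/|Q|\rfloor\ge 2\cdot 8d=16d$. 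Both approaches rely on the same $h(c)\ge 2$ fact (which the paper invokes without proof), but your version avoids the block-counting and is arguably more streamlined; the paper's version yields a slightly sharper threshold ($6d$ versus your $16d$ against $10d$), though this slack is irrelevant for the stated lemma.
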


\begin{proof}
    Write $C\coloneqq  \{a + b \cdot |Q| \mid b \in \mathbb{Z}\} \cap \fragment{0}{n-m}$.

    To prove $\OccH_{5d}(P,T) \subseteq C$, consider \( x \in \fragment{0}{n-m}\setminus C \).
    From $m/8d \geq |Q| \geq 1$, we obtain that \( P \) contains at least
    $\floor{{(m-d)}/{|Q|}} \geq \floor{8d - 8d^2/m} \geq 7d$ occurrences of \( Q \) that
    start at positions that are divisible by $|Q|$.
    When $P$ is aligned with $T\fragmentco{x}{x+m}$, at least \( 7d - 4d=3d \) full
    occurrences of \( Q \) are aligned with a cyclic rotation of \( Q \) that is not \( Q
    \) itself.
    Since \( Q \) is a primitive string, each of these copies has two mismatches,
    resulting in \( \hd(P, T\fragmentco{x}{x+m}) \geq 6d > 5d \), and thus $x\notin
    \OccH_{5d}(P,T)$.

    To prove $C \subseteq \OccH_{5d}(P,T)$, observe that for any \( x \in C \), we have \(
    \hd(T\fragmentco{x}{x+m}, Q^*) \leq 4d \). By the triangle inequality, we obtain \[
        \hd(P, T\fragmentco{x}{x+m}) \leq \hd(T\fragmentco{x}{x+m}, Q^*) + \hd(P, Q^*)
    \leq 4d + d \leq 5d, \]
    completing the proof.
\end{proof}

When the distance $\hd(P,Q^*)$ is not too small, we can also identify a candidate set $C$ of size $|C| = \Ohtilde(k)$.

\begin{lemma}
    \label{lem:periodiccase_hd_small_c}
    Let $P$ denote a pattern of length $m$, let $T$ denote a text of length $n$, and let
    $k > 0$ denote an integer threshold.
    Suppose there are a positive integer $d \geq 2k$ and a primitive string $Q$ with $|Q| \leq m/8d$, $\hd(P,Q^*) = d$,
    and $\hd(T,\rot^{a}(Q)^*) \leq 4d$ for some known $a \in \fragmentco{0}{|Q|}$.

    Then, there is a quantum algorithm that outputs a candidate set $\OccH_k(P,T) \subseteq C
    \subseteq \fragment{0}{n-m}$ such that $|C| = \Ohtilde(d)$ using $\Ohtilde(\sqrt{dm})$
    quantum time.
\end{lemma}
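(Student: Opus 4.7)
The plan is to combine the structural observation that every $k$-mismatch occurrence is period-aligned with a simple sampling argument on mismatch sets. Since $k \le d/2 \le 5d$, \cref{lem:periodiccase_hd_progression} already gives $\OccH_k(P,T) \subseteq \mA \coloneqq \{a+b|Q| : b\in\mathbb{Z}\}\cap \fragment{0}{n-m}$, so the task reduces to selecting $\Ohtilde(d)$ representatives from $\mA$. As a preparation, one would apply \cref{lem:find_mismatches} twice---once to $P$ against the length-$m$ prefix of $Q^*$ with threshold $d$, once to $T$ against $\rot^a(Q)^*$ with threshold $4d$---to obtain the sets $M_P \coloneqq \Mis(P,Q^*)$ of size $d$ and $M_T \coloneqq \Mis(T,\rot^a(Q)^*)$ of size at most $4d$. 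Oracle access to the virtual strings $Q^*$ and $\rot^a(Q)^*$ is realized by folding indices modulo $|Q|$ together with the stored rotation offset, so each invocation uses $\Ohtilde(\sqrt{dm})$ quantum time and queries (using $n=\Oh(m)$).

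The combinatorial heart of the argument is the following. For a period-aligned $x\in\mA$, set $M_T^x \coloneqq \{i\in \fragmentco{0}{m} : x+i\in M_T\}$. Because $x = a+b|Q|$, one has $Q^*\position{i} = \rot^a(Q)^*\position{x+i}$ for every $i\in \fragmentco{0}{m}$; hence at any $i\in M_P \triangle M_T^x$, exactly one of $P\position{i}$ and $T\position{x+i}$ coincides with this common value, forcing $P\position{i}\ne T\position{x+i}$. Consequently $\hd(P,T\fragmentco{x}{x+m}) \ge |M_P\setminus M_T^x|$, and any $x\in \OccH_k(P,T)$ must satisfy $|M_P\cap M_T^x| \ge d-k \ge d/2$. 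In particular, a uniformly random $i\in M_P$ satisfies $x+i\in M_T$ with probability at least $1/2$, in which case $x = j-i$ for some $j\in M_T$.

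Based on this observation, the classical post-processing is straightforward: draw $\Theta(\log n)$ independent uniform samples $i\in M_P$, form $C_i \coloneqq \{j-i : j\in M_T\} \cap \mA$ for each sample, and return $C \coloneqq \bigcup_i C_i$. By the observation and a union bound over the at most $n$ candidate positions, $\OccH_k(P,T)\subseteq C$ holds with probability $1-1/\poly(n)$, while $|C|\le \sum_i |C_i| = \Oh(d\log n) = \Ohtilde(d)$. The post-processing spends $\Ohtilde(d)$ classical time after $M_P$ and $M_T$ are known, so the two invocations of \cref{lem:find_mismatches} dominate and yield the claimed $\Ohtilde(\sqrt{dm})$ quantum time bound. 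I expect no serious obstacle; the only subtlety is to verify that the virtual-oracle construction for $Q^*$ and $\rot^a(Q)^*$ integrates cleanly with the quantum primitive of \cref{lem:find_mismatches}, which is routine.
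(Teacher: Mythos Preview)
Your proposal is correct and follows essentially the same approach as the paper: compute $\Mis(P,Q^*)$ and $\Mis(T,\rot^a(Q)^*)$ via \cref{lem:find_mismatches}, use \cref{lem:periodiccase_hd_progression} to guarantee period alignment, show that a uniformly random $i\in M_P$ satisfies $x+i\in M_T$ with probability at least $1/2$ for any $x\in\OccH_k(P,T)$ (via the same $|M_P\cap M_T^x|\ge d-k\ge d/2$ counting), and boost by $\Theta(\log n)$ independent repetitions. The only cosmetic difference is that you intersect with $\mA$ explicitly, which is harmless.
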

\begin{proof}
    Consider the following procedure.
    \begin{enumerate}[(i)]
        \item Use \cref{lem:find_mismatches} to compute $\Mis(P, Q^*)$ and $\Mis(T, \rot^{a}(Q)^*)$.
            \label{alg:periodiccase_hd_small_c:0}
        \item Select uniformly at random $y \in \Mis(P, Q^*)$.
            \label{alg:periodiccase_hd_small_c:i}
        \item Return the candidate set $C' \coloneqq \{y' - y \mid y' \in \Mis(T, \rot^{a}(Q)^*)\} \cap \fragment{0}{n-m}$.
            \label{alg:periodiccase_hd_small_c:ii}
    \end{enumerate}

    \begin{claim}\label{claim:periodiccase_hd_small_c:prob}
         Let $x \in \OccH_k(P,T)$ be arbitrary. Then, with constant probability, we have $x \in C'$.
    \end{claim}
    \begin{claimproof}
        According to \cref{lem:periodiccase_hd_progression}, the $x$ aligns the primitive
        period $Q$ of both $P$ and $T$. Formally, this means that $x$ has a residue of $a$
        modulo $|Q|$. Since the periods align, the mismatches contributing to
        $\hd(P,T\fragmentco{x}{x+m})$ can be entirely deduced by examining the mismatches
        between $P$ and $T$ with $Q^*$.
        More precisely, we have
        \[
            \Mis(P, T\fragmentco{x}{x+m}) = \Big(\Mis(P, Q^*) \cup \Mis(T\fragmentco{x}{x+m}, Q^*)\Big) \setminus M,
        \]
        where $M = \{x' \in \Mis(P, Q^*) \cap \Mis(T\fragmentco{x}{x+m}, Q^*) \mid T\position{x} = P\position{x+x'}\}$.
        From $|\Mis(P, Q^*)| = d$, we obtain
        \[
            |\Mis(P, T\fragmentco{xi}{x+m})| \geq |\Mis(P, Q^*)| - |M| = d - |M|.
        \]
        Combining this with the fact that $|\Mis(P, T\fragmentco{x}{x+m})| \leq k$, we
        obtain $k \geq d - |M|$, which rearranges to $|M| \geq d - k \geq d/2$.
        Thus, with a probability of at least $1/2$, we can select $y$ such that $y \in M$
        in \eqref{alg:periodiccase_hd_small_c:i}. If this occurs, then $x \in C'$.
    \end{claimproof}

    From \cref{claim:periodiccase_hd_small_c:prob} follows that by repeating
    $\tilde{\mathcal{O}}(1)$ times independently the procedure, and setting $C$ to be the
    union of the candidate sets, we
    obtain with high probability a set $\OccH_k(P,T) \subseteq C \subseteq
    \fragment{0}{n-m}$ such that $|C| = \Ohtilde(d)$.

    The quantum time required for a single execution of the procedure is dominated by
    the computation of $\Mis(P, Q^*)$ and $\Mis(T, \rot^{a}(Q)^*)$ in
    \eqref{alg:periodiccase_hd_small_c:0} requiring $\Oh(\sqrt{dm})$ time. The total
    quantum time for the algorithm differs from that of a single execution by only a
    logarithmic factor.
\end{proof}

In the next lemma, we extend
\cref{lem:periodiccase_hd_progression,lem:periodiccase_hd_small_c} to the case where there
is no guarantee that $\hd(T,\rot^{a}(Q)^*)$ is small for some $a$.

\begin{lemma}
    \label{lem:hd_compstructure}
    Let $P$ denote a pattern of length $m$,
    let $T$ denote a text of length $n \leq 3/2 \cdot m$, and let $k > 0$ denote an integer threshold.
    Suppose there are a positive integer $d \geq 2k$ and a primitive string $Q$ with $|Q|
    \leq m/8d$ and $\hd(P,Q^*) \leq d$.

    Then, there is a quantum algorithm that outputs a candidate set $\OccH_k(P,T)
    \subseteq C \subseteq \OccH_{5d}(P,T)$ such that $C$ forms an arithmetic progression.
    Whenever $\hd(P,Q^*) = d$, the algorithm may instead output a set
    $\OccH_k(P,T) \subseteq C \subseteq \fragment{0}{n-m}$ such that $C = \Ohtilde(d)$.
    The algorithm requires $\Ohtilde(\sqrt{dm})$ quantum time.
\end{lemma}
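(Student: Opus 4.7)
\emph{Plan.} The plan is to reduce the lemma to the preconditions of \cref{lem:periodiccase_hd_progression} (and \cref{lem:periodiccase_hd_small_c} in the special case $\hd(P,Q^*) = d$) by identifying, in $\Ohtilde(\sqrt{dm})$ quantum time, a rotation index $a \in \fragmentco{0}{|Q|}$ witnessing $\hd(T,\rot^a(Q)^*) \le 4d$, or else certifying $\OccH_k(P,T) = \emptyset$.

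\emph{Core alignment.} I would set $T_{\textup{core}} \coloneqq T\fragmentco{n-m}{m}$, which has length at least $m/2$ because $n \le 3m/2$ and is contained in every window $T\fragmentco{x}{x+m}$ for $x \in \fragment{0}{n-m}$. For any $x \in \OccH_k(P,T)$, the triangle inequality gives $\hd(T\fragmentco{x}{x+m}, \rot^{a}(Q)^*) \le d + k \le 3d/2$ with $a = x \bmod |Q|$. Partitioning $T_{\textup{core}}$ into $s \coloneqq \lfloor (2m-n)/|Q| \rfloor \ge 4d$ blocks of length $|Q|$ (valid because $|Q| \le m/(8d)$), at least $s - 3d/2 \ge s/2$ of these blocks coincide with a single rotation of $Q$ determined by $a$.

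\emph{Finding $a$.} I would sample $\Oh(\log n)$ blocks from the partition uniformly at random. For each sample, I verify it is a rotation of $Q$ via \cref{claim:hd_compstructure_rot} in $\Ohtilde(\sqrt{|Q|})$ quantum time, and then use \cref{thm:quantum_matching_all} to list all of its exact occurrences in $T$ in $\Ohtilde(\sqrt{dm})$ quantum time; since $Q$ is primitive, the total number of occurrences of any fixed rotation of $Q$ in $T$ is $\Oh(n/|Q|) = \Oh(d)$. The rotation $a$ is recovered from the common residue $\bmod |Q|$ of these occurrences, with high success probability after $\Oh(\log n)$ independent trials.

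\emph{Reducing to the periodic lemmas; main obstacle.} Having obtained $a$, I verify $\hd(T,\rot^a(Q)^*) \le 4d$ in $\Ohtilde(\sqrt{dm})$ quantum time via \cref{lem:find_mismatches} with threshold $4d$. If the test succeeds, \cref{lem:periodiccase_hd_progression} yields the arithmetic progression $C = \{a + b|Q|\} \cap \fragment{0}{n-m}$ and, when $\hd(P,Q^*) = d$, \cref{lem:periodiccase_hd_small_c} yields a set of size $\Ohtilde(d)$. The principal obstacle is the case $\hd(T,\rot^a(Q)^*) > 4d$: the $n - m$ ``extra'' characters of $T$ outside $T_{\textup{core}}$ may accumulate more than $4d$ mismatches, violating the global hypothesis of the periodic lemmas. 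The plan here is to localise these extra mismatches via the leftmost and rightmost $\Oh(d)$ mismatch positions (again obtained from \cref{lem:find_mismatches}), thereby carving out a sub-interval $T\fragmentco{l}{r} \supseteq T_{\textup{core}}$ with local distance at most $4d$ to $\rot^a(Q)^*$, and restricting the output progression to positions whose length-$m$ window fits inside $\fragmentco{l}{r}$. Completeness ($\OccH_k(P,T) \subseteq C$) is preserved because any $k$-mismatch occurrence has at most $3d/2$ mismatches in its window and therefore cannot straddle the mismatch-dense tails; verifying this last claim carefully, so that the restriction throws away no genuine occurrence, is the technical heart of the proof.
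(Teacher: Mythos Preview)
Your overall strategy---find the correct rotation $a$ from blocks of a central window, then trim $T$ to a substring where the $4d$-bound of \cref{lem:periodiccase_hd_progression} holds---is exactly the paper's approach. Two points need fixing.

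First, the inequality $n/|Q| = \Oh(d)$ goes the wrong way: the hypothesis $|Q| \le m/(8d)$ gives $n/|Q| \ge 8d$, with no upper bound in terms of $d$ (take $|Q|=1$). Thus \cref{thm:quantum_matching_all} may cost $\Ohtilde(n)$, blowing the budget. Fortunately this step is superfluous: \cref{claim:hd_compstructure_rot} already returns the rotation index $b$ of the sampled block $T\fragmentco{y}{y+|Q|}$, and $y$ together with $b$ determines the global phase $a$ directly. The paper avoids sampling altogether and simply checks \emph{all} $4d$ blocks of the window $T\fragmentco{m-4d|Q|}{m}$ (total cost $\Ohtilde(d\sqrt{|Q|}) = \Ohtilde(\sqrt{dm})$), takes the rotation seen at least $4d-d-k$ times, and returns $C=\emptyset$ if no such rotation exists.

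Second, for the trimming you correctly flag as the technical heart, the paper anchors at the chosen good block $T\fragmentco{y}{y+|Q|}$ and sets $\ell$ to be the smallest index with $\hd(T\fragmentco{\ell}{y}, \rot^{a+y-\ell}(Q)^*) \le d+k$, and $r$ the largest index with $\hd(T\fragmentco{y}{r}, \rot^a(Q)^*) \le d+k$, both via \cref{lem:find_mismatches}. This yields $\hd(T\fragmentco{\ell}{r},\ldots) \le 2(d+k) \le 4d$ unconditionally, so the periodic lemmas apply even when $\OccH_k(P,T)=\emptyset$. Completeness follows because for $x \in \OccH_k(P,T)$ one has $\hd(T\fragmentco{y}{x+m}, \ldots) \le \hd(P,T\fragmentco{x}{x+m}) + \hd(P,Q^*) \le k+d$, forcing $r \ge x+m$, and symmetrically $\ell \le x$. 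Your phrase ``leftmost and rightmost $\Oh(d)$ mismatch positions'' is compatible with this, but without naming the anchor $y$ the two-sided $d+k$ budget has no reference point.
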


\begin{proof}
    Consider the following procedure.
    \begin{enumerate}[(i)]
        \item Partition $T\fragmentco{m-4d|Q|}{m}$ into $4d$ blocks of length $|Q|$.
            \label{it:alg:hd_compstructure:i}

        \item For each block, check which rotation of $Q$ it is, if any,
            using \cref{claim:hd_compstructure_rot}.
            If no rotation is found at least $4d-d-k$  times, return $C=\emptyset$.
            Otherwise, select a block $T\fragmentco{y}{y+|Q|} \substr
            T\fragmentco{m-4d|Q|}{m}$ for some $y$ such that $T\fragmentco{y}{y+|Q|} =
            \rot^a(Q)$, where $a \in \fragmentco{0}{|Q|}$ is the rotation found at least
            $4d-d-k$ times.
            \label{it:alg:hd_compstructure:ii}

        \item Extend $T\fragmentco{y}{y+|Q|}$ with up to $d+k$ mismatches in each direction.
            That is, via \cref{lem:find_mismatches} find the smallest $\ell$ such that
            $\hd(T\fragmentco{\ell}{y}, \rot^{a+y-\ell}(Q)^*) \leq d + k$,
            and similarly, the largest $r$ such that $\hd(T\fragmentco{y}{r}, \rot^a(Q)^*) \leq d + k$.
            \label{it:alg:hd_compstructure:iii}

        \item If an arithmetic progression is requested for $C$, apply \cref{lem:periodiccase_hd_progression}
            to $P,T' = T\fragmentco{\ell}{r}, k$. If instead a small $C$ is requested and
            $\hd(P,Q^*) = d$, use \cref{lem:periodiccase_hd_small_c} on $P,T' =
            T\fragmentco{\ell}{r}, k$.
            \label{it:alg:hd_compstructure:iv}
    \end{enumerate}

    Next, we demonstrate the correctness of the algorithm.

    First, observe that in Step \eqref{it:alg:hd_compstructure:iv},
    \cref{lem:periodiccase_hd_progression} and \cref{lem:periodiccase_hd_small_c} are
    always applied with valid input since
    \[
        \hd(T\fragmentco{\ell}{r}, \rot^{a+y-\ell}(Q)^*) = \hd(T\fragmentco{\ell}{y},
        \rot^{a+y-\ell}(Q)^*) + \hd(T\fragmentco{y}{r}, \rot^a(Q)^*) \leq 2d + 2k \leq 4d.
    \]
    Therefore, if $\OccH_k(P,T) = \emptyset$, the algorithm correctly returns $\emptyset$.

    Assume from now on that $\OccH_k(P,T) \neq \emptyset$. Let $x \in \OccH_k(P,T)$ be
    arbitrary. For such position $x \leq n - m \leq m/2$ holds. Our goal is to prove that
    $x \in \fragmentco{\ell}{r-m}$.
    The $k$-mismatch occurrence at position $x$ aligns $\hat{T} \coloneqq
    T\fragmentco{m-4d|Q|}{m}$ onto $\hat{P} \coloneqq P\fragmentco{m-4d|Q|-x}{m-x}$,
    meaning that $\hd(\hat{P},\hat{T}) \leq \hd(P,T\fragmentco{x}{x+m}) \leq k$.
    Note, $\hat{P}$ is well defined since $m-4d|Q|-x \geq m/2 - x \geq 0$.

    Further, we have $\hd(\hat{P}, \hat{Q}^*) \leq \hd(P,Q^*) \leq d$, where $\hat{Q}
    \coloneqq \rot^{m-4d|Q|-x}(Q)$. Consequently, $\hd(\hat{T}, \hat{Q}^*) \leq
    \hd(\hat{T}, \hat{Q}^*) + \hd(\hat{P}, \hat{T}) \leq d + k$, and at least $4d - d - k$
    blocks equal to $\hat{Q}$ in \eqref{it:alg:hd_compstructure:ii}.
    Thus, the algorithm proceeds to \eqref{it:alg:hd_compstructure:iii} and sets $a$ to $m-4d|Q|-x$.

    Finally, observe that $k$-mismatch occurrence at position $x$ aligns $\hd(T\fragmentco{y}{x+m}$
    onto $P\fragmentco{y - x}{m}$, which implies
    \begin{align*}
        \hd(T\fragmentco{y}{x+m}, \hat{Q}^*)
        &\leq \hd(T\fragmentco{y}{x+m}, P\fragmentco{y - x}{m}) + \hd(P\fragmentco{y - x}{m}, \hat{Q}^*) \\
        & \leq \hd(T, P) + \hd(P, Q^*)
        \leq d + k
    \end{align*}
    From the definition of $r$ follows $r \leq x+m$, yielding $x \leq r-m$.
    Using a symmetric argument we can show $\ell \leq x$.
    Hence, $x \in \fragmentco{\ell}{r-m}$, which concludes the proof of the correctness.

    For the complexity analysis, observe that step \eqref{it:alg:hd_compstructure:ii}
    requires $\Ohtilde(d\sqrt{|Q|}) = \Ohtilde(\sqrt{dm})$ time. Additionally, in step
    \eqref{it:alg:hd_compstructure:iii}, the calls to \cref{lem:find_mismatches} also take
    at most $\Ohtilde(\sqrt{dm})$ time. Combining these steps with the runtime of
    \cref{lem:periodiccase_hd_small_c}, we achieve the desired query and quantum time for
    the algorithm.
\end{proof}

\subsubsection{Computing a Set of Good Candidates for Occurrences}
\label{sec:three_cases}

Next, we discuss how to obtain a suitable set of good candidates in each of the three
cases of \cref{lem:hd_analyzeP}.
We start with the first case, that is, when \cref{lem:hd_analyzeP} returns breaks.

\begin{restatable}[Candidate Positions in the Presence of Breaks]{lemma}{hdbreakcase}\label{lem:hd_breakcase}
    Let $P$ denote a pattern of length $m$, let $T$ denote a text of length $n \le
    3/2\cdot m$, and let $k > 0$ denote an integer threshold.
    Further, suppose that we are given a collection of \(2k\) disjoint breaks $B_1,
    \ldots, B_{2k}$ of \(P\),
    each having period $\per(B_i)> m/128 k$ and length $|B_i| = \lfloor  m/8 k\rfloor$.

    Then, there is a quantum algorithm that using $\Ohtilde(\sqrt{km})$ queries and
    quantum time,
    computes a set $C$ such that $\OccH_k(P,T) \subseteq C \subseteq \fragment{0}{n-m}$
    and $|C|=\Ohtilde(k)$.
\end{restatable}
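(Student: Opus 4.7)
The plan is to implement the sampling-based strategy sketched in the Technical Overview. A single iteration of the algorithm would pick a break $B^\star = B_j$ uniformly at random from $B_1,\ldots,B_{2k}$ (with starting position $p_j$ in $P$), invoke \cref{thm:quantum_matching_all} to list all exact occurrences of $B^\star$ in $T$, and add $t-p_j$ to a local candidate list for every $t\in \Occ(B^\star, T)$ with $t-p_j\in \fragment{0}{n-m}$. I would repeat this sampling procedure $\Theta(\log n)$ times independently and return the union $C$ of the local candidate lists.

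The crucial quantitative input will be the bound $|\Occ(B^\star, T)|=\Oh(k)$. Because $q\coloneqq \per(B^\star)>m/(128k)$ is the \emph{smallest} period of $B^\star$, any two occurrences of $B^\star$ at positions $t_i<t_j$ in $T$ with $t_j-t_i<|B^\star|$ would force $B^\star$ to have $t_j-t_i$ as a period; by minimality of $q$, this gives $t_j-t_i\ge q$. For non-overlapping occurrences $t_j-t_i\ge |B^\star|>q$ is automatic, so consecutive occurrences of $B^\star$ in $T$ are always at least $q$ apart, yielding $|\Occ(B^\star, T)|\le n/q+1=\Oh(nk/m)=\Oh(k)$ using $n\le 3/2\cdot m$. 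Plugging $\mathsf{occ}=\Oh(k)$ and $n=\Oh(m)$ into \cref{thm:quantum_matching_all} gives a per-iteration cost of $\Ohtilde(\sqrt{\mathsf{occ}\cdot n})=\Ohtilde(\sqrt{km})$; across the $\Theta(\log n)$ iterations the total quantum time stays $\Ohtilde(\sqrt{km})$ and the output size is $|C|\le \Oh(k\log n)=\Ohtilde(k)$.

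For correctness, I would fix any $x\in \OccH_k(P,T)$ and argue that $x\in C$ with probability $1-1/\poly(n)$. By definition, $P$ differs from $T\fragmentco{x}{x+m}$ in at most $k$ positions, and since $B_1,\ldots,B_{2k}$ are pairwise disjoint, at most $k$ of them contain a mismatched position; equivalently, at least $k$ of the $2k$ breaks align exactly at position $x$. Hence a uniformly random break $B^\star$ is exactly matched at position $x$ with probability $\ge 1/2$, in which case the iteration that drew $B^\star$ inserts $x$ into its candidate list. After $\Theta(\log n)$ independent iterations the failure probability for any fixed $x$ drops below $1/\poly(n)$, and a union bound over the $\Oh(n)$ positions in $\fragment{0}{n-m}$ gives $\OccH_k(P,T)\subseteq C$ with high probability.

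The main obstacle is really the occurrence bound $|\Occ(B^\star, T)|=\Oh(k)$, which is what keeps \cref{thm:quantum_matching_all} within the desired $\Ohtilde(\sqrt{km})$ budget; everything else is a routine combination of random-sample amplification and a quantum exact-matching primitive. Fortunately this bound falls out cleanly from the break property $\per(B^\star)>m/(128k)$ together with the Standard Trick assumption $n=\Oh(m)$, so I do not expect any additional technical complications.
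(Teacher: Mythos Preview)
Your proposal is correct and matches the paper's proof essentially step for step: sample a break uniformly at random, list its exact occurrences in $T$ via \cref{thm:quantum_matching_all}, shift by the break's offset, repeat $\Theta(\log n)$ times, and take the union. The paper bounds $|\Occ(B,T)|\le \lceil n/\per(B)\rceil\le 192k$ in one line whereas you spell out why consecutive occurrences are at least $\per(B^\star)$ apart, but the argument and resulting complexity analysis are identical.
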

\begin{proof}
    Consider the following procedure.
    \begin{enumerate}[(i)]
        \item Select u.a.r. a break $B=P\fragmentco{\beta}{\beta+|B|}$ among $B_1, \ldots, B_{2k}$,
            and using \cref{thm:quantum_matching_all} compute $\Occ(B, T)$.
        \item Return the candidate set $C' =(\Occ(B,T)- \beta) \cap \fragment{0}{n - m}$.
    \end{enumerate}

    Fix a $k$-mismatch occurrence of $P$ in $T$.
    With probability at least $1/2$, the selected break $B$ does
    not contain any mismatch in the $k$-mismatch occurrence,
    meaning that its starting position must be contained in $C'$.
    Note that $|C'| \leq |\Occ(B,T)|\le \lceil n/\per(B)\rceil \le 192k$
    because $\per(B) > m/128k$ and $n\leq 3/2 \cdot m$.
    By repeating the procedure $\mathcal{O}(\log n)$ times independently
    and returning the union of the candidate sets, we
    obtain with high probability a set $C$ as described in the statement of the lemma.

    The quantum time required for a single execution of the procedure is dominated by
    the computation of $\Occ(B,T)$ using \cref{thm:quantum_matching_all}, which takes
    $\Ohtilde(\sqrt{km})$ quantum time. The total quantum time for the algorithm differs
    from that of a single execution by only a logarithmic factor.
\end{proof}

We continue with the second case, that is, when \cref{lem:hd_analyzeP} returns repetitive
regions.

\begin{restatable}[Candidate Positions in the Presence of Repetitive Regions]{lemma}{hdrepregion}\label{lem:hd_repregion}
    Let $P$ denote a pattern of length $m$, let $T$ denote a text of length $n \le
    3/2\cdot m$, and let $k > 0$ denote an integer threshold.
    Further, suppose that we are given
    a collection of disjoint repetitive regions $R_1,\ldots, R_{r}$
    of~total length $\sum_{i=1}^r |R_i| \ge 3/8 \cdot m$ such
    that each region $R_i$ satisfies
    $|R_i| \ge m/8 k$ and has a primitive \emph{approximate period} $Q_i$
    with $|Q_i| \le m/128 k$ and $\hd(R_i,Q_i^*) = \ceil{8 k/m\cdot |R_i|}$.

    Then, there is a quantum algorithm that using $\Ohtilde(\sqrt{km})$ queries and
    quantum time,
    computes a set $C$ such that $\OccH_k(P,T) \subseteq C \subseteq \fragment{0}{n-m}$
    and $|C|=\Ohtilde(k)$.
\end{restatable}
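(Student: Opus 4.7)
The strategy imitates \cref{lem:hd_breakcase}, but samples a repetitive region $R_i$ with probability proportional to its mismatch count $d_i \coloneqq \hd(R_i, Q_i^*) = \ceil{8k|R_i|/m}$, rather than uniformly. Let $D \coloneqq \sum_{i=1}^r d_i$; the hypothesis $\sum_i |R_i| \ge 3m/8$ yields $D \ge 3k$, and disjointness together with $|R_i| \ge m/8k$ forces $r \le 8k$, hence $D = \Oh(k)$. Each sampling round picks $R_i$ with probability $p_i \coloneqq d_i/D$. For a fixed $\pi \in \OccH_k(P,T)$, writing $m_i(\pi)$ for the number of mismatches of $\pi$ lying inside $R_i$, I have $\sum_i m_i(\pi) \le k$, so a Markov-type computation gives
\[
\pr{m_i(\pi) > d_i/2} \;=\; \sum_{i \,:\, m_i(\pi) > d_i/2} \frac{d_i}{D} \;\le\; \sum_i \frac{2 m_i(\pi)}{D} \;\le\; \frac{2k}{D} \;\le\; \frac{2}{3},
\]
so with constant probability the sample satisfies $m_i(\pi) \le d_i/2$, which is the local threshold used below.

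For the sampled $R_i$, I invoke \cref{lem:hd_compstructure} with pattern $R_i$, approximate period $Q_i$, and local threshold $k' \coloneqq d_i/2$. The preconditions $d_i \ge 2k'$ and $|Q_i| \le |R_i|/(8d_i)$ follow from the given bounds $|R_i| \ge m/8k$, $|Q_i| \le m/128k$, and $d_i \le 16k|R_i|/m$. The relevant text window is $T\fragmentco{s_i}{s_i + (n-m) + |R_i|}$, where $s_i$ is the starting offset of $R_i$ in $P$; it has length $\le 3m/2$, which exceeds the $3|R_i|/2$ ratio required by \cref{lem:hd_compstructure}. I therefore cover it with $\Oh(m/|R_i|) = \Oh(k)$ overlapping chunks of length $3|R_i|/2$ and stride $|R_i|/2$, so that every $|R_i|$-long alignment of $R_i$ lies entirely inside some chunk. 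Since $\hd(R_i, Q_i^*) = d_i$ holds with strict equality, I invoke the ``small-set'' branch, yielding $\Ohtilde(d_i)$ candidate alignments per chunk; each candidate $c$ translates to a candidate starting position $x \coloneqq c - s_i$ of $P$ in $T$, which I intersect with $\fragment{0}{n-m}$.

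I repeat the whole sampling $\Oh(\log n)$ times and take the union; by standard boosting every element of $\OccH_k(P,T)$ is captured w.h.p. The total size is $\Oh(\log n) \cdot \Oh(m/|R_i|) \cdot \Ohtilde(d_i) = \Ohtilde(k)$, using $d_i/|R_i| = \Theta(k/m)$. For quantum time, each chunk invocation of \cref{lem:hd_compstructure} costs $\Ohtilde(\sqrt{d_i |R_i|})$, so one sample costs $\Ohtilde((m/|R_i|)\sqrt{d_i |R_i|}) = \Ohtilde(m\sqrt{d_i/|R_i|}) = \Ohtilde(\sqrt{km})$, and the total remains $\Ohtilde(\sqrt{km})$ up to logarithmic factors.

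The main obstacle is calibrating the local threshold $k' = d_i/2$ so that the precondition $d_i \ge 2k'$ of \cref{lem:hd_compstructure} \emph{and} the sampling success both hold simultaneously, despite the mismatches of a given $\pi$ being distributed adversarially across regions. The two-sided estimate $D = \Theta(k)$ is what makes the Markov argument tight, and the algebraic identity $d_i/|R_i| = \Theta(k/m)$ is exactly what renders the per-sample cost and per-sample candidate count uniform across regions, independently of the length of the sampled $R_i$.
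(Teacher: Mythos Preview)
Your proposal is correct and follows essentially the same approach as the paper: sample a repetitive region with length-weighted probability, apply the Standard Trick to tile the relevant text window with $\Oh(m/|R_i|)$ chunks of length $\le 3|R_i|/2$, and invoke \cref{lem:hd_compstructure} on each chunk in its small-set branch (available because $\hd(R_i,Q_i^*)=d_i$). The only cosmetic differences are that the paper samples with probability proportional to $|R_i|$ rather than $d_i$ (equivalent up to constants since $d_i=\Theta(k|R_i|/m)$), sets the local threshold to $\kappa=\lfloor 4k|R_i|/m\rfloor$ rather than $d_i/2$, and cites \cite[Lemma~6.8]{JN23} for the constant-probability bound instead of giving your self-contained Markov computation.
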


\begin{proof}
    First, we prove a useful claim.

    \begin{claim} \label{claim:hd_repregions_standard_trick}
        Let $R \in \{R_1, \ldots, R_{r}\}$ be a repetitive region. Then, we can compute a
        candidate set $\OccH_{\kappa}(R, T) \subseteq C_R$ of size $|C_R| = \Ohtilde(k)$,
        where $\kappa = \floor{4k/m \cdot |R|}$.
        This requires $\Ohtilde(\sqrt{km})$ queries and quantum time.
    \end{claim}

    \begin{claimproof}
        We combine the Standard Trick with \cref{lem:hd_compstructure}.
        Divide $T$ into $b = \Oh(m/|R|)$ contiguous blocks of length $|R|/2$ (the last block might be shorter),
        and iterate over all blocks.
        When iterating over the $i$-th block of the form $T\fragmentco{x_i}{\min(x_i + |R|/2, |T|)}$,
        consider the segment $S_i = T\fragmentco{x_i}{\min(x_i + 3/2 \cdot |R|, |T|)}$ and
        use \cref{lem:hd_compstructure} on $R$, $S_i$, $\kappa$, $\hd(R,Q^*)$ in the role
        of $P,T,k$ and $d$, respectively.
        As $d = \hd(R,Q^*)$, we may choose to retrieve a candidate set $\OccH_{\kappa}(R,
        S_i) \subseteq C_i$ of size $\Ohtilde(\kappa)$ from \cref{lem:hd_compstructure}.
        After processing all blocks, combine the candidate sets from each block to form
        $C_R = \bigcup_{i=1}^{b} C_i$ of size $|C_R| = \Ohtilde(m / |R| \cdot \kappa) =
        \Ohtilde(k)$.
        The overall quantum time is $\Ohtilde(m/|R| \cdot \sqrt{k/m} \cdot |R|) =
        \Ohtilde(\sqrt{km})$.
    \end{claimproof}

    Next, consider the following procedure.
    \begin{enumerate}[(i)]
        \item Select a repetitive region $R=P\fragmentco{\rho}{\rho+|R|}$ among $R_1,
            \ldots, R_{r}$,
            with probability proportional to their length, that is, $\pr{R = R_i} =
            |R_i|/\sum_{(R_i, Q_i) \in \mathcal{R}} |R_i|$.
        \item Use \cref{claim:hd_repregions_standard_trick} compute a set
            $\OccH_{\kappa}(R, T)\subseteq C_R \subseteq \fragment{0}{n-m}$ of size
            $\Ohtilde(k)$, where $\kappa = \floor{4k/m \cdot |R|}$.
        \item Return the candidate set $C' = (C_R - \rho) \cap \fragment{0}{n - m}$
    \end{enumerate}

    Fix a $k$-mismatch occurrence of $P$ in $T$.
    In \cite[Lemma~6.8]{JN23}, it is shown that with probability at least $1/16$, out of the $k$ mismatches,
    no more than $\kappa$ of them are located in $R$. This means that the starting position
    of the fixed $k$-mismatch occurrence is contained in $C'$.
    Similarly to \cref{lem:hd_breakcase}, we repeat $\tilde{\mathcal{O}}(1)$ times
    independently of the procedure and set $C$ to the union of the candidate sets.

    The quantum time required for a single execution of the procedure is dominated by the
    computation of $C_R$ using \cref{claim:hd_repregions_standard_trick}, which takes
    $\Ohtilde(\sqrt{km})$ quantum time. The total quantum time for the algorithm differs
    from that of a single execution by only a logarithmic factor.
\end{proof}

We continue with the last case, that is, when \cref{lem:hd_analyzeP} returns a (rotation
of an) approximate period.

\begin{restatable}[Candidate Positions in the Approximately Periodic Case]{lemma}{hdapproxpercase}\label{lem:hd_approxpercase}
    Let $P$ denote a pattern of length $m$, let $T$ denote a text of length $n \le
    3/2\cdot m$, and let $k > 0$ denote an integer threshold.
    Suppose that we are given (a rotation of) a primitive \emph{approximate period} $Q$ of \(P\), with $|Q|\le
    m/128 k$ and $\hd(P,Q^*) < 8 k$.

    Then, there is a quantum algorithm that using $\Ohtilde(\sqrt{km})$ queries and
    quantum time,
    computes a set $C$ such that $\OccH_k(P,T) \subseteq C \subseteq \fragment{0}{n-m}$
    and
    \begin{itemize}
        \item $|C|=\Ohtilde(k)$, \textbf{or}
        \item $C$ forms an arithmetic progression and $\OccH_k(P,T) \subseteq C \subseteq \OccH_{10k}(P,T)$.
            \qedhere
    \end{itemize}
\end{restatable}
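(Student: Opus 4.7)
The plan is to reduce directly to \cref{lem:hd_compstructure}. As a first step, I would determine $d \coloneqq \hd(P, Q^*)$ exactly; since the hypothesis guarantees $d < 8k$, a single invocation of \cref{lem:find_mismatches} applied to $P$ and (oracle access to) $Q^{\infty}\fragmentco{0}{m}$ with threshold $8k - 1$ suffices and takes $\Ohtilde(\sqrt{km})$ quantum time. An oracle for $Q^{\infty}\fragmentco{0}{m}$ is implemented from the succinct description of $Q$ supplied by the analysis step.

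Next, I would split into two cases according to the computed $d$. If $d < 2k$, I invoke \cref{lem:hd_compstructure} on $(P,T,k)$ with auxiliary parameter $d' \coloneqq 2k$. Its hypotheses are met: $d' \ge 2k$, $|Q| \le m/(128k) \le m/(8 d')$, and $\hd(P, Q^*) = d < d'$. Requesting the arithmetic-progression output yields a set $C$ with $\OccH_k(P,T) \subseteq C \subseteq \OccH_{5 d'}(P,T) = \OccH_{10k}(P,T)$, matching the second alternative in the statement. Otherwise $2k \le d < 8k$, and I invoke \cref{lem:hd_compstructure} with auxiliary parameter equal to $d$ itself. The hypotheses again hold: $d \ge 2k$, and $d < 8k$ gives $|Q| \le m/(128k) \le m/(8d)$. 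Because the equality $\hd(P, Q^*) = d$ is met by construction, I may request the small-set alternative, which returns $C$ with $|C| = \Ohtilde(d) = \Ohtilde(k)$, matching the first alternative.

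Both branches run in $\Ohtilde(\sqrt{dm}) = \Ohtilde(\sqrt{km})$ quantum time, and the initial mismatch-counting step adds another $\Ohtilde(\sqrt{km})$, so the overall budget is met.

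The one conceptual point worth flagging is the case split itself: \cref{lem:hd_compstructure} can only return a small set when its auxiliary parameter satisfies both $d \ge 2k$ and the equality $\hd(P, Q^*) = d$; these cannot be reconciled simultaneously when $\hd(P, Q^*) < 2k$. The saving grace is that the statement of \cref{lem:hd_approxpercase} explicitly permits a weaker arithmetic-progression answer with inclusions into $\OccH_{10k}(P,T)$, which is exactly what inflating to $d' = 2k$ provides. No other obstacle arises; everything else is routine bookkeeping of the approximate-period hypotheses.
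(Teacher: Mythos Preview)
Your proposal is correct and essentially identical to the paper's proof: both reduce to \cref{lem:hd_compstructure} with parameter $\max(\hd(P,Q^*),2k)$, requesting the small-set output when $\hd(P,Q^*)\ge 2k$ and the arithmetic-progression output otherwise. You are simply more explicit than the paper about first computing $\hd(P,Q^*)$ via \cref{lem:find_mismatches}, but the argument is the same.
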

\begin{proof}
    It suffices to use \cref{lem:hd_compstructure} on $P, T, k$, and $\max(\hd(P,Q^*), 2k)$ in the role of $d$.
    Whenever $\hd(P,Q^*) \geq 2k$, then $d = \hd(P,Q^*)$, and we ask the algorithm of \cref{lem:hd_compstructure}
    to return a candidate set of size $\Ohtilde(k)$. Otherwise, we get as candidate set an arithmetic progression.
\end{proof}

Finally, we combine
\cref{lem:hd_breakcase,lem:hd_repregion,lem:hd_approxpercase,lem:hd_analyzeP} to obtain
\cref{lem:qhd_analyze}.

\qhdanalyze
\begin{proof}
    We use \cref{lem:hd_analyzeP} to determine which case of \cref{prp:I} applies.
    Then, the claim follows from \cref{lem:hd_breakcase,lem:hd_repregion,lem:hd_approxpercase}.
\end{proof}

Finally, it is instructive to strengthen \cref{lem:qhd_analyze} by using
\cref{lem:quantum_gap_hd}.

\begin{restatable}{corollary}{qhdpostanalyze}\label{cor:qhd_postanalyze}
    Let $P$ denote a pattern of length $m$, let $T$ denote a text of length $n \le
    3/2\cdot m$, and let $k > 0$ denote an integer threshold.

    Then, there is a quantum algorithm that using $\Ohtilde(\sqrt{km})$ queries and
    quantum time,
    computes a set $C$ such that
    $\OccH_k(P,T) \subseteq C \subseteq \OccH_{10k}(P,T)$
    and
    \begin{itemize}
        \item $|C|=\Ohtilde(k)$, \textbf{or}
        \item $C$ forms an arithmetic progression.
            \qedhere
    \end{itemize}
\end{restatable}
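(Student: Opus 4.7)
The plan is to apply \cref{lem:qhd_analyze} first to obtain a candidate set $C$, and then, in the case $|C| = \Ohtilde(k)$, further filter $C$ using the gap Hamming distance algorithm of \cref{lem:quantum_gap_hd} so that every retained position lies in $\OccH_{10k}(P,T)$. The arithmetic-progression case already satisfies the stronger containment stated in \cref{lem:qhd_analyze}, so nothing needs to be done in that case.

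For the remaining case, where $|C| = \Ohtilde(k)$ and only the weak containment $C \subseteq \fragment{0}{n-m}$ is guaranteed, I would do the following for each $c \in C$. Invoke \cref{lem:quantum_gap_hd} on the pair $(P,\, T\fragmentco{c}{c+m})$ with threshold $5k$, boosted by $\Oh(\log n)$ independent repetitions and majority voting, so that the error probability at each position is at most $1/n^{\Oh(1)}$. By the guarantees of \cref{lem:quantum_gap_hd}, with high probability:
\begin{itemize}
    \item if $\hd(P,\, T\fragmentco{c}{c+m}) \le 5k$ (which covers every $c \in \OccH_k(P,T)$, since $k \le 5k$), the boosted test answers \yes, and
    \item if $\hd(P,\, T\fragmentco{c}{c+m}) > 10k$, the boosted test answers \no.
\end{itemize}
Retain in $C$ exactly the positions on which the boosted test answers \yes. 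A union bound over the $\Ohtilde(k)$ positions shows that, with high probability, we preserve $\OccH_k(P,T) \subseteq C$ while removing every $c$ with $\hd(P,T\fragmentco{c}{c+m}) > 10k$, yielding the desired containment $\OccH_k(P,T) \subseteq C \subseteq \OccH_{10k}(P,T)$.

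For the complexity, a single boosted gap test costs $\Ohtilde(\sqrt{m/k})$ quantum time and queries by \cref{lem:quantum_gap_hd}. Since $|C| = \Ohtilde(k)$, the total cost of the filtering step is $\Ohtilde(k) \cdot \Ohtilde(\sqrt{m/k}) = \Ohtilde(\sqrt{km})$, which combined with the $\Ohtilde(\sqrt{km})$ cost of \cref{lem:qhd_analyze} stays within the claimed bound.

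I do not expect any real obstacle here: the construction is essentially mechanical once \cref{lem:qhd_analyze,lem:quantum_gap_hd} are in place. The only subtlety is calibrating the threshold to $5k$ so that both sides of the gap guarantee (thresholds $k$ and $2\cdot 5k = 10k$) line up with the requirements of the statement, and making sure that the boosted error probability is small enough for the union bound over the $\Ohtilde(k)$ candidates to go through without affecting the overall success probability.
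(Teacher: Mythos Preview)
Your proposal is correct and takes essentially the same approach as the paper: invoke \cref{lem:qhd_analyze}, pass through the arithmetic-progression case unchanged, and in the $|C|=\Ohtilde(k)$ case filter each candidate with a boosted application of \cref{lem:quantum_gap_hd}. The only cosmetic difference is that the paper calls \cref{lem:quantum_gap_hd} with threshold $k$ (so the gap is $k$ versus $2k$, which already yields $C\subseteq\OccH_{2k}(P,T)\subseteq\OccH_{10k}(P,T)$), whereas you use threshold $5k$; both choices work and give the same $\Ohtilde(\sqrt{km})$ bound.
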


\begin{proof}
    We begin by using \cref{lem:qhd_analyze} to identify the set \( C \) as described
    there. If \( C \) falls into the latter case (as defined in \cref{lem:qhd_analyze}),
    no further action is needed. Otherwise, we verify each \( x \in C \) by applying
    \cref{lem:quantum_gap_hd} \( \Ohtilde(1) \) times on \( P \), \( T\fragmentco{x}{x+m}
    \), and \( k \). We use majority voting: if we obtain more \yes responses than
    \no responses, we keep \( x \) in \( C \); otherwise, we discard \(x\).

    Standard concentration bounds ensure that we always keep \( x \) if \( x \in
    \OccH_k(P, T) \) and discard \( x \) if \( x \in \fragment{0}{n-m} \setminus
    \OccH_{10k}(P, T) \). This shows the algorithm’s correctness.

    As for running time, verifying each \( x \in C \) takes \( \Ohtilde(k \sqrt{m/k})  = \Ohtilde(\sqrt{km})\) time.
\end{proof}

\subsection{Step 2: From Candidate Positions to Substring Equations}
\label{sec:proxy_strings}

In the next step, we convert the set of good candidates for occurrences from the previous
section into a system of substring equations.
To that end, it is instructive to recall and summarize the structural results from
\cite{CKP19}.

\subsubsection{The Communication Complexity of Pattern Matching with Mismatches}
\label{sec:comm_comp_hd}

In this section, we present the communication complexity results for pattern matching with
mismatches from \cite{CKP19} and reformulate them in a way that is more practical for our
algorithmic applications.
To this end, we fix a threshold \( k \), consider two strings \( P \) and \(T\) whose
lengths $m$ and $n$ satisfy a mild assumption \( n \le 2m\), which is justified by the
Standard Trick.

We begin by gaining a deeper understanding of the information carried by an arbitrary
subset \( S \subseteq \OccH_k(P, T)\). To this end, we define the (mismatch) inference graph \(\bG_{S}\) for
such a subset.

\mismgs

We start with an easy but useful lemma that helps us understand the structure of inference
graphs.

\begin{lemma} \label{lem:hd_period}
    For any positive integers \(m \le n \le 2m\) and any set of integers
    $\{0,n-m\} \subseteq S \subseteq  \fragment{0}{n-m}$, write $g \coloneqq \gcd(S)$.
    \begin{itemize}
        \item The inference graph \(G_{S}\) has \(g\) connected components.
        \item For every $c \in \fragmentco{0}{g}$, vertices
            \(
                \{p_i  \mid i \equiv c \pmod{g}\} \cup \{t_i \mid i \equiv c \pmod{g}\}
            \)
            form a connected component in $G_{S}$.
    \end{itemize}
\end{lemma}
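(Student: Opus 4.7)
The plan is to identify the connected components of $G_S$ exactly with the sets $V_c$ of the statement; the ``$g$ components'' count then follows immediately.

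I would first dispatch the easy direction: every edge of $G_S$ has the form $\{p_i, t_{i+s}\}$ with $s\in S$, and since $g\mid s$ we have $i\equiv i+s\pmod g$. Hence every connected component is contained in some $V_c$, so $G_S$ has at least $g$ components.

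For the harder direction (each $V_c$ is a single component), rather than grinding through a boundary-aware Bezout argument on $\fragmentco{0}{m}$, I would reduce to \cref{fct:periodicity}. Pick a distinct symbol $\sigma_C$ for every connected component $C$ of $G_S$ and define strings $T^{\#} \in \Sigma^n$ and $P^{\#} \in \Sigma^m$ by $T^{\#}\position{j} = \sigma_{C(t_j)}$ and $P^{\#}\position{i} = \sigma_{C(p_i)}$. By construction $T^{\#}\position{j} = T^{\#}\position{j'}$ if and only if $t_j$ and $t_{j'}$ lie in the same component of $G_S$; moreover, since $0 \in S$ contributes the edge $\{p_i, t_i\}$ for every $i$, we have $P^{\#} = T^{\#}\fragmentco{0}{m}$. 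For each $s \in S$, the edges $\{p_i, t_{i+s}\}$ for $i \in \fragmentco{0}{m}$ force $P^{\#}\position{i} = T^{\#}\position{i+s}$, so $s \in \Occ(P^{\#}, T^{\#})$; in particular $\{0, n-m\} \subseteq \Occ(P^{\#}, T^{\#})$. Since $n \le 2m$, \cref{fct:periodicity} applies and yields that $\gcd(\Occ(P^{\#}, T^{\#}))$ is a period of $T^{\#}$. Because $S \subseteq \Occ(P^{\#}, T^{\#})$, that gcd divides $g = \gcd(S)$, and any positive multiple of a period that does not exceed $|T^{\#}|$ is again a period; hence $g$ itself is a period of $T^{\#}$.

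Periodicity with period $g$ means $T^{\#}\position{j} = T^{\#}\position{j+g}$ for every $j \in \fragmentco{0}{n-g}$, which by our construction is precisely the assertion that $t_j$ and $t_{j+g}$ share a connected component of $G_S$. Chaining these identifications shows that all $t$-vertices within any $V_c$ lie in a single component, and pulling the $p$-vertices in via the edges $\{p_i, t_i\}$ completes the proof. The main subtlety I anticipate is the clean setup of $T^{\#}$ so that equalities in $T^{\#}$ correspond exactly to shared components, plus the small observation that any positive multiple of a period bounded by $|T^{\#}|$ is again a period (automatic here since $g \le n - m < n$). Once those formalities are in place, everything reduces to a black-box application of \cref{fct:periodicity}, avoiding any explicit handling of boundary constraints on walks inside $\fragmentco{0}{m}$.
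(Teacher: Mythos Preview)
Your proposal is correct and follows essentially the same approach as the paper's proof: both construct strings by labeling positions with their connected-component identifiers, observe that $S\subseteq \Occ(P,T)$, invoke \cref{fct:periodicity} to obtain period $g$, and read off the component structure from periodicity combined with the fact that every edge preserves residues modulo $g$. The only cosmetic difference is that you present the ``at least $g$ components'' direction first, whereas the paper derives it last.
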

\begin{proof}
    Let us label each connected component of \(G_S\) with a unique identifier.
    For a vertex \(v \in V\), we use \(\$(v)\) to denote the label of the connected component it belongs to.
    Now, consider strings \(P\) of length \(m\) and \(T\) of length \(n\) where \[
        P\position{i} \coloneqq \$(p_i) \qquad\text{and}\qquad T\position{j} \coloneqq
        \$(t_j).
    \]
    From the definition of \(G_S\), for each \(s \in S\) and each \(i \in
    \fragmentco{0}{p}\), the vertices \(p_i\) and \(t_{i + s}\) belong to the same
    connected component, and thus \(P\position{i} = T\position{i + s}\).
    In particular, we obtain that \(S \subseteq \Occ(P, T)\), and therefore $g \coloneqq
    \gcd(S)$ is an integer multiple of $\gcd(\Occ(P,T))$.
    Our assumptions on $m$, $n$, and \(S\) allow us to use \cref{fct:periodicity} to
    conclude that $\gcd(\Occ(P,T))$ and its multiple \(g\) are periods of \(T\) and hence
    also of \(P = T\fragmentco{0}{m}\).

    In particular, we conclude that \(T\) and \(P\) in total have at most \(g\) different
    characters, which in turn implies that \(G_S\) has at most \(g\) connected components
    (as each connected component is labeled with a unique symbol).
    Further, the periodicity of \(P\) and \(T\) implies that all vertices \(p_i\) and
    \(t_j\) with \(i \equiv j \pmod{g}\) belong to the same connected component (as the
    corresponding labels of the connected components need to be equal).
    Thus, for every $c \in \fragmentco{0}{g}$, the vertices
    \(
        \{p_i  \mid i \equiv c \pmod{g}\} \cup \{t_i \mid i \equiv c \pmod{g}\}
    \)
    are indeed connected in~$G_{S}$.

    It remains to show that there is no edge between \(p_i\) and \(t_j\) for \(i \not\equiv j \pmod{g}\).
    To that end, we observe that every edge in \(G_S\) connects \(p_i\) and \(t_j\) such
    that $j = i + s$ for some $s \in S$, which in particular means that $j \equiv i
    \pmod{g}$; thereby completing the proof.
\end{proof}

Turning to inference graphs of strings, as a first observation, we immediately see that,
in the graph $\bG_S$, all characters within a black connected component are equal.
In particular, by \cref{lem:hd_period}, for storing a black component, it suffices to
store \(g = \gcd(S)\) (which is something that we store already anyway), as well as a
single character of the component.

In fact, as we are interested only in being able to check if \(P\) is at Hamming distance
at most \(k\) to some substring of \(T\) (which boils down to being able to check whether
\(P\position{i} = T\position{j}\) for given positions \(i\) and \(j\)), we do not even
need to know (and thus store) the exact character a black component is equal to.

Now, for storing red components, we intend to store only the red edges, along with the
actual characters involved. From our previous discussion of black components, we see that
storing just the red edges is enough to recover the full red components.
Indeed, we can recover exactly any character in a red component: from any such character,
there is a (potentially zero-length) path to a red edge that uses only black edges.
Since the path consists solely of black edges, all characters along the path are
identical.
By storing all characters linked to red edges, we can thus recover the character at the
beginning of the path.

In total, this motivates the following definition.

\begin{definition}\label{7.6-4}
    For a string \(P\) of length $m$ and a string \(T\) of length \(n\), and a set
    $S\subseteq \fragment{0}{n-m}$, the \emph{enhanced set of occurrences \(\eso{S}\)} is
    a set $S$ together with $\MI(P, T\fragmentco{x}{x+m})$ for every $x\in S$.
\end{definition}

From our previous discussion, we obtain the following lemma.

\begin{lemmaq}
    For a string \(P\) of length $m$ and a string \(T\) of length \(n \le 2m\), write
    \(\eso{S}\) to denote an enhanced set of occurrences.
    Then, from \(\eso{S}\), we can fully reconstruct \(\bG_S\), and in particular the set
    $C_S \subseteq \fragmentco{0}{\gcd(S)}$ of indices of the red components of \(\bG_S\).
\end{lemmaq}

\subsubsection{The Strings \(P^\#\) and \(T^\#\) as a Solution to Substring Equations}

Next, it is instructive to formally define strings that correspond to the information
stored in an enhanced set of occurrences.

\defpsh

Next, we show that the strings $P^\#$ and $T^\#$ encode enough information to adequately
compute Hamming distances.

\thmhdsubhash
\begin{proof}
    For~\eqref{it:hd_subhash:ii}, observe that if $P^\#\position{a}=T^\#\position{b}$,
    then either $p_a$ and $t_{b}$ belong to the same black connected component (and thus
    $P\position{a}=T\position{b}$) or $p_a$ and $t_{b}$ belong to red components (and thus
    $P\position{a}=P^\#\position{a}=T^\#\position{b}=T\position{b}$).
    In either case, $P^\#\position{a}=T^\#\position{b}$ implies
    $P\position{a}=T\position{b}$.

    Fix $x \in \fragment{0}{n-m}$ that is an integer multiple of $g$.
    To prove \eqref{it:hd_subhash:i}, it suffices to show that \(\MI(P\position{i},
    T\position{i + x}) = \MI(P^\#\position{i}, T^\#\position{i + x})\) holds for every $i
    \in \fragmentco{0}{m}$.
    To that end, we first observe that, since $g$ divides $x$, we have $i \equiv x + i
    \pmod{g}$.
    Hence, \cref{lem:hd_period} yields that $p_i$ and $t_{x+i}$ belong to the same
    connected component of \(\bG_S\).
    Now, if the component is red, then $P^\#\position{i} = P\position{i}$ and
    $T^\#\position{x + i} = T\position{x + i}$; yielding the claim in this case.
    Otherwise, the component is black, and thus $P\position{i} = T\position{x + i}$ and
    $P^\#\position{i} = T^\#\position{x + i}$.
    This yields $\MI(P\position{i}, T\position{i + x}) =\emptyset=
    \MI(P^\#\position{i},T^\#\position{i + x})$.
\end{proof}

\begin{remark}\label{7.6-1}
    With all the necessary components established, we are now prepared to describe how to
    encode the set $\OccH_k(P, T)$ using $\Oh(k \log n)$ space, that is, $\Oh(k \log n)$
    words in the RAM model. This encoding approach falls short by a factor of $\log n$
    compared to the single-round one-way communication complexity result for Pattern
    Matching with Mismatches from \cite{CKP19}.

    By cropping $T$ we can assume without loss of generality that $\{0, n-m\} \subseteq
    \OccH_k(P,T)$ (the case $\OccH = \empty$ can be encoded trivially in the claimed
    space).
    Given this assumption, it is not difficult to see that there is always a set of
    occurrences \(S\) such that $\{0, n-m\} \subseteq S$ and $\gcd(S) = \gcd(\OccH_k(P,
    T))$ that contains \(\Oh(\log n)\) carefully chosen occurrences.
    To demonstrate this, start with \(S = \{0, n-m\}\) and examine positions \(x\in
    \OccH_k(P, T)\) one by one.
    Either \(x\) divides \(\gcd(S)\) (in which case it is unnecessary to add $x$ to \(S\)), or
    \(x\) does not divide \(\gcd(S)\), and we add $x$ to $S$.
    In the latter case, \(\gcd(S \cup \{x\}) \leq \gcd(S) / 2\), and thus we can make at
    most \(\Oh(\log n)\) additions.

    The encoding of $\OccH_k(P, T)$ consists of \(\eso{S}\), which takes up $\Oh(k \log
    n)$ space.
    To recover $\OccH_k(P, T)$ from this encoded information, it suffices to reconstruct
    the string $P^\#$ and $T^\#$ and compute $\OccH_k(P^\#, T^\#)$. By
    \cref{thm:hd_subhash}\eqref{it:hd_subhash:ii}, we have $\OccH_k(P, T) \subseteq
    \OccH_k(P^\#, T^\#)$, and by \cref{thm:hd_subhash}\eqref{it:hd_subhash:i}, we have
    $\OccH_k(P^\#, T^\#) \subseteq\OccH_k(P, T)$. We conclude that $\OccH_k(P^\#, T^\#) =
    \OccH_k(P, T)$.
\end{remark}

Observe that \cref{7.6-1} works under the assumption that \(\OccH_k(P,T)\) is
known. However, \(\OccH_k(P,T)\) is exactly what we aim to compute!

Now, here is where we instead work with the set of candidates \(\OccH_k(P,T) \subseteq C
\subseteq \OccH_{10k}(P,T) \) from the previous section.
Using \(C\), we construct the set \(S\) (and then \(\eso{S}\)) similarly to \cref{7.6-1}, but this time by
examining positions within \(C\) instead of directly using \(\OccH_k(P,T)\).

Once we have computed a suitably large set \(\eso{S}\), we then convert the starting
positions into substring equations; which we then convert into an xSLP using
\cref{cor:solve_substring_equations}---recall that ultimately, we wish to obtain a highly
compressed representation for \(P^\#\) and \(T^\#\) that allows for fast \modelname
operations.

To that end, we first consider the following system of substring equations. Next, we then
show that \(P^\#\) and \(T^\#\) are indeed the (unique) solution (up to renaming the
placeholder characters).

\nineeighteenthree
\nineeighteenfour
\begin{proof}
    We break down the proof in two claims.
    \begin{claim}\label{claim:qhd_proxy:1}
        The strings \( (T^\#,P^\#,D) \) satisfy \( \exw{P}{T}{S} \).
    \end{claim}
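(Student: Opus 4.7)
The plan is to unpack the definitions of $\exw{P}{T}{S}$ and $P^\#$, $T^\#$, $D$ and verify each of the two kinds of equations in turn. Concretely, fix an enhanced position $(x,\MI(P,T\fragmentco{x}{x+m})) \in \eso{S}$, and treat the equations of type (\ref{lem:qhd_proxy:rule:a}) and (\ref{lem:qhd_proxy:rule:b}) separately.

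For type (\ref{lem:qhd_proxy:rule:a}), I would argue as follows. A triple $(j,\sigma_a,\sigma_b)\in \MI(P,T\fragmentco{x}{x+m})$ certifies a mismatch $P\position{j}\ne T\position{x+j}$, so the edge $\{p_j,t_{x+j}\}$ is red in $\bG_S$; hence both endpoints lie in red components and by the construction in \cref{def:psh} we have $P^\#\position{j}=P\position{j}=\sigma_a$ and $T^\#\position{x+j}=T\position{x+j}=\sigma_b$. On the other side of the equation, $D$ was defined to contain all distinct characters appearing in $\eso{S}$, and indices $a,b$ are exactly the positions witnessing $D\position{a}=\sigma_a$ and $D\position{b}=\sigma_b$. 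Therefore both substring equations $P\position{j}=D\position{a}$ and $T\position{x+j}=D\position{b}$ hold under the assignment $(T^\#,P^\#,D)$.

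For type (\ref{lem:qhd_proxy:rule:b}), fix a maximal match interval $\fragmentco{y}{y'}\subseteq \fragmentco{0}{m}\setminus \Mis(P,T\fragmentco{x}{x+m})$; I want to show $P^\#\fragmentco{y}{y'}=T^\#\fragmentco{x+y}{x+y'}$, which I will do position-by-position. For any $i\in \fragmentco{y}{y'}$ the edge $\{p_i,t_{x+i}\}$ is black, so $p_i$ and $t_{x+i}$ sit in the same connected component of $\bG_S$. If that component is black, then \cref{def:psh} assigns both $p_i$ and $t_{x+i}$ the same sentinel character (unique to that component), so $P^\#\position{i}=T^\#\position{x+i}$. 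If the component is red instead, $P^\#$ and $T^\#$ retain the original characters at these positions, and since we are in a match interval we have $P\position{i}=T\position{x+i}$, giving the required equality again.

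The argument is essentially a bookkeeping exercise once the definitions are parsed carefully; the only subtlety I anticipate is the red-component case of rule (\ref{lem:qhd_proxy:rule:b}), because one must remember that $\bG_S$ is constructed from \emph{all} enhanced occurrences in $S$, so an interval can be mismatch-free for the particular occurrence at $x$ while still being glued into a red component by other occurrences; this is harmless because we then fall back on the definition of $P^\#$, $T^\#$ as preserving the underlying characters $P\position{i}=T\position{x+i}$ in red components. I will also note in passing that $D$'s length $c$ and its positions are well-defined by \cref{def:eq_system_hd}'s enumeration of distinct characters, so the equations in (\ref{lem:qhd_proxy:rule:a}) are syntactically legal.
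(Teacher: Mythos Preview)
Your proposal is correct and follows essentially the same approach as the paper's proof: verify the two rule types separately, using that mismatch positions force red components (hence retained characters matching $D$) and that match positions yield black edges placing $p_i,t_{x+i}$ in the same component. Your case split for rule~(\ref{lem:qhd_proxy:rule:b}) into black versus red components is actually more explicit than the paper, which simply asserts that a black edge between $p_{\hat y}$ and $t_{x+\hat y}$ implies $P^\#\position{\hat y}=T^\#\position{x+\hat y}$; your observation that the component can be red due to other occurrences in $S$, and that equality then follows from $P\position{i}=T\position{x+i}$, is exactly the justification needed to unpack that step.
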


    \begin{claimproof}
        For each rule added in \eqref{lem:qhd_proxy:rule:a} of the form \( e' :
        P\position{j} = D\position{a} \), we know that the vertex \( p_j \) in \( \bG_S \)
        is in a red component, which means it is not substituted in \( P^\# \), and
        therefore \(P^\#\position{j} = D\position{a} \).
        A similar argument applies to the rules of the form \( e : T\position{x+j} =
        D\position{b} \) in \eqref{lem:qhd_proxy:rule:a}.
        On the other hand, for the rules of the form \( e : P\fragmentco{y}{y'} =
        T\fragmentco{x+y}{x+y'} \) added in \eqref{lem:qhd_proxy:rule:b}, we know that,
        for each \( \hat{y} \in \fragmentco{y}{y'} \), there is a black edge between \(
        p_{\hat{y}} \) and \( t_{x+\hat{y}} \), which implies \( P^\#\position{\hat{y}} =
        T^\#\position{x+\hat{y}} \).
        Hence, $P^\#\fragmentco{y}{y'} = T^\#\fragmentco{x+y}{x+y'}$.
    \end{claimproof}

    \begin{claim}\label{claim:qhd_proxy:2}
         Let $(T',P',D')$ be strings that satisfy $\exw{P}{T}{S}$. Then, there is a
         function $\phi$ such that $T'\position{i}=\phi(T^\#\position{i})$ holds for all
         $i\in \fragmentco{0}{n}$, $P'\position{i}=\phi(P^\#\position{i})$ holds for all
         $i\in \fragmentco{0}{m}$, and $D'\position{i}=\phi(D\position{i})$ holds for all
         $i\in \fragmentco{0}{c}$.
    \end{claim}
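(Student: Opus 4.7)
The plan is to define a letter-to-letter morphism $\phi$ on the alphabet of $(T^\#, P^\#, D)$ explicitly from the given solution and verify the three identities coordinate by coordinate. Since sentinels are introduced as fresh characters, this alphabet is the disjoint union of the $c$ pairwise distinct characters of $D$ and one sentinel per black connected component of $\bG_{C'}$. On the $D$-part, I set $\phi(D\position{a}) \coloneqq D'\position{a}$; on the sentinel part, for each black component I pick an arbitrary vertex $t_i$ (or $p_i$) in it and define $\phi$ on its sentinel to be $T'\position{i}$ (resp.\ $P'\position{i}$). The identity $D'\position{i} = \phi(D\position{i})$ then holds by definition.

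For this to be well-defined on sentinels, and for $T'\position{i} = \phi(T^\#\position{i})$ (and its $P$-analogue) to hold whenever $t_i$ lies in a black component, I argue that $(P', T')$ is constant on each black component. Every black edge $\{p_j, t_{x+j}\}$ of $\bG_{C'}$ arises from a non-mismatch position of some occurrence $x \in C'$, hence sits inside a maximal matching interval, so rule (b) of \cref{def:eq_system_hd} adds an equation forcing $P'\position{j} = T'\position{x+j}$ in any solution. By definition, a black component contains no red edges, so its internal connectivity is realized entirely by black edges, and chaining these equalities propagates a single value throughout the component.

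It remains to handle $T'\position{i} = \phi(T^\#\position{i})$ when $t_i$ lies in a red component, where $T^\#\position{i} = T\position{i}$. I take a shortest path in $\bG_{C'}$ from $t_i$ to any endpoint of a red edge of the same component; minimality forces every edge of this path to be black (if the last edge were red, the second-to-last vertex would already be a red endpoint, contradicting minimality). The matching relations carried by black edges then give that $T\position{i}$ equals the character at the red endpoint, which by rule (a) must be $D\position{a}$ for a uniquely determined $a$ (using that characters of $D$ are distinct). Running the same path in the solution $(T', P', D')$ and combining rule (b) along the black edges with rule (a) at the red endpoint yields $T'\position{i} = D'\position{a} = \phi(D\position{a}) = \phi(T^\#\position{i})$; the case of $P^\#$ is symmetric. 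The main subtlety throughout is this black-edge traversal from an arbitrary vertex of a red component to a red endpoint, and the shortest-path observation is the cleanest way to make it rigorous.
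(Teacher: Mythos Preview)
Your proof is correct and follows essentially the same approach as the paper: both arguments define $\phi$ by sending $\sigma_a \mapsto D'\position{a}$ and each sentinel to the common value of $(P',T')$ on its black component, establish well-definedness via chains of rule-(b) equalities along black edges, and handle positions in red components by traversing a black-edge path to a red-edge endpoint where rule (a) applies. Your shortest-path observation is a clean way to justify the existence of such a black path (the paper simply asserts it), though your parenthetical only spells out why the \emph{last} edge is black; the same reasoning shows every edge is black, since any red edge on the path would make an earlier vertex a red-edge endpoint and contradict minimality.
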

    \begin{claimproof}
        First, we show that, if two vertices \( p_i\), \(t_j \) belong to the same black
        connected component of \(\bG_S\), then \( P'\position{i} = T'\position{j} \).
        Indeed, since \( p_i \) and \( t_j \) are connected by a path composed entirely of
        black edges in \(\bG_S\), there exists a sequence of equalities of the form
        \eqref{lem:qhd_proxy:rule:b} in \(E\) between \(P\position{i}\) and
        \(T\position{j}\).
        Given that $(T',P',D')$ satisfy \(E\), we have \( P'\position{i} = T'\position{j}
        \).
        The same argument also applies to any pair \( (p_i, p_{i'}) \) or \( (t_j, t_{j'})
        \) of vertices within the same black component.
        This means that we can associate to each black component of \(\bG_S\) a unique
        character from $(T',P',D')$ that is shared among all positions \(P'\position{i}\)
        and \(T'\position{j}\) such that the corresponding vertices \(p_{i}\) and
        \(t_{j}\) belong to the component.

        Second, we show that, if a vertex \( p_i \) belongs to a red connected component
        and \( P\position{i} = P^\#\position{i} = \sigma_a \) holds for some \( a \in
        \fragmentco{0}{c} \), then \( P'\position{i} = D'\position{a} \).
        Observe that there exists a path in \(\bG_S\) composed entirely of black edges
        (possibly of length zero) leading from $p_i$ to a node incident to a red edge,
        which must correspond to \( \sigma_a \).
        Consequently, within \( E \), there is a sequence of equalities of the form
        \eqref{lem:qhd_proxy:rule:b} between \( P\position{i} \) and \(
        T\position{\hat{\imath}} \) for some \( \hat{\imath} \in \fragmentco{0}{n} \) or
        \( P\position{\hat{\imath}} \) for some \( \hat{\imath} \in \fragmentco{0}{m}\),
        along with an equation \( T\position{\hat{\imath}} = D\position{a} \) or \(
        T\position{\hat{\imath}} = D\position{a} \), respectively, of the form
        \eqref{lem:qhd_proxy:rule:a}.
        Since \( (T',P',D') \) satisfy \( E \), we conclude that \( P'\position{i} =
        D'\position{a} \).
        The same reasoning also applies to any vertex \( t_j \) within a red component.

        We can now define the mapping \(\phi\).
        The strings $(T^\#,P^\#,D)$ contain characters of two kinds: sentinel characters,
        coming from black components, and characters from the set \(\{\sigma_0, \ldots,
        \sigma_{c-1}\}\), coming from red components or characters in \(D\).
        A character of the first type is mapped by \(\phi\) to the character in
        $(T',P',D')$ associated with the corresponding black component.
        A character  \(\sigma_a\) of the second type is mapped by \(\phi\) to
        \(D'\position{a}\).
    \end{claimproof}
    Combined, we obtain the claim.
\end{proof}

We immediately obtain the following computational consequence.

\begin{corollary}\label{lem:qhd_proxy}
    Consider a pattern $P$ of length $m$, a text $T$ of length $n\le 3/2\cdot m$, an
    integer threshold $k>0$, and a set $\{0,n-m\} \subseteq S\subseteq \OccH_k(P,T)$.
    There is a (classical) algorithm that, given $m$, $n$, $k$, and $\eso{S}$, constructs
    an xSLP of size $\Ohtilde(k|S|)$ representing strings $P^\#$ and $T^\#$ of
    \cref{def:psh}. The algorithm runs in $\Ohtilde((k|S|)^2)$ time.
\end{corollary}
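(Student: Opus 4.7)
The plan is to construct the system of substring equations $\exw{P}{T}{S}$ explicitly from the enhanced set $\eso{S}$, and then invoke the solver from \cref{cor:solve_substring_equations} (the multi-string version of \cref{prp:solve_substring_equations}) to produce the xSLP. Correctness of what comes out will then follow immediately from \cref{lem:str_eq_to_xslp_hd}, which tells us that $(T^\#, P^\#, D)$ is the unique universal solution to $\exw{P}{T}{S}$ (up to renaming of placeholder characters).

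First I would bound the size of $\exw{P}{T}{S}$. Iterating over \cref{def:eq_system_hd}: for each enhanced position $(x, \MI(P, T\fragmentco{x}{x+m})) \in \eso{S}$, rule \eqref{lem:qhd_proxy:rule:a} contributes at most $2|\MI(P, T\fragmentco{x}{x+m})| \le 2k$ single-position equations, and rule \eqref{lem:qhd_proxy:rule:b} contributes at most $|\MI(P, T\fragmentco{x}{x+m})|+1 \le k+1$ interval equations (one per maximal mismatch-free interval). Thus $|\exw{P}{T}{S}| = \Oh(k\,|S|)$, and these equations can be written down in $\Oh(k\,|S|)$ time by a single pass through $\eso{S}$ (since $\eso{S}$ already exposes the mismatch positions and associated characters, and the distinct sentinel alphabet characters $\sigma_1, \ldots, \sigma_c$ together with the string $D$ can be produced on the fly, with $c = \Oh(k|S|)$).

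Next I would feed the resulting system into \cref{cor:solve_substring_equations} on the triple of length-$(n, m, c)$ formal variables representing $T$, $P$, and $D$, so $t = 3$ and the total length is at most $n+m+c = \Oh(n + k|S|)$. The corollary then outputs, in time $\Oh((|E|\log N + t)^2 \log^5 N) = \Ohtilde((k|S|)^2)$ and as an xSLP of size $\Oh((|E|\log N + t)\log^3 N) = \Ohtilde(k|S|)$, a universal solution of $\exw{P}{T}{S}$. By \cref{lem:str_eq_to_xslp_hd}, any universal solution is isomorphic to $(T^\#, P^\#, D)$ under a letter-to-letter morphism, so after renaming terminals the xSLP represents exactly the strings $P^\#$ and $T^\#$ of \cref{def:psh}.

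There is no real obstacle beyond bookkeeping: the only thing that requires attention is confirming that the per-enhanced-position contribution is $\Oh(k)$ equations (which uses $|\Mis(P, T\fragmentco{x}{x+m})| \le k$ from $S \subseteq \OccH_k(P,T)$) and that $\log N = \Ohtilde(1)$ factors get absorbed in the $\Ohtilde$ notation throughout. The remainder is a mechanical reduction to the already-established solver and uniqueness theorem.
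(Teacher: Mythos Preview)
Your proposal is correct and follows essentially the same approach as the paper: build $\exw{P}{T}{S}$ from \cref{def:eq_system_hd}, observe it has $\Oh(k|S|)$ equations, invoke \cref{cor:solve_substring_equations} with $t=3$ strings, and use \cref{lem:str_eq_to_xslp_hd} to identify the universal solution with $(T^\#,P^\#,D)$ after renaming terminals. The paper's proof is terser but makes exactly these moves, including the explicit terminal replacement step you mention.
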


\begin{proof}
    We use \cref{cor:solve_substring_equations} to solve the system \(\exw{P}{T}{S}\) from
    \cref{def:eq_system_hd}.
    \Cref{cor:solve_substring_equations} yields an xSLP that represents a universal
    solution $(T',P',D')$.
    We replace the terminals occurring in $D'$ with $\sigma_0,\ldots,\sigma_{c-1}$ to
    obtain an xSLP representing $P^\#$ and $T^\#$ as defined in \cref{def:psh}.
    Since \( E \) has size \(\Oh(k|S|)\), with \(\Oh(k)\) equations for each \( x\in S \),
    the resulting xSLP has size \(\Ohtilde(k|S|)\).

    The running time follows directly from \cref{cor:solve_substring_equations}.
\end{proof}

We proceed with the proof of \cref{thm:qhdfindproxystrings}.

\qhdfindproxystrings

\begin{proof}
    We first show how to handle the case when \cref{cor:qhd_postanalyze} gives a set $C$ such that $\{0, n-m\} \subseteq C$.

    \begin{claim}\label{claim:qhdfindproxystringscropped}
        Suppose we are given a set of candidate position $C$ as described in  \cref{cor:qhd_postanalyze} with the additional assumption that $\{0, n-m\} \subseteq C$.
        Then, we can compute an xSLP as described in the statement of \cref{thm:qhdfindproxystrings} for which additionally the two following hold:
        \begin{enumerate}[(a)]
            \item  for every $a\in \fragmentco{0}{m}$ and $b \in \fragmentco{0}{n}$, we have
            that $P^\#\position{a} = T^\#\position{b}$ implies  $P\position{a} = T\position{b}$; and
            \label{claim:qhdfindproxystringscropped:it:a}
            \item for all $x \in \OccH_k(P, T)$, we have $\MI(P, T\fragmentco{x}{x +
            m}) = \MI(P^\#, T^\#\fragmentco{x}{x + m})$.
             \label{claim:qhdfindproxystringscropped:it:b}
        \end{enumerate}
        The algorithm uses $\Ohtilde(\sqrt{km})$ queries and $\Ohtilde(\sqrt{km} + k^2)$ quantum time.
    \end{claim}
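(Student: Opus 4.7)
The plan is to distill from $C$ a small subset $C' \subseteq C$ with $\gcd(C') = \gcd(C)$ and $|C'| = \Oh(\log m)$, compute the mismatch information for every $c \in C'$ in order to form an enhanced set $\eso{C'}$, and then invoke \cref{lem:qhd_proxy} on $\eso{C'}$ to obtain the desired xSLP.

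For the first step, I would initialize $C' \coloneqq \{0, n-m\}$ and $g \coloneqq \gcd(C')$. If $C$ is the arithmetic-progression case of \cref{cor:qhd_postanalyze}, I can read the common difference $d$ directly and set $C' \coloneqq \{0,d,n-m\}$. Otherwise, $|C| = \Ohtilde(k)$, and I iterate through $C$, adding $c$ to $C'$ and updating $g \coloneqq \gcd(g,c)$ whenever $g \nmid c$. Since each addition halves $g$, we end with $|C'| = \Oh(\log m)$ and $\gcd(C') = \gcd(C)$, all in $\Ohtilde(k)$ classical time. For the second step, I invoke \cref{lem:find_mismatches} on $P$ and $T\fragmentco{c}{c+m}$ with threshold $10k$ for each $c \in C'$; by assumption $c \in \OccH_{10k}(P,T)$, so the subroutine returns $\Mis(P, T\fragmentco{c}{c+m})$ in $\Ohtilde(\sqrt{km})$ quantum time and queries. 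Reading the corresponding characters of $P$ and $T\fragmentco{c}{c+m}$ using $\Oh(k)$ additional queries then assembles $\MI(P,T\fragmentco{c}{c+m})$. Summed over the $\Oh(\log m)$ elements of $C'$, this phase costs $\Ohtilde(\sqrt{km})$ quantum time and queries.

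With $\eso{C'}$ in hand, the third step applies \cref{lem:qhd_proxy} with $S \coloneqq C'$. Since $|C'| = \Oh(\log m)$, this constructs (classically) an xSLP of size $\Ohtilde(k |C'|) = \Ohtilde(k)$ representing $P^\# \coloneqq P^\#_{C'}$ and $T^\# \coloneqq T^\#_{C'}$ in $\Ohtilde((k|C'|)^2) = \Ohtilde(k^2)$ time. The overall budget is therefore $\Ohtilde(\sqrt{km})$ queries and $\Ohtilde(\sqrt{km} + k^2)$ quantum time, as required.

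For correctness, property \eqref{claim:qhdfindproxystringscropped:it:a} is exactly \cref{thm:hd_subhash}\eqref{it:hd_subhash:ii} applied to $S = C'$, since $\{0,n-m\}\subseteq C'$. For property \eqref{claim:qhdfindproxystringscropped:it:b}, fix $x \in \OccH_k(P,T)$. By assumption $\OccH_k(P,T) \subseteq C$, so $\gcd(C) \mid x$; combined with $\gcd(C') = \gcd(C)$ by construction, this gives $\gcd(C') \mid x$, and \cref{thm:hd_subhash}\eqref{it:hd_subhash:i} then yields $\MI(P,T\fragmentco{x}{x+m}) = \MI(P^\#, T^\#\fragmentco{x}{x+m})$.

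The only potentially delicate point is the extraction of $C'$ in the arithmetic-progression subcase: we never materialize the (possibly long) set $C$ and must identify its common difference $d$ directly from the output of \cref{cor:qhd_postanalyze}. Beyond this minor bookkeeping, the argument reduces transparently to the already-established tools \cref{lem:find_mismatches}, \cref{thm:hd_subhash}, and \cref{lem:qhd_proxy}, with the logarithmic size of $C'$ being exactly what makes both the $\Ohtilde(\sqrt{km})$ query bound and the $\Ohtilde(k^2)$ classical time bound work out.
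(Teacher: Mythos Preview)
Your proposal is correct and follows essentially the same approach as the paper: extract a logarithmic-size subset $C'\subseteq C$ with $\gcd(C')=\gcd(C)$ (three elements in the arithmetic-progression case, iterative gcd-halving otherwise), enhance it via \cref{lem:find_mismatches}, and feed the result into \cref{lem:qhd_proxy} with threshold $10k$; correctness then falls out of \cref{thm:hd_subhash}\eqref{it:hd_subhash:ii} and~\eqref{it:hd_subhash:i} combined with $\gcd(C')\mid x$ for every $x\in\OccH_k(P,T)\subseteq C$. The only cosmetic difference is notation ($C'$ versus $S$) and your slightly more careful $\Ohtilde(k)$ accounting for the iteration over $C$.
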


    \begin{claimproof}
        Consider the following procedure.
        \begin{enumerate}[(i)]
            \item Construct a set $S$ such that $\{0, n-m\} \subseteq S \subseteq C$ as follows:
                \begin{itemize}
                    \item If $C$ forms an arithmetic progression $0, a, 2a, \ldots, n-m$, then set $S = \{0, a, n-m\}$.
                    \item Otherwise, if $|C| = \Ohtilde(k)$, use an iterative process to construct $S$.
                        Start with \(S = \{0, n-m\}\) and examine positions \(t\in C\) one by one.
                        Add $t$ to $S$ if and only if \(t\) is not an integer multiple of $\gcd(S)$.
                \end{itemize}
                \label{alg:qhdfindproxystringscropped:i}
            \item Compute \( \eso{S} \) out of $S$.
                Using \cref{lem:find_mismatches}, first determine the set \( \Mis(P,
                T\fragmentco{x}{x+m}) \) for each \( x \in S \).
                Then, read the corresponding characters from the string to construct \( \MI(P,
                T\fragmentco{x}{x+m}) \).
                \label{alg:qhdfindproxystringscropped:ii}
            \item Use \cref{lem:qhd_proxy} on $P$, $T$, $10k$, and $S$, and return the
                corresponding xSLPs.
                \label{alg:qhdfindproxystringscropped:iii}
        \end{enumerate}

        The constructed set \( S \) satisfies two key properties: \( \gcd(S) = \gcd(C) \) and
        \( |S| = \Oh(\log n) \).
        The second property holds trivially if \( C \) forms an arithmetic progression.
        When \( |C| = \Ohtilde(k) \), the property holds because each time a new position \( t
        \) is added to \( S \), we have \( \gcd(S \cup \{t\}) \leq \gcd(S)/2 \).
        Consequently, after \( \Oh(\log n) \) additions, the gcd reduces to one, at which
        point no further positions are added to \( S \).

        Recall that the xSLP we return represents the strings \( P^\# \) and \( T^\# \) from
        \cref{def:psh}.
        Since \( \{0, n-m\} \subseteq S \), we can apply \cref{thm:hd_subhash}.
        Consequently, \eqref{claim:qhdfindproxystringscropped:it:a} follows directly from
        \cref{thm:hd_subhash}\eqref{it:hd_subhash:ii}.
        Meanwhile, \eqref{claim:qhdfindproxystringscropped:it:b} results from combining
        \eqref{it:hd_subhash:i} with the fact that every $x\in \OccH_k(P, T)\subseteq C$ is an
        integer multiple of \( \gcd(S) = \gcd(C)\).

        For the running time, note that the construction in step
        \eqref{alg:qhdfindproxystringscropped:i} requires \( \Oh(\log n) \) time and no
        queries.
        In step \eqref{alg:qhdfindproxystringscropped:ii}, applying \cref{lem:find_mismatches}
        to \( S \) (with \( |S| = \Oh(\log n) \)) and collecting the mismatch information
        requires \( \Ohtilde(\sqrt{km}) \) quantum time and queries.
        Since \( |S| = \Oh(\log n) \), step \eqref{alg:qhdfindproxystringscropped:iii}
        constructs an xSLP of size \( \Ohtilde(k) \) in \( \Ohtilde(k^2) \) time using
        \cref{lem:qhd_proxy}.
    \end{claimproof}

    Observe that \cref{claim:qhdfindproxystringscropped} would already provide the desired \(
    P^\# \) and \( T^\# \) if \( \{0, n - m\} \subseteq C \).
    Specifically,
    \cref{claim:qhdfindproxystringscropped}\eqref{claim:qhdfindproxystringscropped:it:a}
    ensures that \( \hd(P^\#, T^\#\fragmentco{x}{x+m}) \geq \hd(P, T\fragmentco{x}{x+m})
    \).
    Combined with
    \cref{claim:qhdfindproxystringscropped}\eqref{claim:qhdfindproxystringscropped:it:b},
    this implies \( \OccH_k(P, T) = \OccH_k(P^\#, T^\#) \).
    However, we cannot generally assume \( \{0, n - m\} \subseteq C \),
    and therefore we need to crop $T$.

    For that purpose, consider the following routine that extends
    \cref{claim:qhdfindproxystringscropped}
    to the general case.
    \begin{enumerate}[(i)]
        \item Use \cref{cor:qhd_postanalyze} to compute a candidate set $C$.
            If $C=\emptyset$, return an xSLP representing strings \( P^\# \) and \( T^\#
            \) of length $m$ and $n$, respectively, each with a unique sentinel character
            at every position.
            For this, we create a new xSLP with two pseudo-terminals of length $m$ and
            $n$, respectively, acting as the starting symbols for $P^\#$ and $T^\#$.
            \label{alg:qhdfindproxystrings:i}
            \item Otherwise, if $C \neq \emptyset$, set $\ell \coloneq \min C$ and $r
                \coloneq \max C$.
            Apply \cref{claim:qhdfindproxystringscropped} to \( P \), \(
            T\fragmentco{\ell}{r+m} \), and \( C - \ell \) in the role of \( P \), \( T
            \), and \( C \), respectively.
            \label{alg:qhdfindproxystrings:ii}
            \item Prepend $\ell$ and append $n-(m + r)$ unique sentinels characters to the
                proxy text $T^\#$ obtained from \cref{claim:qhdfindproxystringscropped}.
            For this, we create two new pseudo-terminals $X,Y$ of length $\ell$ and
            $n-(m+r)$, respectively, and two non-terminals $A',A''$ with $\rhs_\G(A'')=X
            A'$ and $\rhs_\G(A')=AY$, where  $A$ is the representing the original $T^\#$.
            Moreover, we declare $A''$ as the starting symbol representing the updated
            $T^\#$.
            If $\ell = 0$, we do not create $X$, and we set $A''\coloneqq A'$ instead of
            creating $A''$ as a new non-terminal;
            if $n - (m+r)=0$, we do not create $Y$, and we set $A' \coloneqq A$ instead of
            creating $A'$ as a new non-terminal.
            \label{alg:qhdfindproxystrings:iii}
    \end{enumerate}

    We briefly argue about the correctness of the routine.
    If the routine returns at \eqref{alg:qhdfindproxystrings:i}, we observe that from
    $\Occ_k(P,T) \subseteq C$ follows \( \Occ_k(P,T) = \emptyset \).
    In this case, the returned strings \( P^\# \) and \( T^\# \) satisfy \( \hd(P,
    T\fragmentco{t}{t+m}) \leq \ed(P^\#, T^\#\fragmentco{t}{t+m}) \) for all $t \in
    \fragment{0}{n-m}$, thereby fulfilling the condition \( \OccH(P^\#,T^\#) =
    \emptyset\), as desired.

    Otherwise, if the routine does not at \eqref{alg:qhdfindproxystrings:i},
    observe that $P^\#$ and $T^\#$, even after the addition of sentinel characters at
    \eqref{alg:qhdfindproxystrings:iii}, satisfy
    \cref{claim:qhdfindproxystringscropped}\eqref{claim:qhdfindproxystringscropped:it:a}.
    Consequently, $\OccH_k(P^\#, T^\#) \subseteq \OccH_k(P, T)$.
    It remains to argue that $\MI(P, T\fragmentco{x}{x+m})=\MI(P^\#,
    T^\#\fragmentco{x}{x+m})$
    holds for an arbitrary $x\in \OccH_k(P,T)$.
    This directly follows from
    \cref{claim:qhdfindproxystringscropped}\eqref{claim:qhdfindproxystringscropped:it:b},
    and from the fact that after removing $\ell$ characters from $T$ before applying
    \cref{claim:qhdfindproxystringscropped}, we prepend the same number of characters to
    the returned string.

    We obtain the claimed query and quantum time by combining
    \cref{claim:qhdfindproxystringscropped}, \cref{cor:qhd_postanalyze}, and the fact that
    all manipulations of xSLP involved in the rotuine require no more than $\Ohtilde(1)$
    time.
\end{proof}

\subsection{Step 4: Bringing It All Together}

We conclude this section with deriving \cref{thm:qpmwm} from \cref{thm:qhdfindproxystrings}.

\qpmwm

\begin{proof}
    First, note that by combining \cref{thm:qhdfindproxystrings}, \modelname
    implementation for xSLPs (\cref{lem:pillar_on_xslp}), and a \modelname algorithm for
    \PMwM (\cref{thm:hdalg}), we obtain a quantum algorithm
    solving the \PMwM problem using $\Ohtilde(\sqrt{km})$
    queries and $\Ohtilde(\sqrt{km} + k^{2})$ time when $n \leq 3/2\cdot m$.

    We employ the Standard Trick to partition \(T\) into \(\Oh(n/m)\) contiguous blocks,
    each of length \(\ceil{m/2}\) (with the last block potentially being shorter).
    For every block that starts at position $i \in \fragmentco{0}{n}$, we consider the
    segment $T\fragmentco{i}{\min(n, i + \floor{3/2 \cdot m})}$ having length at most $3/2
    \cdot m$.
    Note that every $k$-mismatch occurrence is contained in one of these segments.

    For the first claim, we iterate over all such segments, applying the quantum algorithm
    to \(P\), the current segment, and \(k\).
    The final set of occurrences is obtained by taking the union of all sets returned by
    the algorithm.
    For the second claim, \GS is employed over all segments, utilizing a
    function that determines whether the set of $k$-mismatch occurrences of $P$ in the
    segment is empty or not.
\end{proof}

\section{Quantum Algorithm for Pattern Matching with Edits}
\label{sec:pmwe}

In this section, we give our quantum algorithm for \PMwE, that
is, we prove \cref{thm:qpmwe}.

\qpmwe*
\medskip

As the main technical step, we prove \cref{thm:qedfindproxystrings},
which is the edit counterpart of \cref{thm:qhdfindproxystrings}.
That is, given a pattern \( P \) of length \( m \), a text \( T \) of length \( n \le
3/2 \cdot m \), and an integer threshold \( k > 0 \), we show how to construct an xSLP for
a proxy pattern $P^\#$ and a proxy text $T^\#$ that are equivalent to the original text and
pattern for the purpose of computing \(\OccE_k(P, T)\).

\begin{restatable*}{theorem}{qedfindproxystrings}\label{thm:qedfindproxystrings}
    There exists a quantum algorithm that, given a pattern $P$ of length $m$, a text $T$
    of length $n \le 3/2 \cdot m$, and an integer threshold $k > 0$, outputs xSLP of size
    $\Ohhat(k)$ representing strings $P^\#$ and $T^\#$ such that, for every fragment
    $T\fragmentco{t}{t'}$:
    \begin{enumerate}[(a)]
    \item if $\ed(P,T\fragmentco{t}{t'})\le k$, then $\ed(P, T\fragmentco{t}{t'}) =
        \ed(P^\#, T^\#\fragmentco{t}{t'})$ and $\sE_{P, T}(\mX) = \sE_{P^\#, T^\#}(\mX)$
        holds for every optimal alignment $\mX : P \onto T\fragmentco{t}{t'}$;
    \item $\ed(P, T\fragmentco{t}{t'}) \leq \ed(P^\#, T^\#\fragmentco{t}{t'})$.
    \end{enumerate}
    The algorithm uses $\Ohhat(\sqrt{km})$ queries and $\Ohhat(\sqrt{km}+k^2)$ time.
\end{restatable*}
\subsection{Step 1: Computing a Candidate Set for Occurrences}
\label{sec:qed_analyze}

Fortunately, for edits, we can reuse the results from \cite{KNW24}, which already provide
an efficient implementation for computing a candidate set. This candidate set
consists either of $\Ohtilde(k)$ windows of size $k$ where $k$-error occurrences may start,
or the candidate set can be
represented in $\Oh(1)$ space, with an additional guarantee on the cost of the cheapest
alignments starting in the candidate positions.

\begin{lemma}[{\cite[Lemma~4.6]{KNW24}}]\label{lem:qed_candidate_set}
    There is a quantum algorithm that, given a pattern $P$ of length $m$, a text $T$ of
    length $n$, and an integer threshold $k > 0$ such that $n < 3/2 \cdot m$, computes a
    set $C$ of \emph{candidate positions} such that $\OccE_k(P,T) \subseteq C \subseteq
    \fragment{0}{n}$, and outputs one of two compressed representations of $C$:
    \begin{itemize}
        \item either it outputs $\floor{C/k}$ of size $|\floor{C/k}|=\Ohtilde(k)$, \textbf{or}
        \item it outputs $q \in \Z_{>0}$ and $I \subseteq \fragmentco{0}{q}$ of size $|I|
            = \Oh(k)$ such that $C = \{ t \in \fragment{0}{n-m-k} : t \ \mathrm{mod} \ q
            \in I\}$
        and $\OccE_k(P,T) \subseteq C \subseteq \OccE_{44k}(P,T)$.
    \end{itemize}
    The algorithm takes $\Ohtilde(\sqrt{km})$ query complexity and $\Ohtilde(\sqrt{km}+k^2)$ quantum time.
    \lipicsEnd
\end{lemma}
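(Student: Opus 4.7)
The plan is to mirror the three-case structure used for the Hamming-distance analog (\cref{lem:qhd_analyze}), replacing every ingredient by its edit-distance counterpart. First, I would invoke a quantum implementation of the edit-distance version of \cref{prp:I} (available from [CKW22]/[KNW24]) to analyze $P$ in $\Ohtilde(\sqrt{km})$ quantum time. This outputs one of: (a) $2k$ disjoint edit-breaks $B_i$ of length $\Theta(m/k)$, each at large edit distance from any string of small period; (b) disjoint repetitive regions $R_i$ of total length $\Omega(m)$, each carrying its own small approximate period and local edit budget; or (c) a single global approximate period $Q$ of length $\Oh(m/k)$ with $\ed(P,Q^*) = \Oh(k)$.

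In cases (a) and (b), I would produce the first representation, the window-list $\floor{C/k}$. For (a), I sample a break $B$ uniformly: since any fixed $k$-error occurrence touches at most $k$ of the $2k$ breaks, the chosen $B$ is intact with probability $\ge 1/2$ and so appears as an exact match in $T$ at some position within the target window. Using \cref{thm:quantum_matching_all}, I find all exact occurrences of $B$ in $T$; by aperiodicity of $B$, there are $\Oh(k)$ of them, each contributing a window of width $\Oh(k)$. Repeating $\Ohtilde(1)$ times and taking the union gives $C$ covering $\OccE_k(P,T)$ with $|\floor{C/k}| = \Ohtilde(k)$. Case (b) reduces to (a) or (c) inside each region via the standard trick, with regions sampled proportionally to length (and local threshold $\floor{4k|R|/m}$), exactly as in the proof of \cref{lem:hd_repregion}.

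Case (c) is the main obstacle and produces the second representation $(q, I)$. Here I first quantum-count, over disjoint length-$|Q|$ blocks of the central portion of $T$, the most frequent rotation of $Q$ (analog of \cref{claim:hd_compstructure_rot}); this either certifies that no $k$-error occurrence exists or determines an approximate alignment of $Q^\infty$ with $T$, yielding $q \coloneqq |Q|$ and a reference shift $a \bmod q$. Any $k$-error occurrence position $t$ must then align $P$ to some rotation of $Q^\infty$ within total edit cost $\Oh(k)$, but unlike the Hamming case, at most $k$ indels can shift the effective residue of $t$ modulo $q$; so the admissible residues form a set $I \subseteq \fragmentco{0}{q}$ of size $\Oh(k)$ clustered around $a$. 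The inclusion $C \subseteq \OccE_{44k}(P,T)$ follows by triangle inequality, bounding $\ed(P, T\fragmentco{t}{t'})$ by $\ed(P, Q^*) + \ed(Q^*\!, T\fragmentco{t}{t'})$ with both terms $\Oh(k)$ once constants are tracked carefully through the approximate alignment of $T$ with $Q^\infty$.

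The hard part is case (c): while for Hamming distance the column-wise alignment of periodic strings gives rigid control over admissible positions, edit distance permits each of the $k$ indels to shift positions independently, so bounding $|I|$ requires the edit-periodicity combinatorics of [KNW24] (replacing the Fine--Wu periodicity lemma). Concretely, the argument shows that any alignment of cost $\Oh(k)$ between $P$ and a substring of $Q^\infty$ induces a bounded "net displacement" taking only $\Oh(k)$ values modulo $q$, which is precisely what is needed to bound $|I|$ and to derive the constant factor in the $44k$ approximation guarantee.
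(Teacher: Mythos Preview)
The paper does not prove this lemma at all: it is quoted verbatim from \cite{KNW24} (as Lemma~4.6 there) and used as a black box; the text introducing \cref{lem:qed_candidate_set} explicitly says ``Fortunately, for edits, we can reuse the results from \cite{KNW24}''. So there is no paper proof to compare against.

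Your high-level plan---run the edit-distance structural analysis of \cite{CKW20,CKW22} in the quantum model and handle breaks/repetitive regions/periodic case separately---is indeed how \cite{KNW24} proceeds, and cases (a) and (b) are essentially right as sketched. The periodic case (c), however, is where your outline gets fuzzy. Your description of $I$ as ``admissible residues clustered around $a$'' coming from ``at most $k$ indels shifting the effective residue'' misidentifies the source of the set $I$: in the actual argument, the $\Oh(k)$ residues arise from the \emph{structure of $P$ relative to $Q^*$} (the $\Oh(k)$ positions where $P$ deviates from periodicity induce the residue set), not from the $k$ edits of a hypothetical occurrence. Likewise, the constant $44$ in $\OccE_{44k}(P,T)$ is not a single triangle inequality but comes from combining the approximate periodicity of $P$, the alignment of $Q^\infty$ with $T$ found by the quantum search, and a careful extension/trimming argument (the edit analog of \cref{lem:hd_compstructure}); your one-line sketch would not recover the containment $C \subseteq \OccE_{44k}(P,T)$ without substantially more work along the lines of \cite[Section~6]{CKW20}.
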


\subsection{Step 2: From Candidate Positions to Substring Equations}
\label{sec:proxy_strings_e}

As for \PMwM, we next discuss how to turn the candidates
from the previous section into a system of substring equations.
To that end, it is instructive, to present the communication complexity results for
\PMwE from \cite{KNW24}.

\subsubsection{The Communication Complexity of Pattern Matching with Edits}

Compared to \PMwM, this time the results of \cite{KNW24} are already in a form that is
useful for our algorithmic applications.
Hence, we can restrict ourselves to presenting the key definitions and lemmas from
\cite{KNW24} without a proof.

In this (sub)section, we fix an threshold \( k > 0\), and consider two strings \( P \) and
\(T\) of length $n$ and $m$.
Furthermore, we consider a set \( S \subseteq \mA^{\leq k}_{P,T}\) of alignments
of $P$ onto fragments of $T$ of cost at most $k$.
In the $k$-edit setting, \( S \) still allows the construction of a graph \( \bG_{S} \).

\begin{definition}[{\cite[Definition~4.1]{KNW24}}]\label{def:bg}
    For two strings $P$ and $T$ of length $m$ and $n$, respectively,
    and a set $S \subseteq \mA_{P,T}$,
    we define the undirected graph $\bG_{S} = (V, E)$ as follows.
    \begin{description}
        \item[The vertex set $V$]contains
        \begin{itemize}
            \item $|P|$ vertices representing characters of $P$;
            \item $|T|$ vertices representing characters of $T$; and
            \item one special vertex $\bot$.
        \end{itemize}
        \item[The edge set $E$]contains the following edges for each alignment $\mX\in S$
        \begin{enumerate}
            \item $\{P\position{x},\bot\}$ for every character $P\position{x}$ that $\mX$
                deletes;\label{it:bg:i}
            \item $\{\bot, T\position{y}\}$ for every character $T\position{y}$ that $\mX$
                inserts;\label{it:bg:ii}
            \item $\{P\position{x},T\position{y}\}$ for every pair of characters
                $P\position{x}$ and $T\position{y}$ that $\mX$ aligns.\label{it:bg:iii}
        \end{enumerate}
        We say that an edge $\{P\position{x},T\position{y}\}$ is \emph{black} if $\mX$
        matches $P\position{x}$ and $T\position{y}$;
        all the remaining edges are \emph{red}.
    \end{description}

    We say that a connected component of $\bG_{S}$ is \emph{red} if it contains at least
    one red edge; otherwise, we say that the connected component is \emph{black}.
    We denote with $\bc(\bG_{S})$ the number of black components in $\bG_{S}$.
\end{definition}

The graph $\bG_S$ can be fully constructed using the \emph{complete edit information}.

\begin{definition}\label{def:eso_ed}
    For two strings $P,T$ and a set $S \subseteq \mA_{P,T}$,
    the \emph{complete edit information} $\eso{S}$
    is the collection of all $\sE_{P,T}(\mX)$ such that $\mX \in S$.
\end{definition}

Assuming \( S \subseteq \mA^{\leq k}_{P,T}\), we observe that the complete edit
information
$\eso{S}$ takes up $\Oh(k|S|)$ space. Moreover, similarly to mismatches, we have that
$\eso{S}$ allows us to infer all characters of the red component by propagating characters
through them. Characters of black component remain unknown, with the guarantee however
that characters belonging to the same black component are the same.

As in the Hamming case, if \(\bc(\bG_{S}) = 0\), we already have enough information to
retrieve \(\OccE_k(P, T)\) by fully reconstructing \( P \) and \( T \).
The more challenging case, \(\bc(\bG_{S}) > 0\), presents some differences compared to the
Hamming case. Firstly, red components are much less structured. Black components, however,
still retain a periodic structure, under mild structural assumptions on the length of $P$
and $T$, and on $S$.

\begin{definition}[{\cite[Definition~4.2]{KNW24}}]
    Consider a string $P$ of length $m$, a string $T$ of length $n$, and a set $S
    \subseteq \mA^{\leq k}_{P,T}$.
    We say $S$ \emph{encloses $T$} if $n \le 2m - 2k$ and there exist two distinct
    alignments $\mXpref, \mXsuf \in S$ such that $\mXpref$ aligns $P$ with a prefix of $T$
    and $\mXsuf$ aligns $P$ with a suffix of $T$, or equivalently $(0,0) \in \mXpref$ and
    $(|P|,|T|) \in \mXsuf$.
\end{definition}

The periodic structure can be observed in the strings \( T_{|S} \) and \( P_{|S} \), which
are constructed by retaining characters contained in black connected components.

\begin{definition}[{\cite[Definition~4.3]{KNW24}}]
    Consider a string $P$ of length $m$, a string $T$ of length $n$, and a set $S
    \subseteq \mA^{\leq k}_{P,T}$ such that $\bc(\bG_S) \neq 0$.
    We let $T_{|S}$ denote the subsequence of $T$ consisting of characters contained in
    black components of~$\bG_{S}$.
    Similarly, $P_{|S}$ denotes the subsequence of $P$ consisting of characters contained
    in black components of~$\bG_{S}$.
\end{definition}

\begin{lemma}[{\cite[Lemma~4.4]{KNW24}}]\label{lem:periodicity}
    Consider a string $P$ of length $m$, a string $T$ of length $n$, and a set $S
    \subseteq \mA^{\leq k}_{P,T}$ such that $\bc(\bG_S) \neq 0$ and $S$ encloses $T$.
    For every $c \in \fragmentco{0}{\bc(\bG_S)}$, there exists a black connected component
    with node set
    \[
        \{P_{|S}\position{i} : i \equiv_{\bc(\bG_S)} c\} \cup \{T_{|S}\position{i} : i
        \equiv_{\bc(\bG_S)} c\},
    \]
    that is, there exists a black connected component containing all characters of
    $P_{|S}$ and $T_{|S}$ appearing at positions congruent to $c$ modulo $\bc(\bG_S)$.
    Moreover, the last characters of $P_{|S}$ and $T_{|S}$ are contained in the same black
    connected component, that is, $|T_{|S}| \equiv_{\bc(\bG_{S})} |P_{|S}|$.
    \lipicsEnd
\end{lemma}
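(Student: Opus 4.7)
}

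The plan is to mimic the proof of its Hamming-distance counterpart (\cref{lem:hd_period}), which in turn relied on the Periodicity Lemma via \cref{fct:periodicity}; the novelty here is that the ``matching'' between $P$ and $T$ is now witnessed by alignments that may insert and delete characters, so the correspondence has to be restricted to the subsequences $P_{|S}$ and $T_{|S}$ of characters that live in black components.

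First, I would label each black connected component by an arbitrary symbol and define auxiliary strings $\tilde P$ of length $|P|$ and $\tilde T$ of length $|T|$ by setting $\tilde P\position{x}$ to the label of the component of $P\position{x}$ if that component is black (and a fresh symbol otherwise), and likewise for $\tilde T$. By construction, whenever an alignment $\mX\in S$ matches $P\position{x}$ to $T\position{y}$, the two corresponding vertices lie in the same black component, so $\tilde P\position{x}=\tilde T\position{y}$; that is, $\mX$ remains an alignment of $\tilde P$ onto a fragment of $\tilde T$ whose matches are still matches. A key step is to argue (by induction on the number of inserts/deletes in $\mX$) that the subsequence of $\tilde P$ corresponding to matched positions equals the subsequence of $\tilde T$ corresponding to the paired matched positions, and that this common subsequence is in fact $\tilde P_{|S}=P_{|S}$-labelled by components (for this direction, $S\subseteq \mA^{\le k}_{P,T}$ and $\bc(\bG_S)\neq 0$ are used only to ensure the subsequences are well-defined).

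Next, I would exploit the enclosing alignments $\mXpref$ and $\mXsuf$. Restricted to the characters that belong to black components, $\mXpref$ is a monotone bijection between a prefix of $P_{|S}$ and a prefix of $T_{|S}$; since all of $P$ is aligned, and since any matched character of $P$ contributes to $P_{|S}$ (and its partner to $T_{|S}$), this monotone bijection actually matches a prefix of $T_{|S}$ of length $|P_{|S}|$ to $P_{|S}$ position-by-position. Combined with the component-equality above, this shows $T_{|S}\fragmentco{0}{|P_{|S}|}=P_{|S}$ as label-sequences, i.e.\ $P_{|S}$ is a prefix of the label-sequence of $T_{|S}$. Symmetrically, $\mXsuf$ shows that $P_{|S}$ is a suffix of the label-sequence of $T_{|S}$. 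Applying \cref{fct:periodicity} to the labelled string $T_{|S}$ (noting that $|T_{|S}|\le 2|P_{|S}|+1$ follows from $|T|\le 2m-2k$ and the fact that at most $k$ characters are lost on each side), we conclude that $g\coloneqq \gcd(\Occ(P_{|S},T_{|S}))$ is a period of the label-sequence of $T_{|S}$. In particular, for every $c\in\fragmentco{0}{g}$, all positions of $P_{|S}$ and $T_{|S}$ congruent to $c\bmod g$ carry the same label and hence belong to the same black component, yielding at least $g$ distinct black components indexed by residues modulo $g$. Finally, a counting argument shows that the number of black components is exactly $g$, so $g=\bc(\bG_S)$, and the ``modulo'' statement of the lemma follows; the last characters of $P_{|S}$ and $T_{|S}$ share a component precisely because $g$ divides $|T_{|S}|-|P_{|S}|$.

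The main obstacle I anticipate is the verification that the labelled subsequences $P_{|S}$ and $T_{|S}$ really are prefix/suffix of each other via $\mXpref$ and $\mXsuf$: one has to make sure that every matched character of $\mXpref$ survives as a black-component character (i.e.\ is not sent to the $\bot$-component or contaminated by a red edge from another alignment in $S$), and that the monotone correspondence induced by $\mXpref$ is exactly position-wise on $P_{|S}$ and on the prefix of $T_{|S}$ of matching length. Once this combinatorial bookkeeping is in place, the rest is a direct application of the Periodicity Lemma.
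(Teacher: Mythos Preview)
The paper does not prove this lemma; it is imported verbatim from \cite{KNW24} and ends with \lipicsEnd, so there is no in-paper argument to compare against. Your overall strategy---label the black components, show that $P_{|S}$ occurs both as a prefix and as a suffix of the label sequence of $T_{|S}$ via the enclosing alignments, and then invoke \cref{fct:periodicity}---matches the approach of \cite{KNW24}.

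That said, one step in your sketch is stated the wrong way round. You write that ``any matched character of $P$ contributes to $P_{|S}$''; this is false, since a character matched by $\mXpref$ can still sit in a red component because some \emph{other} alignment in $S$ deletes, inserts, or substitutes it. What you actually need (and what makes the argument go through) is the converse: if $P\position{x}$ lies in a black component, then it has no incident red edge in $\bG_S$, so \emph{every} alignment in $S$---in particular $\mXpref$---must match it, and the partner $T\position{y}$ is joined by a black edge and hence lies in the same black component. The analogous statement for $T_{|S}$-characters inside the range of $\mXpref$ shows that the restriction of $\mXpref$ to black-component positions is a monotone bijection from all of $P_{|S}$ onto exactly the first $|P_{|S}|$ characters of $T_{|S}$. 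With this orientation fixed, the ``main obstacle'' you flag disappears.

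A second point to tighten is the length hypothesis $|T_{|S}|\le 2|P_{|S}|+1$ needed for \cref{fct:periodicity}. Your justification (``at most $k$ characters are lost on each side'') is too loose, because the number of $P$-characters in red components is not obviously bounded by $k$. The cleaner route is to bound $|T_{|S}|-|P_{|S}|$ directly: the characters of $T_{|S}$ past the prefix copy of $P_{|S}$ all lie beyond the range of $\mXpref$, and symmetrically before the range of $\mXsuf$; combining both with $n\le 2m-2k$ gives the required inequality.
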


The characters of \( P_{|S} \) and \( T_{|S} \) can be mapped back to \( P \) and \( T \).

\begin{definition}[{\cite[Definition~4.7]{KNW24}}]\label{def:pitau}
    Consider a string $P$ of length $m$, a string $T$ of length $n$, and a set $S
    \subseteq \mA^{\leq k}_{P,T}$ such that $\bc(\bG_S) \neq 0$ and $S$ encloses $T$.
    For $c \in \fragmentco{0}{\bc(\bG_S)}$, define the \emph{$c$-th black connected
    component} as the black connected component containing $P_{|S}\position{c}$ and set
    \[
        m_c \coloneqq \left\lceil\frac{|P_{|S}| - c}{\bc(\bG_{S})} \right\rceil
        \quad\text{and}\quad n_c \coloneqq \left\lceil \frac{|T_{|S}| - c}{\bc(\bG_{S})} \right\rceil,
    \]
    as the number of characters in $P$ and $T$, respectively, belonging to the $c$-th
    black connected component.
    Furthermore,
    \begin{itemize}
        \item for $c \in \fragmentco{0}{\bc(\bG_{S})}$ and $j \in \fragmentco{0}{m_c}$,
            define $\pi_j^c \in \fragmentco{0}{|P|}$ as the position of $P_{|S}\position{c
            + j \cdot \bc(\bG_{S})}$ in $P$; and
        \item for $c \in \fragmentco{0}{\bc(\bG_{S})}$ and $i \in \fragmentco{0}{n_c}$,
            define $\tau_i^c \in \fragmentco{0}{|T|}$ as the position of
            $T_{|S}\position{c + i \cdot \bc(\bG_{S})}$ in $T$. \qedhere
    \end{itemize}
\end{definition}

Observe that we have \( m_c \in \{m_0, m_0-1\} \) and \( n_c \in \{n_0, n_0-1\} \).
Moreover, observe that the characterization of the black components implies that \(
\pi_{j}^c < \pi_{j'}^{c'} \) if and only if either \( j < j' \) or \( j = j' \) and \( c <
c' \) (analogously for \( \tau_j^c < \tau_{j'}^{c'} \)).

\subsubsection{The Strings \(P^\S\) and \(T^\S\) as a Solution to Substring Equations}

Ideally, similar to the Hamming case, we would like to maintain a set $S \subseteq
\mA_{P,T}$ with low cost, ensuring that the information in $\eso{S}$ is sufficient to
recover all $k$-edit occurrences by replacing characters in black components  with
sentinel characters.
However, this approach does not work for edits.
In the presence of insertions and deletions, we lose the guarantee that characters in
black components align only with other characters in black components within any $k$-edit
alignment.

Although a replacement criterion as the one used for Hamming distance does not work here,
we still define the corresponding strings. However, we use a different name
to highlight that they are not the final strings that we compute.

\begin{definition}\label{def:sub_hash_ed_bad}
    For a string \(P\) of length $m$, a string \(T\) of length \(n \le 2m\), and a fixed
    set $S\subseteq \fragment{0}{n-m}$, we transform \(P\) and \(T\) into $P^\S$ and
    $T^\S$ by iterating over black connected component of $\bG_S$ and placing a sentinel
    character (unique to the component) at every position that belongs to the component.
\end{definition}

The real purpose of the strings \( P^\S \) and \( T^\S \) is to help us to find the positions \( \tau_i^c \) and \( \pi_j^c \).
Such positions are easily identifiable in \( P^\S \) and \( T^\S \),
as these positions are exactly those positions that contain sentinel characters.
For answering fast queries related to \( \tau_i^c \) and \( \pi_j^c \), it is key to
represent \( P^\S \) and \( T^\S \) through substring equations and converting them into
an xSLP.

\begin{definition}\label{def:eq_system_ed_1}
    Consider a string \(P\) of length \(m\), a string \(T\) of length \(n \le 3/2 \cdot m\),
    and a set $S \subseteq \mA_{P,T}^{\leq k}$.
    Further, write \(D \coloneqq \sigma_0 \cdots \sigma_{c-1}\) for a string of all of the \(c\) different
    characters appearing across $\eso{S}$.

    We write \(\exw{P}{T}{S}\) for the set
    of substring equalities constructed in the following manner.
    For each $\mX \in S$, we add the following substring equations to $E$.
    \begin{enumerate}[(a)]
        \item For each \((i, \sigma_a \mid j, \sigma_b) \in \sE_{P, T}(\mX)\), we add
        \begin{itemize}
            \item $e : P\position{i} = D\position{a}$ if the corresponding edit is a deletion;
            \item $e : T\position{j} = D\position{b}$ if the corresponding edit is an insertion;
            \item $e : P\position{i} = D\position{a}$ and $e' : T\position{j} = D\position{b}$ otherwise.
        \end{itemize}
        \item For each maximal interval $\fragmentco{y}{y'} \subseteq \fragmentco{0}{m}
            \setminus \{i \mid (i,\sigma_a \mid j,\sigma_b) \in \sE_{P, T}(\mX_i)\}$, we
            add $e : P\fragmentco{y}{y'} = T\fragmentco{x}{x'}$ where $x,x'$ are such that
            $\mX$ matches $P\fragmentco{y}{y'}$ with $T\fragmentco{x}{x'}$.
        \qedhere
    \end{enumerate}
\end{definition}

\begin{lemma}  \label{lem:str_eq_to_xslp_ed_1}
    Consider a string \(P\) of length \(m\), a string \(T\) of length \(n \le 3/2 \cdot m\),
    and a set of complete edit information \(\eso{S}\).
    Further, write \(D \coloneqq \sigma_0 \cdots \sigma_{c-1}\) for a string of all of the \(c\) different
    characters appearing in \(\eso{S}\).

    The strings \( (T^\S,P^\S,D) \) are the unique solution of \( \exw{P}{T}{S} \) (up to
    renaming of placeholder characters).
\end{lemma}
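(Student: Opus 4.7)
The plan is to mirror the structure of \cref{claim:qhd_proxy:1,claim:qhd_proxy:2} from the Hamming case, proving the two directions separately: first that $(T^\S, P^\S, D)$ satisfies $\exw{P}{T}{S}$, and then that any other solution $(T', P', D')$ is the image of $(T^\S, P^\S, D)$ under a letter-to-letter morphism $\phi$.

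For the satisfiability direction, I would verify each equation type from \cref{def:eq_system_ed_1}. For an equation of type (a) arising from an edit tuple $(i, \sigma_a \mid j, \sigma_b) \in \sE_{P,T}(\mX)$, the position $P\position{i}$ (respectively $T\position{j}$) is incident to a red edge of $\bG_S$ (a deletion/insertion edge to $\bot$ or a substitution edge between $P$ and $T$), so it belongs to a red connected component. By \cref{def:sub_hash_ed_bad}, such positions retain their original characters in $P^\S$ and $T^\S$; these are exactly $\sigma_a$ and $\sigma_b$, matching $D\position{a}$ and $D\position{b}$ by construction of $D$. For an equation of type (b) arising from a maximal matched interval $\fragmentco{y}{y'}$ under $\mX$, each pair $(P\position{i}, T\position{x + (i-y)})$ is joined by a black edge (the match contributed by $\mX$), placing both endpoints in the same black connected component and hence giving them the same sentinel in $P^\S$ and $T^\S$.

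For the universality direction, fix an arbitrary solution $(T', P', D')$ and define $\phi$ as follows: every sentinel character associated with a black component $C$ is sent to the common value that $P'$ and $T'$ take at any position of $C$, while $\sigma_a$ is sent to $D'\position{a}$. To show that the first part of $\phi$ is well-defined, I would argue that any two vertices in the same black component of $\bG_S$ are forced to carry the same character in $(T', P', D')$: every black edge of $\bG_S$ comes from a match in some $\mX \in S$ that appears inside a maximal matched interval, so the type (b) equation for that interval forces the two endpoints to agree; iterating this along any spanning tree of black edges of $C$ yields the claim. For the second part, I would show that for any vertex $v$ in a red component that is not itself directly involved in an edit, repeatedly following black edges inside the component eventually reaches a vertex $u$ that \emph{is} directly involved in some edit and is thus constrained by a type (a) equation to equal $D'\position{a}$ for the appropriate $a$; composing with the black-edge equalities propagates this value back to $v$, and consistency of the $\sigma_a$ labeling (inherited from the fact that all characters actually originate from the strings $P$ and $T$) guarantees the correct index $a$.

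The main obstacle I anticipate is the propagation step in the universality direction: unlike the Hamming case, where the alignment is trivial and the substring equations for matched intervals directly connect parallel positions, here insertions and deletions mean that a maximal matched interval in $P$ is equated to an interval of the \emph{same} length in $T$, but the global structure of which positions lie in which component is governed by the possibly-complicated interaction of several alignments from $S$. I will have to be careful that every vertex of a red component can indeed be reached from some ``edit-touching'' vertex via a path in $\bG_S$ whose edges all correspond to equations in $\exw{P}{T}{S}$, and that no vertex of a black component is spuriously forced to agree with a vertex of a red component by the equations. The former is immediate from the definition of red components (they contain at least one red edge and are connected); the latter follows because the equations in $\exw{P}{T}{S}$ only produce equalities between positions that share an edge in $\bG_S$, so the component structure of $\bG_S$ is preserved.
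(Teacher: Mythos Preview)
Your proposal is correct and takes essentially the same approach as the paper, which simply notes that the proof of \cref{lem:str_eq_to_xslp_hd} carries over verbatim because there is a one-to-one correspondence between black edges in $\bG_S$ and the character-level equalities coming from rules of type (b), and between vertices incident to red edges and the anchoring rules of type (a). Your writeup just unpacks this correspondence; the one imprecision is that for a type-(b) rule the two matched positions need not lie in a \emph{black} component---they lie in the \emph{same} component, and if it happens to be red they retain their (equal) original characters rather than a shared sentinel---but your universality argument already handles both cases correctly.
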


\begin{proof}
    The same proof as in \cref{lem:str_eq_to_xslp_hd} applies as there is a one to one
    correspondence between black edges in $\bG_S$ and equalities at a character level
    introduced by equations of the form \eqref{lem:qhd_proxy:rule:b} in \cref{def:eq_system_ed_1},
    and between nodes incident to red components in $\bG_S$ and equalities introduced by
    equations of the form \eqref{lem:qhd_proxy:rule:a} in \cref{def:eq_system_ed_1}.
\end{proof}

\subsubsection{The Strings \(P^\#\) and \(T^\#\) as a Solution to Substring Equations}
\label{sec:pt_hash_ed}

As a workaround to the previous discussion,
we could learn some of the characters within black components as well,
resulting in less masked out characters than in $P^\S $ and $T^\S $.
The authors of \cite{KNW24} proved that storing a small number of these characters is
indeed sufficient.

In \cite{KNW24}, the first step in determining which characters to learn involves finding
a function \( \w_S :\fragmentco{0}{\bc(\bG_S)} \rightarrow \Zz \)
that assigns weights to black components.
On a high level, this function should have the property to estimate whether edits in \( S
\) are somehow ``close'' to a black component.
If a function \( \w_S \) has such desirable properties, it is said to \emph{cover \( S
\)}.
The specific properties required for this function are not relevant to this summary of the
results from \cite{KNW24}.
However, what is important is the \emph{total weight} \( w = \sum_{c=0}^{\bc(\bG_S)-1}
\w_S(c) \),
as the number of characters we need to learn is directly proportional to $w$.
The authors of \cite{KNW24} provide a construction of such a function of total weight
equal to the sum of the costs of the alignment in $S$.

Based on \( \w_S \), the authors identify a subset \( C_S \subseteq \mA^{\leq k}_{P,T} \),
referred to as a \emph{black cover with respect to $\w_S$},
which determines the specific characters that need to be learned.
This leads us to the formal definition of the proxy strings \( P^\# \) and \( T^\# \) for edits.

\begin{definition}\label{def:pt_hash_ed}
    For a string \(P\) of length $m$, a string \(T\) of length \(n \le 2m\),
    and a fixed set $S \subseteq \mA_{P,T} $,
    we transform \(P\) and \(T\) into $P^\#$ and $T^\#$ by iterating over each black
    connected component $c  \notin C_S$ of $\bG_S$ and placing a sentinel character
    (unique to the component) at every position that belongs to the component.
\end{definition}

A black cover \( C_S \) with respect to a function \( \w_S \),
which covers \( S \) with total weight \( w \),
can be constructed such that $\{(c, T\position{\tau_0^c}):c\in C_S\}$ is encodable in \( \Oh(w +
k|S|) \) space,
provided that construction time is not a concern—such as in the context of communication
complexity.
In \cite{KNW24}, the authors also present a black cover such that $\{(c, T\position{\tau_0^c}):c\in
C_S\}$
can be encoded in \( \Ohtilde(w + k|S|) \) space,
with efficient construction in the quantum setting; we elaborate on this in
\cref{sec:g_s_constr_ed}.

What is left to describe is the condition that must hold for a \( k \)-error occurrence of
\( P \) in \( T \) to ensure it is preserved in \( P^\# \) and \( T^\# \). In the Hamming
case, a \( k \)-mismatch occurrence \( x \in \OccH_K(P, T) \) is \emph{captured} if \( x
\) is divisible by \( \gcd(S) \).

The following definition formalizes when a $k$-edit occurrence is captured.

\begin{definition}[{\cite[Definition~4.28]{KNW24}}] \label{def:scomplete}
    Consider a string $P$ of length $m$, a string $T$ of length $n$, and a set $S
    \subseteq \mA^{\leq k}_{P,T}$ such that $\bc(\bG_S) \neq 0$ and $S$ encloses $T$.
    Further, let $\w_S$ cover $S$ and $C_S$ be a black cover with respect to $\w_S$.
    For a $k$-error occurrence $T\fragmentco{t}{t'}$ of $P$ in $T$,
    we say that $S$ \emph{captures} $T\fragmentco{t}{t'}$ if exactly one of the following
    two conditions holds:
    \begin{itemize}
        \item $\bc(\bG_{S}) = 0$; or
        \item $\bc(\bG_{S}) > 0$ and $|\tau_i^{0} - t - \pi_0^{0}| \leq w + 3k$
            holds for some $i\in \fragmentco{0}{n_0}$.\qedhere
    \end{itemize}
\end{definition}

The capturing property must also ensure that if it fails to hold for all \( k \)-error
occurrences,
we can identify an uncaptured occurrence, add it to \( S \),
and thereby make progress toward having \( S \) capture all \( k \)-error occurrences.
Ideally, no more than \( \Oh(\log n) \) alignments should need to be added to \( S \) to
accomplish this.
In the case of Hamming distance, we had the guarantee that the number of components in \(
\bG_S \) would decrease by at least a factor of two.
For edits, the same holds, but for the number of black components.

\begin{lemma}[{\cite[Lemma~4.27]{KNW24}}]
    \label{lem:periodhalves}
    Consider a string $P$ of length $m$, a string $T$ of length $n$, and a set $S
    \subseteq \mA^{\leq k}_{P,T}$ such that $\bc(\bG_S) \neq 0$ and $S$ encloses $T$.
    Further, let $\w_S$ cover $S$ and $C_S$ be a black cover with respect to $\w_S$.
    Let $\mY : P \onto T\fragmentco{t}{t'}$ be an alignment of cost at most $k$.
    If $|\tau_{i}^{0} - t - \pi_0^{0}|>w+2k$ holds for every $i\in \fragment{0}{n_0-m_0}$,
    then $\bc(\bG_{S \cup \{\mY\}}) \leq \bc(\bG_{S})/2$. \lipicsEnd
\end{lemma}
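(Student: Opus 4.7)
\medskip
\noindent
\textbf{Proof plan for \cref{lem:periodhalves}.}
Set $b \coloneqq \bc(\bG_S)$. The plan is to show that the alignment $\mY$, being ``off-period'' with respect to $S$ by assumption, forces at least $b/2$ of the residue classes of black components of $\bG_S$ to be merged once the edges contributed by $\mY$ are added to the graph. The starting point is \cref{lem:periodicity}, which tells us that the $b$ black components of $\bG_S$ are indexed by the residues $c \in \fragmentco{0}{b}$, with the $c$-th component consisting exactly of the characters $P_{|S}\position{c + j\cdot b}$ for $j \in \fragmentco{0}{m_c}$ and $T_{|S}\position{c + i\cdot b}$ for $i \in \fragmentco{0}{n_c}$ (so every character of $P_{|S}$ or $T_{|S}$ carries a well-defined residue).

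Next, I examine what $\mY$ contributes to $\bG_{S\cup\{\mY\}}$. Since $\mY$ has cost at most $k$, it aligns all but at most $2k$ characters of $P$ to characters of $T\fragmentco{t}{t'}$. For each character $P\position{\pi_j^c}$ that $\mY$ aligns to a character $T\position{y}$ that lies in $T_{|S}$ (so that $y = \tau_i^{c'}$ for some $i,c'$), \cref{def:bg} adds an edge between the $c$-th and $c'$-th black components of $\bG_S$; this edge is black whenever $\mY$ matches, which is automatic because both endpoints are sentinel-equal inside their respective components of $\bG_S$. Each such merger collapses the two corresponding residues in $\bG_{S\cup\{\mY\}}$. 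Thus, it suffices to exhibit $\ge b/2$ pairs $(c, c')$ with $c \ne c'$ arising in this way.

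The key step is a counting argument that links the offset $\tau_i^0 - t - \pi_0^0$ to the residue map $c \mapsto c'$ induced by $\mY$. First, I would show that the number of characters of $P_{|S}$ that $\mY$ either deletes, substitutes, or aligns outside $T_{|S}$ is at most $w + 2k$; the $2k$ comes from the cost bound on $\mY$, and the $w$ absorbs, across all $c$, the number of characters of $T_{|S}$ not aligned to $P_{|S}$ under $\mY$ but chargeable to the coverage weights of $\w_S$ (this is essentially the accounting used in~\cite{KNW24} to obtain that $\w_S$ ``covers'' $S$). Consequently, at least $\sum_c m_c - (w+2k)$ characters $P_{|S}\position{c + j\cdot b}$ are matched by $\mY$ to some $T_{|S}\position{c' + i\cdot b}$; denote their common horizontal offset by $\Delta \coloneqq y - \pi_j^c$, which is determined by $\mY$ up to at most $2k$ slack coming from insertions and deletions. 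If all these mergers preserved the residue $c = c'$, then by the periodic indexing of \cref{lem:periodicity}, $\Delta$ would coincide with $\tau_i^0 - t - \pi_0^0$ for some $i \in \fragment{0}{n_0-m_0}$ up to additive error $w+2k$, contradicting the hypothesis $|\tau_i^0 - t - \pi_0^0| > w + 2k$.

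Hence at least one such match crosses residues, and a tighter pigeonhole (grouping the $m_0 - w - 2k$ residue-preserving matches one would need to avoid all $n_0-m_0+1$ candidate offsets) shows that the matches must cross residues in at least $b/2$ distinct residues $c$, producing at least $b/2$ mergers among the $b$ residues. This yields $\bc(\bG_{S\cup\{\mY\}}) \le b/2 = \bc(\bG_S)/2$. I expect the main obstacle to be the bookkeeping in the pigeonhole step: quantifying precisely how the budget $w + 2k$ of ``bad'' positions is distributed across the $b$ residues, and ruling out the possibility that the surviving residue-preserving matches cluster in few residues while leaving the remainder unmerged. The bound $w + 2k$ in the hypothesis is chosen exactly so that this accounting closes, matching the slack available from the coverage weights of $\w_S$ plus the cost of $\mY$.
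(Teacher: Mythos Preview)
The paper does not prove \cref{lem:periodhalves}; it is quoted verbatim from \cite[Lemma~4.27]{KNW24} and closed with \texttt{\textbackslash lipicsEnd}, so there is no in-paper proof to compare against. Your proposal therefore has to stand on its own, and as written it does not.

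The central gap is the role of $w$. You write that ``the $w$ absorbs, across all $c$, the number of characters of $T_{|S}$ not aligned to $P_{|S}$ under $\mY$ but chargeable to the coverage weights of $\w_S$,'' but this paper never states what it means for $\w_S$ to \emph{cover} $S$, and the property you invoke is not one you can derive from anything available here. The weight function is a structural invariant of $S$; it has no a priori relationship to how $\mY$ (which is not in $S$) interacts with $T_{|S}$. Without the actual covering property from \cite{KNW24}, the inequality ``at most $w+2k$ characters of $P_{|S}$ are bad under $\mY$'' is an assertion, not a step.

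The second gap is the pigeonhole you defer to the end. Even granting your bad-character bound, you have shown only that \emph{some} match crosses residues, not that $\ge b/2$ distinct residues are merged. Your own caveat (``I expect the main obstacle to be the bookkeeping in the pigeonhole step'') is accurate: going from one cross-residue match to $b/2$ of them is the heart of the lemma, and the sketch gives no mechanism for it. In particular, the contrapositive you state --- ``if all mergers preserved residues then $\Delta$ would coincide with $\tau_i^0 - t - \pi_0^0$ up to $w+2k$'' --- conflates offsets in $P,T$ with offsets in $P_{|S},T_{|S}$; the passage between the two absorbs exactly the red-component positions whose count you have not controlled. A correct argument needs the precise covering definition to translate between these two coordinate systems and then a genuine counting argument per residue class, neither of which is present.
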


Note that this is still sufficient for our purposes.
Once all black components have been eliminated, only the red components remain,
allowing us to fully reconstruct \( P \) and \( T \) using only the information in \(
\eso{S} \).

For a summary of the previous discussion, let us restate \cite[Theorem 4.27]{KNW24}.

\begin{theorem}[{\cite[Theorem~4.28]{KNW24}}] \label{prp:ed_subhash}
    Consider a string $P$ of length $m$, a string $T$ of length $n$, and a set $S
    \subseteq \mA^{\leq k}_{P,T}$ such that $\bc(\bG_S) \neq 0$ and $S$ encloses $T$.
    Further, let $\w_S$ cover $S$ and $C_S$ be a black cover with respect to $\w_S$.
    Then, the two following hold.
    \begin{enumerate}[(i)]
        \item For every $a\in \fragmentco{0}{m}$ and $b \in \fragmentco{0}{n}$, we have
            that $P^\#\position{a} = T^\#\position{b}$ implies  $P\position{a} =
            T\position{b}$.\\
            (No new equalities between characters are created.)
            \label{it:ed_subhash:i}
        \item If $S$ captures the k-error occurrence $T\fragmentco{t}{t'}$, then $\ed(P,
            T\fragmentco{t}{t'}) \leq \ed(P^\#, T^\#\fragmentco{t}{t'})$.
            Moreover, for all optimal $\mX \mid P \onto T\fragmentco{t}{t'}$ we have
            $\sE_{P, T}(\mX) = \sE_{P^\#, T^\#}(\mX)$.\\
            (For captured $k$-error occurrences, the edit information and the edit
            distance are preserved.)
            \label{it:ed_subhash:ii}
            \lipicsEnd
    \end{enumerate}
\end{theorem}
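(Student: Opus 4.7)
The plan is to prove part (i) by a simple case analysis on the type of characters, and then to derive the inequality in (ii) as an immediate consequence of (i), leaving the preservation of edit information for captured occurrences as the main technical content.

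For part (i), I would fix positions $a\in\fragmentco{0}{m}$ and $b\in\fragmentco{0}{n}$ with $P^\#\position{a}=T^\#\position{b}$ and split into cases based on whether the characters on both sides are sentinels (introduced by \cref{def:pt_hash_ed}) or original characters from $P$ and $T$. If both are sentinels, then since sentinel characters are unique to their black connected component in $\bG_S$, the equality forces $p_a$ and $t_b$ to belong to the same black component $c\notin C_S$. Black components contain only black edges, which identify positions carrying the same underlying character, so $P\position{a}=T\position{b}$. If neither is a sentinel (i.e.\ both positions lie in red components or in a component $c\in C_S$), then $P^\#\position{a}=P\position{a}$ and $T^\#\position{b}=T\position{b}$, so the equality transfers trivially. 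The remaining mixed case (one sentinel, one original) cannot arise: the sentinel and the original character come from disjoint alphabets, so they cannot be equal. This finishes~(i).

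For the inequality in (ii), I would note that any alignment $\mY:P^\#\onto T^\#\fragmentco{t}{t'}$ realises some cost $c$. Its insertions and deletions depend only on the shape of the alignment, and its substitutions are exactly the aligned pairs $(x,y)$ with $P^\#\position{x}\ne T^\#\position{y}$. Viewing the same alignment on $P$ and $T\fragmentco{t}{t'}$, part~(i) tells us that any aligned pair that was a match in $P^\#,T^\#$ remains a match in $P,T$, so the cost can only decrease. This gives $\ed(P,T\fragmentco{t}{t'})\le\ed(P^\#,T^\#\fragmentco{t}{t'})$.

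The hard part is the ``moreover'' clause about preservation of edit information for captured occurrences. Let $\mX:P\onto T\fragmentco{t}{t'}$ be optimal, so $\mX$ has cost at most $k$ (since $T\fragmentco{t}{t'}$ is a $k$-error occurrence). The insertion/deletion entries of $\sE(\mX)$ depend only on the shape of $\mX$ and are identical whether we read the characters from $P,T$ or from $P^\#,T^\#$, up to the possible replacement of original characters by sentinels (which only changes the displayed character, not whether the entry is present). The real issue is to show that $\mX$ matches $P\position{x}$ and $T\position{y}$ iff it matches $P^\#\position{x}$ and $T^\#\position{y}$; equivalently, for every aligned pair, sentinels never ``break'' a match and never ``create'' a match. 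By part (i), no new matches appear. For the other direction, I would use the capturing condition from \cref{def:scomplete}: because $S$ captures $T\fragmentco{t}{t'}$, there exists $i$ with $|\tau_i^{0}-t-\pi_0^{0}|\le w+3k$, and combining this with the periodic characterisation of black components from \cref{lem:periodicity} together with the weight-based construction of $C_S$, one can argue that every matching pair $(x,y)$ aligned by $\mX$ either lies entirely in red/$C_S$ components (so both characters are preserved and the match persists) or lies in the same black component $c\notin C_S$ (so both positions receive the \emph{same} sentinel $\#^c$ and the match again persists). Positions from two different non-covered black components cannot be aligned in a matching pair of $\mX$ because, by \cref{lem:periodicity}, characters in distinct black components correspond to distinct residues modulo $\bc(\bG_S)$; the capturing bound forces the shift of $\mX$ relative to the black-periodic structure to be close enough (within $w+3k$) that any mismatch of residues would have to be ``paid for'' by insertions/deletions already accounted for in the cover weight $w$ and the cost $k$, contradicting optimality. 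This residue-matching argument, mediated by the capturing bound and the weight function $\w_S$, is the main obstacle, and I expect it to require a careful bookkeeping of how edits in $\mX$ shift residues modulo $\bc(\bG_S)$ against the shift budget $w+3k$ granted by \cref{def:scomplete}.
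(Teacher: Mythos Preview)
The paper does not prove this statement: it is quoted from \cite{KNW24} (note the citation in the theorem header and the \texttt{\textbackslash lipicsEnd} marker with no subsequent proof environment). There is therefore no proof in this paper to compare your proposal against.

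On the substance of your attempt: your treatment of part~(i) and of the inequality in~(ii) is correct and matches the analogous argument for the Hamming case (\cref{thm:hd_subhash}). For the ``moreover'' clause, your sketch has the right shape but is incomplete on two points. First, preserving $\sE_{P,T}(\mX)=\sE_{P^\#,T^\#}(\mX)$ requires not only that the \emph{set} of edit positions is unchanged but also that the \emph{recorded characters} agree; hence you must additionally argue that every character of $P$ or $T$ that $\mX$ deletes, inserts, or substitutes lies in a red component or in a component $c\in C_S$ (so it is not replaced by a sentinel). Second, your residue argument for why a matched pair cannot straddle two distinct non-covered black components is only heuristic: the actual control comes from the definition of the black cover $C_S$ relative to $\w_S$, whose precise properties (which you do not invoke) are what guarantee that within the shift budget $w+3k$ granted by capturing, all relevant black components have been placed in $C_S$. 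Without unpacking those properties of the cover, the bookkeeping you describe cannot be carried out.
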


\begin{remark}
     We describe how to obtain the single-round one-way communication complexity result
     for \PMwE from \cite{KNW24}. That is, we briefly sketch how to encode the set
     $\OccE_k(P, T)$ using $\Oh(k \log n)$ space, that is, $\Oh(k \log n)$ words in the RAM
     model.

     We proceed very similarly to the Hamming case.
     By cropping $T$ we can assume without loss of generality that there exist $k$-edit
     alignments $\mXpref, \mXsuf$ as described in the introduction of this (sub)section
     (the case $\OccE_k(P,T) = \emptyset$ can be encoded trivially in the claimed space).

     Our goal is to construct a set $S$ of size $|S| = \Oh(\log n)$ that captures all
     $k$-error occurrences. We begin with \(S \coloneq \{\mXpref, \mXsuf\}\) and examine
     one by one each optimal alignment \(\mY : P \onto T\fragmentco{t}{t'}\) of cost at
     most $k$. If \(\bc(\bG_{S}) = 0\), then $T\fragmentco{t}{t'}$ is already captured, and
     there is no need to add $\mY$ to $S$. Otherwise, we construct the function $\w_S$ from
     \cite{KNW24} of total weight $w$ equal to the sum of costs of the alignments in $S$.

     This gives us the guarantee $w \leq k|S|$. Next, we consider $\Delta_S(t) \coloneqq
     \min_{i \in \fragmentco{0}{n_0}} |\tau_i^0 - t - \pi_0^0|$. If $\Delta_S(t) \leq w +
     3k$, then $T\fragmentco{t}{t'}$ is already captured as well.

     This leaves the case $\Delta_S(t) > w + 3k$ in which we add \(\mY\) to \(S\). By
     \cref{lem:periodhalves}, we have \(\bc(\bG_{S \cup \{\mY\}}) \leq \bc(\bG_{S})/2\),
     meaning we can make at most \(\Oh(\log n)\) additions before \(\bc(\bG_{S}) = 0\) and
     all $k$-error occurrences are captured.

     The encoding of $\OccE_k(P, T)$ depends on whetner \(\bc(\bG_{S}) = 0\) or
     \(\bc(\bG_{S}) > 0\).

     If the former holds, then it suffices to store $\sE_{P,T}(\mX)$ for all $\mX \in S$.
     By the discussion of above, this suffices to fully reconstruct $P,T$ through $\bG_S$.

     If the latter holds, then we encode, along with the edit information the black cover
     $C_S$ that takes up $\Oh(k|S|) = \Oh(k\log n)$ space. This allows us to reconstruct
     the string $P^\#$ and $T^\#$ and compute $\OccE_k(P^\#, T^\#)$. By
     \cref{prp:ed_subhash}\eqref{it:ed_subhash:i}, we have $\OccE_k(P, T) \subseteq
     \OccE_k(P^\#, T^\#)$, and by \cref{prp:ed_subhash}\eqref{it:ed_subhash:ii}, we have
     $\OccE_k(P^\#, T^\#) \subseteq\OccE_k(P, T)$. We conclude that $\OccE_k(P^\#, T^\#) =
     \OccE_k(P, T)$.
 \end{remark}

We conclude \cref{sec:pt_hash_ed} by formalizing the encoding form of the black cover
that the authors of \cite{KNW24} provide,
and by giving the construction of the system of substring equations
whose solution represent strings $P^\#$ and $T^\#$ of \cref{def:pt_hash_ed}.

\begin{definition}
    Consider a string $P$ of length $m$, a string $T$ of length $n$, and a set $S
    \subseteq \mA^{\leq k}_{P,T}$ such that $\bc(\bG_S) \neq 0$ and $S$ encloses $T$.
    Further, let $\w_S$ cover $S$ and $C_S$ be a black cover with respect to $\w_S$.
    We say an encoding of the set $\{(c, T\position{\tau_0^c}):c\in C_S\}$ is \emph{a Lempel--Ziv
    compression of $C_S$},
    if it consists of $\mathsf{sz} = \Ohtilde(w)$ intervals
    $\fragment{a_1}{b_1}, \ldots, \fragment{a_\mathsf{sz}}{b_\mathsf{sz}} \subseteq
    \fragmentco{0}{\bc(\bG_S)}$
    such that $C_S = \bigcup_{i=1}^{\mathsf{sz}} \fragment{a_i}{b_i}$.
    Additionally, for each $i \in \fragment{1}{\mathsf{sz}}$ either
    \begin{center}
        \quad $\LZ(T\fragment{\tau_0^{a_i}}{\tau_0^{b_i}})$ \quad or \quad
        $\LZ(\rev{T\fragment{\tau_0^{a_i}}{\tau_0^{b_i}}})$
    \end{center}
    is provided. For such factorizations we have
    \[
        \sum_{i=1}^{\mathsf{sz}} (1 - \mathrm{I}_i) \cdot
        |\LZ(T\fragment{\tau_0^{a_i}}{\tau_0^{b_i}})|
        + \mathrm{I}_i \cdot \LZ(\rev{T\fragment{\tau_0^{a_i}}{\tau_0^{b_i}}})
        = \Ohtilde(w + k|S|),
    \]
    where $\mathrm{I}_i \in \{0,1\}$ indicates whether the compressed representation along
    with the $i$-th interval is reversed.
\end{definition}

\begin{definition}\label{def:psh_ed}
    Consider a string $P$ of length $m$, a string $T$ of length $n$, and a set $S
    \subseteq \mA^{\leq k}_{P,T}$ such that $\bc(\bG_S) \neq 0$ and $S$ encloses $T$.
    Further, let $\w_S$ cover $S$, let $C_S$ be a black cover with respect to $\w_S$, and let $R$
    be a Lempel--Ziv compression of $C_S$.

    Write \(D = \sigma_0 \cdots \sigma_{c-1}\) for a string of all of the \(c\)
    characters appearing across $\eso{S}$ and across all parses in $R$
    such that all characters present in $\eso{S}$ appear in a prefix of $D$.
    We denote by $\exwf{P}{T}{S}{R}$ the set of substring equations obtained by adding to
    $\exw{P}{T}{S}$
    of \cref{def:eq_system_ed_1} the following equation for each phrase $F_j$ in a
    LZ77-like representation contained in $R$.
    \begin{enumerate}[(a)]
        \item If $F_j=T\fragmentco{i}{i+\ell}$
            is a previous factor phrase encoded as $(i',\ell)$, we add $e :
            T\fragmentco{i}{i+\ell} = T\fragmentco{i'}{i'+\ell}$.
            \label{eq:psh_ed:a}
        \item If $F_j=\rev{T\fragmentco{i}{i+\ell}}$
            is a previous factor phrase encoded as $(i',\ell)$, we add $e :
            T\fragmentco{i}{i+\ell} = T\fragmentoc{i'-\ell}{i'}$.
            \label{eq:psh_ed:b}
        \item Otherwise, if $F_j=T\position{i}$ is encoded as $(\sigma_{a}',0)$, we add $e
            : T\position{i} = D'\position{a}$.
            \label{eq:psh_ed:c}
            \qedhere
    \end{enumerate}
\end{definition}

\begin{lemma}  \label{lem:str_eq_to_xslp_ed_2}
    Consider a string $P$ of length $m$, a string $T$ of length $n$, and a set $S
    \subseteq \mA^{\leq k}_{P,T}$ such that $\bc(\bG_S) \neq 0$ and $S$ encloses $T$.
    Further, let $\w_S$ cover $S$, let $C_S$ be a black cover with respect to $\w_S$, and
    let $R$
    be a Lempel--Ziv compression of $C_S$.
    Write \(D = \sigma_1 \cdots \sigma_c\) for a string of all of the \(c\)
    characters appearing across $\eso{S}$ and across all parses in $R$
    such that all characters present in $\eso{S}$ appear in a prefix of $D$.

    The strings \( (T^\#,P^\#,D) \) are the unique solution of \( \exwf{P}{T}{S}{R} \) (up
    to
    renaming of placeholder characters).
\end{lemma}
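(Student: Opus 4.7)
The plan is to mirror the structure of \cref{lem:str_eq_to_xslp_ed_1} and reduce the claim to two sub-claims: (i) $(T^\#, P^\#, D)$ satisfies every equation of $\exwf{P}{T}{S}{R}$, and (ii) every other triple $(\hat T, \hat P, \hat D)$ satisfying $\exwf{P}{T}{S}{R}$ is a letter-to-letter image of $(T^\#, P^\#, D)$. Since $\exwf{P}{T}{S}{R}$ extends $\exw{P}{T}{S}$ by the equations induced by $R$, \cref{lem:str_eq_to_xslp_ed_1} already hands us a skeleton: every solution of $\exwf{P}{T}{S}{R}$ is a priori an image of $(T^\S, P^\S, D)$ under some letter-to-letter map $\phi$, so I only need to track how the LZ-based equations constrain $\phi$, and how $(T^\#, P^\#, D)$ itself is exactly the result of applying the most general such $\phi$ to $(T^\S, P^\S, D)$.

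For satisfiability, the equations inherited from $\exw{P}{T}{S}$ are handled exactly as in \cref{lem:str_eq_to_xslp_ed_1}: each rule-(\ref{lem:qhd_proxy:rule:b})-type equation corresponds to a black edge in $\bG_S$ whose endpoints lie in the same component $c$, so the two sides of the equation agree in $T^\#$ and $P^\#$ regardless of whether $c\in C_S$ (actual character) or $c\notin C_S$ (common sentinel $\#^c$); the rule-(\ref{lem:qhd_proxy:rule:a}) equations sit in red components, where $T^\#$ and $P^\#$ agree with $T$ and $P$ character by character, and $D$ is defined to contain the right literal. The new ingredient is the LZ equations. For types~\eqref{eq:psh_ed:a} and~\eqref{eq:psh_ed:b}, the parse is of a contiguous fragment $T\fragment{\tau_0^{a_i}}{\tau_0^{b_i}}$, and the periodic structure of black components modulo $\bc(\bG_S)$ from \cref{lem:periodicity} implies that a previous-factor copy respects component membership position by position; hence the equality $T\fragmentco{i}{i+\ell} = T\fragmentco{i'}{i'+\ell}$ lifts to $T^\#$ because corresponding positions are either both outside black $\notin C_S$-components (so they keep the actual characters from $T$) or both inside the same component $c\notin C_S$ (so they carry the same sentinel $\#^c$). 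Type~\eqref{eq:psh_ed:c} equations are trivially satisfied since at every position $i$ occurring as a literal LZ phrase we have $T^\#\position{i}=T\position{i}=\sigma_a=D\position{a}$ by the construction of $D$.

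For uniqueness, fix a solution $(\hat T, \hat P, \hat D)$ and the induced morphism $\phi$ from \cref{lem:str_eq_to_xslp_ed_1}. The goal is to show that for every $c\in C_S$, $\phi$ must identify the sentinel $\#^c$ with the literal character $T\position{\tau_0^c}\in D$, while the sentinels for $c\notin C_S$ remain unconstrained. Type-\eqref{eq:psh_ed:c} equations pin one position per literal phrase, and type-\eqref{eq:psh_ed:a}/\eqref{eq:psh_ed:b} equations chain these identifications along the LZ parse of each interval $\fragment{\tau_0^{a_i}}{\tau_0^{b_i}}$; since $R$ is a Lempel--Ziv compression of $C_S$, this chain reaches every position $\tau_0^c$ with $c\in C_S$. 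Consequently the coarsest admissible $\phi$ is precisely the one that turns $(T^\S, P^\S, D)$ into $(T^\#, P^\#, D)$: it collapses $\#^c$ to the corresponding character of $D$ for $c\in C_S$ and leaves sentinels for $c\notin C_S$ alone. Combined with (i), this establishes universality.

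The main obstacle I anticipate is the careful alignment argument in step~(i) for the LZ-type equations: one must verify that the implicit correspondence induced by a previous-factor phrase inside $T\fragment{\tau_0^{a_i}}{\tau_0^{b_i}}$ never maps a position in a component $c\notin C_S$ to one in a different component or to a $C_S$-component, because such a mismatch would break $T^\#$. This reduces to showing that the LZ parse (together with the chosen intervals $\fragment{\tau_0^{a_i}}{\tau_0^{b_i}}$) is compatible with the periodic decomposition of \cref{lem:periodicity}; once this invariant is isolated, both satisfiability and uniqueness follow from the same bookkeeping used in \cref{lem:str_eq_to_xslp_hd,lem:str_eq_to_xslp_ed_1}.
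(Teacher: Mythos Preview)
Your two-claim structure and the use of $(T^\S,P^\S,D)$ via \cref{lem:str_eq_to_xslp_ed_1} as an intermediary match the paper exactly; your universality argument is essentially the paper's second claim, phrased as factoring the morphism through the quotient $T^\S\to T^\#$.

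The divergence is in the satisfiability of the LZ equations \eqref{eq:psh_ed:a}--\eqref{eq:psh_ed:c}, which is precisely your anticipated obstacle. You attempt to show that a previous-factor phrase ``respects component membership position by position,'' but this is both stronger than needed and not something \cref{lem:periodicity} delivers (two positions in different red components can perfectly well carry the same character). The paper sidesteps this entirely with one observation you are missing: by the monotonicity of the $\tau_j^c$ recorded after \cref{def:pitau}, the only black-component positions in $T\fragment{\tau_0^{a_i}}{\tau_0^{b_i}}$ are $\tau_0^{a_i},\tau_0^{a_i+1},\ldots,\tau_0^{b_i}$ themselves (any $\tau_j^c$ with $j\ge 1$ lies to the right of $\tau_0^{\bc(\bG_S)-1}\ge \tau_0^{b_i}$). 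Since $\fragment{a_i}{b_i}\subseteq C_S$, every position in that range lies in a red component or in a black component belonging to $C_S$, so $T^\#\fragment{\tau_0^{a_i}}{\tau_0^{b_i}} = T\fragment{\tau_0^{a_i}}{\tau_0^{b_i}}$. The LZ equations are tautologically satisfied by $T$, hence also by $T^\#$. With this, your case ``both inside the same component $c\notin C_S$'' is vacuous, your case ``both outside'' holds automatically, and the obstacle dissolves without any appeal to periodicity of the phrase structure.
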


\begin{proof}\label{claim:qed_proxy:1}
    We break down the proof in two claims.
    \begin{claim}
        \( (T^\#,P^\#,D) \) satisfy \( \exwf{P}{T}{S}{R} \).
    \end{claim}
    \begin{claimproof}
        First, we prove that \( (T^\#,P^\#,D) \) satisfies $\exw{P}{T}{S}$ of
        \cref{def:eq_system_ed_1}.
        To this end, define the concatenations $W^\# = T^\# P^\# D$ and $W^\S = T^\S P^\S
        D$.
        Observe that for all $i,j \in \fragmentco{0}{n+m+c}$, we have that
        $W^\#\position{i}=W^\#\position{j}$ implies $W^\S\position{i}=W^\S\position{j}$.
        Consequently, every set of substring equations satisfied by \( (T^\S,P^\S,D) \)
        is also satisfied by \( (T^\#,P^\#,D) \).
        From \cref{lem:str_eq_to_xslp_ed_1} follows that
        \( (T^\#,P^\#,D) \) satisfies $\exw{P}{T}{S}$.

        Next, we demonstrate that \( (T^\#, P^\#, D) \) also satisfies the equations
        \eqref{eq:psh_ed:a}, \eqref{eq:psh_ed:b}, and \eqref{eq:psh_ed:c}
        introduced in \cref{def:psh_ed}.
        It is important to note that all these equations are satisfied by \( T \),
        as they come from a well-formed LZ77-like representation of fragments of \( T \).
        To complete the proof, it suffices to show that for every fragment \(
        T\fragmentco{x}{y} \) that appears in any of the equations \eqref{eq:psh_ed:a},
        \eqref{eq:psh_ed:b}, and \eqref{eq:psh_ed:c},
        we have \( T\fragmentco{x}{y} = T^\#\fragmentco{x}{y} \).
        Indeed, for each such \( x \) and \( y \), we know that \( \fragmentco{x}{y}
        \subseteq \fragment{\tau_0^a}{\tau_0^b} \) for some \( a, b \in
        \fragmentco{0}{\bc(\bG_S)} \),
        where \( \fragment{a}{b} \) corresponds to one of the intervals stored in \( R \).

        Crucially, every character in \( T\fragmentco{\tau_0^a}{\tau_0^b} \)
        remains unchanged when transforming \(T\) into \( T^\# \),
        as it is either part of a red component or, if part of a black component, belongs
        to \( C_S \).

        Consequently, we have \( T\fragment{\tau_0^a}{\tau_0^b} =
        T^\#\fragment{\tau_0^a}{\tau_0^b}\),
        which implies \( T\fragmentco{x}{y} = T^\#\fragmentco{x}{y} \).
    \end{claimproof}

    \begin{claim}\label{claim:qed_proxy:2}
        Let $(T',P',D')$ be strings that satisfy $\exwf{P}{T}{S}{R}$. Then, there is a
        function $\phi$ such that $T'\position{i}=\phi(T^\#\position{i})$ holds for all
        $i\in \fragmentco{0}{n}$, $P'\position{i}=\phi(P^\#\position{i})$ holds for all
        $i\in \fragmentco{0}{m}$, and $D'\position{i}=\phi(D\position{i})$ holds for all
        $i\in \fragmentco{0}{c}$.
   \end{claim}
   \begin{claimproof}
       Let \( c_p \in \fragment{0}{c} \) be the smallest index such that the prefix \( D_p
       \coloneqq D\fragmentco{0}{c_p} \) contains all characters present in \( \eso{S} \).
       Define \( \#_d \) as the sentinel character substituted in the \( d \)-th black
       connected component for \( d \in \fragmentco{0}{\bc(\bG_S)} \setminus C_S \), used
       in \( P^\# \) and \( T^\# \).

       Also, define \( P^\S \) and \( T^\S \) as in \cref{def:sub_hash_ed_bad},
       substituting the same sentinel character \( \#_d \) for \( d \in
       \fragmentco{0}{\bc(\bG_S)} \setminus C_S \), and new sentinel characters \( \#_d \)
       for \( d \in C_S \).

       Next, observe that \( (T', P', D'\fragmentco{0}{c_p}) \) satisfy the subset of
       equations \( \exw{P}{T}{S} \subseteq \exwf{P}{T}{S}{R} \).

       From \cref{lem:str_eq_to_xslp_ed_1}, it follows that there exists \( \psi \) such
       that \( T'\position{i} = \psi(T^\S\position{i}) \) for all \( i \in
       \fragmentco{0}{n} \), \( P'\position{i} = \psi(P^\S\position{i}) \) for all \( i
       \in \fragmentco{0}{m} \), and \( D'\position{i} = \psi(D\position{i}) \) for all \(
       i \in \fragmentco{0}{c_p} \).

       Now, we construct \( \phi \) from \( \psi \) by defining:
       \[
           \phi(D\position{a}) = D'\position{a} \text{ for } a \in \fragmentco{0}{c},
           \quad \text{and} \quad \phi(\#_d) = \psi(\#_d) \text{ for } d \in
           \fragmentco{0}{\bc(\bG_S)} \setminus C_S.
       \]

       By the definition of \( \phi \), it is clear that \( D'\position{i} =
       \psi(D\position{i}) \) holds for all \( i \in \fragmentco{0}{c} \).
       To show that the same holds for characters in \( P \), consider an arbitrary \( i
       \in \fragmentco{0}{m} \). We distinguish two cases:
       \begin{itemize}
           \item If \( P\position{i} \) is in a red component, then $P^\S\position{i} =
               P^\#\position{i} = D\position{a} \text{ for some } a \in
               \fragmentco{0}{\bc(\bG_S)} \text{ such that } P\position{i} =
               D\position{a}$.
               Consequently, \( \phi(P^\#\position{i}) = \phi(D\position{a}) =
               D'\position{a} = \psi(D\position{a}) = \psi(P^\S\position{i}) =
               P'\position{i} \).

        \item Otherwise, if \( P\position{i} \) is in a black component:
            \begin{itemize}
                \item If \( d \notin C_S \), then we immediately have $\phi(P^\#\position{i})
                    = \phi(\#_d) = \psi(\#_d) = P'\position{i}$.
                \item If \( d \in C_S \), then \( P^\#\position{i} = D\position{a} \) for
                    some \( a \in \fragmentco{0}{c} \). To show that \( P'\position{i} =
                    D'\position{a} \), observe that there is a path of black edges in \(
                    \bG_S \) from \( P\position{i} \) to \( T\position{\tau_0^d} \), and
                    an LZ77-like encoding in \( R \) encoding \( T\position{\tau_0^d} \).
                    The black edges imply a chain of equalities between \( P\position{i}
                    \) and \( T\position{\tau_0^d} \) in \( \exwf{P}{T}{S}{R} \), and the
                    LZ77 encoding implies a chain of equalities from \(
                    T\position{\tau_0^d} \) to \( D\position{a} \). As \( (T', P', D') \)
                    satisfies \( \exwf{P}{T}{S}{R} \), we have \( P'\position{i} =
                    T'\position{\tau_0^d} = D'\position{a} \).
            \end{itemize}
        \end{itemize}
        As \( i \) was arbitrary, we conclude that \( P'\position{i} =
        \psi(P^\#\position{i}) \) holds for all \( i \in \fragmentco{0}{m} \). A symmetric
        argument shows that \( T'\position{i} = \psi(T^\#\position{i}) \) holds for all \(
        i \in \fragmentco{0}{n} \).
    \end{claimproof}
    This concludes the proof of \cref{lem:str_eq_to_xslp_ed_2}.
\end{proof}

By using the same approach as in \cref{lem:qhd_proxy}, we obtain the following corollary.

\begin{corollary}\label{lem:qed_proxy_2}
    Consider a string $P$ of length $m$, a string $T$ of length $n$, and a set $S
    \subseteq \mA^{\leq k}_{P,T}$ such that $\bc(\bG_S) \neq 0$ and $S$ encloses $T$.
    Further, let $\w_S$ cover $S$, let $C_S$ be a black cover with respect to $\w_S$, and let $R$
    be a Lempel--Ziv compression of $C_S$.
    There is a (classical) algorithm that, given $m$, $n$, $k$, $\eso{S}$, and $R$
    constructs
    an xSLP of size $\Ohtilde(k|S|+w)$ representing strings $P^\#$ and $T^\#$ of
    \cref{def:psh_ed}. The algorithm runs in $\Ohtilde((k|S|+w)^2)$ time.
    \lipicsEnd
\end{corollary}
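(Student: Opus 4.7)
The plan is to mirror the proof of \cref{lem:qhd_proxy} and reduce the statement to an application of \cref{cor:solve_substring_equations}. Concretely, given $m$, $n$, $k$, $\eso{S}$, and $R$, I would first construct the system $\exwf{P}{T}{S}{R}$ of substring equations on length-$(n,m,c)$ strings over the formal variables $T$, $P$, and $D$, as specified in \cref{def:psh_ed}. This construction is entirely syntactic: we iterate over the edit information of every $\mX\in S$ to emit the $\Oh(k)$ equations coming from \cref{def:eq_system_ed_1}, and then iterate over the LZ77-like phrases contained in $R$ to emit the equations of type \eqref{eq:psh_ed:a}, \eqref{eq:psh_ed:b}, and \eqref{eq:psh_ed:c}.

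Next I would invoke \cref{cor:solve_substring_equations} on this system to obtain, in $\Ohtilde((|E|+t)^2)$ time, an xSLP of size $\Ohtilde((|E|+t)\log^3 N)$ (where $N=n+m+c$ and $t=3$) that represents a universal solution $(T',P',D')$ of $\exwf{P}{T}{S}{R}$. By \cref{lem:str_eq_to_xslp_ed_2}, this universal solution coincides with $(T^\#,P^\#,D)$ up to renaming of the placeholder characters. I would then post-process the xSLP by relabeling the terminals occurring in the subtree rooted at the starting symbol for $D'$ to $\sigma_0,\sigma_1,\ldots,\sigma_{c-1}$, which yields an xSLP representing $P^\#$ and $T^\#$ in the sense of \cref{def:psh_ed}. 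The starting symbols for $P^\#$ and $T^\#$ are then returned.

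For the size and time bounds, the only quantitative step is to count $|E|$. The equations originating from $\eso{S}$ contribute $\Oh(k)$ per alignment $\mX\in S$, and hence $\Oh(k|S|)$ in total. The equations originating from the Lempel--Ziv compression $R$ contribute one equation per phrase across all parses in $R$, which by definition of a Lempel--Ziv compression of $C_S$ amounts to $\Ohtilde(w + k|S|)$ equations. Altogether $|E| = \Ohtilde(w + k|S|)$, so \cref{cor:solve_substring_equations} produces an xSLP of size $\Ohtilde(k|S|+w)$ in $\Ohtilde((k|S|+w)^2)$ time, matching the claimed bounds. The relabeling step at the end is linear in the size of the xSLP and is absorbed into the $\Ohtilde(\cdot)$.

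There is no real obstacle here: the work of distinguishing black-cover characters from genuinely masked characters has already been pushed into the definition of $R$ and \cref{def:psh_ed}, and the correctness of the produced xSLP is exactly \cref{lem:str_eq_to_xslp_ed_2}. The only mild care is in accounting, making sure that the $\Oh(\log n)$ factor introduced by \cref{cor:solve_substring_equations} (which inflates $|E|$ to $|E|\log n + t$) is subsumed by the $\Ohtilde$ notation, so that the final bounds are indeed $\Ohtilde(k|S|+w)$ for the size and $\Ohtilde((k|S|+w)^2)$ for the running time.
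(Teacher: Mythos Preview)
Your proposal is correct and essentially identical to the paper's approach: the paper explicitly states that the corollary is obtained ``by using the same approach as in \cref{lem:qhd_proxy},'' which is precisely what you do—build $\exwf{P}{T}{S}{R}$, apply \cref{cor:solve_substring_equations}, invoke \cref{lem:str_eq_to_xslp_ed_2} for correctness, and relabel the terminals of $D'$. Your accounting of $|E|=\Ohtilde(k|S|+w)$ via the $\Oh(k)$ equations per alignment plus the $\Ohtilde(w+k|S|)$ phrases in $R$ is exactly the point.
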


\subsubsection{Compressed Construction of Black Covers and Other Related Data Structures}
\label{sec:g_s_constr_ed}

We now present efficient implementations for the various operations discussed earlier,
through the xSLP derived from the system of substring equations in
\cref{def:eq_system_ed_1}. This xSLP has three key applications. First, when $\bc(\bG_S) =
0$, it yields an xSLP representing both $P$ and $T$. Second, it supports \emph{projection
queries}, which can be used to construct a Lempel–Ziv compression of a black cover.
Finally, it enables verification of whether a $k$-error occurrence is captured.

\begin{definition}[{\cite[Definition~4.10]{KNW24}}]
    Let $P$ be a string of length $m$, let $T$ be a string of length $n \leq 3/2 \cdot m$,
    and let $k > 0$ be an integer threshold.
    Further, let $S \subseteq \mA_{P,T}^{\leq k}$.

    Given a position $x \in \fragmentco{0}{m}$, we define a function $\Pi_P(x)$ that maps
    $x$ to the largest position in $P$ that is less than or equal to $x$ and belongs to a
    black connected component. If no such character exists in $P$ preceding $x$, then
    $\Pi_P(x) = -1$. We define a symmetric function $\Pi_T(y)$ for all positions $y \in
    \fragmentco{0}{n}$.
\end{definition}

\begin{definition}[{\cite[Definition~6.3]{KNW24}}]
    Let $P$ be a string of length $m$, $T$ a string of length $n \leq 3/2 \cdot m$,
    and let $k > 0$ be an integer threshold. Additionally, let $S \subseteq
    \mA_{P,T}^{\leq k}$ be such that $S$ encloses $T$ and $\bc(\bG_S) \neq 0$.

    A \emph{projection query} can take one of the following two forms:
    \begin{enumerate}[(a)]
        \item Given $y \in \fragmentco{0}{n}$, output $c$ such that $\Pi_T(y) = \tau_i^c$
            if $\Pi_T(y) \neq -1$, or output $\bc(\bG_S) - 1$  otherwise.
        \label{def:proj_ds:a}
        \item Given $x \in \fragmentco{0}{m}$, output $c$ such that $\Pi_P(x) = \pi_i^c$
            if $\Pi_P(x) \neq -1$, or output $\bc(\bG_S) - 1$ otherwise.
        \label{def:proj_ds:b} \qedhere
    \end{enumerate}
\end{definition}

\begin{proposition}\label{prop:proj_ds}
    Let $P$ be a string of length $m$, let $T$ be a string of length $n \leq 3/2 \cdot m$,
    and let $k > 0$ be an integer threshold.
    Further, let $S \subseteq \mA_{P,T}^{\leq k}$ be such that $S$ encloses $T$.

    There exists a (classical) algorithm that, given $m,n,k$ and $\eso{S}$, computes
    $\bc(\bG_S)$ in $\Ohtilde((k|S|)^2)$ time. Furthermore:
    \begin{itemize}
        \item If $\bc(\bG_S) = 0$, it outputs an xSLP of size $\Ohtilde(k|S|)$
            representing $P$ and $T$.
        \item If $\bc(\bG_S) > 0$, it uses $\Ohtilde(k|S|)$ more time and outputs a data
            structure capable of answering the following types of queries in $\Oh(\log n)$
            time:
            \begin{enumerate}[(i)]
                \item projection queries, and
                \item given $t \in \fragmentco{0}{n}$, compute $\min_{i \in \fragment{0}{n_0 - m_0}} |\tau_i^0 - (t + \pi_0^0)|$.
                \lipicsEnd
            \end{enumerate}
    \end{itemize}
\end{proposition}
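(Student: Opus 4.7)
The plan is to apply \cref{cor:solve_substring_equations} to the system $\exw{P}{T}{S}$ of \cref{def:eq_system_ed_1}, obtaining in $\Ohtilde((k|S|)^2)$ time an xSLP $\G$ of size $\Ohtilde(k|S|)$ whose starting symbols represent $(T^\S, P^\S, D)$; correctness is guaranteed by \cref{lem:str_eq_to_xslp_ed_1}. The quantity $\bc(\bG_S)$ equals the number of distinct terminals of $\G$ that are reachable from the starting symbols for $P^\S$ or $T^\S$ but not from the starting symbol for $D$---these are exactly the sentinel characters introduced in \cref{def:sub_hash_ed_bad}, one per black connected component. I compute this via a standard reachability traversal in $\Oh(|\G|) = \Ohtilde(k|S|)$ time. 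If $\bc(\bG_S) = 0$, no sentinels are present, so $P^\S = P$ and $T^\S = T$, and I return $\G$ (restricted to its starting symbols for $P^\S$ and $T^\S$) as the desired representation.

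If $\bc(\bG_S) > 0$, I build auxiliary rank/select structures. Mimicking the projection technique used in the proof of \cref{lem:pillar_on_xslp} (where the companion SLP $\D$ marks leading sentinel positions), I extract from $\G$ two SLPs of size $\Ohtilde(k|S|)$ generating binary strings $B_P \in \{0,1\}^m$ and $B_T \in \{0,1\}^n$ in which a $\one$ marks positions belonging to a black component of $\bG_S$ (equivalently, positions whose terminal was classified as a sentinel during the reachability traversal). Preprocessing each via \cref{lem:rank_queries} takes $\Ohtilde(k|S|)$ time and supports rank and select in $\Oh(\log n)$ time. I additionally precompute $\pi_0^0 \coloneqq \mathsf{select}_{B_P}(0)$ once.

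The queries now follow directly from \cref{def:pitau} combined with \cref{lem:periodicity}, which asserts that the rank-$r$ sentinel in $T^\S$ is $T_{|S}\position{r} = \tau_{\lfloor r/\bc(\bG_S)\rfloor}^{r \bmod \bc(\bG_S)}$, and symmetrically for $P^\S$. For a projection query of type (a) on $y$, I compute $r \coloneqq \mathsf{rank}_{B_T}(y+1) - 1$ and output $r \bmod \bc(\bG_S)$ if $r \ge 0$, or $\bc(\bG_S)-1$ otherwise; type (b) is symmetric using $B_P$. For the distance query on $t$, I set $\hat{t} \coloneqq t + \pi_0^0$ and compute $r \coloneqq \mathsf{rank}_{B_T}(\hat{t}+1) - 1$; since $\tau_i^0 = \mathsf{select}_{B_T}(i \cdot \bc(\bG_S))$ is monotone in $i$, the minimum of $|\tau_i^0 - \hat{t}|$ over $i \in \fragment{0}{n_0-m_0}$ is attained at $i_- \coloneqq \lfloor r/\bc(\bG_S)\rfloor$ or $i_+ \coloneqq i_- + 1$ (each clamped to the valid range), requiring only $\Oh(1)$ select queries of $\Oh(\log n)$ cost each.

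The main obstacle is the efficient extraction of the bitvector SLPs for $B_P, B_T$ from $\G$ within the claimed $\Ohtilde(k|S|)$ budget while preserving $\Oh(\log n)$-time rank/select. This should mirror the construction of $\hG$ and $\D$ in the proof of \cref{lem:pillar_on_xslp}: replace each terminal $\#^A_i$ by $\one$ or $\zero$ according to the sentinel/non-sentinel classification computed above, then collapse resulting runs of identical characters into pseudo-terminal powers as in \cref{lem:sub_xSLP}. The correctness of the rank-to-component mapping is conceptually immediate from the periodic structure granted by \cref{lem:periodicity}, and the range $\fragment{0}{n_0-m_0}$ for the distance query is handled by clamping plus at most two out-of-range checks.
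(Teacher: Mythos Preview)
Your proposal is correct and follows essentially the same approach as the paper: apply \cref{cor:solve_substring_equations} to $\exw{P}{T}{S}$ to obtain an xSLP for $(T^\S,P^\S,D)$, read off $\bc(\bG_S)$ from the sentinel pseudo-terminals, and build rank/select structures over binary indicator strings marking black-component positions. The one genuine difference is in the distance query: the paper constructs a \emph{second} pair of binary strings marking only positions of the $0$-th black component (by singling out the leftmost pseudo-terminal), whereas you reuse the single bitvector $B_T$ and recover $\tau_i^0$ as $\mathsf{select}_{B_T}(i\cdot \bc(\bG_S))$ via \cref{lem:periodicity}. Your variant is a clean simplification that avoids the second data structure at no cost. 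One phrasing nit: ``counting distinct terminals via a standard reachability traversal in $\Oh(|\G|)$ time'' is imprecise, since the number of terminals associated with pseudo-terminals can be $\Theta(n)$; what you actually do (and what the paper does) is sum the \emph{lengths} of the sentinel pseudo-terminals, which is indeed $\Oh(|\G|)$ time.
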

\begin{proof}
    We use \cref{cor:solve_substring_equations} to solve the system \(\exw{P}{T}{S}\) from
    \cref{def:eq_system_ed_1}.
    \Cref{cor:solve_substring_equations} yields an xSLP that represents a universal
    solution $(T',P',D')$.
    We replace the terminals occurring in $D'$ with $\sigma_0,\ldots,\sigma_{c-1}$ to
    obtain an xSLP $\G$ representing $P^\S$ and $T^\S$ as defined in \cref{def:sub_hash_ed_bad}.
    Since \( E \) has size \(\Oh(k|S|)\), with \(\Oh(k)\) equations for each \( \mX\in S \),
    it follows that $\G$ has size \(|\G| = \Ohtilde(k|S|)\).

    Next, in time \(\Ohtilde(k|S|)\), we can verify whether there are any remaining
    characters in \(P^\S\) and \(T^\S\) aside from \(\sigma_0, \ldots, \sigma_{c-1}\). If
    no such characters remain, this implies that all characters of \(P\) and \(T\) are
    contained within a red component, and \(P^\S = P\) and \(T^\S = T\). This brings us to
    the following claim.

    \begin{claim}
        If \(\bc(\bG_S) = 0\), we can output an xSLP representing $P$ and $T$.
    \end{claim}
    \begin{claimproof}
       If \(\bc(\bG_S) = 0\), then $P^\S = P$ and $T^\S = T$.
       Hence,  it simply suffices to output $\G$.
    \end{claimproof}

    Using $\Ohtilde(k|S|)$ time we construct $\hG$ from $\G$ as defined in
    \cref{lem:sub_xSLP}.
    This brings us an SLP $\hG$ representing strings $\hat{P}, \hat{T}$, defined
    accordingly to \cref{lem:sub_xSLP} with respect to $P$ and $T$.

    \begin{claim}\label{claim:proj_ds:proj}
        If \(\bc(\bG_S) > 0\), we can support projection queries in $\Oh(\log n)$ time.
    \end{claim}
    \begin{claimproof}
        To implement projection queries, we use rank and select queries on an auxiliary
        binary strings \(D_P\) and \(D_T\), derived from \(\hat{P}\) and \(\hat{T}\) by
        replacing each terminal of the form \(\#_0^A\) (for \(A \in \mP_\G\)) or \(\$\)
        with \(\one\), and all other terminals with \(\zero\). Formally, we generate
        \(D_P\) and \(D_T\) using an SLP \(\D\) of size \(\Ohtilde(k|S|)\), obtained from
        \(\hG\) by applying these replacements to each production. We construct \(\D\) in
        \(\Ohtilde(k|S|)\) time, and utilize \cref{lem:rank_queries} to support rank and
        select queries.

        Additionally, by examining the grammar, we can compute \(\bc(\bG_S)\) in
        \(\Ohtilde(k|S|)\) time, as this corresponds to the sum \(\sum_{A \in \mP_\G}
        |A|\), directly derived from \(\hG\).

        Now, observe that \(\mathsf{rank}(y)\) (with respect to \(D_T\)) gives the number
        of characters of \(T\fragmentco{0}{y}\) contained in a black connected component,
        as terminals are substituted with \(\one\) exactly when the corresponding
        character belongs to such component. Now, if $\mathrm{rank}(y)=0$, then \(\Pi_T(y)
        = -1\). Otherwise, $\mathrm{select}(\mathrm{rank}(y)-1)=\Pi_T(y)$. Taking this
        value modulo \(\bc(\bG_S)\) yields the desired result for the projection query
        with input \(y\).

        This show how we can support project queries for $T$.
        Similarly, we can support them for $P$.
    \end{claimproof}

    \begin{claim}
        If \(\bc(\bG_S) > 0\), we can support queries of the following type in $\Oh(\log
        n)$ time:
        Given $t \in \fragmentco{0}{n}$, compute $\min_{i \in \fragment{0}{n_0 - m_0}}
        |\tau_i^0 - (t + \pi_0^0)|$.
    \end{claim}
    \begin{claimproof}
        We use rank and select queries on the auxiliary binary strings \(D_P\) and
        \(D_T\), but this time they are defined differently from the approach in
        \cref{claim:proj_ds:proj}. In \(\Ohtilde(k|S|)\) time, we identify the leftmost
        pseudoterminal \(A \in \mP_\G\) in \(\G\) with respect to the production rules for
        \(P\).

        Next, we define \(D_P\) and \(D_T\), derived from \(\hat{P}\) and
        \(\hat{T}\), by replacing the terminal \(\#_0^A\) with \(\one\) and all other
        terminals with \(\zero\). By doing so, \(D_P\) and \(D_T\) contain
        a $\one$ exactly in the positions of $P$ and $T$ belonging to the $0$-th black
        component.
        As before, we generate \(D_P\) and \(D_T\) using an SLP \(\D\) of size
        \(\Ohtilde(k|S|)\), obtained from \(\hG\) by applying these replacements to each
        production.

        The construction of \(\D\) takes \(\Ohtilde(k|S|)\) time, and we use
        \cref{lem:rank_queries} to support rank and select queries.

        After this processing, we can describe how to compute \(\min_{i \in
        \fragment{0}{n_0 - m_0}} |\tau_i^0 - (t + \pi_0^0)|\) for a given \(t \in
        \fragmentco{0}{n}\). Let \(t' = \min(t + \pi_0^0, n)\). The minimization is
        achieved either at the largest \(\ell \in \fragment{0}{n_0 - m_0}\) such that
        \(\tau_\ell^0 < t'\), or at the smallest \(r \in \fragment{0}{n_0 - m_0}\) such
        that \(t' \leq \tau_r^0\) (it is not guaranteed that both indices exist, but at
        least one of $\ell$ and $r$ always exists).

        To answer the query algorithmically, we first obtain \(\pi_0^0\) by querying
        \(\mathsf{select}(0)\) with respect to \(D_P\). This gives us \(t' = \min(t +
        \pi_0^0, n)\). Next, we explain how to compute \(\ell\) and \(r\), which are
        sufficient to answer the query. By querying \(\mathsf{rank}(t')\) with respect to
        \(D_T\), we determine the number of characters in the 0-th component within
        \(T\fragmentco{0}{t'}\). Consequently, the queries
        \(\mathsf{select}(\mathsf{rank}(t') - 1)\) and
        \(\mathsf{select}(\mathsf{rank}(t'))\) with respect to \(D_T\) yield \(\ell\) and
        \(r\), respectively.
    \end{claimproof}

    This concludes the proof of \cref{prop:proj_ds}.
\end{proof}

Finally, we give the lemma from \cite{KNW24} constructing a Lempel--Ziv compressession of
a black cover.

\begin{lemma}[{\cite[Proposition~6.5 and Lemma~6.6]{KNW24}}] \label{prp:quantum_blackcover}
    Let $P$ be a string of length $m$, let $T$ be a string of length $n \leq 3/2 \cdot m$,
    and let $k > 0$ be an integer threshold.
    Further, let $S \subseteq \mA_{P,T}^{\leq k}$ be such that $S$ encloses $T$ and
    $\bc(\bG_S) \neq 0$.

    There exists a quantum algorithm that,
    given $m,n,k,\eso{S}$ and access to $\Ohtilde(1)$-time projection queries,
    outputs a Lempel--Ziv compressession $R$ of a black cover \(C_S\)
    with respect to a weight function $\w_S$ of total weight $w$ equal to the sum of costs
    all alignments in $S$
    using \(\Ohtilde(\sqrt{wm})\) queries and \(\Ohtilde(\sqrt{wm} + w^2)\) time.
    \lipicsEnd
\end{lemma}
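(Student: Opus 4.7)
The plan is to invoke the two cited results from~\cite{KNW24} essentially as black boxes. The statement is a direct composition of~\cite[Proposition~6.5]{KNW24}, which produces the black cover $C_S$ together with the weight function $\w_S$, and~\cite[Lemma~6.6]{KNW24}, which produces the Lempel--Ziv compression of the string restricted to the intervals indexed by $C_S$. My proof would therefore first verify that the input we are provided — namely $m,n,k,\eso{S}$ together with $\Ohtilde(1)$-time projection queries — matches the prerequisites of both cited statements, and then chain the two algorithms.

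More concretely, the first step is to run the algorithm of~\cite[Proposition~6.5]{KNW24}, which (given the projection queries) identifies a family of intervals $\fragment{a_i}{b_i}\subseteq \fragmentco{0}{\bc(\bG_S)}$ whose union forms a black cover $C_S$ with respect to a weight function $\w_S$ of total weight bounded by $\sum_{\mX\in S}\edal{\mX}=\Oh(k|S|)$ and with overall parameter $w$. The key property needed downstream is that the intervals are produced together with the boundary positions $\tau_0^{a_i}$ and $\tau_0^{b_i}$ in $T$, so that the next step knows precisely which substrings of $T$ to factor; this is exactly what the projection-query data structure of \cref{prop:proj_ds} is designed to support, and the resulting budget for this step is $\Ohtilde(w)$ classical time plus $\Oh(\log n)$ per projection query.

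The second step is to apply~\cite[Lemma~6.6]{KNW24}, which, given the intervals from the previous step, produces for each interval either the LZ factorization of $T\fragment{\tau_0^{a_i}}{\tau_0^{b_i}}$ or of its reverse, guaranteeing total size $\Ohtilde(w+k|S|)$. This is the step that actually consumes quantum queries to $T$: computing LZ factorizations in quantum time proportional to $\sqrt{wm}$ (rather than the classical $n$) is achieved by implementing each greedy phrase extension through a quantum substring-occurrence primitive analogous to \cref{thm:quantum_matching}, exploiting the fact that the output size is $\Ohtilde(w)$ and each phrase can be located in $\Ohtilde(\sqrt{m})$ quantum time. Summing over the $\Ohtilde(w)$ phrases yields $\Ohtilde(\sqrt{wm})$ queries, while the additive $w^2$ time captures the classical postprocessing needed to assemble and align the factorizations with the black-cover intervals.

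The main conceptual obstacle is not the composition but verifying that our setting matches the cited black-cover framework: we must check that the weight function obtained from~\cite[Proposition~6.5]{KNW24} satisfies the covering property required by the definition of a Lempel--Ziv compression of a black cover, and that the total-weight accounting $w=\sum_{\mX\in S}\edal{\mX}$ indeed controls both the size of the compressed output and the quantum running time. Once these bookkeeping points are checked against the definitions in \cref{sec:pt_hash_ed}, the complexity bounds follow immediately by adding the two cited budgets, so no new quantum technique is needed beyond what is already in~\cite{KNW24}.
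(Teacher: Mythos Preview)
The paper does not give its own proof of this lemma: it is stated as a direct citation of \cite[Proposition~6.5 and Lemma~6.6]{KNW24} and marked with \verb|\lipicsEnd|, i.e., used as a black box. Your plan to compose the two cited results and check that the present input ($m,n,k,\eso{S}$ plus $\Ohtilde(1)$-time projection queries) meets their prerequisites is exactly the intended reading of the citation, so your approach is correct and aligned with the paper.
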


\subsubsection{Quantum Algorithm for Gap Edit Distance}

As with mismatches,
a fundamental component is a quantum algorithm capable of distinguishing when the (edit)
distance between two strings is large or small.
Unlike mismatches, where sampling already provides a simple solution,
the edit case seems inherently harder.
For this reason, the edit case has been extensively studied in the classical setting and
is known specifically as the \gaped Problem.

\defproblemc{$(\beta,\alpha)$-\gaped}
{strings $X,Y$ of length $|X| = |Y| =n$, and integer thresholds $\alpha \geq \beta \geq
0$.}
{\yes if $\ed(X,Y) \leq \beta$, \no if $\ed(X,Y) > \alpha$, and an arbitrary answer
otherwise.}
{}

The authors in \cite{KNW24} provide a $\Oh(\sqrt{m})$-query quantum algorithm for this
problem,
though their algorithm falls short of achieving sublinear time.
In this (sub)section, we demonstrate how to modify their algorithm (which is a direct
adaptation of Goldenberg, Kociumaka, Krauthgamer, and Saha's algorithm~\cite{gapED} in the
classical setting) to achieve sublinear quantum time, while maintaining the same gap.

\begin{restatable*}{lemma}{tennine} \label{lem:quantum_gap_ed}
    There is a quantum algorithm that, given strings $X,Y$ and a threshold $k$,
    solves the $(k,kn^{o(1)})$-\gaped Problem on $X,Y$ using $\Ohhat(\sqrt{n})$ queries
    and time.
\end{restatable*}

The algorithm presented in \cite{gapED} solves an instance of \gaped by querying an oracle
that solves smaller instances of a similar problem: the \shifted \ \gaped Problem.
The \shifted \ \gaped Problem distinguishes between small \emph{shifted edit distance} and
large edit distance.
Given two strings \(X, Y\) and a threshold \(\beta\), the \emph{\(\beta\)-shifted edit
distance} \(\edshifted{\beta}(X,Y)\) is defined as
\[
    \edshifted{\beta}(X,Y) \coloneq \min \left(\bigcup_{\Delta=0}^{\min(|X|,|Y|,\beta)}
    \left\{\ed(X\fragmentco{\Delta}{|X|}, Y\fragmentco{0}{|Y| - \Delta}),
    \ed(X\fragmentco{0}{|X| - \Delta}, Y\fragmentco{\Delta}{|Y|})\right\}\right).
\]
Given thresholds \(\alpha \geq \beta \geq \gamma \geq 0\),
the \emph{\(\beta\)-\shifted \((\gamma,3\alpha)\)-\gaped Problem} requires to output
\textsf{yes} if \(\edshifted{\beta}(X,Y) \leq \gamma\), \textsf{no} if \(\ed(X,Y) >
3\alpha\), and an arbitrary answer otherwise.
The insight of \cite{gapED} is that an instance of \shifted \ \gaped can be reduced back
to smaller instances of \gaped.
In this way, \cite{gapED} obtains a recursive algorithm switching between instances of
\shifted \ \gaped and \gaped.

Out of the two reductions,
only the one from \gaped instances to \shifted \gaped instances has the linear time
bottleneck in their quantum adaptation.
The algorithm that reduces \gaped instances to \shifted \gaped instances partitions the
strings \(X\) and \(Y\) into blocks of different lengths (see \cref{alg:gapED}).
It samples these blocks and utilizes them as input for instances of the \shifted \gaped
Problem.
If a limited number of instances return \no, the routine outputs \yes.
Otherwise, it returns \no.

The authors of \cite{KNW24} adapted \cref{alg:gapED} by using \GS on the
iterations of \cref{ln:for_gaped}.
However, to do this, they need to ensure that the recursive calls to the \shifted \gaped
Problem are deterministic
(for instances that fall in the middle of the gap, there is not even a bound on the
probability of outputting \yes/\no).
To address this, they generated a linear seed of random bits in advance, which cost them
linear quantum time.

\begin{algorithm}[t]
\KwInput{An instance $(X,Y)$ of $(\beta, \alpha)$-\gaped, and a parameter $\phi \in
\mathbb{Z}_+$ satisfying
\[
    \phi \geq \beta \geq \psi \coloneq \left\lfloor \frac{112\beta \phi
\ceil{\log n}}{\alpha} \right\rfloor.
\]}
Set $\rho \coloneq {84\phi}/{\alpha}$\;
\For{$p \in \fragment{\ceil{\log (3\phi)}}{\floor{\log(\rho n)}}$}{ \label{ln:level_gaped}
    Set $m_p \coloneq \ceil{{n}/{2^p}}$ to be number of blocks of length $2^p$ we partition $X,Y$ into (last block might be shorter)\;
    For $i \in \fragmentco{0}{m_p}$ set $X_{p,i} = X\fragmentco{i \cdot 2^p}{\min(n, (i+1)2^p)}$ and $Y_{p,i} = Y\fragmentco{i \cdot 2^p}{\min(n, (i+1)2^p)}$\;
    \For {$t \in \fragmentco{0}{\ceil{\rho m_p}}$}{ \label{ln:for_gaped}
        Select $i \in \fragmentco{0}{m_p}$ uniformy at random\; \label{ln:sample}
        Solve the instance $(X_{p,i},Y_{p,i})$ of the $\beta$-\shifted $(\psi,3\phi)$-\gaped Problem\; \label{ln:gapedrec}
    }
}

\If{If we obtained at most 5 times \no at \cref{ln:gapedrec}}{ \label{ln:if_gaped}
    \Return{\yes}
}\Else{
    \Return{\no}
}
    \caption{Randomized Algorithm reducing \gaped to \shifted \ \gaped}\label{alg:gapED}
\end{algorithm}

\tennine
\begin{proof}
    In this proof, we proceed as follows.
    First, we explain how the correctness proof for \cref{alg:gapED} is structured.
    Then, we describe how \cref{alg:gapED} can be modified to preserve correctness while
    allowing a faster quantum implementation.
    Lastly, we briefly elaborate on how a faster quantum time for \cref{alg:gapED} leads
    to the claim in \cref{lem:quantum_gap_ed}.

    The correctness of \cref{alg:gapED} is proved in Lemma 4.3 of \cite{gapED},
    which shows that with probability at least \(1-e^{-1}\),
    the algorithm returns \yes if \(\ed(X,Y) \leq \beta\) and \no if \(\ed(X,Y) >
    \alpha\).
    This property needs to be preserved when modifying \cref{alg:gapED}.

    If \(\ed(X,Y) > \alpha\), the proof in \cite{gapED} considers the sets \(G_p = \{i \in
    \fragmentco{0}{m_p} \mid \ed(X_{p,i}, Y_{p,i}) > \psi\}\),
    representing the number of instances \((X_{p,i}, Y_{p,i})\) at iteration \(p\) of
    \cref{ln:level_gaped} for which the \(\beta\)-shifted \((\psi,3\alpha)\)-\gaped
    Problem may return \no.
    It is shown that these sets satisfy \(\sum_{p=\ceil{\log (3\phi)}}^{\floor{\log(\rho
    n)}} |G_p| \leq 3/(2\rho)\).
    This implies that the random variable \(\hat{b}\) for the number of times \no is
    obtained at \cref{ln:gapedrec} satisfies
    \begin{align} \label{eq:gs}
        \ex{\hat{b}}
        = \sum\nolimits_{p=\ceil{\log (3\phi)}}^{\floor{\log(\rho n)}} \frac{|G_p|
        \ceil{\rho m_p}}{m_p}
        \leq 2\rho \sum\nolimits_{p=\ceil{\log (3\phi)}}^{\floor{\log(\rho n)}} |G_p| \leq
        3.
    \end{align}
    Using suitable Chernoff Bounds, it follows that \(\pr{\hat{b} \leq 5} \leq 1/e\).

    Otherwise, if \(\ed(X,Y) \leq \beta\),
    the proof in \cite{gapED} considers the sets \[B_p = \{i \in \fragmentco{0}{m_p} \mid
    \ed(X_{p,i}, Y_{p,i}) > 3\phi\},\]
    which represent the number of instances \((X_{p,i}, Y_{p,i})\) for which the
    \(\beta\)-shifted \((\psi,3\phi)\)-\gaped is guaranteed to return \no.
    It is shown that \(\sum_{p=\ceil{\log (3\phi)}}^{\floor{\log(\rho n)}} |B_p| \geq
    10/\rho\), which implies
    \begin{align} \label{eq:bs}
        \ex{\hat{b}}
        = \sum\nolimits_{p=\ceil{\log (3\phi)}}^{\floor{\log(\rho n)}} \frac{|B_p|
        \ceil{\rho m_p}}{m_p}
        \geq \rho \sum\nolimits_{p=\ceil{\log (3\phi)}}^{\floor{\log(\rho n)}} |B_p| \geq
        10.
    \end{align}
    Again, using suitable Chernoff Bounds, \(\pr{\hat{b} \geq 6} \leq 1/e\).

    Set \(I = \{(p, t) \mid p \in\fragment{\ceil{\log (3\phi)}}{\floor{\log(\rho n)}}, t
    \in  \fragmentco{0}{\ceil{\rho m_p}}\}\).
    For \((p, t) \in I\), let \(i(p, t)\) be the sampled \(i\) at \cref{ln:sample}.
    Define \(I^+ = \{(p, t) \in I \mid  \ed(X_{p,i(p,t)}, Y_{p,i(p,t)}) > 3\alpha\}\) and
    \(I^- = \{(p, t) \in I \mid  \edshifted{\beta}(X_{p,i(p,t)}, Y_{p,i(p,t)}) \leq
    \psi\}\).
    Intuitively, we aim to apply \cref{lem:groverext} to all the \(\beta\)-shifted
    \((\psi,3\phi)\)-\gaped instances from \(I\) with \(I^+\) as positive elements and
    \(I^{-}\) as negative elements.
    However, \cref{lem:groverext} only allows us to distinguish between the case \(\hat{b}
    \geq |I^+| > 0\) and the case \(\hat{b} \leq |I \setminus I^-| = 0\),
    but not between \(\hat{b} \geq c\) and \(\hat{b} < c\) for some constant \(c > 0\).

    To resolve this, we modify \cref{alg:gapED} as follows.
    We change the number of iterations at \cref{ln:for_gaped} to \(\ceil{(\rho/6) \cdot
    m_p}\) from \(\ceil{\rho m_p}\) times (changing accordingly sets \(I\), \(I^+\), and
    \(I^-\)).
    We then use \cref{lem:groverext} over all \(I\) with \(I^+\) as positive elements and
    \(I^-\) as negative elements.
    From an implementation perspective, we generate \(i(p,t)\) inside the oracle calls and
    not before,
    to ensure we do not use linear quantum time anymore.
    To boost probability, we repeat \cref{lem:groverext} \(t = 300\) times, and if we
    obtain at least \(200\) \yes results from \cref{lem:groverext}, we output \yes;
    otherwise, we output \no.

    Consequently, if \(\ed(X,Y) > \alpha\), for a single repetition of
    \cref{lem:groverext}, using the same calculations as in \eqref{eq:gs}, we obtain
    \(\ex{\hat{b}} \leq 1/2\).
    Via Markov's inequality \(\pr{\hat{b} \geq 1} \leq 1/2\), and
    it follows that in expectation we obtain at least \yes over all repetitions.
    Otherwise, if \(\ed(X,Y) \leq \beta\), then for single repetition
    \[
        \pr{\hat{b} = 0}
        = \prod_{p=\ceil{\log (3\phi)}}^{\floor{\log(\rho n)}} \left(1 -
        \frac{|B_p|}{m_p}\right)^{\ceil{(\rho/6) \cdot m_p}}
        \leq \exp\left(-\sum_{p=\ceil{\log (3\phi)}}^{\floor{\log(\rho n)}} \frac{|B_p|
        \ceil{(\rho/6) \cdot m_p}}{m_p}\right)
        \leq e^{-5/3},
    \]
    from which follows \(\pr{\hat{b} \geq 1} = 1 - e^{-5/3} \geq 4/5\).
    Thus, we expect at least 240 \yes over all repetitions.
    Using concentration bounds, we obtain that we return the correct answer with
    probability at least \(1 - e^{-1}\) in both cases.
    This concludes the proof of the correctness of our modification.

    Regarding the running time, we observe that now a call to \cref{alg:gapED} requires
    \(\Ohtilde(\sqrt{m})\) queries and quantum time.
    This improvement allows the adaptation of \cite{KNW24} to achieve
    \(\Ohtilde(\sqrt{m})\) queries and quantum time.
    There is no need to formally prove this here, as for every step we now achieve the
    same query complexity and quantum time, and we obtain the exact same analysis for the query
    time here as the one for the quantum time in \cite{KNW24}.
\end{proof}

\subsubsection{From Candidate Positions to Proxy Strings}

Finally, we aim to prove our main technical step from \cref{thm:qedfindproxystrings}.

Our goal is to use the candidate set from \cref{lem:qed_candidate_set} to construct a set
\( S \) of alignments of \( P \) onto fragments of \( T \) that not only captures all \( k
\)-error alignments but also avoids unnecessarily costly alignments. If the candidate set
\( C \) is already in the second compressed form from \cref{lem:qed_candidate_set}, we
know that \( C \) inherently avoids such costly alignments.

However, if \( |C| = \Ohtilde(k) \), we aim to apply the quantum \gaped algorithm to
filter out candidate positions where costly \( k \)-error occurrences arise. A challenge
we face is that the \gaped algorithm requires two input strings of the same length, while
the candidate set only provides the starting positions of the \( k \)-error occurrences.

The following corollary of \cref{lem:quantum_gap_ed} helps to handle this mismatch in
formulation.

\begin{corollary}[of \cref{lem:quantum_gap_ed}]\label{cor:quantum_gap_ed}
    Let $P$ be a string of length $m$, let $T$ be a string of length $n \leq 3/2 \cdot m$,
    and let $k > 0$ be an integer threshold.
    There exists a quantum algorithm, that given as input $n,m,k$ and $t \in
    \fragment{0}{n}$, outputs \yes if $t \in \OccE_{k}(P,T)$, \no if $t \notin
    \OccE_{K}(P,T)$
    for some $K=k\cdot m^{o(1)}$, and an arbitrary \yes/\no otherwise.
    The algorithm requires $\Ohhat(\sqrt{m})$ queries and time.
\end{corollary}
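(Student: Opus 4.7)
The plan is to reduce the task to a single invocation of the quantum \gaped algorithm (\cref{lem:quantum_gap_ed}) on two length-$m$ strings. Given the position $t$, the first step is to dispose of the degenerate case $n - t < m - k$ by returning \no immediately: any alignment of $P$ onto $T\fragmentco{t}{j}$ of cost at most $k$ forces $j - t \in [m-k,m+k]$, so $n - t \geq m - k$ is a necessary condition for $t \in \OccE_k(P,T)$.

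Assuming $n - t \geq m - k$, I would form a length-$m$ string $T'$ by taking $\bar T \coloneqq T\fragmentco{t}{\min(t+m, n)}$ and, if $|\bar T| < m$, appending $m - |\bar T| \leq k$ fresh sentinel characters (each unique to its position and not appearing in $P$). An oracle for $T'$ is simulated at $\Oh(1)$-query overhead per access, by branching on whether the requested position lies inside $\bar T$ (then forward a query to $T$) or inside the padding (then return the corresponding local sentinel). I then invoke \cref{lem:quantum_gap_ed} on $P$ and $T'$ with input threshold $3k$ and return its answer verbatim.

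The key correctness claim is the two-sided guarantee. If $t \in \OccE_k(P,T)$, pick a witness $j$ with $\ed(P, T\fragmentco{t}{j}) \leq k$. Either $\bar T = T\fragmentco{t}{t+m}$ (Case A), in which case $\ed(\bar T, T\fragmentco{t}{j}) \leq |j-t-m| \leq k$, or $\bar T = T\fragmentco{t}{n}$ (Case B), in which case $T\fragmentco{t}{j}$ is a prefix of $\bar T$ and $\ed(\bar T, T\fragmentco{t}{j}) = n - j \leq (n-t) - (m-k) \leq k$. Either way, the triangle inequality yields $\ed(P, \bar T) \leq 2k$, and adding at most $k$ sentinel substitutions from the padding gives $\ed(P, T') \leq 3k$, so \gaped returns \yes. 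Conversely, if \gaped returns \yes, then $\ed(P, T') \leq 3k \cdot m^{o(1)}$, and stripping the padding yields $\ed(P, \bar T) \leq 3k\cdot m^{o(1)} + k = k \cdot m^{o(1)} =: K$, certifying $t \in \OccE_K(P, T)$ via the endpoint $j = \min(t+m, n)$.

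The running time is dominated by the single call to \cref{lem:quantum_gap_ed}, which costs $\Ohhat(\sqrt{m})$ queries and time on the two length-$m$ strings $P$ and $T'$. No step is a genuine obstacle; the only technical care required is tracking the two cases $n \geq t+m$ versus $n < t+m$ in the triangle-inequality bound and making sure the sentinel padding is chosen from symbols disjoint from the alphabet of $P$, both of which are routine.
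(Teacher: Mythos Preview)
Your proof is correct and follows essentially the same approach as the paper: pad the text fragment $T\fragmentco{t}{\min(t+m,n)}$ with sentinels to length $m$ and invoke \cref{lem:quantum_gap_ed}. The paper uses a single sentinel $\$$ repeated $m-1$ times (so $T_\$ \coloneqq T\cdot \$^{m-1}$) and threshold $2k$ rather than your distinct sentinels, explicit degenerate-case check, and threshold $3k$, but these are cosmetic differences that do not affect the argument or the resulting $K = k\cdot m^{o(1)}$.
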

\begin{proof}
    Denote $T_\$ \coloneqq T\cdot \$^{m-1}$ and $\Sigma_\$=\Sigma\cup\{\$\}$, where $\$\notin \Sigma$
    is a unique special character.
    As algorithm, we simply use the algorithm from \cref{lem:quantum_gap_ed} on strings
    $P,T_\$\fragmentco{t}{t+m}$ and threshold $2k$.

    If $t\in \OccE_k(P,T)$, then there is $t'\in \fragment{t}{n}$ such that $\ed(P,
    T\fragmentco{t}{t'})\le k$,
    and thus $\ed(P, T_\$\fragmentco{t}{t'})\le k$.
    Moreover, we have
    \begin{align*}
        \ed(P, T_\$\fragmentco{t}{t+m})
        &\le \ed(P, T_\$\fragmentco{t}{t'})+|t+m-t'|\\
        &= \ed(P, T\fragmentco{t}{t'})+\big||P|-|T\fragmentco{t}{t'}|\big|\\
        &\le 2\ed(P, T\fragmentco{t}{t'}) \le 2k.
    \end{align*}
    Hence, the algorithm from \cref{lem:quantum_gap_ed} returns \yes with high
    probability.

    Finally, note that the algorithm from \cref{lem:quantum_gap_ed} returns \no with high
    probability
    unless \[\ed(P,T_\$\fragmentco{t}{t+m})\le (2k)\cdot m^{o(1)}.\]

    If $t+m \le n$, the latter immediately implies $\ed(P,T\fragmentco{t}{t+m})\le
    (2k)\cdot m^{o(1)}$.
    Otherwise, we have
    \[\ed(P,T_\$\fragmentco{t}{t+m})=\ed(P\fragmentco{0}{p},T\fragmentco{t}{n})+\ed(P\fragmentco{p}{m},\$^{t+m-n})
    \] for some $p\in \fragment{0}{m}$, and thus
    \begin{align*}
        \ed(P,T\fragmentco{t}{n})
        &\le \ed(P\fragmentco{0}{p},T\fragmentco{t}{n})+
        \ed(P\fragmentco{p}{m},\varepsilon) \\
        &=   \ed(P\fragmentco{0}{p},T\fragmentco{t}{n})
        + m-p \\
        &\le
        \ed(P\fragmentco{0}{p},T\fragmentco{t}{n})+\ed(P\fragmentco{p}{m},\$^{t+m-n})\\
        &=\ed(P,T_\$\fragmentco{t}{t+m})
        \le (2k)\cdot m^{o(1)}.
    \end{align*}
    In either case, $t\in \OccE_{K}(P,T)$ for $K = (2k)\cdot m^{o(1)}$.
\end{proof}

Next, we discuss an oracle that, given a set \( F \subseteq \fragmentco{0}{n} \), either
identifies a position \( t \in F \cap \OccE_{K}(P, T) \), where \( K = k \cdot m^{o(1)}
\), or confirms that \( F \cap \OccE_k(P, T) = \emptyset \).

Even though \cref{lem:verifier} is formulated for general $F$, we use \( F \) as the
starting points of all \( K \)-error occurrences that the set \( S \) already captures.
Then, the oracle helps us in either finding an uncaptured \( K \)-error occurrence or
confirming that all \( k \)-error occurrences are already captured.

\begin{lemma}\label{lem:verifier}
    There exists a quantum algorithm that, given a pattern $P$ of length $m$, a text $T$
    of length $n\le 3/2\cdot m$, an integer $k>0$, and a set $F\subseteq \fragment{0}{n}$,
    outputs one of the following:
    \begin{itemize}
        \item a position $t\in F\cap \OccE_{K}(P,T)$, where $K=k\cdot m^{o(1)}$; or
        \item $\bot$, indicating that $F\cap \OccE_k(P,T)=\emptyset$.
    \end{itemize}
    The algorithm uses $\Ohhat(\sqrt{km}+k^2)$ time and $\Ohhat(\sqrt{km})$ queries to
    $P$, $T$, and the characteristic function of $F$.

    Within the same complexity bounds, we can guarantee that the reported position satisfies
    $t\le \min(F\cap \OccE_k(P,T))$ or, alternatively, that it satisfies $t \ge \max(F\cap
    \OccE_k(P,T))$.\footnote{We assume $\min \emptyset = +\infty$ and $\max \emptyset =
    -\infty$.}
\end{lemma}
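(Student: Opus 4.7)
The plan is to combine the candidate-set computation of \cref{lem:qed_candidate_set} with the extended quantum search of \cref{lem:groverext} (or \cref{cor:groverext} for the min/max refinements), using the gap-edit-distance verifier of \cref{cor:quantum_gap_ed} as the inner oracle. First, I invoke \cref{lem:qed_candidate_set} to obtain a candidate set $C \supseteq \OccE_k(P,T)$ in one of its two compressed forms; this call alone absorbs both the $\Ohtilde(\sqrt{km}+k^2)$ time budget and the $\Ohtilde(\sqrt{km})$ query budget. The role of $C$ is to cut down the search domain so that, instead of Grover-searching over all of $F$ (which would cost $\Ohhat(\sqrt{nm})$), I can restrict to the $\Ohtilde(k)$-size compressed representation of $C$.

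In the first case, where \cref{lem:qed_candidate_set} returns $\floor{C/k}$ of size $\Ohtilde(k)$, I run \cref{lem:groverext} over $\floor{C/k}$ with the following oracle $f(q)$: use amplitude amplification with the $F$-oracle to find some $t \in \fragment{kq}{kq+k} \cap F$ in $\Oh(\sqrt{k})$ queries; if no such $t$ exists, return \no; otherwise invoke \cref{cor:quantum_gap_ed} on $(P,T,t)$ with threshold $2k$, boosted by constant-factor repetition to success probability at least $9/10$, and return its answer. The key combinatorial observation is that shifting the start of an alignment by at most $k$ positions changes its edit distance by at most $k$; hence, whenever some $t^{\star} \in \fragment{kq}{kq+k} \cap F \cap \OccE_k(P,T)$ exists, every $t \in \fragment{kq}{kq+k}$ lies in $\OccE_{2k}(P,T)$, so $f(q)=\yes$ with probability at least $9/10$, whereas if $\fragment{kq}{kq+k} \cap F \cap \OccE_{K'}(P,T) = \emptyset$ for $K' \coloneqq 2k \cdot m^{o(1)}$, the verifier answers \no with probability at least $9/10$. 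Thus $\{q : \fragment{kq}{kq+k} \cap F \cap \OccE_k(P,T)\neq\emptyset\} \subseteq I^{\bm{+}}$ and $\{q : \fragment{kq}{kq+k} \cap F \cap \OccE_{K'}(P,T) = \emptyset\} \subseteq I^{\bm{-}}$ in the notation of \cref{lem:groverext}. If the outer call returns $\bot$, then $I^{\bm{+}} = \emptyset$, so $F \cap \OccE_k(P,T) = \emptyset$ (because every $t \in F \cap \OccE_k(P,T) \subseteq C$ contributes $\floor{t/k}$ to $\floor{C/k}$), and I output $\bot$; otherwise, on the returned $\hat q$, one more quantum search yields some $t \in \fragment{k\hat q}{k\hat q+k} \cap F$, which by the same shifting argument lies in $\OccE_{K'+k}(P,T) \subseteq \OccE_K(P,T)$. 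This stage costs $\Ohtilde(\sqrt{k}) \cdot (\Oh(\sqrt{k}) + \Ohhat(\sqrt{m})) = \Ohhat(\sqrt{km})$ on top of the candidate-set computation.

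In the second case, where \cref{lem:qed_candidate_set} returns $(q, I)$ with $C \subseteq \OccE_{44k}(P,T)$, every $t \in F \cap C$ already belongs to $\OccE_K(P,T)$ and no verification is required. I apply Grover search over the explicit parameter set $\{(i,j) : i\in I,\ 0\le j \le \ceil{(n-m-k-i)/q}\}$ with the oracle $(i,j) \mapsto [qj+i \in F]$; the domain has size $\Oh(k\cdot n/q)\le \Oh(km/q)$, so the search costs $\Oh(\sqrt{km/q})\le \Oh(\sqrt{km})$ queries to the $F$-oracle, and the absence of a witness certifies $F\cap \OccE_k(P,T)=\emptyset$ through $F\cap \OccE_k(P,T) \subseteq F\cap C=\emptyset$. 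For the min/max variants I replace \cref{lem:groverext} and the inner Grover searches by their min/max counterparts (\cref{cor:groverext} and Grover minimum/maximum). In the first case with the minimum variant, writing $q^{\star} \coloneqq \floor{\min(F\cap\OccE_k(P,T))/k}$, \cref{cor:groverext} returns $\hat q \le q^{\star}$, and the inner Grover minimum inside $\fragment{k\hat q}{k\hat q+k}\cap F$ then yields $t\le \min(F\cap\OccE_k(P,T))$: either $\hat q < q^{\star}$, in which case $t < k q^{\star} \le \min(F\cap\OccE_k(P,T))$, or $\hat q = q^{\star}$, in which case the true minimum itself lies in the search window. The second case with min/max is handled by a direct Grover minimum/maximum over the $(i,j)$-parameterization.

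The main obstacle is handling the neutral set $I^{\bm{\sim}}$ of \cref{lem:groverext} in the first case: when $\hat q \in I^{\bm{\sim}}$ is returned, the verifier's behavior on $\hat q$ during the outer Grover iterations was arbitrary, so I must independently certify that the witness $t$ produced in the second stage lies in $\OccE_K(P,T)$. This is precisely what the shifting argument delivers, turning membership of any single point of $\fragment{k\hat q}{k\hat q+k}$ in $\OccE_{K'}(P,T)$ into membership of \emph{all} points of that window at the cost of an additive $k$ in the threshold, which is absorbed into $K = k\cdot m^{o(1)}$. A secondary technicality is that the oracle $f(q)$ must be realized as a coherent quantum circuit conforming to the unitary specification of \cref{lem:groverext}; this is straightforward since both the inner amplitude-amplification search for $t$ and the gap-edit-distance verifier are standard quantum subroutines whose success probabilities are independent of the outer Grover state.
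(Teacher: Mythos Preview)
Your proposal is correct and follows essentially the same approach as the paper: compute the candidate set via \cref{lem:qed_candidate_set}, then in the first case run \cref{lem:groverext} (or \cref{cor:groverext}) over $\floor{C/k}$ with an oracle that combines an inner $F$-search over each length-$k$ window with the gap-edit-distance verifier of \cref{cor:quantum_gap_ed}, and in the second case run a plain Grover search over $C\cap F$. The only cosmetic difference is that the paper evaluates the verifier at the block start $ck$ whereas you evaluate it at the found $t\in\fragmentco{ck}{ck+k}\cap F$; both variants are justified by the same shifting argument and yield the same complexity.
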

\begin{proof}
    First, we apply \cref{lem:qed_candidate_set}, which returns $\floor{C/k}$ for a set
    $\OccE_k(P,T)\subseteq C \subseteq \fragment{0}{n}$ of candidate positions.
    The two cases of \cref{lem:qed_candidate_set} govern further behavior of the
    algorithm.

    If $|\floor{C/k}|=\Ohtilde(k)$, we combine \cref{cor:quantum_gap_ed} with
    \cref{lem:groverext}:

    We apply \cref{lem:groverext} to find $c\in \floor{C/k}$ such that the algorithm from
    \cref{cor:quantum_gap_ed} (with threshold $2k$) returns \yes on $P,T$ and $ck$ and
    such that $\fragmentco{ck}{ck+k}\cap F \ne \emptyset$ (using \GS to test
    the latter condition). When doing this, the sets $I^{\bm{+}}, I^{\bm{\sim}}$ and
    $I^{\bm{-}}$ in \cref{lem:groverext} are set to be $\floor{(C \cap \OccE_k(P,T))/k}$,
    $\floor{(C \cap \OccE_K(P,T))/k}$ and $\floor{C/k} \setminus (I^{\bm{+}} \cup
    I^{\bm{\sim}})$, respectively.

    Moreover, the failure probability $\delta$ in \cref{lem:groverext} is set to be such
    that the search succeeds w.h.p. If the search is successful, the algorithm reports an
    element of $\fragmentco{ck}{ck+k}\cap F$; otherwise, it outputs $\bot$.

    Whenever a guarantee on the reported position is requested,
    we use \cref{cor:groverext} instead of \cref{lem:groverext}.

    If $t\in F\cap \OccE_k(P,T)$, then $\floor{t/k}\in \floor{C/k}$, $k\cdot \floor{t/k}\in
    \OccE_{2k}(P,T)$, and $\fragmentco{\floor{t/k}k}{\floor{t/k}k+k}\cap F \ne \emptyset$, so
    the search is successful with high probability.
    Moreover, if both \cref{cor:groverext} and the inner
    \GS return the
    smallest (the largest) witnesses, then the reported value is guaranteed to be at most
    (at least, respectively) $t$.

    For the converse implication, note that the reported position $t\in
    \fragmentco{ck}{ck+k}\cap F$ satisfies $t\in \OccE_{K}(P,T)$ for $K=K'+k$, where
    $K'=(2k)\cdot m^{o(1)}$ is the higher threshold of \cref{cor:quantum_gap_ed} such that
    $ck\in \OccE_{K'}(P,T)$ holds with high probability.

    As for the complexity, observe that a single evaluation of the oracle for the outer
    \GS takes $\Ohhat(\sqrt{m}+\sqrt{k})$ queries and $\Ohhat(m+\sqrt{k})$
    time.
    Since the search space is of size $|\floor{C/k}|=\Ohtilde(\min(k,{m}/{k}))$, the
    total query and time complexity is $\Ohhat(\sqrt{km})$ and $\Ohhat(m\sqrt{k})$,
    respectively.

    In the remaining case, we have $C\subseteq \OccE_{44k}(P,T)$.
    We apply \GS to find $x\in \fragment{0}{n-m-k}$ such that $x \in F$ and $x
    \ \mathrm{mod} \ q \in I$.
    If the search is successful, the algorithm reports such $x$; otherwise, it outputs
    $\bot$.

    If $t\in F\cap \OccE_k(P,T)$, then from $C \subseteq \OccE_k(P,T)$, follows that the
    search is successful.
    For the converse implication, note that for the reported $t$ always $t \in C\subseteq
    \OccE_{44k}(P,T)$ holds.

    Since the search space is of size $\Oh(m)$, the total query and time complexity is
    $\Ohhat(\sqrt{m})$.
\end{proof}

With all components in place, we now proceed to prove \cref{thm:qedfindproxystrings}.

\qedfindproxystrings

\begin{proof}
    For this proof, let \( K \coloneqq k\cdot m^{o(1)} \) be defined as in
    \cref{lem:verifier} for $m$, $n$, and $k$.
    Additionally, set \( K' \coloneqq 2K + k \).

    We first show how to prove the lemma in the case we have two alignments that enclose
    $T$.

    \begin{claim}\label{claim:qedfindproxystringscropped}
        Suppose we are given $\sE_{P,T}(\mX_{\pref})$ and $\sE_{P,T}(\mX_{\suf})$ for
        alignments $\mX_{\pref}$ and $\mX_{\suf}$ of cost at most $K'$ such that $S
        \coloneqq \{\mX_{\pref}, \mX_{\suf}\}$ encloses $T$.

        Then, we can construct xSLPs of size $\Ohhat(k)$ representing strings $P^\#,T^\#$
        for which the two following hold:
        \begin{enumerate}[(a)]
            \item For every $a\in \fragmentco{0}{m}$ and $b \in \fragmentco{0}{n}$, we have
                that $P^\#\position{a} = T^\#\position{b}$ implies  $P\position{a} =
                T\position{b}$.
                \label{claim:qedfindproxystringscropped:it:a}
            \item for all optimal alignments $\mX : P \onto T\fragmentco{t}{t'}$ of cost
                at most $k$, we have $\ed(P, T\fragmentco{t}{t'}) = \ed(P^\#,
                T^\#\fragmentco{t}{t'})$ and $\sE_{P, T}(\mX) = \sE_{P^\#, T^\#}(\mX)$.
                \label{claim:qedfindproxystringscropped:it:b}
        \end{enumerate}
        We use $\Ohhat(\sqrt{km})$ queries and $\Ohhat(\sqrt{km}+k^2)$ quantum time.
    \end{claim}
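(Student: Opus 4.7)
The plan is to construct the set $S$ incrementally, starting from $S = \{\mX_{\pref}, \mX_{\suf}\}$ and adding alignments one by one until every $k$-error occurrence of $P$ in $T$ is captured in the sense of \cref{def:scomplete} with respect to the cost bound $K'$. Once $S$ captures all $k$-error occurrences, we invoke \cref{lem:qed_proxy_2} on $\eso{S}$ together with a Lempel--Ziv compression of a black cover to obtain the desired xSLP for $P^\#$ and $T^\#$ of size $\Ohhat(k)$; \cref{prp:ed_subhash} will then immediately deliver properties~\eqref{claim:qedfindproxystringscropped:it:a} and~\eqref{claim:qedfindproxystringscropped:it:b}, with the equality in \eqref{claim:qedfindproxystringscropped:it:b} following because $\sE_{P,T}(\mX) = \sE_{P^\#, T^\#}(\mX)$ for optimal $\mX$ also yields $\ed(P^\#, T^\#\fragmentco{t}{t'}) \leq \ed(P, T\fragmentco{t}{t'})$.

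\textbf{Main steps.} In each iteration, the algorithm proceeds as follows:
\begin{enumerate}[(i)]
    \item Apply \cref{prop:proj_ds} to $m, n, K', \eso{S}$ to compute $\bc(\bG_S)$. If $\bc(\bG_S) = 0$, terminate and return the xSLP produced by \cref{prop:proj_ds}, which represents $P$ and $T$ directly (so $P^\# = P$ and $T^\# = T$ trivially meet the requirements).
    \item Otherwise, use the projection-query data structure from \cref{prop:proj_ds} and invoke \cref{prp:quantum_blackcover} to obtain a Lempel--Ziv compression $R$ of a black cover $C_S$ with total weight $w \leq K'|S|$.
    \item Define $F \coloneqq \{t \in \fragment{0}{n} : \min_{i} |\tau_i^0 - t - \pi_0^0| > w + 3K'\}$, whose characteristic function can be evaluated in $\Oh(\log n)$ time per position via query~(ii) of \cref{prop:proj_ds}.
    \item Apply \cref{lem:verifier} to $P, T, k, F$. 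If the result is $\bot$, all $k$-error occurrences are captured with respect to $K'$; terminate and construct the final xSLP via \cref{lem:qed_proxy_2} applied to $m, n, K', \eso{S}, R$.
    \item Otherwise, \cref{lem:verifier} returns $t \in F \cap \OccE_{K}(P,T)$. Invoke \cref{prp:quantumed_w_info} on $P$ and $T\fragmentco{t}{\min(t+m, n)}$ to extract an optimal alignment $\mX_t$; a short triangle-inequality argument shows $\mX_t$ has cost at most $2K \leq K'$. Add $\mX_t$ to $S$ and repeat.
\end{enumerate}
By \cref{lem:periodhalves}, each added alignment halves $\bc(\bG_S)$, so the loop terminates after $\Oh(\log n)$ iterations. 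Summing the costs of \cref{prop:proj_ds,prp:quantum_blackcover,lem:verifier,prp:quantumed_w_info,lem:qed_proxy_2} over these iterations yields $\Ohhat(\sqrt{km})$ queries and $\Ohhat(\sqrt{km} + k^2)$ time, while the final xSLP has size $\Ohtilde(k|S| + w) = \Ohhat(k)$.

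\textbf{Main obstacle.} The most delicate point is reconciling the three distinct error thresholds that appear: the target threshold $k$, the gap threshold $K = k \cdot m^{o(1)}$ from \cref{lem:verifier}, and the bound $K'=2K+k$ on the costs of alignments in $S$. The set $F$ must be chosen so that (a)~$F \cap \OccE_k = \emptyset$ guarantees that every $k$-error occurrence is captured (with respect to $K'$, enabling \cref{prp:ed_subhash}), and simultaneously (b)~every $t \in F$ yields an alignment whose addition provably halves $\bc(\bG_S)$ via \cref{lem:periodhalves}. Our choice of threshold $w + 3K'$ in $F$ threads this needle because $3K' \geq 4K \geq 2 \cdot \mathrm{cost}(\mX_t)$, so the premise of \cref{lem:periodhalves} holds, while capture with respect to $K'$ already implies the conclusion of \cref{prp:ed_subhash}(ii) for $k$-error occurrences, since $k \leq K'$.
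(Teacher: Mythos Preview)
Your proposal is correct and follows essentially the same approach as the paper's proof: iteratively grow $S$ using \cref{lem:verifier} to find uncaptured positions, halving $\bc(\bG_S)$ each time via \cref{lem:periodhalves}, and invoke \cref{lem:qed_proxy_2} upon termination. Two minor remarks: (1) you call \cref{prp:quantum_blackcover} in every iteration, whereas the paper defers it until after the loop terminates with $\bot$ (the total weight $w$ needed to define $F$ is simply the sum of the stored alignment costs and does not require $R$); this is harmless within the $\Ohhat$ bounds but unnecessary. (2) In your ``Main obstacle'' paragraph, the justification for \cref{lem:periodhalves} is slightly off: the ``$2k$'' in that lemma's hypothesis is $2K'$ here (twice the cost \emph{bound} on $S$), not $2\cdot\mathrm{cost}(\mX_t)$; your conclusion is nevertheless correct because $w+3K' > w+2K'$.
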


    \begin{claimproof}
        Consider the following procedure.
        \begin{enumerate}[(i)]
            \item Start with $S = \{\mX_{\pref}, \mX_{\suf}\}$ containing alignments of
                cost at most $K'$. Iterate through the following steps, adding new
                elements to $S$.
                Instead of storing each $\mX \in S$ explicitly, store only
                $\sE_{P,T}(\mX)$.
                Continue this process until the loop is not interrupted.
                \label{alg:qedfindproxystringscropped:i}
                \begin{enumerate}[(1)]
                    \item Check whether $\bc(\bG_S) = 0$ via \cref{prop:proj_ds}.
                        If $\bc(\bG_S) = 0$, interrupt the loop.
                        \label{alg:qedfindproxystringscropped:1}
                    \item Otherwise, if $\bc(\bG_S) > 0$, let $w$ be the sum of the costs
                        of all        alignments in $S$.
                        Apply \cref{lem:verifier} for $F = \{t \in \fragment{0}{n} :
                        \forall_{i\in \fragment{0}{n_0-m_0}}\, |\tau_i^0-t-\pi_0^0|>
                    w+3K'\}$, using queries supported by \cref{prop:proj_ds} to test in
                    $\Oh(\log n)$ time whether a given position belongs to $F$.
                        If \cref{lem:verifier} returns $\bot$, interrupt the loop.
                        \label{alg:qedfindproxystringscropped:2}
                    \item If \cref{lem:verifier} returns $t\in F$, apply
                        \cref{prp:quantumed_w_info} to retrieve $\sE_{P,T}(\mY)$ for an
                        optimal alignment $\mY : P \onto T\fragmentco{t}{\min(t + m, n)}$
                        and $\mY$ to $S$.
                        \label{alg:qedfindproxystringscropped:3}
                \end{enumerate}
            \item If the loop was interrupted in \eqref{alg:qedfindproxystringscropped:1},
                reconstruct the xSLPs representing $P,T$ via \cref{prop:proj_ds} and
                return the xSLPs.
                Otherwise, if the loop was interrupted in
                \eqref{alg:qedfindproxystringscropped:2},
                retrieve the compression $R$ of a black cover from
                \cref{prp:quantum_blackcover},
                construct the xSLPs representing $P^\#,T^\#$ via \cref{lem:qed_proxy_2}
                using $R$, and return the xSLPs.
        \end{enumerate}

        Each time we add \( \mY \) to \( S \), the corresponding position $t$ satisfies \(
        |\tau_{i}^{0} - t - \pi_0^{0}| > w + 3K' \) for all \( i \in \fragment{0}{n_0-m_0} \)
        and $\ed(P,T\fragmentco{t}{t'})\le K$ for some $t'\in \fragment{t}{n}$.

        Since $|\min(t+m,n)-t'| \le |t+m-t'|\le  \ed(P,T\fragmentco{t}{t'}) \le K$, we
        conclude that $\ed(P,T\fragmentco{t}{\min(t+m,n)})\le 2K$, and thus the cost of $\mY$
        is at most $2K \le K'$.

        Consequently, adding $\mY$ to $S$ preserves the invariant and, by
        \cref{lem:periodhalves}, halves the number of black components, that is, \( \bc(S \cup
        \{\mY\}) \leq \bc(S)/2 \).

        If the loop is interrupted in \eqref{alg:qedfindproxystringscropped:2}, then there is
        no $k$-error occurrence $T\fragmentco{t}{t'}$ of $P$ such that $t\in F$, and thus $S$
        captures all $k$-error occurrences of $P$ in $T$, so the correctness of the procedure
        follows from \cref{prp:ed_subhash}.

        If the loop is interrupted in \eqref{alg:qedfindproxystringscropped:1}, then we obtain
        directly xSLPs representing $P$ and $T$ from \cref{prop:proj_ds}.

        Regarding computational complexity, observe that the main loop iterates $\Oh(\log n)$
        times because the number of black components is initially $\Oh(n+m)$, it halves upon
        the successful completion of every iteration, and the algorithm exits the loop as soon
        as the number of black components reaches zero.

        Consequently, we always maintain the size of $S$ bounded by $|S| = \Oh(\log n)$.
        For each iteration, the runtime is dominated by the application of \cref{lem:verifier},
        the construction of the data structure from \cref{prop:proj_ds},
        and the retrieval of the edit information in \cref{prp:quantumed_w_info}.
        All of them take $\Ohhat(\sqrt{km})$ queries and $\Ohhat(\sqrt{km}+k^2)$ time.
    \end{claimproof}

    The following routine reconnects the procedure from
    \cref{claim:qedfindproxystringscropped} to the general case.
    \begin{enumerate}[(i)]
        \item If \( 4K' > m \), read the entire strings \( P \) and \( T \), and return
            them directly.
            \label{alg:qedfindproxystrings:i}
        \item Apply \cref{lem:verifier} for $F=\fragment{0}{n}$, which either outputs
            $\bot$, indicating that $\OccE_k(P,T)=\emptyset$,
            or returns positions $t_{\pref},t_{\suf}\in \OccE_{K}(P,T)$ such that $t_\pref
            \le \min \OccE_k(P,T)$ and $t_\suf \ge \max \OccE_k(P,T)$.
            \label{alg:qedfindproxystrings:iii}
        \item If \cref{lem:verifier} outputs $\bot$, return an xSLP representing strings
            \( P^\# \) and \( T^\# \) of length $m$ and $n$, each with a unique sentinel
            character at every position. To achieve this it suffices to represent
            \( P^\# \) and \( T^\# \) with two pseudoterminals of length $m$ and $n$.
            Otherwise, use \cref{prp:quantumed_w_info} to collect \(
            \sE_{P,T}(\mX_{\pref}) \) and \( \sE_{P,T}(\mX_{\suf}) \) for some optimal
            alignments \( \mX_{\pref} : P \onto  T\fragmentco{t_{\pref}}{\min(t_{\pref} +
            m, n)} \) and \( \mX_{\suf} : P \onto T\fragmentco{t_{\suf}}{t_{\mathrm{end}}}
            \), where $t_{\mathrm{end}} \coloneqq \min(t_{\suf} + m+k, n)$.
            \label{alg:qedfindproxystrings:iv}
        \item Apply \cref{claim:qedfindproxystringscropped} on the pattern \( P \), the
            cropped text \( T' = T\fragmentco{t_{\pref}}{t_{\mathrm{end}}} \), the edit
            information \( \sE_{P,T}(\mX_{\pref}) \) and \( \sE_{P,T}(\mX_{\suf}) \) with
            indices shifted so that they are defined on \( T' \).
            \label{alg:qedfindproxystrings:v}
        \item Finally, modify the returned \( P^\# \) and \( T^\# \) by prepending \(
            t_{\pref} \) unique sentinels to \( T^\# \) and appending \( n -
            t_{\mathrm{end}} \) unique sentinels to \( T^\# \). To do this it suffices to
            modify the xSLP exactly as done in \cref{thm:qhdfindproxystrings}.
            \label{alg:qedfindproxystrings:vi}
    \end{enumerate}

    Next, we argue the correctness of the routine.

    If the routine returns at \eqref{alg:qedfindproxystrings:i}, then \( P \) and \( T \)
    trivially satisfy the claimed properties. Additionally, since \( 4K' > m \) implies \(
    k = \hat{\Omega}(K) = \hat{\Omega}(m) \), the \(\Oh(m)\) queries needed to read the
    strings are within the claimed query complexity of \(\Ohhat(\sqrt{km})\).

    On the other hand, if the routine returns at \eqref{alg:qedfindproxystrings:iv}, then
    \( \OccE_k(P,T) = \emptyset \).
    In this case, the returned strings \( P^\# \) and \( T^\# \) satisfy \( \ed(P,
    T\fragmentco{t}{t'}) \leq \ed(P^\#, T^\#\fragmentco{t}{t'}) \) for all $t,t'$, thereby
    fulfilling the condition \( \OccE_k(P^\#,T^\#) = \emptyset\) as desired.

    Finally, let us consider the case where we neither return at
    \eqref{alg:qedfindproxystrings:i} nor at \eqref{alg:qedfindproxystrings:iii}.

    Since $t_{\pref},t_\suf \in \OccE_{K}(P,T)$ and the underlying $K$-error occurrences
    have lengths in $\fragment{m-K}{m+K}$, the costs of $\mX_{\pref}$ and $\mX_\suf$ are
    at most $2K<K'$ and $2K+k=K'$, respectively.

    Furthermore, $4K' \le m$ implies $|T'|\le n \le 3/2 \cdot m \le 2m-2K'$, so the shifted
    alignments $\mX_{\pref}'$ and $\mX_{\suf}'$, which are defined on $T'$ instead of $T$
    and are used as input for \cref{claim:qedfindproxystringscropped}, enclose $T'$.

    Therefore, the assumptions of \cref{claim:qedfindproxystringscropped} are satisfied.

    Let $\mX : P \onto T\fragmentco{t}{t'}$ be an alignment with cost $k' \leq k$. Our
    goal is to prove the first part of the statement, speciffically that $\mX : P^\# \onto
    T^\#\fragmentco{t}{t'}$ also has a cost of $k' \leq k$.

    By construction, $t\in \OccE_k(P,T)$, so $t_\pref \le t \le t_\suf$ and $t' \le
    \min(n, t+m+k) \le \min(n,t_\suf+m+k)=t_{\mathrm{end}}$.

    This implies the existence of a corresponding alignment $\mX' : P \onto
    T'\fragmentco{t-t_{\pref}}{t'-t_{\pref}}$ with shifted indices and the same cost $k'$.
    According to
    \cref{claim:qedfindproxystringscropped}\eqref{claim:qedfindproxystringscropped:it:b},
    this alignment is preserved in the resulting $P^\#$ and $T^\#$. Prepending sentinel
    characters to $T^\#$ then ensures that $\mX : P^\# \onto T^\#\fragmentco{t}{t'}$ keeps
    cost $k'$.

    For the reverse direction, observe that $P^\#$ and $T^\#$, even after the addition of
    sentinel characters, satisfy
    \cref{claim:qedfindproxystringscropped}\eqref{claim:qedfindproxystringscropped:it:a}.
    Therefore, for any alignment $\mX : P \onto T\fragmentco{t}{t'}$, we have
    $\edal{\mX}(P, T\fragmentco{t}{t'}) \leq \edal{\mX}(P^\#, T^\#\fragmentco{t}{t'})$.
    This proves the other direction of the statement.
\end{proof}

\subsection{Step 4: Bringing It All Together}

We conclude this section with deriving \cref{thm:qpmwe} from
\cref{thm:qedfindproxystrings}.

\qpmwe

\begin{proof}
    We first show how to prove \cref{thm:qpmwe} assuming that we have a quantum algorithm
    $\mathsf{A}$ that solves \PMwE when $n \leq 5/4 \cdot m+k$, using $\Ohhat(\sqrt{km})$
    queries and $\Ohhat(\sqrt{km} + k^{3.5})$ time.

    Consider partitioning \(T\) into \(\Oh(n/m)\) contiguous blocks of length
    \(\ceil{m/4}\) each (with the last block potentially being shorter).

    For every block that starts at position $i \in \fragmentco{0}{|T|}$,  we consider a
    segment $T_i\coloneqq T\fragmentco{i}{\min(n, i + \floor{5/4 \cdot m}+k)}$ of length
    at most $5/4 \cdot m+k$.

    Every $k$-error occurrence \(T\fragmentco{t}{t'}\) starting at $t\in
    \fragmentco{i}{i+\ceil{m/4}}$ is of length at most $m+k$ and thus is fully contained
    in the segment $T_i$.

    For the first claim, we iterate over all such segments, applying $\mathsf{A}$ to
    \(P\), the current segment $T_i$, and \(k\).
    The final set of occurrences is obtained by taking the union of all sets returned by
    the quantum algorithm.

    For the second claim, we employ \GS over all segments, utilizing a
    function that determines whether the set of $k$-error occurrences of $P$ in the
    segment $T_i$ is empty or not.

    For the purpose of showing how to satisfy the assumption,  we distinguish three
    different regimes of $k$:
    \begin{itemize}
        \item If $k < m/4$, we set as $\mathsf{A}$ an algorithm that first uses
            \cref{thm:qedfindproxystrings} to obtain xSLPs for the proxy strings $P^\#$
            and $T^\#$, uses \cref{lem:pillar_on_xslp} to enable the use of \modelname
            operations on them,
            and then applies the classical algorithm of \cref{thm:edalg} to compute
            $\OccE_k(P^\#, T^\#)$.
            The result is returned as $\OccE_k(P, T)$.
            Observe that the conditions for \cref{thm:qedfindproxystrings} are satisfied since
            $5/4\cdot m +k < 3/2 \cdot m$.
        \item If $m/4 \le k \leq m$, we set as $\mathsf{A}$ an algorithm that first reads
            $P$ and $T$ completely and then directly applies the classical algorithm of
            \cref{thm:edalg} to compute $\OccE_k(P, T)$.
            From $k = \Omega(m)$ and $n =\Oh(m)$ follows that this requires $m+n =
            \Oh(\sqrt{km})$ queries and $\Ohtilde(m + k^{3.5})$ time.
        \item If $m \leq k$, then $\mathsf{A}$ simply returns $\OccE_k(P,T)=\fragment{0}{n}$ represented as an arithmetic progression.
            \qedhere
    \end{itemize}
\end{proof}

\bibliographystyle{alphaurl}
\bibliography{main}

\end{document}